\newtheorem{theorem}{Theorem}
\newtheorem{cor}[theorem]{Corollary}
\newtheorem{proposition}[theorem]{Proposition}
\theoremstyle{definition}
\newtheorem{definition}[theorem]{Definition}
\newtheorem{varremark}[theorem]{Remark}
     \gdef\node@@on@layer{%
      \setbox\tikz@tempbox=\hbox\bgroup\pgfonlayer{#1}\unhbox\tikz@tempbox\endpgfonlayer\egroup}
\def\node@on@layer{\aftergroup\node@@on@layer}
\renewcommand{\-}[0]{\nobreakdash-\hspace{0pt}}
\def\calign@preamble{%
   &\hfil\strut@
    \setboxz@h{\@lign$\m@th\displaystyle{##}$}%
    \ifmeasuring@\savefieldlength@\fi
    \set@field
    \hfil
    \tabskip\alignsep@
}
\let\cmeasure@\measure@
\patchcmd\cmeasure@{\divide\@tempcntb\tw@}{}{}{}
\patchcmd\cmeasure@{\divide\@tempcntb\tw@}{}{}{}
\patchcmd\cmeasure@{\ifodd\maxfields@
  \global\advance\maxfields@\@ne
  \fi}{}{}{}    
\newenvironment{calign}
{%
  \let\align@preamble\calign@preamble
  \let\measure@\cmeasure@
  \align
}
{%
  \endalign
}  
\newcommand\ignore[1]{}
\newenvironment{tz}
{\begin{aligned}\begin{tikzpicture}}
{\end{tikzpicture}\end{aligned}\,}
\tikzset{blob/.style={draw, circle, fill=white, inner sep=1pt, minimum width=15pt, font=\scriptsize, node on layer=front, line width=0.7pt}}
\tikzset{greenregion/.style={fill=green, fill opacity=0.3, draw=none}}
\tikzset{redregion/.style={fill=red, fill opacity=0.3, draw=none}}
\tikzset{blueregion/.style={fill=blue, fill opacity=0.3, draw=none}}
\tikzset{yellowregion/.style={fill=yellow, fill opacity=0.5, draw=none}}
\tikzset{cyanregion/.style={fill=cyan, fill opacity=0.3, draw=none}}
\tikzset{orangeregion/.style={fill=orange, fill opacity=0.6, draw=none}}
\tikzset{solidgreenregion/.style={fill=green!30, fill opacity=1, draw=none}}
\tikzset{solidredregion/.style={fill=red!30, fill opacity=1, draw=none}}
\tikzset{solidblueregion/.style={fill=blue!30, fill opacity=1, draw=none}}
\tikzset{solidyellowregion/.style={fill=yellow!30, fill opacity=1, draw=none}}
\tikzset{string/.style={line width=0.7pt}}
\tikzset{zig/.style={decoration={zigzag,segment length=3, amplitude=0.5}}}
\tikzset{bnd/.style={draw,string}}   
\tikzset{projector/.style={circle, draw, font=\scriptsize, inner sep=-5pt, minimum width=0.35cm, string, fill=white}}
\tikzset{dimension/.style={font=\scriptsize, inner sep=1pt}}
\newcounter{jvcommcounter}
\newcounter{drcommcounter}
\newcommand{\Tr}{\mathrm{Tr}}
\newcommand{\End}{\mathrm{End}}
\newcommand\cat[1]{\ensuremath{\mathbf{#1}}}
\newcommand{\UEB}{\ensuremath{\mathrm{UEB}}}
\newcommand{\QLS}{\ensuremath{\mathrm{QLS}}}
\newcommand{\HAD}{\ensuremath{\mathrm{Had}}}
\newcommand\C{\ensuremath{\mathbb{C}}\xspace}
\newcommand{\minus}{\ensuremath{\text{-}}}
\newcommand\superequals[1]{\stackrel {\makebox[0pt]{\tiny\eqref{#1}}} =}
\newcommand\Dita{Di\c{t}\u{a}\xspace}
\newcommand{\ket}[1]{\left|#1\right\rangle}
\newcommand{\bra}[1]{\left\langle#1\right|}
\newcommand{\braket}[2]{\left\langle#1|#2\right\rangle}
\newcommand\eqgap{\hspace{10pt}}
\newcommand\N{\mathbb{N}}
\def\hrt{0.707107}
\def\side{0.7}
\def\xdelta {0.2}
\def\ydelta{0.2}
\newcommand\grid[1]{\ensuremath{\def\arraystretch{1.4}\begin{array}{|c|c|c|c|c|c|c|c|}\hline#1\\\hline\end{array}}}
\renewcommand{\to}[1][]{\ensuremath{\xrightarrow{#1}}}
\newcommand{\To}[1][]{\ensuremath{\xRightarrow{#1}}}
\newcommand\vc[1]{\begin{tabular}{@{}c@{}}#1\end{tabular}}
\def\figuretopsuck{\vspace{-10pt}}
\def\figurecaptionsuck{\vspace{-15pt}}
\def\figurecaptionpostsuck{\vspace{-15pt}}
\begin{document}

\title{Biunitary constructions in quantum information}

\author{
\begin{tabular}{cc}
David J. Reutter\thanks{Department of Computer Science, University of Oxford} & Jamie Vicary\thanks{School of Computer Science, University of Birmingham} \footnotemark[1]
\\
\texttt{david.reutter@cs.ox.ac.uk}
&
\texttt{j.o.vicary@bham.ac.uk}
\end{tabular}}

\date{\today}

\maketitle

\begin{abstract}
We present an infinite number of construction schemes involving unitary error bases, Hadamard matrices, quantum Latin squares and controlled families, many of which have not previously been described. Our results rely on biunitary connections,  algebraic objects which play a central role in the theory of planar algebras. They have an attractive graphical calculus which allows simple correctness proofs for the constructions we present. We apply these techniques to construct a unitary error basis that cannot be built using any previously known method. 
\end{abstract}


\section{Introduction}
\textit{Biunitary connections} (or simply \textit{biunitaries}) were introduced by Ocneanu~\cite{Ocneanu:1989} in 1989 as a central tool in the study and classification of subfactors. Here, we use an approach to biunitaries developed by Jones and others~\cite{Jones:1999,Jones:2013,Morrison:2014} within the theory of \textit{planar algebras}, which studies the linear representation theory of algebraic structures in the plane. We can describe a biunitary informally as a planar algebra element $U$ with two inputs and two outputs, drawn below and above the vertex respectively, which is \textit{vertically unitary}~\eqref{eq:biunitaryverticallyunitary}, and which is \textit{horizontally unitary} up to a scalar factor~$\lambda$~\eqref{eq:biunitaryhorizontallyunitary}:
\def\bigangle{150}
\def\smallangle{30}
\def \sidew {0.5}
\def \scl{0.69}
\begin{align}
\label{eq:biunitaryverticallyunitary}
\begin{tz}[string,scale=\scl]
\path[redregion] (0.25-\sidew,0) rectangle (1,3.5);
\path[blueregion] (1,0) rectangle (1.75+\sidew,3.5);
\path[fill=white] (0.25,0) to [out=90, in=-135] (1,1) to [out=-45, in=90] (1.75,0);
\path[greenregion,draw] (0.25,0) to [out=90, in=-135] (1,1) to [out=-45, in=90] (1.75,0);
\path[fill=white] (0.25,3.5) to [out=-90, in=135] (1,2.5) to [out=45, in=-90] (1.75,3.5);
\path[greenregion,draw] (0.25,3.5) to [out=-90, in=135] (1,2.5) to [out=45, in=-90] (1.75,3.5);
\path[fill=white] (1,2.5) to [out=-135, in=90] (0.35, 1.75) to [out=-90, in=135] (1,1) to [out=45, in=-90] (1.65,1.75) to [out=90, in=-45] (1,2.5);
\path[yellowregion,draw] (1,2.5) to [out=-135, in=90] (0.35, 1.75) to [out=-90, in=135] (1,1) to [out=45, in=-90] (1.65,1.75) to [out=90, in=-45] (1,2.5);
\node[blob] at (1,1) {$U$};
\node[blob] at (1,2.5) {$U^\dagger$};
\end{tz}
&=
\begin{tz}[string,scale=\scl]
\path[redregion] (0.25-\sidew,0) rectangle (0.25,3.5);
\path[blueregion] (1.75,0) rectangle (1.75+\sidew,3.5);
\path[greenregion] (0.25,0) rectangle (1.75,3.5);
\draw (0.25,0) to (0.25,3.5);
\draw (1.75,0) to (1.75,3.5);
\end{tz}
&
\begin{tz}[string,scale=\scl]
\path[redregion] (0.25-\sidew,0) rectangle (1,3.5);
\path[blueregion] (1,0) rectangle (1.75+\sidew,3.5);
\path[fill=white] (0.25,0) to [out=90, in=-135] (1,1) to [out=-45, in=90] (1.75,0);
\path[yellowregion,draw] (0.25,0) to [out=90, in=-135] (1,1) to [out=-45, in=90] (1.75,0);
\path[fill=white] (0.25,3.5) to [out=-90, in=135] (1,2.5) to [out=45, in=-90] (1.75,3.5);
\path[yellowregion,draw] (0.25,3.5) to [out=-90, in=135] (1,2.5) to [out=45, in=-90] (1.75,3.5);
\path[fill=white] (1,2.5) to [out=-135, in=90] (0.35, 1.75) to [out=-90, in=135] (1,1) to [out=45, in=-90] (1.65,1.75) to [out=90, in=-45] (1,2.5);
\path[greenregion,draw] (1,2.5) to [out=-135, in=90] (0.35, 1.75) to [out=-90, in=135] (1,1) to [out=45, in=-90] (1.65,1.75) to [out=90, in=-45] (1,2.5);
\node[blob] at (1,1) {$U^\dagger$};
\node[blob] at (1,2.5) {$U$};
\end{tz}
&=
\begin{tz}[string,scale=\scl]
\path[redregion] (0.25-\sidew,0) rectangle (0.25,3.5);
\path[blueregion] (1.75,0) rectangle (1.75+\sidew,3.5);
\path[yellowregion] (0.25,0) rectangle (1.75,3.5);
\draw (0.25,0) to (0.25,3.5);
\draw (1.75,0) to (1.75,3.5);
\end{tz}
\\[5pt]
\label{eq:biunitaryhorizontallyunitary}
\begin{tz}[string,scale=\scl]
\path[redregion] (3.25,2) to [out=-90, in=45] (2.5,1) to [out=-45, in=90] (3.25,0) to(3.25+\sidew,0) to (3.25+\sidew,2);
\path[redregion] (0.25,0) to [out=90, in=-135] (1,1) to [out=135, in=-90] (0.25,2) to (0.25-\sidew,2) to (0.25-\sidew,0);
\path[blueregion] (2.5,1) to [out=-135, in=0] (1.75,0.3)  to [out=180, in=-45] (1,1) to [out=45, in=180] (1.75,1.7) to [out=0, in=135] (2.5,1);
\path[greenregion,draw] (3.25,0) to [out=90, in=-45] (2.5,1)to [out=-135, in=0] (1.75,0.3)  to [out=180, in=-45] (1,1) to [out=-135, in=90] (0.25,0);
\path[yellowregion,draw] (0.25,2) to [out=-90, in=135] (1,1) to [out=45, in=180] (1.75,1.7) to [out=0, in=135] (2.5,1) to [out=45, in=-90] (3.25,2);
\node[blob] at (1,1) {$U$};
\node[blob]at (2.5,1) {$U_*$};
\end{tz}
&=
\lambda\,
\begin{tz}[string,scale=\scl] 
\path[redregion] (3.25,0) to [out=90, in=0] (1.75,0.7)  to [out=180, in=90] (0.25,0) to (0.25-\sidew,0) to (0.25-\sidew,2) to (0.25,2) to [out=-90, in=180] (1.75,1.3) to [out=0 , in=-90] (3.25,2) to (3.25+\sidew,2) to (3.25+\sidew,0);
\path[yellowregion,draw] (3.25,2) to [out=-90, in=0] (1.75,1.3)  to [out=180, in=-90] (0.25,2);
\path[greenregion,draw] (3.25,0) to [out=90, in=0] (1.75,0.7)  to [out=180, in=90] (0.25,0);
\end{tz}
&
\begin{tz}[string,scale=\scl]
\path[blueregion] (0.25,0) to [out=90, in=-135] (1,1) to [out=135, in=-90] (0.25,2) to (0.25-\sidew,2) to (0.25-\sidew,0);
\path[redregion] (2.5,1) to [out=-135, in=0] (1.75,0.3)  to [out=180, in=-45] (1,1) to [out=45, in=180] (1.75,1.7) to [out=0, in=135] (2.5,1);
\path[blueregion] (3.25,2) to [out=-90, in=45] (2.5,1) to [out=-45, in=90] (3.25,0) to(3.25+\sidew,0) to (3.25+\sidew,2);
\path[greenregion,draw] (3.25,0) to [out=90, in=-45] (2.5,1)to [out=-135, in=0] (1.75,0.3)  to [out=180, in=-45] (1,1) to [out=-135, in=90] (0.25,0);
\path[yellowregion,draw] (0.25,2) to [out=-90, in=135] (1,1) to [out=45, in=180] (1.75,1.7) to [out=0, in=135] (2.5,1) to [out=45, in=-90] (3.25,2);
\node[blob] at (1,1) {$U_*$};
\node[blob] at (2.5,1) {$U$};
\end{tz}
&=
 \lambda\,
\begin{tz} [string,scale=\scl]
\path[blueregion] (3.25,0) to [out=90, in=0] (1.75,0.7)  to [out=180, in=90] (0.25,0) to (0.25-\sidew,0) to (0.25-\sidew,2) to (0.25,2) to [out=-90, in=180] (1.75,1.3) to [out=0 , in=-90] (3.25,2) to (3.25+\sidew,2) to (3.25+\sidew,0);
\path[yellowregion,draw] (3.25,2) to [out=-90, in=0] (1.75,1.3)  to [out=180, in=-90] (0.25,2);
\path[greenregion,draw] (3.25,0) to [out=90, in=0] (1.75,0.7)  to [out=180, in=90] (0.25,0);
\end{tz} 
\end{align}
In this paper, diagrams of this sort represent simple linear algebra data: regions are labelled by indexing sets, and wires and vertices are labelled by indexed families of finite-dimensional Hilbert spaces and linear maps, respectively.\footnote{Formally this is a common generalization of the tensor~\cite[Example 2.6]{Jones:1999} and spin model~\cite[Example 2.8]{Jones:1999} planar algebras, corresponding to a fragment of the monoidal 2-category~\cat{2Hilb}~\cite{Baez:1997}. However, our exposition will be elementary, and we will not assume knowledge of these ideas.} Blank regions correspond to the trivial indexing set. In concrete terms, a biunitary therefore comprises a family of linear maps satisfying some algebraic properties.

The \textit{type} of a biunitary is the shading pattern which surrounds the vertex. We show in \autoref{sec:biunitarity} that a variety of structures in quantum information theory correspond exactly to biunitaries of particular types. Some important examples are given in \autoref{fig:biunitarycharacterizations}.\footnote{Note that some of the inputs or outputs of the biunitary may in general be composite wires. For example, in \autoref{fig:biunitarycharacterizations}(c) the first input is composite, and \autoref{fig:biunitarycharacterizations}(d) the first input and second output are composite.} In the lower-right image,  we see that the notation is 3\-dimensional, with the blue sheet lying beneath the yellow sheet; the colours do not convey mathematical information, but rather make the geometry easier to understand. Rotations by a quarter-turn, and reflections about the horizontal or vertical axes, preserve the given interpretations in terms of quantum structures.

Some of these characterizations are already known: complex Hadamard matrices were characterized by Jones as biunitaries with alternating shaded and unshaded regions~\cite{Jones:1999}, and unitary error bases were characterized by the second author as biunitaries with one shaded and three unshaded regions~\cite{Vicary:2012, Vicary:2012hq}. Here we show that quantum Latin squares can be characterized as biunitaries with two adjacent shaded regions and two adjacent unshaded regions. We also show that controlled families can be described by adding an additional shaded region in a certain way; in \autoref{fig:biunitarycharacterizations}, we illustrate one application of this idea, giving a biunitary characterization of a controlled family of Hadamard matrices. 
\begin{figure}[t]
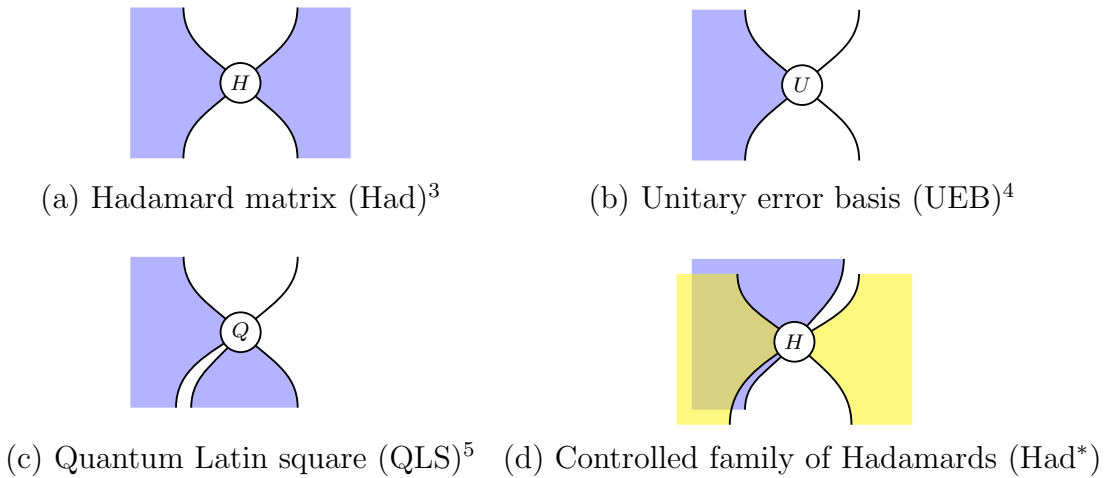

\vspace{-5pt}
\begin{calign}
\nonumber
\begin{tz}[string, scale=1.0]
\node [blob] at (1,1) {$H$};
\path [blueregion] (0.25,0) to [out=90, in=-135] (1,1) to [out=135, in=-90] (0.25,2) to (0.25-\side,2) to (0.25-\side,0);
\draw [string,bnd] (0.25,0) to [out=90, in=-135] (1,1) to [out=135, in=-90] (0.25,2);
\path [blueregion] (1.75,0) to [out=90, in=-45] (1,1) to [out=45, in=-90] (1.75,2) to (1.75+\side,2) to (1.75+\side,0);
\draw [string,bnd] (1.75,0) to [out=90, in=-45] (1,1) to [out=45, in=-90] (1.75,2);
\end{tz}
&
\begin{tz}[string]
\clip (0.25-\side,0) rectangle (1.75+\side,2);
\draw (1,1) to [out=45, in=down] (1.75,2);
\draw (1.75,0) to [out=90, in=-45] (1,1);
\path [blueregion] (0.25,0) to [out=90, in=-135] (1,1) to [out=135, in=-90] (0.25,2) to (0.25-\side,2) to (0.25-\side,0);
\draw[bnd] (0.25,0) to [out=90, in=-135] (1,1) to [out=135, in=-90] (0.25,2);
\node [blob] at (1,1) {$U$};
\end{tz}
\\
\nonumber
\text{(a) Hadamard matrix (\HAD)}\footnotemark
&
\text{(b) Unitary error basis (UEB)}\footnotemark
\\[5pt]
\nonumber
\begin{tz}[xscale=-1, string]
\clip (0.25-\side,0) rectangle (1.75+\side,2);
\node [blob] at (1,1) {$Q$};
\path [blueregion] (0.25,0) to [out=up, in=-135] (1,1) to [out=-55, in=up] (1.65,0);
\path [blueregion] (1.75,2) to [out=down, in=45] (1,1) to [out=-35, in=up] (1.85,0) to (1.75+\side,0) to (1.75+\side,2);
\draw [string,bnd] (1.75,2) to [out=-90, in=45] (1,1) to [out=-35, in=up] (1.85,0);
\draw [string] (0.25,2) to [out=-90, in=135] (1,1);
\draw [string,bnd] (0.25,0) to [out=up, in=-135] (1,1) to [out=-55, in=up] (1.65,0);
\end{tz}
&
\begin{tz}[string]
\clip (0.15-\side,-0.1) rectangle (2.05+\side,2.3);
\path [blueregion] (0.35,0.1) to [out=90, in=-125] (1,1) to [out=55, in=-90] (1.65,2.1) to (0.35-\side,2.1) to (0.35-\side,0.1);
\draw[bnd] (0.35,0.1) to [out=90, in=-125] (1,1) to [out=55, in=-90] (1.65,2.1);
\path [yellowregion] (0.15,-0.1) to [out=90, in=-145] (1,1) to [out=135, in=-90] (0.25,1.9) to (0.15-\side,1.9) to (0.15-\side,-0.1);
\draw[bnd] (0.15,-0.1) to [out=90, in=-145] (1,1) to [out=135, in=-90] (0.25,1.9);
\path [yellowregion] (1.75,-0.1) to [out=90, in=-45] (1,1) to [out=35, in=-90] (1.85,1.9) to (1.85+\side,1.9) to (1.85+\side,-0.1);
\draw[bnd] (1.75,-0.1) to [out=90, in=-45] (1,1) to [out=35, in=-90] (1.85,1.9);
\node [blob] at (1,1) {$H$};
\end{tz}
\\
\nonumber
\text{(c) Quantum Latin square (QLS)}\footnotemark
& \text{(d) Controlled family of Hadamards ($\HAD^*$)}
\end{calign}
\vspace{-20pt}
\caption{Biunitary characterizations of quantum structures.\label{fig:biunitarycharacterizations}}
\vspace{-15pt}
\end{figure}
\addtocounter{footnote}{-2}\footnotetext{A (complex) \textit{Hadamard matrix} is a square complex matrix with entries of modulus 1, which is proportional to a unitary matrix. Fundamental structures in quantum information, they are central in the theories of mutually unbiased bases, quantum key distribution, and other phenomena~\cite{Durt:2010}.}\stepcounter{footnote}\footnotetext{A \textit{unitary error basis} is a basis of unitary operators on a finite-dimensional Hilbert space, orthogonal with respect to the trace inner product. They provide the basic data for quantum teleportation, dense coding and error correction procedures~\cite{Werner:2001, Knill:1996_2, Shor:1996}.}\stepcounter{footnote}\footnotetext{A \textit{quantum Latin square}~\cite{Musto:2015} is a square grid of vectors in a finite-dimensional Hilbert space, such that every row and every column is an orthonormal basis. They are quantum generalizations of classical Latin squares.}%

Our main results are based on the simple fact that the \textit{diagonal} composite of two biunitaries is again biunitary. We show in \autoref{sec:composition} that, given the description of quantum combinatorial structures in terms of biunitaries as summarized above, one can  immediately write down a large number of schemes for the construction of certain quantum structures from others. We give some examples in \autoref{fig:introbinary}; note that the biunitaries are connected diagonally in each case, as required.
\begin{minipage}{\linewidth}
\begin{calign}
\nonumber
\begin{tz}[xscale=-1,string, scale=1.5]
\node (1) [blob] at (1,1) {$H_2$};
\node (2) [blob] at (1.5,1.5) {$H_1$};
\path [blueregion,  string,bnd] (0.25,0) to [out=90, in=-135] (1,1) to [out=-45, in=90] (1.75,0);
\path [blueregion, string] (\hrt+1.75,0) to [out=90, in=-45] (1.5,1.5) to [out=45, in=-90] (2.25,2.5) to (2.25+\side,2.5) to (2.25+\side,0);
\draw [string,bnd] (0.5 *2^0.5+1.75,0) to [out=90, in=-45] (1.5,1.5) to [out=45, in=-90] (2.25,2.5);
\path [blueregion, bnd] (0.75,2.5) to [out=-90, in=135] (1.5,1.5) to (1,1) to [out=135, in=-90] (0.75-0.5*2^0.5, 2.5);
\end{tz}
&
\begin{tz}[xscale=-1,string, scale=1.5]
\node (1) [blob]at (1,1) {$U_2$};
\node (2) [blob] at (1.5,1.5) {$U_1$};
\path [blueregion] (\hrt+1.75,0) to [out=90, in=-45] (1.5,1.5) to [out=45, in=-90] (2.25,2.5) to (2.25+\side,2.5) to (2.25+\side,0);
\draw [string,bnd] (\hrt+1.75,0) to [out=90, in=-45] (1.5,1.5) to [out=45, in=-90] (2.25,2.5);
\path [blueregion,  string,bnd] (0.25,0) to [out=90, in=-135] (1,1) to [out=-45, in=90] (1.75,0);
\draw [string] (0.75,2.5) to [out=-90, in=135] (1.5,1.5) to (1,1) to [out=135, in=-90] (0.75-0.5* 2^0.5, 2.5);
\end{tz}  
\\*[-1pt]\nonumber
\text{(a) }\HAD + \HAD{\,\leadsto\,}\QLS
&
\text{(b) }\UEB + \UEB{\,\leadsto\,}\QLS
\\[5pt]\nonumber
\begin{tz}[string, scale=1.5]
\node (1) [blob] at (1,1) {$H$};
\node (2) [blob] at (1.5,1.5) {$Q$};
\path [blueregion] (0.75,2.6) to [out=-90, in=135] (2.center) to [out=-155, in=65] (1.center) to [out=-125, in=90] (0.35,0.1) to (0.35-\side,0.1) to (0.35-\side,2.6);
\draw [string,bnd]  (0.75,2.6) to [out=-90, in=135] (2.center) to [out=-155, in=65] (1.center) to [out=-125, in=90] (0.35,0.1);
\path [yellowregion] (0.15,-0.1) to [out=90, in=-145] (1.center) to [out=135, in=-90] (0.75-\hrt, 2.4) to (0.15-\side,2.4) to (0.15-\side,-0.1);
\draw [string,bnd] (0.15,-0.1) to [out=90, in=-145] (1.center) to [out=135, in=-90] (0.75-\hrt, 2.4);
\path [yellowregion, string,bnd] (1.75,0) to [out=90, in=-45] (1,1) to [out=25, in=-115] (1.5,1.5) to [out=-45, in=90]  (\hrt+1.75,0);
\draw [string] (1.5,1.5) to [out=45, in=-90] (2.25,2.5);
\end{tz}
&
\begin{tz}[string, scale=1.5, xscale=1]
\node (1) [blob] at (1,1) {$H_1$};
\node (2) [blob] at (1.5,1.5) {$H_2$};
\path [blueregion] (0.75,2.6) to [out=-90, in=135] (1.5,1.5) to [out=-155, in=65] (1,1) to [out=-125, in=90] (0.35,0.1) to (0.15-\side,0.1) to (0.15-\side,2.6);
\draw [string,bnd] (0.75,2.6) to [out=-90, in=135] (1.5,1.5) to [out=-155, in=65] (1,1) to [out=-125, in=90] (0.35,0.1) node {};
\path [yellowregion] (0.15,-0.1) node {} to [out=90, in=-145] (1,1) to [out=135, in=-90] (0.75-\hrt,2.4) to (0.65-2*\hrt,2.4) to (0.65-2*\hrt,-0.1);
\draw [string,bnd] (0.15,-0.1) to [out=90, in=-145] (1,1) to [out=135, in=-90] (0.75-\hrt, 2.4);
\path [blueregion] (2.35,2.6) to [out=-90, in=35] (1.5,1.5) to [out=-45, in=90] (1.75+\hrt,0.1) to (1.75+\hrt+ \side,0.1) to (1.75+\hrt+\side,2.6);
\draw [string,bnd] (2.35,2.6) to [out=-90, in=35] (1.5,1.5) to [out=-45, in=90] (1.75+\hrt,0.1) ;
\path [yellowregion] (1.75,-0.1) to [out=90, in=-45] (1,1) to [out=25, in=-115] (1.5,1.5) to [out=55, in=-90] (2.15,2.4) to (1.55+\hrt+\side,2.4) to (1.55+\hrt+\side,-0.1);
\draw [string,bnd] (1.75,-0.1) to [out=90, in=-45] (1,1) to [out=25, in=-115] (1.5,1.5) to [out=55, in=-90] (2.15,2.4) ;
\end{tz}
\\*[-1pt]\nonumber
\text{(c) }\HAD^* + \QLS {\,\leadsto\,} \UEB
&
\text{(d) }\HAD^* + \HAD^* {\,\leadsto\,} \HAD
\end{calign}
\vspace{-20pt}

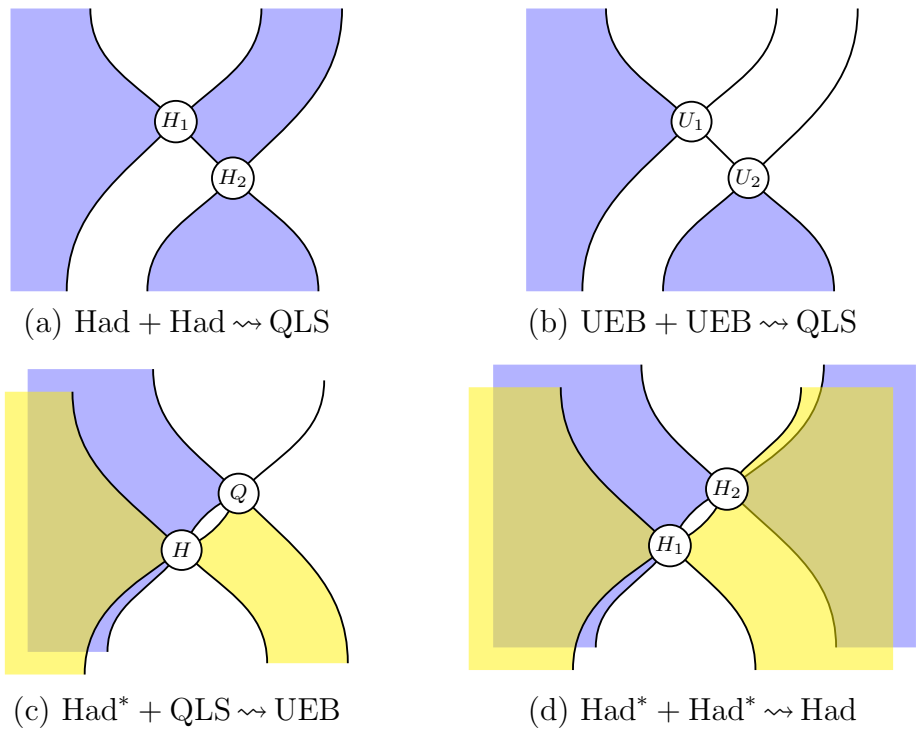
\captionof{figure}{Some biunitary composites of arity 2.}\label{fig:introbinary}%
\vspace{-2pt}
\end{minipage}

\vspace{10pt}
\noindent
We explain each of these constructions briefly. \autoref{fig:introbinary}(a) gives a way to combine two Hadamard matrices to produce a quantum Latin square, generalizing a known construction.\footnote{When both Hadamard matrices are the same, this agrees with a known construction of a quantum Latin square from a single Hadamard matrix~\cite[Definition 10]{Musto:2015}.} (Note that the wires terminating near the upper-right of \autoref{fig:introbinary}(a) are interpreted as a \textit{single} composite wire for the purpose of identifying it as having the basic quantum Latin square type of \autoref{fig:biunitarycharacterizations}, a method that we use repeatedly, and motivate formally with bracketings in \autoref{thm:composition}.) \autoref{fig:introbinary}(b) describes a procedure for combining two unitary error bases to yield a quantum Latin square, a construction we believe to be new. \autoref{fig:introbinary}(c) combines a controlled family of Hadamard matrices and a quantum Latin square to give a unitary error basis, recovering the {quantum shift-and-multiply} construction~\cite[Definition 18]{Musto:2015}. In \autoref{fig:introbinary}(d), two families of Hadamard matrices are combined to produce a single Hadamard matrix, recovering a construction of Hosoya and Suzuki \cite[Section 1]{Hosoya:2003} which generalizes a construction of \Dita \cite[Section 4]{Dita:2004}. These constructions can of course be iterated; for example, combining the constructions of Figures~\ref{fig:introbinary}(a)~and~\ref{fig:introbinary}(c) gives a way to combine a controlled family of Hadamard matrices and two further Hadamard matrices to produce a single unitary error basis, again a new construction.

In all these cases, correctness of the construction follows immediately from the type-theoretic structure (that is, the shading pattern) of the diagram, relying only on diagonality of the composition; no further details need to be checked. Our approach therefore offers advantages even for those constructions that are already known, since the traditional proofs of correctness are nontrivial. To emphasize this point we compare our graphical techniques to traditional methods, in which constructions are defined using tensor notation. For example, the construction of \autoref{fig:introbinary}(c) would traditionally be written as follows~\cite[Definition 18]{Musto:2015}, where $U_{ab,c,d}$ is the $(c,d)$th matrix entry of the $(a,b)\text{th}$ element of the unitary error basis, $Q_{b,d,c}$ is the coefficent of $\ket c$ in the $(b,d)\text{th}$ position of the quantum Latin square, and $H^b _{a,d}$ is the $(a,d)\text{th}$ coefficient of the $b$th Hadamard matrix:
\begin{equation} \label{eq:qsm}U_{ab,c,d} := H^b_{a,d}\, Q_{b,d,c}^{}
\end{equation}
It is not trivial to write down correct expressions of this form, and to show that this indeed defines a unitary error basis requires a calculation of several lines~\cite[Theorem 20]{Musto:2015} that invokes the distinct algebraic properties of the tensors $Q_{b,d,c}$ and $H^b_{a,d}$. In contrast, in our new approach, it would be easy to discover this construction by considering all ways the basic components can be diagonally composed; correctness is immediate, and all algebraic properties are subsumed by the single concept of biunitarity. Nonetheless, expression~\eqref{eq:qsm} can be immediately read off from the form of the biunitary composite.

Higher-arity constructions can also be described, such as those given in \autoref{fig:introhigher}. Both of these we believe to be new.
\begin{figure}[t!]
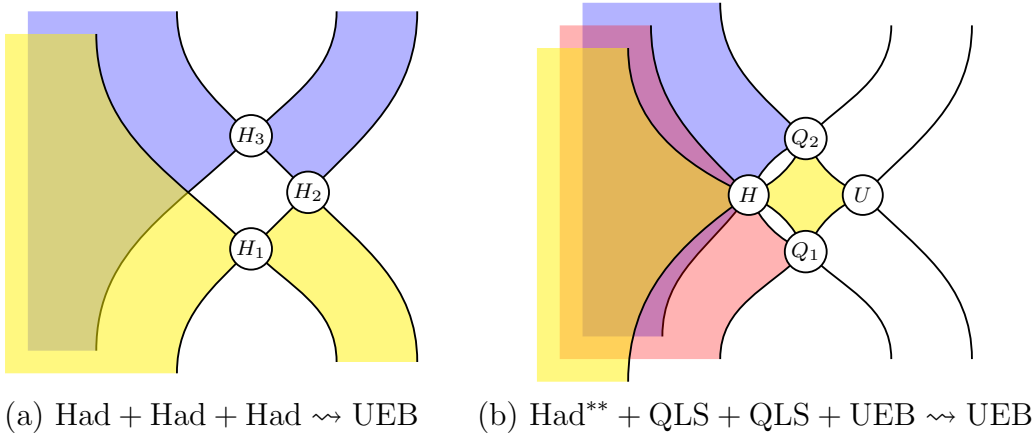

\begin{calign}
\nonumber
\def\sidew {0.6}
\begin{tz}[string,scale=1.5]
\node (1) [blob] at (1.75,1.25) {$H_1$};
\node (2) [blob] at (1.75,2.25) {$H_3$};
\node (3) [blob] at (2.25,1.75) {$H_2$};
\path [blueregion,bnd] (2.5,3.35) to [out=-90, in=45] (2.center) to (3.center) to [out=45, in=-90] (2.5+\hrt,3.35);
\path [blueregion] (1.1,3.35) to [out=-90, in=135] (2.center) to [out=-135, in=up] (1.1-\hrt,0.35) to (1.1-\hrt-\sidew,0.35) to (1.1-\hrt-\sidew,3.35);
\draw [bnd]  (1.1,3.35) to [out=-90, in=135] (2.center) to [out=-135, in=up] (1.1-\hrt,0.35);
\path [yellowregion, bnd] (2.5,0.25) to [out=90, in=-45] (1.center) to (3.center)  to [out=-45, in=90] (2.5+\hrt,0.25);
\path [yellowregion] (1.1,0.15) to [out=90, in=-135] (1.center) to [out=135, in=-90] (1.1-\hrt,3.15) to (0.9-\hrt-\sidew,3.15) to (0.9-\hrt-\sidew,0.15);
\draw [bnd] (1.1,0.15) to [out=90, in=-135] (1.center) to [out=135, in=-90] (1.1-\hrt,3.15);
\end{tz}
&
\def \sidew {0.7}
\begin{tz}[string, scale=1.5]
\node (Q) [blob] at (1,1) {$Q_1$};
\node (V) [blob]at (1.5,1.5) {$U$};
\node (H) [blob] at (0.5,1.5) {$H$};
\node (P) [blob] at (1,2) {$Q_2$};
\def\leftb{0.45-\hrt-\sidew}
\path [blueregion] (0.25,3.2) to [out=-90, in=135] (P.center) to [out=-155, in=65] (0.5,1.5) to [out=-125, in=90, looseness=1] (0.45-\hrt, 0.25) to (\leftb,0.25) to (\leftb,3.2);
\draw [bnd] (0.25,3.2) to [out=-90, in=135] (1,2) to [out=-155, in=65] (0.5,1.5) to [out=-125, in=90, looseness=1] (0.45-\hrt,0.25);
\path [redregion] (0.25,0.05) to [out=90, in=-135] (Q.center) to [out=155, in=-65] (H.center) to [out=135, in=-90] (0.35-\hrt,3.0) to (\leftb-0.2,3.0) to (\leftb-0.2,0.05);
\draw [bnd] (0.25,0.05) to [out=90, in=-135] (Q.center) to [out=155, in=-65] (H.center) to [out=135, in=-90] (0.35-\hrt,3.0);
\path [yellowregion] (0.15-\hrt, 2.8) to [out=-90, in=155] (H.center) to [out=-145, in=90] (0.15-\hrt, -0.15) to (\leftb-0.4,-0.15) to (\leftb-0.4, 2.8);
\draw [bnd] (0.15-\hrt, 2.8) to [out=-90, in=155] (0.5,1.5) to [out=-145, in=90] (0.15-\hrt, -0.15);
\path [yellowregion, bnd] (H.center) to [out=25, in=-115] (P.center) to [out=-65, in=155] (V.center) to [out=-155, in=65] (Q.center) to [out=115, in=-25] (H.center);
\draw [string] (1,1) to [out=-45, in=90] (1.75,0.05);
\draw [string] (1.75+\hrt,0.05) to [out=90, in=-45] (1.5,1.5) to [out=45, in=-90] (1.75+\hrt,3);
\draw [string] (1,2) to [out=45, in=-90] (1.75,3);
\end{tz}
\\*\nonumber
\text{(a) }\HAD + \HAD + \HAD \leadsto \UEB
&
\text{(b) }\HAD^{**} + \QLS + \QLS + \UEB \leadsto \UEB
\end{calign}
\vspace{-20pt}
\caption{Some biunitary composites of arities 3 and 4.\label{fig:introhigher}}
\end{figure}
In \autoref{fig:introhigher}(a), arising as a consequence of the constructions of Figures~\ref{fig:introbinary}(a) and \ref{fig:introbinary}(c), three Hadamard matrices combine to produce a unitary error basis, an elegant construction which we believe to be new.\footnote{When all three Hadamard matrices are the same, this agrees with a known construction of a unitary error basis from a single Hadamard matrix~\cite[Definition 33]{Musto:2015} which we believe to be folklore.} In \autoref{fig:introhigher}(b), which does not arise as a consequence of lower-arity constructions, we combine a double-controlled family of Hadamard {matrices~$(H)$,} two quantum Latin squares $(Q_1,Q_2)$ and a unitary error basis $(U)$ to produce a new unitary error basis. While the first example is simple and elegant, the second example is indicative of the more complex constructions our technique can produce. Further complex examples are given in Figures~\ref{fig:ternary}, \ref{fig:higher} and~\ref{fig:infinitude}.

For unitary error bases, we illustrate all constructions of arities 2 and 3 that arise from our methods, and we give examples of constructions of arities~4 and~8.  Furthermore, in \autoref{sec:infinity} we show that our methods give rise to an infinite family of logically independent constructions, none of which factor through any simpler construction between Hadamard matrices, unitary error bases, quantum Latin squares and controlled families thereof.\footnote{In a subsequent article we analyze the space of all such UEB constructions, and unify them in terms of a single binary composite involving new quantum combinatorial structures.}

In \autoref{sec:equivalence} we consider the problem of equivalence from our new perspective. Hadamard matrices, unitary error bases, quantum Latin squares and controlled families all come with standard notions of equivalence. We give a new generic definition of equivalence for biunitaries, broader than the one used traditionally in the planar algebra literature, and show that it recovers precisely these usual notions of equivalence for each of the quantum structures we consider.

Finally, in \autoref{sec:newUEB} we use the 4-fold composite of \autoref{fig:introhigher}(b) to produce a unitary error basis on an 8\-dimensional Hilbert space. We show that it cannot be produced by the two known UEB construction methods---algebraic, and quantum shift-and-multiply---even up to equivalence. This is a proof of principle that the biunitary methods we propose can give rise to genuinely new quantum structures.
\paragraph{Significance.}
Hadamard matrices and unitary error bases provide the mathematical foundation for an extremely rich variety of quantum computational phenomena, amongst them the study of mutually unbiased bases, quantum key distribution, quantum teleportation, dense coding and quantum error correction~\cite{Durt:2010,Werner:2001,Shor:1996,Klappenecker:2003,Knill:1996_2}. Nevertheless their general structure is notoriously difficult to understand; 
in dimension $n$, Hadamard matrices have only been classified up to \mbox{$n=5$}~\cite{Szollosi:2011,Tadej:2006}, and the general structure of unitary error bases is virtually unknown for $n > 2$. Quantum Latin squares have been introduced much more recently~\cite{Musto:2015, Banica:2007, Musto:2016}, generalizing classical Latin squares which have a wide range of applications in classical and quantum information~\cite{Shannon:1949,Meyer:2015,Benoist:2016}.

By unifying these quantum structures as special cases of the single notion of biunitary, and providing simple type-theoretical  tools to understand the intricate interplay between them, we unify several already-known and seemingly-unrelated constructions~\cite{Musto:2015,Werner:2001,Hosoya:2003,Dita:2004,Banica:2007}, uncover an infinite number of new constructions, and produce novel, concrete examples. These new tools may lead to further progress in questions of classification and applications of Hadamard matrices, unitary error bases and quantum Latin squares, and perhaps move us closer to full classification results for these important structures. 

On the other hand, biunitaries are central tools in the study and classification of subfactors~\cite{Jones:1999,Jones:2013,Morrison:2014,Ocneanu:1989, Popa:1983}, a highly significant activity in the theory of von Neumann algebras. We hope that our work leads to the development of further connections between subfactor theory and quantum computation.

\subsection{Related work}

\paragraph{Quantum constructions.}As well as producing a number of new constructions, our methods encompass and unify several constructions from the literature.

\begin{itemize}
\item The \textit{Hadamard method}~\cite[Definition 33]{Musto:2015}, believed to be folklore, which produces a unitary error basis from a single Hadamard matrix. In {expression~\eqref{eq:HadamardMethod}} we give a biunitary presentation of a new generalization, in which three Hadamard matrices produce a unitary error basis.
\item The method given in \cite[Definition 2.3]{Banica:2007} and~\cite[Definition 10]{Musto:2015}, which produces a quantum Latin square from a single Hadamard matrix. {In \autoref{fig:binarydetailed1}(a) we give a biunitary presentation of a new generalization, in which two Hadamard matrices produce a quantum Latin square.}
\item Werner's \textit{shift-and-multiply construction}~\cite{Werner:2001} which produces a unitary error basis from a family of Hadamard matrices and a Latin square. This is a special case of the quantum shift-and-multiply construction discussed below.
\item The \textit{quantum shift-and-multiply construction} due to Musto and the second author~\cite[Definition 18]{Musto:2015} which produces a unitary error basis from a family of Hadamard matrices and a quantum Latin square. {We give a biunitary description in \autoref{fig:binarydetailed2}(b).}
\item \textit{\Dita's construction}~\cite[Section 4]{Dita:2004}, which produces a Hadamard matrix from a Hadamard matrix and a family of Hadamard matrices. This method sees wide use in the literature~\cite{Dita:2004,Tadej:2006,Matolcsi:2007,Nicoara:2010}, and is a special case of Hosoya's and Suzuki's construction, described below. {We give a biunitary description in \autoref{fig:binarydetailed1}(d).}
\item \textit{Hosoya's and Suzuki's construction}~\cite[Section 1]{Hosoya:2003}, which produces a Hadamard matrix from two families of Hadamard matrices. {We give a biunitary description in \autoref{fig:binarydetailed1}(c).}
\end{itemize}
There are many known constructions which are beyond our methods. For unitary error bases, we do not know a biunitary characterization of Knill's algebraic construction~\cite{Knill:1996}. For Hadamard matrices, an analogue of Knill's construction are the Fourier matrices arising from finite abelian groups. Other examples include Petrescu's construction of continuous families of Hadamard matrices in prime dimension~\cite{Petrescu:1997}, Wocjan's and Beth's construction~\cite{Wocjan:2004} and its generalization by Musto~\cite{Musto:2016}, or several other less-general constructions which only work in specific dimensions~\cite{Haagerup:1996, Szollosi:2012, Szollosi:2011, Matolcsi:2007}. In all of these cases, the methods are not purely compositional; they make use of some additional group-theoretic or algebraic structure which is out of reach of the biunitary approach.

\paragraph{Biunitary connections and planar algebras.} 
Biunitary connections were introduced by Ocneanu in 1989 in terms of \textit{paragroups} \cite{Ocneanu:1989} in an attempt to better understand the rich combinatorial structures arising in subfactor theory, a branch of the theory of von Neumann algebras.
In 1999, Jones introduced the theory of planar algebras \cite{Jones:1999} and with it the modern graphical formulation of biunitarity as used in this paper. 

Recently, Jaffe, Liu and Wozniakowski have described a related planar algebraic approach to quantum information based on \textit{planar para algebras}~\cite{Jaffe:2016d, Jaffe:2016a,Jaffe:2017}.

The relation between Hadamard matrices and von Neumann algebras predates the notion of biunitarity and can be traced back to Popa's commuting squares~\cite{Popa:1983} (later shown to be equivalent to biunitarity~\cite{JonesSunder:1997}) and the statistical-mechanical spin models of Jones~\cite{Jones:1989,Jones:1999}; there is a significant literature on the interplay between Hadamard matrices, planar algebras and subfactors~\cite{Haagerup:1996,Nicoara:2006,Nicoara:2010}. Quantum Latin squares appear under the name \textit{magic bases} in a Hopf algebraic approach to subfactor theory by Banica and others \cite{Banica:2007,Banica:2007_2,Banica:2009}; however, their biunitary characterization does not seem to have been written down. 
Unitary error bases were shown to be characterized in terms of biunitaries by the second author~\cite{Vicary:2012hq, Vicary:2012}.

Traditionally, biunitary connections are used in the classification of amenable subfactors of the hyperfinite $\mathrm{II}_1$ factor \cite{JonesSunder:1997}.
We give a brief overview of this research programme.
Every such subfactor induces a pair of principal graphs, and a flat biunitary connection in the associated graph planar algebra. Conversely, every biunitary connection induces a planar algebra of flat elements, and the original subfactor can always be recovered from this planar algebra~\cite{Evans:1998}. Therefore, to classify subfactors it is sufficient to classify possible principal graphs and flat connections in their graph planar algebras. This programme has lead to a classification of subfactors up to index $5+\frac{1}{4}$~\cite{Jones:2013, Morrison:2014,Izumi:2015, Afzaly:2015}.


\paragraph{Categorical quantum mechanics.} 
This work builds on the programme of categorical quantum mechanics, initiated by Abramsky and Coecke~\cite{Abramsky:2004} and developed by them and others~\cite{Abramsky:2009, Coecke:2006, Coecke:2008, Coecke:2017, Kissinger:2015, Coecke:2014, Coecke:2012, Backens:2014, Selinger:2007}, which uses monoidal categories with duals to provide a high-level syntax for quantum information flow. It was shown by the second author that these ideas can be extended to a higher categorical setting~\cite{Vicary:2012, Vicary:2012hq}, developing the work of Baez on a categorified notion of Hilbert space~\cite{Baez:1997}. The key advantage of this approach is that the notion of Frobenius algebra, used in the monoidal category setting to describe classical information, is no longer needed and replaced by the simpler notion of dualizable 1\-morphism in \cat{2Hilb}. This is the setting employed in this paper. While the results we prove could in principle be translated back into the language of Frobenius algebras, they would lose their simplicity and power. In this sense, the current work serves as an advertisement for the essential role that higher category theory can play in quantum information theory. 

\subsection{Notations and conventions}

We denote the $n$\-element set $\{1,\ldots,n\}$ by $[n]$. The letters $a,b,d,e,f,g,h,i,j,k,r,s$ are used to denote indices, the letters $n,m,p,q$ are used to denote dimensions. We use the following shorthand notations to refer to sets of quantum structures:
\begin{itemize}
\item $\UEB_n$ is the set of $n$-dimensional unitary error bases;
\item $\QLS_n$ is the set of $n$-dimensional quantum Latin squares;
\item $\HAD_n$ is the set of $n$-dimensional Hadamard matrices;
\item For $\mathrm X \in \{ \UEB_n, \QLS_n, \HAD_n \}$, $\mathrm X^{p_1,\ldots, p_k}$ is the set of lists of quantum structures of type $X$ controlled by indices in $[p_1],~[p_2],~\ldots,[p_k]$.
\end{itemize}

\noindent
For example, $\UEB_{n^2 m} ^{n,p}$ is the set of lists of $n^2 m$-dimensional unitary error bases, controlled by indices taking values in $[n]$ and $[p]$.  \\

\subsection*{Acknowledgements}

We are grateful to Bruce Bartlett, Andr\'e Henriques, Scott Morrison, Benjamin Musto and Dominic Verdon for useful discussions. Early versions of these results were presented at WIP 2016, QIP 2017, CALCO 2017 and QPL 2017, and we are grateful to the organizers of these events, and to the audience for useful feedback. 

\section{Biunitarity}
\label{sec:biunitarity}

In \autoref{sec:foundations} we introduce our formalism, and give the definition of biunitarity. In \autoref{sec:characterizing} we recall the biunitary characterizations of Hadamard matrices and unitary error bases, and give new biunitary characterizations of quantum Latin squares and controlled families.
\subsection{Mathematical foundations}
\label{sec:foundations}

The graphical calculus for describing composition of multilinear maps was proposed by Penrose~\cite{Penrose:1971}, and is today widely used in a range of areas~\cite{Selinger:2010, Abramsky:2009, Coecke:2006, Joyal:1991, Orus:2014}. In this scheme, wires represent Hilbert spaces and vertices represent linear maps between them, with wiring diagrams representing composite linear maps. For example, given linear maps $A:W\otimes H \otimes J\to L \otimes M\otimes R$ and $B:V \to H \otimes J$, we can describe a composite linear map $V\otimes W \to L \otimes M\otimes R$ graphically as follows:
\begin{equation}
\label{eq:penroseexample}
\begin{tz}[string,xscale=1.1]
\path[draw](1,-1) to [out= 135, in = down] (0.25,0) to [out= up, in = -135] (1,1) to [out = -45, in = up] (1.75,0) to [out= down, in = 45] (1,-1);
\draw (1,1) to (1,2);
\draw (1,-1) to[out= -135, in =up] (0.25,-2.2);
\draw  (0.25,2) to [out= down, in= 135] (1,1) to [out =-155, in = 135,looseness=1.5] (1,-1.8) to [out= -45, in =up] (1.2,-2.2) ;
\draw (1,1) to [out = 45, in =down] (1.75,2);
\node[blob] at (1,1) {$A$};
\node[blob] at (1,-1) {$B$};
\node[scale=0.8] at (1.75,2.3) {$R$};
\node[scale=0.8] at (0.5,0) {$H$};
\node[scale=0.8] at (1.5, 0) {$J$};
\node[scale=0.8] at (1, 2.3) {$M$};
\node[scale=0.8] at (0.25, -2.4) {$V$};
\node[scale=0.8] at (1.2,-2.4) {$W$};
\node[scale=0.8] at (0.25,2.3) {$L$};
\end{tz}
\end{equation}
\noindent

In this article we use a generalized calculus that involves \emph{regions}, as well as wires and vertices. This is an instance of the graphical calculus for monoidal 2\-categories~\cite{Barrett:2012, Bartlett:2014, Hummon:2012, SchommerPries:2011} applied to the 2\-category\footnote{Here and throughout, we use the term `2-category' to refer to the fully weak structure, which is sometimes called `bicategory'.} of finite-dimensional 2\-Hilbert spaces~\cite{Baez:1997}. The 2\-category of 2\-Hilbert spaces can be described as follows~\cite{Elgueta:2007,Vicary:2012hq}:
\begin{itemize}
  \item objects are natural numbers $n,m,\ldots$;
  \item 1-morphisms $n\to m$ are $m\times n$-matrices of finite-dimensional Hilbert spaces (\autoref{fig:2Hilb}(a));
  \item 2-morphisms are matrices of linear maps (\autoref{fig:2Hilb}(b)).
\end{itemize}
\begin{figure}[]\figuretopsuck 
\begin{calign}
\nonumber
\left(\begin{array}{ccc} H_{11} & \cdots & H_{1n}\\\vdots & \ddots & \vdots \\ H_{m1} &\cdots & H_{mn}\end{array}\right)
& 
 \left(\begin{array}{ccc} H_{11}\to[\phi_{11}]H'_{11} & \ldots & H_{1n} \to[\phi_{1n}] H'_{1n}\\\vdots & \ddots & \vdots\\H_{m1} \to[\phi_{m1}]H'_{m1}  &\ldots& H_{mn} \to[\phi_{mn}] H'_{mn} \end{array} \right)
 \\[0pt]\nonumber
 \text{(a) A 1-morphism $H:n\to m$} &\text{(b) A 2-morphism $\phi: H \To[] H'$}
\end{calign}

\figurecaptionsuck 
\caption{The 1- and 2-morphisms in the 2-category of 2-Hilbert spaces.}
\label{fig:2Hilb}%
\figurecaptionpostsuck
\end{figure}
Composition of 1\-morphisms is given by `matrix multiplication' of matrices of Hilbert spaces, with addition and multiplication of complex numbers replaced by direct sum and tensor product, respectively. Composition of 2-morphisms is given by componentwise composition of linear maps.

\begin{figure}[b]
\figuretopsuck
\figuretopsuck
\begin{calign}
\def\spacing{15pt}
\nonumber
 \vc{
$\phi: \begin{pmatrix}J_1\\J_2\end{pmatrix} \To  \begin{pmatrix} H_{11} & H_{12} & H_{13} \\ H_{21} & H_{22} & H_{23} \end{pmatrix} \circ \begin{pmatrix}K_1 \\ K_2 \\ K_3 \end{pmatrix}$
\\[-2pt]
\text{(a) A 2-morphism $\phi$}\\[\spacing]
$\begin{pmatrix} J_1 \to[\phi_1] (H_{11}\otimes K_1) \oplus (H_{12}\otimes K_2) \oplus (H_{13} \otimes K_3) \\ J_2 \to [\phi_2] (H_{21}\otimes K_1) \oplus (H_{22} \otimes K_2) \oplus (H_{23} \otimes K_3) \end{pmatrix}$\\[1pt] 
\text{(b) The 2-morphism $\phi$ as a matrix of linear maps}\\[\spacing]
$\phi_{i,j} : J_i \to H_{i,j} \otimes K_j$\hspace{0.2cm} \text{ for $i \in [2]$ and $j \in [3]$} \\[1pt]
\vc{(c) The 2-morphism $\phi$ as a family of linear\\maps indexed by its adjacent regions}}
&
\vc{
$\begin{tz}[string, scale=1.35, xscale=1.45]
\path[blueregion] (-1,0) to (-1,2) to (-0.5,2) to [out=down, in=135] (0,1) to (0,0) to (-1,0);
\path[redregion] (-0.5,2) to [out=down, in=135] (0,1) to [out=45, in=down] (0.5,2);
\draw (0,0) to (0,1) to [out=135, in=down] (-0.5, 2);
\draw[string] (0.5,2) to [out=down, in=45] (0,1);
\node[blob] at (0,1) {$\phi$};
\node[scale=0.8] at (0,-0.4) {$\textstyle\begin{pmatrix} J_1 \\ J_2\end{pmatrix}$};
\node[scale=0.8] at (-0.5, 2.4) {$\textstyle \begin{pmatrix} H_{11} & H_{12} & H_{13} \\ H_{21} & H_{22} & H_{23} \end{pmatrix}$};
\node[scale=0.8] at (0.5,2.4) {$\textstyle \begin{pmatrix}K_1 \\ K_2 \\ K_3 \end{pmatrix}$};
\node[scale=0.9] at (-0.5,0.75) {$2$};
\node[scale=0.9] at (0, 1.75) {$3$};
\end{tz}$
\\[-5pt]
\vc{(d) Graphical representation of the\\2-morphism $\phi$}}
\end{calign}

\figurecaptionsuck 
\caption{Translating between equivalent expressions for 2-morphisms.}
\label{fig:equivalentexpression}%
\figurecaptionpostsuck
\end{figure}
In the graphical calculus, regions, wires and vertices represent objects, 1\-morphisms and 2\-morphisms, respectively. The 2\-category has a monoidal structure, acting on objects as multiplication, and on 1\- and 2\-morphisms as the Kronecker product of matrices of Hilbert spaces and linear maps, respectively; this is represented graphically by `layering' one diagram above another.

\paragraph{Elementary description.}
To help the reader understand these concepts, we also give a direct account of the formalism  in elementary terms, that can be used without reference to the higher categorical technology. In \autoref{fig:equivalentexpression} we indicate how to translate between the categorical language presented above and the more elementary language used here.

In this direct perspective,  shaded regions are labelled by \textit{finite sets}, indexed by a parameter; we write $i{:}n$ to indicate that the parameter $i$ varies over the set $[n]$.\footnote{For simplicity we will often omit these labels.} Wires and vertices now represent \textit{families} of Hilbert spaces and linear maps respectively, indexed by the parameters of all adjoining regions.
A composite surface diagram represents a family of composite linear maps, indexed by the parameters of all open regions, with closed regions being summed over. This is illustrated by the following example:

\begin{equation}
\label{eq:translation}
\begin{tz}[string]
\clip (-1,-2.6) rectangle (1.9, 2.6);
\path[blueregion] (-0.8,-2) to(0.25,-2) to[out = up, in = -135] (1,-1) to [out= 135, in = down]  (0.25,0) to [out= up, in = -135] (1,1) to (1,2) to (-0.8,2);
\path[redregion, draw](1,-1) to [out= 135, in = down] (0.25,0) to [out= up, in = -135] (1,1) to [out = -45, in = up] (1.75,0) to [out= down, in = 45] (1,-1);
\draw (1,1) to (1,2);
\draw (1,-1) to[out= -135, in =up] (0.25,-2);
\draw (1,1) to [out = 45, in =down] (1.75,2);
\path[yellowregion](0.25,1.8) to [out= down, in= 135] (1,1) to [out =-155, in = 135,looseness=1.5] (1,-1.6) to [out= -45, in =up] (1.4,-2.2) to (-1, -2.2) to (-1,1.8) to (0.25,1.8);
\draw  (0.25,1.8) to [out= down, in= 135] (1,1) to [out =-155, in = 135,looseness=1.5] (1,-1.6) to [out= -45, in =up] (1.4,-2.2) ;
\node[blob] at (1,1) {$A$};
\node[blob] at (1,-1) {$B$};
\node[scale=0.8] at (1.75,2.2) {$R$};
\node[scale=0.8] at (0.5,0) {$H$};
\node[scale=0.8] at (1.5, 0) {$J$};
\node[scale=0.8] at (1, 2.2) {$M$};
\node[scale=0.8] at (0.25, -2.4) {$V$};
\node[scale=0.8] at (1.4,-2.4) {$W$};
\node[scale=0.8] at (0.25,2.2) {$L$};
\node[dimension, below left] at (0.9,1.95) {$i{:}n$};
\node[dimension] at (1,0) {$j{:}p$};
\node[dimension,below left] at (0.25,1.75) {$k{:}m$};
\end{tz}
\hspace{25pt}\leftrightsquigarrow\hspace{25pt}\sum_{j\in [p]}\hspace{-25pt}
\begin{tz}[string,xscale=1.1]
\clip (-1,-2.6) rectangle (1.9, 2.6);
\path[draw](1,-1) to [out= 135, in = down] (0.25,0) to [out= up, in = -135] (1,1) to [out = -45, in = up] (1.75,0) to [out= down, in = 45] (1,-1);
\draw (1,1) to (1,2);
\draw (1,-1) to[out= -135, in =up] (0.25,-2.2);
\draw  (0.25,2) to [out= down, in= 135] (1,1) to [out =-155, in = 135,looseness=1.5] (1,-1.8) to [out= -45, in =up] (1.2,-2.2) ;
\draw (1,1) to [out = 45, in =down] (1.75,2);
\node[blob,minimum width=21pt] at (1,1) {$A_{ijk}$};
\node[blob,minimum width=21pt] at (1,-1) {$B_{ij}$};
\node[scale=0.8] at (1.75,2.3) {$R$};
\node[scale=0.8] at (0.5,0) {$H_{ij}$};
\node[scale=0.8] at (1.5, 0) {$J_{j}$};
\node[scale=0.8] at (1, 2.3) {$M_{i}$};
\node[scale=0.8] at (0.25, -2.4) {$V_{i}$};
\node[scale=0.8] at (1.2,-2.4) {$W_k$};
\node[scale=0.8] at (0.25,2.3) {$L_k$};
\end{tz}
\end{equation}The diagram on the left represents an entire family of composite linear maps. The maps which comprise this family are  given by the right-hand diagrams for different values of $k$ and $i$, which index the open regions.  The closed region labelled $j:p$ is summed over.
 
 In some situations, particularly when dealing with equations of shaded diagrams with different connectivity, we may need to have multiple parameters $i:n$, $i':n$ labelling the same region. In this case, we need an auxiliary rule that says the corresponding linear map is zero when $i \neq i'$.
 
In this paper we will not encode any information in the colour of a region, with colour used only as a way to bring out the geometry of the diagram. Regions which are drawn in different colours may represent the same object, and regions which are the same colour may represent different objects. At all times, when it matters, we indicate the object being encoded with the parameter label (like $i:n$).

Given this interpretation of diagrams $D$ as families of linear maps $D_i$, we define two diagrams $D,D'$ to be equal when all the corresponding linear maps $D_i,D_i'$ are equal, and the scalar product $\lambda D$ as the family of linear maps $\lambda D_i$.

\paragraph{Duality.}We define the linear maps \mbox{$\eta:\mathbb{C} \to \C^n \otimes \C^n$} and \mbox{$\epsilon: \C^n \otimes \C^n \to \mathbb{C}$} as follows:
\begin{calign} 
\label{eq:vanillacup}
\begin{tz}[yscale=1.1, yscale=-1,string]
\draw (0,0) to [out=up, in=up, looseness=2] (1,0);
\node[scale=0.8] at (0,-0.2) {$\C^n$};
\node[scale=0.8] at (1,-0.2) {$\C^n$};
\end{tz}
&
\begin{tz}[yscale=1.1,string]
\draw (0,0) to [out=up, in=up, looseness=2] (1,0);
\node[scale=0.8] at (0,-0.2) {$\C^n$};
\node[scale=0.8] at (1,-0.2) {$\C^n$};
\end{tz}
\\*[-5pt]
\nonumber
\eta:\,1\mapsto \sum_{i}\, \ket{i} \otimes \ket{i} 
&
\epsilon:\,\ket{i}\otimes \ket{j} \mapsto \delta_{ij}
\end{calign}
The notation is justified, since the following equations can be demonstrated:
\begin{calign}
\label{eq:vanillasnake}
\begin{tz}[string, scale=1.5]
\draw (0,0.25) to (0,1) to [out=up, in=up, looseness=2] (-0.5,1) to [out=down, in=down, looseness=2] (-1,1) to (-1,1.75);
\end{tz}
\eqgap=\eqgap
\begin{tz}[scale=1.5]
\draw [string] (0,0.25) to (0,1.75);
\end{tz}
\eqgap=\eqgap
\begin{tz}[string,xscale=-1, scale=1.5]
\draw (0,0.25) to (0,1) to [out=up, in=up, looseness=2] (-0.5,1) to [out=down, in=down, looseness=2] (-1,1) to (-1,1.75);
\end{tz}
\end{calign}
It can easily be verified that for a linear map $f:\C^n \to \C^n$, we have the following:
\begin{calign} 
\label{eq:trace}
\begin{tz}[yscale=1.1,string]
\node (f) [blob] at (0,-0.25){$f$};
\draw [string] (f.center) to (f.north) to [out=up, in=up, looseness=2] (1,0 |- f.north) to (1,-0.5 |- f.south) to [out=down, in=down, looseness=2] (f.south) to (f.center);
\end{tz}
\eqgap=\eqgap
\begin{tz}[yscale=1.1,string, xscale=-1]
\node (f) [blob] at (0,-0.25){$f$};
\draw [string] (f.center) to (f.north) to [out=up, in=up, looseness=2] (1,0 |- f.north) to (1,-0.5 |- f.south) to [out=down, in=down, looseness=2] (f.south) to (f.center);
\end{tz}
\eqgap=\eqgap
\Tr(f) 
&
\begin{tz}[string]
\draw (0,0) to [out=up, in=up, looseness=2] (1,0) to [out= down, in=down , looseness=2] (0,0);
\end{tz}
\eqgap=\eqgap n
\end{calign}
Since wires in our framework correspond to indexed families of Hilbert spaces, and assuming for simplicity that  all Hilbert spaces are chosen to be of the form $\C^n$ for some $n \in \N$, we can introduce the following notation for families of linear maps of the form~\eqref{eq:vanillacup}:
\begin{calign}
\label{eq:cupsandcaps}
\begin{tz}[yscale=1.1, yscale=-1]
\draw [redregion, string, draw] (0,0) to [out=up, in=up, looseness=2] (1,0);
\draw [blueregion] (0,0) to (-0.5,0) to (-0.5,1) to (1.5,1) to (1.5,0) to (1,0) to [out=up, in=up, looseness=2] (0,0);
\end{tz}
&
\begin{tz}[yscale=1.1]
\draw [blueregion, string, draw] (0,0) to [out=up, in=up, looseness=2] (1,0);
\draw [redregion] (0,0) to (-0.5,0) to (-0.5,1) to (1.5,1) to (1.5,0) to (1,0) to [out=up, in=up, looseness=2] (0,0);
\end{tz} 
&
\begin{tz}[yscale=1.1, yscale=-1]
\draw [blueregion, string, draw] (0,0) to [out=up, in=up, looseness=2] (1,0);
\draw [redregion] (0,0) to (-0.5,0) to (-0.5,1) to (1.5,1) to (1.5,0) to (1,0) to [out=up, in=up, looseness=2] (0,0);
\end{tz}
&
\begin{tz}[yscale=1.1]
\draw [redregion, string, draw] (0,0) to [out=up, in=up, looseness=2] (1,0);
\draw [blueregion] (0,0) to (-0.5,0) to (-0.5,1) to (1.5,1) to (1.5,0) to (1,0) to [out=up, in=up, looseness=2] (0,0);
\end{tz}
\end{calign}
Then the following hold as a direct consequence of equations~\eqref{eq:vanillasnake}:
\begin{calign}
\label{eq:colouredcupscaps}
\begin{tz}[scale=1,xscale=-1]
\draw [blueregion] (0,0.25) to (0,1) to [out=up, in=up, looseness=2] (-0.5,1) to [out=down, in=down, looseness=2] (-1,1) to (-1,1.75) to (0.5,1.75) to (0.5,0.25);
\path[redregion] (0,0.25) to (0,1) to [out=up, in=up, looseness=2] (-0.5,1) to [out=down, in=down, looseness=2] (-1,1) to (-1,1.75) to (-1.5,1.75) to (-1.5, 0.25);
\draw [string] (0,0.25) to (0,1) to [out=up, in=up, looseness=2] (-0.5,1) to [out=down, in=down, looseness=2] (-1,1) to (-1,1.75);
\end{tz}
=
\begin{tz}[scale=1,xscale=-1]
\draw [blueregion] (0,0.25) to (0,1.75) to (0.5,1.75) to (0.5,0.25);
\draw[redregion] (0,0.25) to (0,1.75) to (-0.5,1.75) to (-0.5,0.25);
\draw [string] (0,0.25) to (0,1.75);
\end{tz}
=
\begin{tz}[scale=1,scale=-1]
\draw [blueregion] (0,0.25) to (0,1) to [out=up, in=up, looseness=2] (-0.5,1) to [out=down, in=down, looseness=2] (-1,1) to (-1,1.75) to (0.5,1.75) to (0.5,0.25);
\path[redregion] (0,0.25) to (0,1) to [out=up, in=up, looseness=2] (-0.5,1) to [out=down, in=down, looseness=2] (-1,1) to (-1,1.75) to (-1.5,1.75) to (-1.5, 0.25);
\draw [string] (0,0.25) to (0,1) to [out=up, in=up, looseness=2] (-0.5,1) to [out=down, in=down, looseness=2] (-1,1) to (-1,1.75);
\end{tz}
&
\begin{tz}[scale=1]
\draw [blueregion] (0,0.25) to (0,1) to [out=up, in=up, looseness=2] (-0.5,1) to [out=down, in=down, looseness=2] (-1,1) to (-1,1.75) to (0.5,1.75) to (0.5,0.25);
\path[redregion] (0,0.25) to (0,1) to [out=up, in=up, looseness=2] (-0.5,1) to [out=down, in=down, looseness=2] (-1,1) to (-1,1.75) to (-1.5,1.75) to (-1.5, 0.25);
\draw [string] (0,0.25) to (0,1) to [out=up, in=up, looseness=2] (-0.5,1) to [out=down, in=down, looseness=2] (-1,1) to (-1,1.75);
\end{tz}
=
\begin{tz}[scale=1]
\draw [blueregion] (0,0.25) to (0,1.75) to (0.5,1.75) to (0.5,0.25);
\draw[redregion] (0,0.25) to (0,1.75) to (-0.5,1.75) to (-0.5,0.25);
\draw [string] (0,0.25) to (0,1.75);
\end{tz}
=
\begin{tz}[scale=1,yscale=-1]
\draw [blueregion] (0,0.25) to (0,1) to [out=up, in=up, looseness=2] (-0.5,1) to [out=down, in=down, looseness=2] (-1,1) to (-1,1.75) to (0.5,1.75) to (0.5,0.25);
\path[redregion] (0,0.25) to (0,1) to [out=up, in=up, looseness=2] (-0.5,1) to [out=down, in=down, looseness=2] (-1,1) to (-1,1.75) to (-1.5,1.75) to (-1.5, 0.25);
\draw [string] (0,0.25) to (0,1) to [out=up, in=up, looseness=2] (-0.5,1) to [out=down, in=down, looseness=2] (-1,1) to (-1,1.75);
\end{tz}
\end{calign}

\paragraph{Dagger structure.}
Given a family of linear maps, we define its \textit{adjoint} (or \textit{dagger}) to be the family consisting of the adjoints of the linear maps:
\begin{calign}
\label{eq:dagger}
\begin{tz}[string]
\clip (0.25-\side,-0.6) rectangle (1.75+\side,2.6);
\path[greenregion] (0.25-\side,0) to (0.25,0) to [out= up, in=-135] (1,1) to (1,2) to (0.25-\side,2);
\path[blueregion] (1.75+\side,0) to (1.75,0) to [out= up, in=-45] (1,1) to (1,2) to (1.75+\side,2);
\path[redregion, draw] (0.25,0) to [out= up, in=-135] (1,1) to [out=-45, in=up] (1.75,0);
\draw (1,1) to (1,2);
\node[blob] at (1,1) {$A$};
\node[scale=0.8] at (0.25,-0.27) {$H$};
\node[scale=0.8] at (1.75, -0.27) {$V$};
\node[scale=0.8] at (1, 2.27) {$W$};
\node[dimension,below right] at (0.25-\side, 1.95) {$i{:}n$};
\node[dimension,below left] at (1.75+\side, 1.95) {$k{:}p$};
\node[dimension,above] at (1, 0.05) {$j{:}m$};
\end{tz}
\,\,\leftrightsquigarrow \hspace{-5pt}
\begin{tz}[string]
\clip (0,-0.6) rectangle (2,2.6);
\draw (0.25,0) to [out= up, in=-135] (1,1) to [out=-45, in=up] (1.75,0);
\draw (1,1) to (1,2);
\node[blob] at (1,1) {$A_{ijk}$};
\node[scale=0.8] at (0.25,-0.27) {$H_{ij}$};
\node[scale=0.8] at (1.75, -0.27) {$V_{jk}$};
\node[scale=0.8] at (1, 2.27) {$W_{ik}$};
\end{tz}& 
\begin{tz}[string,yscale=-1]
\clip (0.25-\side,-0.6) rectangle (1.75+\side,2.6);
\path[greenregion] (0.25-\side,0) to (0.25,0) to [out= up, in=-135] (1,1) to (1,2) to (0.25-\side,2);
\path[blueregion] (1.75+\side,0) to (1.75,0) to [out= up, in=-45] (1,1) to (1,2) to (1.75+\side,2);
\path[redregion, draw] (0.25,0) to [out= up, in=-135] (1,1) to [out=-45, in=up] (1.75,0);
\draw (1,1) to (1,2);
\node[blob] at (1,1) {$A^\dagger$};
\node[scale=0.8] at (0.25,-0.27) {$H$};
\node[scale=0.8] at (1.75, -0.27) {$V$};
\node[scale=0.8] at (1, 2.27) {$W$};
\node[dimension,above right] at (0.25-\side, 1.95) {$i{:}n$};
\node[dimension,above left] at (1.75+\side, 1.95) {$k{:}p$};
\node[dimension,below] at (1, 0.05) {$j{:}m$};
\end{tz}
\,\,\leftrightsquigarrow \hspace{-5pt}
\begin{tz}[string,yscale=-1]
\clip (0,-0.6) rectangle (2,2.6);
\draw (0.25,0) to [out= up, in=-135] (1,1) to [out=-45, in=up] (1.75,0);
\draw (1,1) to (1,2);
\node[blob] at (1,1) {$A^\dagger_{ijk}$};
\node[scale=0.8] at (0.25,-0.27) {$H_{ij}$};
\node[scale=0.8] at (1.75, -0.27) {$V_{jk}$};
\node[scale=0.8] at (1, 2.27) {$W_{ik}$};
\end{tz}
\end{calign}
Graphically, we can think of the adjoint as a reflection about a horizontal axis. This is justified, since the following holds:
\begin{calign}
\left(
\raisebox{-1pt}{$\begin{tz}[yscale=1.1, yscale=-1]
\draw [redregion, string, draw] (0,0) to [out=up, in=up, looseness=2] (1,0);
\draw [blueregion] (0,0) to (-0.5,0) to (-0.5,1) to (1.5,1) to (1.5,0) to (1,0) to [out=up, in=up, looseness=2] (0,0);
\end{tz}$}
\right)^\dagger
 = 
\begin{tz}[yscale=1.1]
\draw [redregion, string, draw] (0,0) to [out=up, in=up, looseness=2] (1,0);
\draw [blueregion] (0,0) to (-0.5,0) to (-0.5,1) to (1.5,1) to (1.5,0) to (1,0) to [out=up, in=up, looseness=2] (0,0);
\end{tz}
&
\left(
\raisebox{-1pt}{$\begin{tz}[yscale=1.1, yscale=-1]
\draw [blueregion, string, draw] (0,0) to [out=up, in=up, looseness=2] (1,0);
\draw [redregion] (0,0) to (-0.5,0) to (-0.5,1) to (1.5,1) to (1.5,0) to (1,0) to [out=up, in=up, looseness=2] (0,0);
\end{tz}$}
\right)^\dagger
 = 
\begin{tz}[yscale=1.1]
\draw [blueregion, string, draw] (0,0) to [out=up, in=up, looseness=2] (1,0);
\draw [redregion] (0,0) to (-0.5,0) to (-0.5,1) to (1.5,1) to (1.5,0) to (1,0) to [out=up, in=up, looseness=2] (0,0);
\end{tz}
\end{calign}
In total, every vertex appears in four variants:
\begin{calign}
\begin{tz}[xscale=0.8]
\draw [blueregion] (0,0) rectangle +(1,2);
\draw [redregion] (1,0) rectangle +(1,2);
\draw [string] (1,0) to +(0,2);
\node [blob] at (1,1) {$F$};
\end{tz}
&
\begin{tz}[xscale=0.8]
\draw [redregion] (0,0) rectangle +(1,2);
\draw [blueregion] (1,0) rectangle +(1,2);
\draw [string] (1,0) to +(0,2);
\node [blob] at (1,1) {$F^*$};
\end{tz}
:=\begin{tz}[scale=1.33, xscale=1]
\node (f) [blob] at (-0.5,1) {$F$};
\draw [blueregion] (0,0.25) to (0,1 |- f.north) to [out=up, in=up, looseness=2] (f.north) to (f.south) to [out=down, in=down, looseness=2] (-1,1 |- f.south) to (-1,1.75) to (0.5,1.75) to (0.5,0.25);
\draw [redregion] (0,0.25) to (0,1 |- f.north) to [out=up, in=up, looseness=2] (f.north) to (f.south) to [out=down, in=down, looseness=2] (-1,1 |- f.south) to (-1,1.75) to (-1.5,1.75) to (-1.5,0.25);
\draw [string] (0,0.25) to (0,1 |- f.north) to [out=up, in=up, looseness=2] (f.north) to (f.south) to [out=down, in=down, looseness=2] (-1,1 |- f.south) to (-1,1.75);
\end{tz}
=\begin{tz}[scale=1.33, xscale=-1]
\node (f) [blob] at (-0.5,1) {$F$};
\draw [redregion] (0,0.25) to (0,1 |- f.north) to [out=up, in=up, looseness=2] (f.north) to (f.south) to [out=down, in=down, looseness=2] (-1,1 |- f.south) to (-1,1.75) to (0.5,1.75) to (0.5,0.25);
\draw [blueregion] (0,0.25) to (0,1 |- f.north) to [out=up, in=up, looseness=2] (f.north) to (f.south) to [out=down, in=down, looseness=2] (-1,1 |- f.south) to (-1,1.75) to (-1.5,1.75) to (-1.5,0.25);
\draw [string] (0,0.25) to (0,1 |- f.north) to [out=up, in=up, looseness=2] (f.north) to (f.south) to [out=down, in=down, looseness=2] (-1,1 |- f.south) to (-1,1.75);
\end{tz}
\\
\label{eq:reflectedrotated}
\begin{tz}[xscale=0.8]
\draw [blueregion] (0,0) rectangle +(1,2);
\draw [redregion] (1,0) rectangle +(1,2);
\draw [string] (1,0) to +(0,2);
\node [blob] at (1,1) {$F^\dagger$};
\end{tz}
&
\begin{tz}[xscale=0.8]
\draw [redregion] (0,0) rectangle +(1,2);
\draw [blueregion] (1,0) rectangle +(1,2);
\draw [string] (1,0) to +(0,2);
\node [blob] at (1,1) {$F_*$};
\end{tz}
:=\begin{tz}[scale=1.33, xscale=1]
\node (f) [blob, yscale=1] at (-0.5,1) {$F$};
\draw [blueregion] (0,0.25) to (0,1 |- f.north) to [out=up, in=up, looseness=2] (f.north) to (f.south) to [out=down, in=down, looseness=2] (-1,1 |- f.south) to (-1,1.75) to (0.5,1.75) to (0.5,0.25);
\draw [redregion] (0,0.25) to (0,1 |- f.north) to [out=up, in=up, looseness=2] (f.north) to (f.south) to [out=down, in=down, looseness=2] (-1,1 |- f.south) to (-1,1.75) to (-1.5,1.75) to (-1.5,0.25);
\draw [string] (0,0.25) to (0,1 |- f.north) to [out=up, in=up, looseness=2] (f.north) to (f.south) to [out=down, in=down, looseness=2] (-1,1 |- f.south) to (-1,1.75);
\node (f) [blob] at (-0.5,1) {$F^\dagger$};
\end{tz}
=\begin{tz}[scale=1.33, xscale=-1]
\node (f) [blob] at (-0.5,1) {$F$};
\draw [redregion] (0,0.25) to (0,1 |- f.north) to [out=up, in=up, looseness=2] (f.north) to (f.south) to [out=down, in=down, looseness=2] (-1,1 |- f.south) to (-1,1.75) to (0.5,1.75) to (0.5,0.25);
\draw [blueregion] (0,0.25) to (0,1 |- f.north) to [out=up, in=up, looseness=2] (f.north) to (f.south) to [out=down, in=down, looseness=2] (-1,1 |- f.south) to (-1,1.75) to (-1.5,1.75) to (-1.5,0.25);
\draw [string] (0,0.25) to (0,1 |- f.north) to [out=up, in=up, looseness=2] (f.north) to (f.south) to [out=down, in=down, looseness=2] (-1,1 |- f.south) to (-1,1.75);
\node (f) [blob] at (-0.5,1) {$F^\dagger$};
\end{tz}
\end{calign}
The equations on the right-hand sides can be shown to follow from the definitions~\eqref{eq:vanillacup}.

A dagger structure gives rise to a general notion of unitarity.
\begin{definition}
\label{def:unitary2morphism}
A vertex $U$ is \textit{unitary} when it satisfies the following equations:
\begin{align}
\label{eq:unitarity}
\begin{tz}[xscale=1.5,string]
\path[blueregion] (0,0) to (0.5,0) to (0.5,2) to (0,2);
\path[redregion] (0.5,0) to (1,0) to (1,2) to (0.5,2);
\draw (0.5,0) to (0.5,2);
\node[blob] at (0.5,0.6) {$U$};
\node[blob] at (0.5,1.4) {$U^\dagger$};
\end{tz}
\eqgap&=\eqgap
\begin{tz} [xscale=1.5,string]
\path[blueregion] (0,0) to (0.5,0) to (0.5,2) to (0,2) ;
\path[redregion] (0.5,0) to (1,0) to (1,2) to (0.5,2) ;
\draw (0.5,0) to  (0.5,2) ;
\end{tz}
&
\begin{tz}[xscale=1.5,string]
\path[blueregion] (0,0) to (0.5,0) to (0.5,2) to (0,2);
\path[redregion] (0.5,0) to (1,0) to (1,2) to (0.5,2);
\draw (0.5,0) to (0.5,2);
\node[blob] at (0.5,0.6) {$U^\dagger$};
\node[blob] at (0.5,1.4) {$U$};
\end{tz}
\eqgap&=\eqgap
\begin{tz} [xscale=1.5,string]
\path[blueregion] (0,0) to (0.5,0) to (0.5,2) to (0,2) ;
\path[redregion] (0.5,0) to (1,0) to (1,2) to (0.5,2) ;
\draw (0.5,0) to  (0.5,2) ;
\end{tz}
\end{align}
\end{definition}

\paragraph{Standard boundaries.} This paper only makes use of a restricted portion of this calculus: wires which bound only one shaded region always correspond  to the 1\-dimensional Hilbert space $\C$ for any value of the controlling parameter. (Wires that do not bound regions may correspond to Hilbert spaces of any finite dimension.)
In particular, since they are 1\-dimensional, the Hilbert spaces arising from standard boundaries are not depicted in the corresponding family of tensor diagrams:
\begin{calign}
\label{eq:boundary}
\begin{tz}[string]
\path[blueregion] (1.75,0) rectangle (1,2);
\draw[bnd] (1,0) to (1,2);
\node[dimension] at (1.4,1.0) {$i{:}n$};
\end{tz}
\,\, \leftrightsquigarrow\hspace{-10pt}
\begin{tz}[string]
\clip (0.25, -0.5) rectangle (1.75, 2.5);
\draw[dotted]  (1,0) to (1,2);
\node[scale=0.7] at (1,-0.3) {$H_i = \mathbb{C}$};
\end{tz}
&
\begin{tz}[string]
\path[blueregion] (0.25,0) rectangle (1,2);
\draw[bnd] (1,0) to (1,2);
\node[dimension] at (0.6,1) {$i{:}n$};
\end{tz}
\,\, \leftrightsquigarrow\hspace{-10pt}
\begin{tz}[string]
\clip (0.25, -0.5) rectangle (1.75, 2.5);
\draw[dotted]  (1,0) to (1,2);
\node[scale=0.7] at (1,-0.3) {$H_i = \mathbb{C}$};
\end{tz}
\end{calign}
This means that once the parameter $i:n$ for the region is given, no further labelling is needed for the wire itself.

The following properties may be verified for these standard boundaries:
\begin{calign}
\label{eq:circlen}
\begin{tz}[string]
\path[blueregion] (-1,-1) rectangle (1,1);
\path[ fill=white,bnd] (0,0) circle (0.5cm);
\end{tz}
\eqgap=\eqgap
\begin{tz}[string]
\path[blueregion] (-1,-1) rectangle (1,1);
\end{tz}
&
\begin{tz}
\draw [blueregion, bnd] (0,0) circle (0.5cm);
\end{tz}
\eqgap=\eqgap
n
\end{calign}

\noindent
Many definitions and results of this paper hold more generally, but the main application to constructions of Hadamard matrices, unitary error bases and quantum Latin squares in Sections~\ref{sec:binaryquantum}--\ref{sec:infinity} only involve this restricted calculus.

\paragraph{Labels for wires.}
As mentioned above, wires that do not bound regions may correspond to a Hilbert space $\C^n$ of any finite dimension. We will label such a wire simply by $n$, the dimension of its Hilbert space, or sometimes as $i:n$, indicating that $i$ is a parameter that will range over the computational basis elements $\ket i \in \C^n$.

\paragraph{Biunitaries.} Having defined our graphical calculus, we now  define biunitarity.

\def\sidew{0.5}
\begin{definition}
A \emph{biunitary} is a vertex
\begin{equation}
\label{eq:2cellbiunitary}
\begin{tz}[string]
\node [blob] at (1,1) {$U$};
\path[redregion] (0.25,0) to [out=90, in=-135] (1,1) to [out= 135, in=-90] (0.25,2) to (0.25-\sidew,2) to (0.25-\sidew,0);
\path[blueregion] (1.75,0) to [out=90, in=-45] (1,1) to [out= 45, in=-90] (1.75,2) to (1.75+\sidew,2) to (1.75+\sidew,0);
\path[greenregion,draw, string] (0.25,0) to [out=90, in=-135] (1,1) to [out=-45, in=90] (1.75,0);
\path[yellowregion, draw, string] (0.25,2) to [out=-90, in=135] (1,1) to [out=45, in=-90] (1.75,2);
\end{tz}
\end{equation}
which is unitary~\eqref{eq:biunitaryverticallyunitary2}, and which also satisfies the following \textit{horizontal unitarity} equations~\eqref{eq:biunitaryhorizontallyunitary2} for some scalar $\lambda$:
\def\bigangle{150}
\def\smallangle{30}
\def \sidew {0.5}
\def \scl{0.69}
\begin{align}
\label{eq:biunitaryverticallyunitary2}
\begin{tz}[string,scale=\scl]
\path[redregion] (0.25-\sidew,0) rectangle (1,3.5);
\path[blueregion] (1,0) rectangle (1.75+\sidew,3.5);
\path[fill=white] (0.25,0) to [out=90, in=-135] (1,1) to [out=-45, in=90] (1.75,0);
\path[greenregion,draw] (0.25,0) to [out=90, in=-135] (1,1) to [out=-45, in=90] (1.75,0);
\path[fill=white] (0.25,3.5) to [out=-90, in=135] (1,2.5) to [out=45, in=-90] (1.75,3.5);
\path[greenregion,draw] (0.25,3.5) to [out=-90, in=135] (1,2.5) to [out=45, in=-90] (1.75,3.5);
\path[fill=white] (1,2.5) to [out=-135, in=90] (0.35, 1.75) to [out=-90, in=135] (1,1) to [out=45, in=-90] (1.65,1.75) to [out=90, in=-45] (1,2.5);
\path[yellowregion,draw] (1,2.5) to [out=-135, in=90] (0.35, 1.75) to [out=-90, in=135] (1,1) to [out=45, in=-90] (1.65,1.75) to [out=90, in=-45] (1,2.5);
\node[blob] at (1,1) {$U$};
\node[blob] at (1,2.5) {$U^\dagger$};
\end{tz}
&=
\begin{tz}[string,scale=\scl]
\path[redregion] (0.25-\sidew,0) rectangle (0.25,3.5);
\path[blueregion] (1.75,0) rectangle (1.75+\sidew,3.5);
\path[greenregion] (0.25,0) rectangle (1.75,3.5);
\draw (0.25,0) to (0.25,3.5);
\draw (1.75,0) to (1.75,3.5);
\end{tz}
&
\begin{tz}[string,scale=\scl]
\path[redregion] (0.25-\sidew,0) rectangle (1,3.5);
\path[blueregion] (1,0) rectangle (1.75+\sidew,3.5);
\path[fill=white] (0.25,0) to [out=90, in=-135] (1,1) to [out=-45, in=90] (1.75,0);
\path[yellowregion,draw] (0.25,0) to [out=90, in=-135] (1,1) to [out=-45, in=90] (1.75,0);
\path[fill=white] (0.25,3.5) to [out=-90, in=135] (1,2.5) to [out=45, in=-90] (1.75,3.5);
\path[yellowregion,draw] (0.25,3.5) to [out=-90, in=135] (1,2.5) to [out=45, in=-90] (1.75,3.5);
\path[fill=white] (1,2.5) to [out=-135, in=90] (0.35, 1.75) to [out=-90, in=135] (1,1) to [out=45, in=-90] (1.65,1.75) to [out=90, in=-45] (1,2.5);
\path[greenregion,draw] (1,2.5) to [out=-135, in=90] (0.35, 1.75) to [out=-90, in=135] (1,1) to [out=45, in=-90] (1.65,1.75) to [out=90, in=-45] (1,2.5);
\node[blob] at (1,1) {$U^\dagger$};
\node[blob] at (1,2.5) {$U$};
\end{tz}
&=
\begin{tz}[string,scale=\scl]
\path[redregion] (0.25-\sidew,0) rectangle (0.25,3.5);
\path[blueregion] (1.75,0) rectangle (1.75+\sidew,3.5);
\path[yellowregion] (0.25,0) rectangle (1.75,3.5);
\draw (0.25,0) to (0.25,3.5);
\draw (1.75,0) to (1.75,3.5);
\end{tz}
\\[5pt]
\label{eq:biunitaryhorizontallyunitary2}
\begin{tz}[string,scale=\scl]
\path[redregion] (3.25,2) to [out=-90, in=45] (2.5,1) to [out=-45, in=90] (3.25,0) to(3.25+\sidew,0) to (3.25+\sidew,2);
\path[redregion] (0.25,0) to [out=90, in=-135] (1,1) to [out=135, in=-90] (0.25,2) to (0.25-\sidew,2) to (0.25-\sidew,0);
\path[blueregion] (2.5,1) to [out=-135, in=0] (1.75,0.3)  to [out=180, in=-45] (1,1) to [out=45, in=180] (1.75,1.7) to [out=0, in=135] (2.5,1);
\path[greenregion,draw] (3.25,0) to [out=90, in=-45] (2.5,1)to [out=-135, in=0] (1.75,0.3)  to [out=180, in=-45] (1,1) to [out=-135, in=90] (0.25,0);
\path[yellowregion,draw] (0.25,2) to [out=-90, in=135] (1,1) to [out=45, in=180] (1.75,1.7) to [out=0, in=135] (2.5,1) to [out=45, in=-90] (3.25,2);
\node[blob] at (1,1) {$U$};
\node[blob]at (2.5,1) {$U_*$};
\end{tz}
&=
\lambda\,
\begin{tz}[string,scale=\scl] 
\path[redregion] (3.25,0) to [out=90, in=0] (1.75,0.7)  to [out=180, in=90] (0.25,0) to (0.25-\sidew,0) to (0.25-\sidew,2) to (0.25,2) to [out=-90, in=180] (1.75,1.3) to [out=0 , in=-90] (3.25,2) to (3.25+\sidew,2) to (3.25+\sidew,0);
\path[yellowregion,draw] (3.25,2) to [out=-90, in=0] (1.75,1.3)  to [out=180, in=-90] (0.25,2);
\path[greenregion,draw] (3.25,0) to [out=90, in=0] (1.75,0.7)  to [out=180, in=90] (0.25,0);
\end{tz}
&
\begin{tz}[string,scale=\scl]
\path[blueregion] (0.25,0) to [out=90, in=-135] (1,1) to [out=135, in=-90] (0.25,2) to (0.25-\sidew,2) to (0.25-\sidew,0);
\path[redregion] (2.5,1) to [out=-135, in=0] (1.75,0.3)  to [out=180, in=-45] (1,1) to [out=45, in=180] (1.75,1.7) to [out=0, in=135] (2.5,1);
\path[blueregion] (3.25,2) to [out=-90, in=45] (2.5,1) to [out=-45, in=90] (3.25,0) to(3.25+\sidew,0) to (3.25+\sidew,2);
\path[greenregion,draw] (3.25,0) to [out=90, in=-45] (2.5,1)to [out=-135, in=0] (1.75,0.3)  to [out=180, in=-45] (1,1) to [out=-135, in=90] (0.25,0);
\path[yellowregion,draw] (0.25,2) to [out=-90, in=135] (1,1) to [out=45, in=180] (1.75,1.7) to [out=0, in=135] (2.5,1) to [out=45, in=-90] (3.25,2);
\node[blob] at (1,1) {$U_*$};
\node[blob] at (2.5,1) {$U$};
\end{tz}
&=
 \lambda\,
\begin{tz} [string,scale=\scl]
\path[blueregion] (3.25,0) to [out=90, in=0] (1.75,0.7)  to [out=180, in=90] (0.25,0) to (0.25-\sidew,0) to (0.25-\sidew,2) to (0.25,2) to [out=-90, in=180] (1.75,1.3) to [out=0 , in=-90] (3.25,2) to (3.25+\sidew,2) to (3.25+\sidew,0);
\path[yellowregion,draw] (3.25,2) to [out=-90, in=0] (1.75,1.3)  to [out=180, in=-90] (0.25,2);
\path[greenregion,draw] (3.25,0) to [out=90, in=0] (1.75,0.7)  to [out=180, in=90] (0.25,0);
\end{tz} 
\end{align}
\end{definition}

\noindent
Note that biunitarity depends on a chosen partition of the input and output wires into two parts. Such a partition may not be unique; in particular, every unitary vertex $U$ is biunitary with respect to the following partitions: 
\begin{calign}
\label{eq:unitarybiunitary}
\begin{tz}[scale=0.8,string] 
\path[blueregion] (0.25,0) to [out= up, in=-135] (1,1) to [out= 45, in=down] (1.75,2) to (0.25-\side, 2) to (0.25-\side,0);
\path[redregion] (0.25,0) to [out= up, in=-135] (1,1) to [out= 45, in=down] (1.75,2) to (1.75+\side,2) to (1.75+\side,0);
\draw  (0.25,0) to [out= up, in=-135] (1,1) to [out= 45, in=down] (1.75,2);
\node[blob] at (1,1) {$U$};
\end{tz}
&
\begin{tz}[scale=0.8,string] 
\path[blueregion] (1.75,0) to [out= up, in=-45] (1,1) to [out= 135, in=down] (0.25,2) to (0.25-\side, 2) to (0.25-\side,0);
\path[redregion]  (1.75,0) to [out= up, in=-45] (1,1) to [out= 135, in=down] (0.25,2) to (1.75+\side,2) to (1.75+\side,0);
\draw  (1.75,0) to [out= up, in=-45] (1,1) to [out= 135, in=down] (0.25,2) ;
\node[blob] at (1,1) {$U$};
\end{tz}
\end{calign}
\noindent
 The scalar $\lambda$ is uniquely determined and can be recovered as a consequence of equations~\eqref{eq:biunitaryverticallyunitary2} and~\eqref{eq:biunitaryhorizontallyunitary2}:
\begin{equation}\label{eq:recoverscalar}\lambda\,
\begin{tz}[string,scale=0.8] 
\path[redregion] (0,0) rectangle (2,3);
\path[greenregion] (2,0) rectangle (2+\side,3);
\draw (2,0) to (2,3);

\path[fill = white] (1, 1.5) circle (0.5);
\path[yellowregion,draw] (1, 1.5) circle (0.5);

\end{tz}
\, \superequals{eq:biunitaryhorizontallyunitary2}\,
\begin{tz}[string,scale=0.8]
\path[redregion] (0.25-\side,0) rectangle (4.1,3);

\path[fill=white] (2.5,1) to [out=-135, in=0] (1.75,0.3)  to [out=180, in=-45] (1,1) to [out=45, in=180] (1.75,1.7) to [out=0, in=135] (2.5,1);
\path[blueregion] (2.5,1) to [out=-135, in=0] (1.75,0.3)  to [out=180, in=-45] (1,1) to [out=45, in=180] (1.75,1.7) to [out=0, in=135] (2.5,1);

\path[fill=white] (0.25,0) to [out=90, in=-135] (1,1) to [out=-45, in=180] (1.75,0.3) to [out=0, in=-135] (2.5,1) to (2.6,1.2) to [out=-90, in=180] (3.1,0.3) to [out=0, in=-90] (3.6, 1.2) to (3.6, 3) to  (3.6+\side,3) to (3.6+\side,0);
\path[greenregion] (0.25,0) to [out=90, in=-135] (1,1) to [out=-45, in=180] (1.75,0.3) to [out=0, in=-135] (2.5,1) to (2.6,1.2) to [out=-90, in=180] (3.1,0.3) to [out=0, in=-90] (3.6, 1.2) to (3.6, 3) to  (3.6+\side,3) to (3.6+\side,0);
\draw (0.25,0) to [out=90, in=-135] (1,1) to [out=-45, in=180] (1.75,0.3) to [out=0, in=-135] (2.5,1) to (2.6,1.2) to [out=-90, in=180] (3.1,0.3) to [out=0, in=-90] (3.6, 1.2) to (3.6, 3);

\path[fill=white] (0.35,2) to [out=-90, in=135] (1,1) to [out=45, in=180] (1.75,1.7) to [out=0, in=135] (2.5,1) to [out=45, in=-90] (3.15,2) to [out=90, in=0] (1.75,2.7) to [out=180, in=90] (0.35,2);
\path[yellowregion,draw] (0.35,2) to [out=-90, in=135] (1,1) to [out=45, in=180] (1.75,1.7) to [out=0, in=135] (2.5,1) to [out=45, in=-90] (3.15,2) to [out=90, in=0] (1.75,2.7) to [out=180, in=90] (0.35,2);

\node[blob] at (1,1) {$U$};
\node[blob]at (2.5,1) {$U_*$};
\end{tz} 
\,\superequals{eq:reflectedrotated}\,
\begin{tz}[string,scale=0.8]
\path[redregion] (0.25-\side,0) rectangle (2.6,3);

\path[fill=white] (0.25,0) to [out=90, in=-135] (1,1) to (1,2) to [out=135, in=-90] (0.25,3) to (2.6, 3) to (2.6,0);
\path[greenregion] (0.25,0) to [out=90, in=-135] (1,1) to (1,2) to [out=135, in=-90] (0.25,3) to (2.6, 3) to (2.6,0);
\draw (0.25,0) to [out=90, in=-135] (1,1) to (1,2) to [out=135, in=-90] (0.25,3);

\path[fill=white] (1.1,1.3) to [out=-90, in=180] (1.6,0.5) to [out=0, in=-90] (2.1, 1.3) to (2.1,1.7) to [out= 90, in=0] (1.6, 2.5) to [out=180, in=90] (1.1, 1.7) to (1,2) to [out=-45, in=90] (1.5,1.5) to [out=-90, in=45] (1,1);
\path[blueregion,draw] (1.1,1.3) to [out=-90, in=180] (1.6,0.5) to [out=0, in=-90] (2.1, 1.3) to (2.1,1.7) to [out= 90, in=0] (1.6, 2.5) to [out=180, in=90] (1.1, 1.7) to (1,2) to [out=-45, in=90] (1.5,1.5) to [out=-90, in=45] (1,1);

\path[fill=white] (1,1) to [out=135, in=-90] (0.5, 1.5) to [out=90, in=-135] (1,2) to [out=-45, in=90] (1.5,1.5) to [out=-90, in=45] (1,1);
\path[yellowregion,draw] (1,1) to [out=135, in=-90] (0.5, 1.5) to [out=90, in=-135] (1,2) to [out=-45, in=90] (1.5,1.5) to [out=-90, in=45] (1,1);

\node[blob] at (1,1) {$U$};
\node[blob] at (1,2) {$U^\dagger$};
\end{tz}
\,\superequals{eq:biunitaryverticallyunitary2}\,
\begin{tz}[scale=0.8]
\path[redregion] (0.4-\side,0) rectangle (0.4,3);

\path[greenregion] (0.4,0) rectangle (2.4,3);
\draw [string] (0.4,0) to (0.4,3);

\path[fill=white] (1.4, 1.5) circle (0.5);
\path[blueregion,draw, string] (1.4, 1.5) circle (0.5);

\end{tz}
\end{equation}
In particular, $\lambda$ is real and positive.
We will usually use the following equivalent formulation of biunitarity.
\begin{definition}
The \emph{clockwise} and \emph{anticlockwise quarter rotation} of a vertex $U$ of type~\eqref{eq:2cellbiunitary} is given by the following composites, respectively:
\begin{calign}
\begin{tz}[string, xscale=0.5,yscale=-1]
\path [yellowregion] (0.25,0) to [out=up, in=-135] (1,1) to [out=-45, in=left] (2.3,0.3) to [out=right, in=down] (3.25,1) to (3.25,2) to (3.25+2*\side,2) to (3.25+2*\side,0);
\path[greenregion] (1.75,2) to [out=down, in=45] (1,1) to [out=135, in=right] (-0.3, 1.7) to [out=left, in=up] (-1.25,1) to (-1.25,0) to(-1.25-2*\side,0) to (-1.25-2*\side,2);
\path[redregion,draw, string] (0.25,0) to [out=up, in=-135] (1,1) to [out=135, in=right] (-0.3, 1.7) to [out=left, in=up] (-1.25,1) to (-1.25,0);
\path[blueregion,draw, string] (1.75,2) to [out=down, in=45] (1,1) to [out= -45, in= left] (2.3,0.3) to [out=right, in=down] (3.25,1) to (3.25,2);
\node [blob] at (1,1) {$U$};
\end{tz} 
\hspace{0.15\textwidth}
\begin{tz}[string, xscale=0.5]
\path [greenregion] (0.25,0) to [out=up, in=-135] (1,1) to [out=-45, in=left] (2.3,0.3) to [out=right, in=down] (3.25,1) to (3.25,2) to (3.25+2*\side,2) to (3.25+2*\side,0);
\path[yellowregion] (1.75,2) to [out=down, in=45] (1,1) to [out=135, in=right] (-0.3, 1.7) to [out=left, in=up] (-1.25,1) to (-1.25,0) to(-1.25-2*\side,0) to (-1.25-2*\side,2);
\path[redregion,draw, string] (0.25,0) to [out=up, in=-135] (1,1) to [out=135, in=right] (-0.3, 1.7) to [out=left, in=up] (-1.25,1) to (-1.25,0);
\path[blueregion,draw, string] (1.75,2) to [out=down, in=45] (1,1) to [out= -45, in= left] (2.3,0.3) to [out=right, in=down] (3.25,1) to (3.25,2);
\node [blob] at (1,1) {$U$};
\end{tz}
\end{calign}
\end{definition}
\begin{proposition}
\label{thm:rotationfactor}
Given a vertex $U$ of type~\eqref{eq:2cellbiunitary}, the following are equivalent:
\begin{enumerate}
\item $U$ is biunitary;
\item $U$ is unitary, and its clockwise quarter rotation is proportional to a unitary;
\item $U$ is unitary, and its anticlockwise quarter rotation is proportional to a unitary.
\end{enumerate}
Furthermore, in cases 2 and 3, the proportionality factor is unique up to a phase and given by a square root of $\lambda$.
\end{proposition}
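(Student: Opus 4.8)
The plan is to prove $(1)\Rightarrow(2)\Rightarrow(1)$ in detail, deduce the equivalence of $(1)$ and $(3)$ by applying this to $U^\dagger$, and then read off the statement about the proportionality factor from the same computation. Write $R(U)$ for the clockwise quarter rotation of $U$. The computational core is a single graphical identity: using the snake equations~\eqref{eq:colouredcupscaps} to straighten the wires that the rotation bends, together with the definition~\eqref{eq:reflectedrotated} of the reflected and rotated variants of a vertex, one checks that the vertical composites $R(U)^\dagger\circ R(U)$ and $R(U)\circ R(U)^\dagger$ reproduce exactly the two left-hand sides of~\eqref{eq:biunitaryhorizontallyunitary2}, while the two right-hand sides of~\eqref{eq:biunitaryhorizontallyunitary2} straighten to the identity 2-morphisms on the domain and codomain of $R(U)$. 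The reason is that the $90^\circ$ rotation carries the red--blue wires of $U$ into the vertical direction, so that the $U^\dagger$-vertex together with the wires bent around it is, by~\eqref{eq:reflectedrotated}, exactly the vertex $U_*$ appearing in~\eqref{eq:biunitaryhorizontallyunitary2}; this is why horizontal unitarity of $U$ is literally the same data as vertical unitarity of $R(U)$.

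Granting this identity, the rest is formal. If $U$ is biunitary then it is unitary by definition, and~\eqref{eq:biunitaryhorizontallyunitary2} becomes $R(U)^\dagger\circ R(U)=\lambda\,\mathrm{id}$ and $R(U)\circ R(U)^\dagger=\lambda\,\mathrm{id}$; since $\lambda$ is real and positive by the remark following~\eqref{eq:recoverscalar}, the vertex $\tfrac{1}{\sqrt{\lambda}}\,R(U)$ is unitary, which is $(1)\Rightarrow(2)$. Conversely, if $U$ is unitary and $R(U)=\mu W$ with $W$ unitary, then $R(U)^\dagger\circ R(U)=|\mu|^2\,\mathrm{id}=R(U)\circ R(U)^\dagger$, and reading the graphical identity backwards shows that $U$ satisfies~\eqref{eq:biunitaryhorizontallyunitary2} with $\lambda=|\mu|^2$, so $U$ is biunitary; this is $(2)\Rightarrow(1)$. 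For the equivalence of $(1)$ and $(3)$, observe that daggering preserves biunitarity (the horizontal composites of $U^\dagger$ are the daggers of those of $U$, and $\lambda$ is real), that being proportional to a unitary is preserved by $\dagger$, and that the anticlockwise quarter rotation of $U$ is the dagger of the clockwise quarter rotation of $U^\dagger$; applying the equivalence $(1)\Leftrightarrow(2)$ to $U^\dagger$ and combining these three facts yields $(1)\Leftrightarrow(3)$.

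For the final clause, suppose $R(U)=\mu_1W_1=\mu_2W_2$ with $W_1,W_2$ unitary. Then $W_1=(\mu_2/\mu_1)W_2$ forces $|\mu_2/\mu_1|=1$, so the two scalars agree up to a phase; and $R(U)\circ R(U)^\dagger=|\mu|^2\,\mathrm{id}$, together with the identification of this composite with $\lambda\,\mathrm{id}$ from the first paragraph, gives $|\mu|^2=\lambda$, so $\mu$ is indeed a square root of $\lambda$. The one genuinely nontrivial step is the graphical identity asserted in the first paragraph: it demands careful bookkeeping of the four shaded regions and of every wire bent by the $90^\circ$ rotation and the subsequent vertical reflection, and one must recognize the straightened diagrams as the ingredients of~\eqref{eq:biunitaryhorizontallyunitary2} and as identity 2-morphisms. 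Once the diagram is drawn correctly, however, each individual move is simply an instance of a snake identity~\eqref{eq:colouredcupscaps} or of the defining equations~\eqref{eq:reflectedrotated}, so no real calculation is needed.
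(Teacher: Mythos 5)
Your proof is correct, and it fills in, at considerably greater detail, what the paper compresses into the single sentence ``follows straightforwardly from deformations of~\eqref{eq:biunitaryhorizontallyunitary2}.'' The core observation---that the two vertical unitarity equations for $R(U)$ are, after bending wires with~\eqref{eq:colouredcupscaps} and identifying the $U^\dagger$-with-bent-wires vertex as $U_*$ via~\eqref{eq:reflectedrotated}, precisely the two horizontal unitarity equations~\eqref{eq:biunitaryhorizontallyunitary2}---is exactly the deformation the paper has in mind. One genuine difference is your treatment of $(1)\Leftrightarrow(3)$: rather than repeating the deformation argument symmetrically for the anticlockwise rotation (which is presumably what the paper intends), you deduce it from $(1)\Leftrightarrow(2)$ via the identity $R_{\mathrm{acw}}(U)=(R_{\mathrm{cw}}(U^\dagger))^\dagger$ together with the facts that $\dagger$ preserves biunitarity (with the same real $\lambda$) and preserves ``proportional to a unitary.'' This is a valid and slightly slicker route; its only cost is that it requires separately checking that daggering preserves biunitarity, a fact you assert without proof but which does hold (the vertical conditions swap, and the horizontal conditions are daggers of each other since $(U^\dagger)_* = U^*$ and $(U_*)^\dagger = U^*$, using that $\lambda$ is real by~\eqref{eq:recoverscalar}).
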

\begin{proof} The proposition follows straightforwardly from deformations of \eqref{eq:biunitaryhorizontallyunitary2}.
\end{proof}
\begin{cor}
\label{lemma:rotate}
Given a biunitary, arbitrary quarter rotations, or reflections about horizontal or vertical axes, are again proportional to biunitaries.
\end{cor}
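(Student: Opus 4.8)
The plan is to bootstrap everything from \autoref{thm:rotationfactor}, whose criterion for biunitarity---a unitary vertex together with a single quarter rotation that is proportional to a unitary---is essentially stable under the symmetries of the square. Write $R$ for the clockwise quarter rotation, so $R^{-1}$ is the anticlockwise one and $R^{-1}\circ R$ is the identity (the snake equations~\eqref{eq:colouredcupscaps} cancel the cups and caps that the two rotations introduce). First I would handle a single clockwise quarter rotation of a biunitary $U$: it is again a vertex of the biunitary type~\eqref{eq:2cellbiunitary} (with inputs and outputs relabelled), so \autoref{thm:rotationfactor} applies and gives $R(U)=\mu V$ with $V$ unitary and $\mu$ a scalar. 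Then $R^{-1}(V)=\mu^{-1}R^{-1}(R(U))=\mu^{-1}U$, which is proportional to the unitary $U$; so $V$ is a unitary whose anticlockwise quarter rotation is proportional to a unitary, hence biunitary by condition (3) of \autoref{thm:rotationfactor}. Thus $R(U)$ is proportional to a biunitary, and the mirror-image argument shows the same for $R^{-1}(U)$.

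Next I would iterate. Since $R$ and $R^{-1}$ are $\mathbb{C}$-linear operations on vertices and each carries a scalar multiple of a biunitary to a scalar multiple of a biunitary, so do all of their composites; in particular the half-turn $R^2$ and the three-quarter turn $R^3$ of a biunitary are proportional to biunitaries. This settles all the quarter rotations. For the reflection about the horizontal axis, which is the dagger: if $U$ is biunitary then $U^\dagger$ is unitary, and a direct check on the planar pictures gives the dihedral identity $R^{-1}(U^\dagger)=\big(R(U)\big)^\dagger$ (reflect-then-rotate-anticlockwise equals rotate-clockwise-then-reflect). By the first step $R(U)$ is proportional to a unitary, and the dagger of a scalar multiple of a unitary is again a scalar multiple of a unitary, so $R^{-1}(U^\dagger)$ is proportional to a unitary; hence $U^\dagger$ is biunitary by \autoref{thm:rotationfactor}. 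Finally the reflection about the vertical axis is $R^2$ followed by the dagger, so by the previous steps it carries a biunitary to a scalar multiple of a biunitary. (Abstractly: the operations carrying biunitaries to scalar multiples of biunitaries are closed under composition and contain a generating quarter rotation and the dagger, hence contain the full dihedral group of symmetries of the square, which includes every quarter rotation and the horizontal and vertical reflections.)

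Everything here is routine; the one point that needs care---and the reason the statement speaks of proportionality---is the scalars. A generic scalar multiple of a biunitary is not unitary, so ``biunitary'' genuinely cannot replace ``proportional to a biunitary''. The argument goes through precisely because \autoref{thm:rotationfactor} only requires the rotated vertex to be \emph{proportional} to a unitary, and because $R$, $R^{-1}$ and the dagger interact transparently with scalars (linearly, linearly, and conjugate-linearly). I do not anticipate any real obstacle beyond keeping this bookkeeping straight and checking the one dihedral identity against the planar pictures.
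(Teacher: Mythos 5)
Your argument is correct and is exactly the elaboration the paper intends: the paper offers no explicit proof of this corollary, treating it as an immediate consequence of \autoref{thm:rotationfactor}, and your bootstrap from that proposition (using $R^{-1}\circ R=\mathrm{id}$ via the snake equations and the dihedral identity $(R(U))^\dagger=R^{-1}(U^\dagger)$) is the natural way to supply the missing details. The one point worth making explicit when you claim that $R^{-1}(V)=\mu^{-1}U$ is "proportional to the unitary $U$" is that $U$ is unitary because it is biunitary---this is what closes the loop with condition (3) of \autoref{thm:rotationfactor}.
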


\noindent
In particular, as soon as we have characterized specific quantum structures in terms of biunitaries of certain types, we know that rotated and reflected versions of this type also correspond to this quantum structure, possibly after multiplication by a scalar.

\subsection{Characterizing quantum structures}
\label{sec:characterizing}

In this section we recall the biunitary characterizations of Hadamard matrices and unitary error bases, and give new characterizations of quantum Latin squares and controlled families. These results are summarized in \autoref{fig:biunitarytypes}.

\begin{figure}[t!]
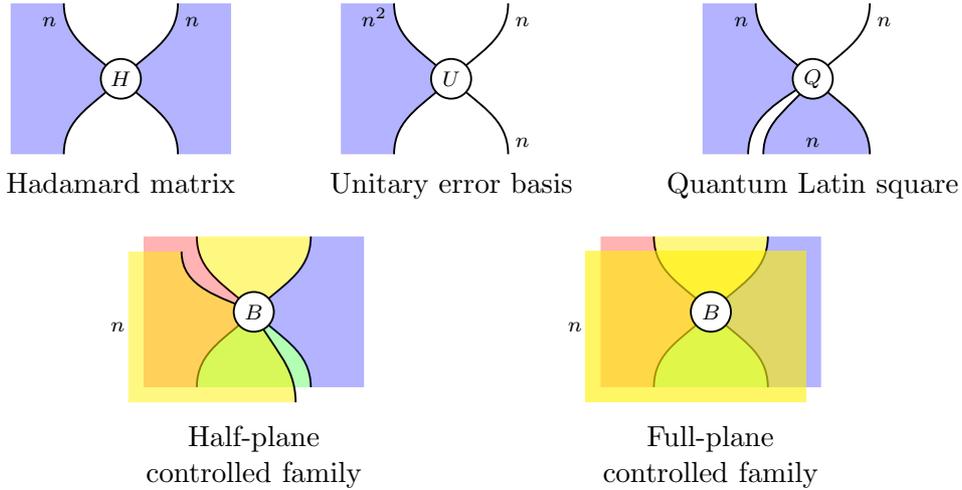

\small
\begin{calign} 
\nonumber
\begin{tz}[string]
\path [use as bounding box] (-0.25,0) rectangle +(2.5,2);
\path[blueregion] (0.25,0) to [out=90, in=-135] (1,1) to [out= 135, in=-90] (0.25,2) to (0.25-\side,2) to (0.25-\side,0);
\draw[bnd] (0.25,0) to [out=90, in=-135] (1,1) to [out=135, in=-90] (0.25,2);
\path[blueregion] (1.75,0) to [out=90, in=-45] (1,1) to [out=45, in=-90] (1.75,2) to (1.75+\side,2) to (1.75+\side,0);
\draw[bnd] (1.75,0) to [out=90, in=-45]  (1,1) to [out= 45, in=-90] (1.75,2);
\node [dimension, below right] at (1.8,1.75) {$\smash{n}$};
\node [dimension,below left] at (0.2,1.75) {$\smash{n}$};
\node [blob] at (1,1) {$H$};
\end{tz}
& 
\begin{tz}[string]
\path [use as bounding box] (-0.25,0) rectangle +(2.5,2);
\path[blueregion] (0.25,0) to [out=90, in=-135] (1,1) to [out= 135, in=-90] (0.25,2) to (0.25-\side,2) to (0.25-\side,0);
\draw[bnd] (0.25,0) to [out=90, in=-135] (1,1) to [out=135, in=-90] (0.25,2);
\path[draw, string] (1.75,0) to [out=90, in=-45] (1,1) to [out=45, in=-90] (1.75,2);
\node [dimension,below left] at (0.2,1.75) {$\smash{n^2}$};
\node [dimension, above right] at (1.8,0.05) {$n$};
\node [dimension, below right] at (1.8,1.75) {$\smash{n}$};
\node [blob] at (1,1) {$U$};
\end{tz}
&
\begin{tz}[xscale=-1, string, scale=1]
\path [use as bounding box] (0.25-\side,0) rectangle (1.75+\side,2);
\node [blob] at (1,1) {$Q$};
\path [blueregion,bnd] (0.25,0) to [out=up, in=-135] (1,1) to [out=-55, in=up] (1.65,0);
\path [blueregion] (1.75,2) to [out=down, in=45] (1,1) to [out=-35, in=up] (1.85,0) to (1.75+\side,0) to (1.75+\side,2);
\draw [bnd] (1.75,2) to [out=-90, in=45] (1,1) to [out=-35, in=up] (1.85,0);
\draw [string] (0.25,2) to [out=-90, in=135] (1,1);
\node [dimension,below left] at (1.8,1.75) {$\smash{n}$};
\node [dimension,above] at (1,0.05) {$n$};
\node [dimension,below right] at (0.2,1.75) {$\smash{n}$};
\end{tz}
\\*[0pt]
\nonumber
\text{Hadamard matrix}& \text{Unitary error basis} & \text{Quantum Latin square} 
\end{calign}

\vspace{-20pt}
\begin{calign}
\nonumber
\begin{tz}[string]
\path [use as bounding box] (0.25-\side,-0.1) rectangle (1.75+\side,2);
\path [greenregion, draw] (0.25,0) to [out=90, in=-135] (1,1) to [out=-45, in=90] (1.75,0);
\path [yellowregion, draw] (0.25,2) to [out=-90, in=135] (1,1) to [out=45, in=-90] (1.75,2);
\path [redregion] (0.25,0) to [out=90, in=-135] (1,1) to [out=135, in=-90] (0.25,2) to (0.25-\side,2) to (0.25-\side,0);
\path [blueregion] (1.75,0) to [out=90, in=-45] (1,1) to [out=45, in=-90] (1.75,2) to (1.75+\side,2) to (1.75+\side,0);
\path [yellowregion] (0.25-\xdelta-\side,2-\ydelta) to (0.25-\xdelta,2-\ydelta) to [out=-90, in=155] (1,1) to [out=-65, in=90]  (1.75-\xdelta,-\ydelta) to (0.25-\xdelta-\side,-\ydelta);
\draw[bnd] (0.25-\xdelta,2-\ydelta) to [out=-90, in=155] (1,1) to [out=-65, in=90]  (1.75-\xdelta,-\ydelta);
\node [blob] at (1,1) {$B$};
\node [dimension,left] at (0.25-\xdelta-\side,1-\ydelta) {$n$};
\end{tz} 
&
\begin{tz}[string]
\path [use as bounding box] (0.25-\side,-0.1) rectangle (1.75+\side,2);
\path [greenregion, draw] (0.25,0) to [out=90, in=-135] (1,1) to [out=-45, in=90] (1.75,0);
\path [yellowregion, draw] (0.25,2) to [out=-90, in=135] (1,1) to [out=45, in=-90] (1.75,2);
\path [redregion] (0.25,0) to [out=90, in=-135] (1,1) to [out=135, in=-90] (0.25,2) to (0.25 -\side,2) to (0.25-\side,0);
\path[blueregion] (1.75,0) to [out=90, in=-45] (1,1) to [out=45, in=-90] (1.75,2) to (1.75+\side,2) to (1.75+\side,0);
\path[yellowregion,fill opacity= 0.6] (0.25-\xdelta-\side,-\ydelta) rectangle (1.75-\xdelta+\side,2-\ydelta);
\node [blob] at (1,1) {$B$};
\node [dimension, left] at (0.25-\xdelta-\side,1-\ydelta) {$n$};
\end{tz} 
\\[5pt]\nonumber
\text{Half-plane} & \text{Full-plane}
\\[-3pt]\nonumber
\text{controlled family} & \text{controlled family}
\end{calign}

\vspace{-10pt}
\caption{Quantum structures and their associated biunitary types.\label{fig:biunitarytypes}}
\end{figure}

Except for Sections~\ref{sec:diagonalcomposition},~\ref{sec:equivalence} and the discussion of controlled families and interchangers in Section~\ref{sec:characterizing}, all wires in the following diagrams are either standard boundaries \eqref{eq:boundary} or Hilbert spaces that do not bound any region. 

\paragraph{Dimensional constraints.} For a linear map $f:H \to J$ to be unitary imposes a certain algebraic constraint on the dimensions of $H$ and $J$; namely, $\dim(H) = \dim(J)$. For a vertex of type \eqref{eq:2cellbiunitary} to be biunitary similarly induces certain constraints on the allowed labels for the surrounding regions and wires.

In all cases, these constraints are easily identified and solved for. For example, consider the following vertex $U$ and its clockwise quarter rotation:
\begin{calign}
\label{eq:dimensioncounting}
\begin{tz}[string]
\path [use as bounding box] (-0.25,0) rectangle +(2.5,2);
\path[blueregion] (0.25,0) to [out=90, in=-135] (1,1) to [out= 135, in=-90] (0.25,2) to (0.25-\side,2) to (0.25-\side,0);
\draw[bnd] (0.25,0) to [out=90, in=-135] (1,1) to [out=135, in=-90] (0.25,2);
\path[draw, string] (1.75,0) to [out=90, in=-45] (1,1) to [out=45, in=-90] (1.75,2);
\node [dimension,below left] at (0.2,1.75) {$\smash{p}$};
\node [dimension, above right] at (1.8,0.05) {$m$};
\node [dimension, below right] at (1.8,1.75) {$\smash{n}$};
\node [blob] at (1,1) {$U$};
\end{tz}
&
\begin{tz}[string, xscale=0.5,yscale=-1]
\path[blueregion,bnd] (0.25,0) to [out=up, in=-135] (1,1) to [out=135, in=right] (-0.35, 1.7) to [out=left, in=up] (-1.25,1) to (-1.25,0);
\path[draw, string] (1.75,2) to [out=down, in=45] (1,1) to [out= -45, in= left] (2.3,0.3) to [out=right, in=down] (3.25,1) to (3.25,2);
\node [blob] at (1,1) {$U$};
\node[dimension] at (-0.5, 0.3) {$\smash{p}$};
\node[above right, dimension] at (1.8,1.95) {$m$};
\node[above right, dimension] at (3.3,1.95) {$n$};
\end{tz} \end{calign}
Here, $n,m$ and $p$ denote the dimensions of the corresponding region or wire, respectively. For the first of these to be unitary requires that $n=m$, while for the second to be proportional to a unitary requires  $p = nm$. By \autoref{thm:rotationfactor}, for $U$ to be biunitary, we therefore require $(n,m,p)=(n,n,n^2)$, and the space of allowed types is parameterized by a single natural number. In a similar way, for the rest of this section, we will always label biunitaries by their allowed dimensions.

\paragraph{Hadamard matrices.} Hadamard matrices were identified by Jones to be characterized in terms of biunitarity~\cite{Jones:1999}. Complex Hadamard matrices play an important role in mathematical physics and quantum information theory~\cite{Durt:2010}; in particular, they encode the data of a basis which is unbiased with respect to the computational basis.
\begin{definition} \label{def:Hadamard}A \textit{Hadamard matrix} is a matrix $H\in \mathrm{Mat}_n(\mathbb{C})$ with the following properties, for $i,j\in [n]$:
\begin{align}
\label{eq:hadamard1}
H_{i,j} \overline H _{i,j}&= 1
\\
\label{eq:hadamard2}
\textstyle\sum_k H_{i,k}^{} \overline H_{j,k} &= \delta_{i,j} n
\\
\label{eq:hadamard3}
\textstyle\sum_k \overline H_{k,i}H_{k,j}^{}  &= \delta_{i,j} n
\end{align}
Properties~\eqref{eq:hadamard2} and\eqref{eq:hadamard3} are equivalent, but we include them both for completeness.
\end{definition}

The biunitary characterization of Hadamard matrices is due to Jones in the setting of the spin model planar algebra, which our mathematical setup generalizes. It was shown in \cite[Theorem 4.5]{Vicary:2012hq} that this characterization is equivalent to that of Coecke and Duncan in terms of interacting Frobenius algebras~\cite{Coecke:2008}.
\begin{proposition}[Jones {\cite[Section 2.11]{Jones:1999}}]
\label{prop:hadamardbiunitary}
Hadamard matrices of dimension $n$ correspond to biunitaries of the following type:
\begin{equation}
\label{eq:typeHadamard}
\begin{tz}[string]
\path [use as bounding box] (-0.25,0) rectangle +(2.5,2.2);
\path[blueregion] (0.25,0) to [out=90, in=-135] (1,1) to [out= 135, in=-90] (0.25,2) to (0.25-\side,2) to (0.25-\side,0);
\draw[bnd] (0.25,0) to [out=90, in=-135] (1,1) to [out=135, in=-90] (0.25,2);
\path[blueregion] (1.75,0) to [out=90, in=-45] (1,1) to [out=45, in=-90] (1.75,2) to (1.75+\side,2) to (1.75+\side,0);
\draw[bnd] (1.75,0) to [out=90, in=-45]  (1,1) to [out= 45, in=-90] (1.75,2);
\node [dimension, below right] at (1.8,1.75) {$\smash{n}$};
\node [dimension,below left] at (0.2,1.75) {$\smash{n}$};
\node [blob] at (1,1) {$H$};
\end{tz}
\end{equation}
\end{proposition}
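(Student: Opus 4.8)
The plan is to turn the diagrammatic conditions for a biunitary of type \eqref{eq:typeHadamard} into the entrywise equations of \autoref{def:Hadamard}, and conversely. Every wire in \eqref{eq:typeHadamard} bounds exactly one shaded region, hence by \eqref{eq:boundary} carries the space $\mathbb{C}$ for every value of its parameter; the only non-trivial regions are the two copies of $[n]$ flanking the vertex. Under the elementary dictionary of \autoref{fig:equivalentexpression}, the vertex $H$ is therefore nothing but an $[n]\times[n]$-indexed family of linear maps $\mathbb{C}\to\mathbb{C}$, i.e.\ an $n\times n$ complex matrix $H=(H_{i,j})$, with the left region contributing the row index $i$ and the right region the column index $j$. (That both flanking regions must have the same size $n$ is itself forced, as in the dimensional-constraint discussion preceding \autoref{prop:hadamardbiunitary}, by the requirement that a quarter rotation of $H$ be square.)

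Next I would expand the vertical-unitarity equations \eqref{eq:biunitaryverticallyunitary2} for this shading. With the Hadamard pattern the region enclosed between $H$ and $H^\dagger$ in the vertical composite is unshaded, hence trivial, so no internal summation occurs and each side of \eqref{eq:biunitaryverticallyunitary2} is simply the entrywise product $\overline{H_{i,j}}\,H_{i,j}$ indexed by the two surviving parameters $i,j\in[n]$. Equating with the identity gives precisely $H_{i,j}\overline H_{i,j}=1$, that is \eqref{eq:hadamard1}; the second equation of \eqref{eq:biunitaryverticallyunitary2} yields the same condition by conjugate symmetry.

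For the horizontal equations \eqref{eq:biunitaryhorizontallyunitary2} I would again substitute the Hadamard shading: now the region enclosed between $U$ and $U_*$ is shaded, so it is summed over --- call its parameter $k\in[n]$ --- while the enclosing region on the right-hand side wraps around both arguments and so enforces a Kronecker delta on the two outer parameters $i,j$. Reading off the composite (using \eqref{eq:reflectedrotated} to identify $U_*$ in terms of $H$) turns \eqref{eq:biunitaryhorizontallyunitary2} into $\sum_k H_{i,k}\overline H_{j,k}=\lambda\,\delta_{i,j}$, which is \eqref{eq:hadamard2} with the scalar $\lambda$ in place of $n$; the companion equation \eqref{eq:hadamard3} comes out of the second line of \eqref{eq:biunitaryhorizontallyunitary2} (equivalently, one can instead invoke \autoref{thm:rotationfactor} and compute the quarter rotation directly). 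Finally, once \eqref{eq:hadamard1} is known every row of $H$ has squared norm $n$, so the diagonal entries of the left-hand side force $\lambda=n$, in agreement with \eqref{eq:recoverscalar}, which already tells us $\lambda$ is real and positive. Thus a biunitary of type \eqref{eq:typeHadamard} is exactly a Hadamard matrix; the converse is the same chain of equalities read backwards, noting that a Hadamard matrix gives a vertical unitary whose quarter rotation is $\tfrac{1}{\sqrt n}$ times a unitary, hence a biunitary with $\lambda=n$.

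The content of the argument is entirely in the translation step: matching the four shaded/unshaded regions of the Hadamard type against the general templates \eqref{eq:biunitaryverticallyunitary2}--\eqref{eq:biunitaryhorizontallyunitary2}, keeping careful track of which closed regions become summation indices and which standard boundaries disappear, and pinning down the normalization $\lambda=n$. I expect that bookkeeping --- in particular the correct identification of $U_*$ and of the quarter rotation, and hence the value of $\lambda$ --- to be the only delicate point; everything else is a one-line unwinding. This is, in essence, Jones's original spin-model computation, transported into the present formalism.
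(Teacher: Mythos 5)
Your argument is correct and follows essentially the same route as the paper: unpack the diagrammatic type into an $[n]\times[n]$-indexed family of scalars, read off vertical unitarity as the entrywise unimodularity condition \eqref{eq:hadamard1}, read off horizontal unitarity (with the shaded enclosed region contributing the sum over $k$) as \eqref{eq:hadamard2}--\eqref{eq:hadamard3} up to the scalar $\lambda$, and then pin down $\lambda = n$ by combining with \eqref{eq:hadamard1}. The only differences are presentational: you spell out the dimensional constraint and the converse direction a bit more explicitly than the paper, which treats these as immediate, but the content is the same.
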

\begin{proof} 
A vertex of type \eqref{eq:typeHadamard} represents a family of scalars $H_{i,j}$ controlled by $i,j\in [n]$:
\begin{equation}
\nonumber
\begin{tz}[string]
\path [use as bounding box] (-0.25,0) rectangle +(2.5,2.);
\path[blueregion] (0.25,0) to [out=90, in=-135] (1,1) to [out= 135, in=-90] (0.25,2) to (0.25-\side,2) to (0.25-\side,0);
\draw[bnd] (0.25,0) to [out=90, in=-135] (1,1) to [out=135, in=-90] (0.25,2);
\path[blueregion] (1.75,0) to [out=90, in=-45] (1,1) to [out=45, in=-90] (1.75,2) to (1.75+\side,2) to (1.75+\side,0);
\draw[bnd] (1.75,0) to [out=90, in=-45]  (1,1) to [out= 45, in=-90] (1.75,2);
\node [dimension, below left] at (1.7+\side,1.75) {$\smash{j{:}n}$};
\node [dimension,below right] at (0.3-\side,1.75) {$\smash{i{:}n}$};
\node [blob] at (1,1) {$H$};
\end{tz}
 \hspace{35pt} \leftrightsquigarrow \hspace{30pt} 
\begin{tz}[string]
\node [blob] at (0,0) {$H_{i,j}$};
\end{tz}
\end{equation}
The first vertical unitarity equation corresponds to the following equality of controlled families:
\def\scl{0.75}
\[
\label{eq:Hvertunitary}
\begin{tz}[string,scale=\scl]
\path[blueregion] (-0.75, 0) rectangle (0, 4);
\path[blueregion] (1.5,0) rectangle (2.25,4);
\draw[bnd] (0,0) to +(0,4);
\draw[bnd] (1.5,0) to +(0,4);
\node[dimension, below left] at (2.2,3.65) {$\smash{j{:}n}$};
\node[dimension, below right] at (-0.7,3.65) {$\smash{i{:}n}$};
\end{tz}
\eqgap=\eqgap
\begin{tz}[scale=\scl]
\draw [blueregion] (0,0) to [out=up, in=-135] (0.75,1) to [out=135, in=down] (0,2) to [out=up, in=-135] (0.75,3) to [out=135, in=down] (0,4) to (-0.75,4) to (-0.75,0);
\draw [bnd] (0,0) to [out=up, in=-135] (0.75,1) to [out=135, in=down] (0,2) to [out=up, in=-135] (0.75,3) to [out=135, in=down] (0,4);
\draw [blueregion] (1.5,0) to [out=up, in=-45] (0.75,1) to [out=45, in=down] (1.5,2) to [out=up, in=-45] (0.75,3) to [out=45, in=down] (1.5,4) to (2.25,4) to (2.25,0);
\draw [bnd] (1.5,0) to [out=up, in=-45] (0.75,1) to [out=45, in=down] (1.5,2) to [out=up, in=-45] (0.75,3) to [out=45, in=down] (1.5,4);
\node [blob] at (0.75,1) {$H$};
\node [blob] at (0.75,3) {$H^\dag$};
\node[dimension, below left] at (2.2,3.65) {$\smash{j{:}n}$};
\node[dimension, below right] at (-0.7,3.65) {$\smash{i{:}n}$};
\end{tz}
\hspace{35pt} \leftrightsquigarrow \hspace{35pt}
1\, \,=\begin{tz}[scale=\scl]
\node [blob,minimum width=21pt] at (0.75,1) {$H_{i,j}$};
\node [blob,minimum width=21pt] at (0.75,3) {$\overline{H}_{i,j}$};
\end{tz}
\hspace{20pt} \forall \, i,j \in [n]
\]
This means that $H_{i,j}\overline{H}_{i,j} = 1$ for all $i,j \in [n]$ which recovers condition~\eqref{eq:hadamard1}. The other vertical composite gives the same condition. For horizontal unitarity, we consider the following equation:
\[
\lambda \, 
\begin{tz}[string,xscale=0.5,scale=\scl]
\path[blueregion] (-1.25,0) rectangle (0.25,4);
\draw[bnd] (-1.25,0) to +(0,4);
\draw[bnd] (0.25,0) to + (0,4); 
\node[dimension, below] at (-0.5, 3.65) {$\smash{j{:}n}$};
\node[dimension, above] at (-0.5, 0.35) {$\smash{i{:}n}$};
\end{tz}
\eqgap = \eqgap
\begin{tz}[string, xscale=0.5,scale=\scl]
\path[blueregion,bnd] (0.25,0) to [out=up, in=-135] (1,1) to [out=135, in=right] (-0.35, 1.7) to [out=left, in=up] (-1.25,1) to (-1.25,0);
\path[blueregion,bnd] (0.25,4) to [out=down, in=135] (1,3) to [out=-135, in=right] (-0.35, 2.3) to [out=left, in=down] (-1.25,3) to (-1.25,4);
\path[blueregion,bnd] (1.75,2) to [out=down, in=45] (1,1) to [out= -45, in= left] (2.3,0.3) to [out=right, in=down] (3.6,2)to [out=up, in=right] (2.3,3.7) to [out=left, in=45] (1,3) to [out= -45, in=up] (1.75,2);
\node [blob] at (1,1) {$H$};
\node[blob] at (1,3) {$H^\dagger$};
\node[dimension, below] at (-0.5, 3.65) {$\smash{j{:}n}$};
\node[dimension, above] at (-0.5, 0.35) {$\smash{i{:}n}$};
\node[dimension] at (2.67, 2) {$k{:}n$};
\end{tz}
\hspace{35pt} \leftrightsquigarrow \hspace{35pt}
\lambda\, \delta_{i,j}\,=\, \sum_{k\in [n]}\,\,\begin{tz}[scale=\scl]
\node [blob,minimum width=21pt] at (1,1) {$H_{i,k}$};
\node [blob,minimum width=21pt] at (1,3) {$\overline{H}_{j,k}$};
\end{tz}
\hspace{20pt} \forall \, i,j \in [n]
\]
In other words, $\sum_k H_{i,k} \overline{H}_{j,k} = \lambda\,\delta_{i,j}$ for all $i,j \in [n]$. Together with~\eqref{eq:hadamard1} this implies that $\lambda = n$ and recovers condition~\eqref{eq:hadamard2}. Similarly, condition~\eqref{eq:hadamard3} is satisfied just when the other horizontal unitarity composite is satisfied.
\end{proof}

Following the argument~\eqref{eq:recoverscalar}, the scalar $\lambda=n$ could have been recovered as follows:
\begin{equation}
\label{eq:recoverscalarH}
\lambda \,\,
\begin{tz}[scale=0.8]
\path[blueregion] (-0.4,0) rectangle (0.4,3);
\draw[bnd] (0.4,0) to (0.4,3);
\end{tz}
\,\superequals{eq:circlen}\,
\lambda\,\,
\begin{tz}[string,scale=0.8] 
\path[blueregion] (0,0) rectangle (2,3);
\draw[bnd] (2,0) to (2,3);
\path[fill = white] (1, 1.5) circle (0.5);
\path[bnd] (1, 1.5) circle (0.5);
\end{tz}
\,\superequals{eq:recoverscalar}\,
\begin{tz}[scale=0.8]
\path[blueregion] (-0.4,0) rectangle (0.4,3);
\draw[bnd] (0.4,0) to (0.4,3);
\path[fill=white] (1.4, 1.5) circle (0.5);
\path[blueregion,bnd] (1.4, 1.5) circle (0.5);
\end{tz}
\,\superequals{eq:circlen}\,
n \,\,
\begin{tz}[scale=0.8]
\path[blueregion] (-0.4,0) rectangle (0.4,3);
\draw[bnd] (0.4,0) to (0.4,3);
\end{tz}
\end{equation}
The same holds for unitary error bases and quantum Latin squares below.

\paragraph{Unitary error bases.} Originally introduced by Knill~\cite{Knill:1996_2}, unitary error bases are ubiquitous in modern quantum information theory. They lie at the heart of {quantum error correcting codes}~\cite{Shor:1996} and procedures such as {superdense coding} and {quantum teleportation}~\cite{Werner:2001}.

\begin{definition}[Knill~\cite{Knill:1996_2}] \label{def:UEB}A \textit{unitary error basis} (UEB) on an $n$\-dimensional Hilbert space $H$ is a collection of unitary matrices $\{ U_a \in \mathrm{U}(H)\,|\, a\in[n^2]\}$, satisfying the following orthogonality property, for $a,b\in [n^2]$:
\begin{equation}\Tr(U_a^\dagger U_b^{} )=n\, \delta_{a,b}\end{equation}
That is, a UEB is an orthogonal basis of the space $\mathrm{End}(H)$ consisting entirely of unitary matrices.
\end{definition}

\noindent 
We denote the $(i,j)$th matrix element of the matrix $U_a$ by $U_{a,i,j} = \left(U_a\right)_{i,j} = \bra{i} U_a \ket{j}$.

\begin{proposition}[V.~{\cite[Theorem 4.2]{Vicary:2012hq}}]
Unitary error bases on an $n$\-dimensional Hilbert space correspond to biunitaries of the following type: 
\begin{equation}\label{eq:UEBbiunitary}
\begin{tz}[string]
\path [use as bounding box] (-0.25,0) rectangle +(2.5,2.2);
\path[blueregion] (0.25,0) to [out=90, in=-135] (1,1) to [out= 135, in=-90] (0.25,2) to (0.25-\side,2) to (0.25-\side,0);
\draw[bnd] (0.25,0) to [out=90, in=-135] (1,1) to [out=135, in=-90] (0.25,2);
\path[draw, string] (1.75,0) to [out=90, in=-45] (1,1) to [out=45, in=-90] (1.75,2);
\draw (1.75,0) to [out=90, in=-45]  (1,1) to [out= 45, in=-90] (1.75,2);
\node [dimension,below left] at (0.2,1.75) {$\smash{n^2}$};
\node [dimension, above right] at (1.8,0.05) {$n$};
\node [dimension, below right] at (1.8,1.75) {$\smash{n}$};
\node [blob] at (1,1) {$U$};
\end{tz}
\end{equation}
\end{proposition}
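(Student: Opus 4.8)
The plan is to follow the template of \autoref{prop:hadamardbiunitary}. First I would identify the concrete data carried by a vertex of type~\eqref{eq:UEBbiunitary}; then translate the vertical unitarity equations~\eqref{eq:biunitaryverticallyunitary2} into index notation; then use \autoref{thm:rotationfactor} to replace the horizontal unitarity equations~\eqref{eq:biunitaryhorizontallyunitary2} by the single condition that the clockwise quarter rotation be proportional to a unitary; and finally match the resulting equations to \autoref{def:UEB}.

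Unpacking the data is immediate: the standard boundary wires contribute nothing, the left region is shaded by $[n^2]$, and the two remaining wires have dimension $n$ (all these labels are forced, as in the discussion around~\eqref{eq:dimensioncounting}). So a vertex of type~\eqref{eq:UEBbiunitary} is exactly a family of linear maps $\{U_a\colon \C^n\to\C^n\}_{a\in[n^2]}$, with matrix entries $U_{a,i,j}$ for $a\in[n^2]$ and $i,j\in[n]$. Translating~\eqref{eq:biunitaryverticallyunitary2} into families of scalar equations, exactly as in the proof of \autoref{prop:hadamardbiunitary}, gives $\sum_j U_{a,i,j}\overline{U_{a,k,j}}=\delta_{i,k}$ and $\sum_i \overline{U_{a,i,j}}\,U_{a,i,l}=\delta_{j,l}$ for every $a$; that is, $U_aU_a^\dagger = U_a^\dagger U_a = I_n$. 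So vertical unitarity holds precisely when every $U_a$ is unitary.

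For the remaining condition I would invoke \autoref{thm:rotationfactor}: $U$ is biunitary iff it is vertically unitary and its clockwise quarter rotation is proportional to a unitary. Bending the two standard boundary wires around identifies this rotation with the linear map $R\colon \C^{n^2}\to\C^n\otimes\C^n$, $R\ket a = \sum_{i,j}U_{a,i,j}\,\ket i\otimes\ket j$, which vectorises each matrix $U_a$; this is the rotated vertex drawn on the right of~\eqref{eq:dimensioncounting}, whose source and target indeed have equal dimension $n^2 = n\cdot n$. Since $R$ is a square matrix, it is proportional to a unitary iff $R^\dagger R=\mu I$ for a scalar $\mu$, and $(R^\dagger R)_{a,b}=\sum_{i,j}\overline{U_{a,i,j}}\,U_{b,i,j}=\Tr(U_a^\dagger U_b)$. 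Combining this with the unitarity of each $U_a$, so that $\Tr(U_a^\dagger U_a)=n$, forces $\mu=n$ and yields Knill's orthogonality relation $\Tr(U_a^\dagger U_b)=n\,\delta_{a,b}$; conversely, a unitary error basis makes all of these equations hold. By \autoref{thm:rotationfactor} the scalar is $\lambda=\mu=n$, which can also be recovered directly by the argument of~\eqref{eq:recoverscalar}--\eqref{eq:recoverscalarH}. Finally, $n^2$ nonzero mutually orthogonal vectors in the $n^2$-dimensional space $\End(\C^n)$ automatically form a basis, so a vertex of type~\eqref{eq:UEBbiunitary} is biunitary exactly when the family $\{U_a\}$ is a unitary error basis.

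The only step that is not bookkeeping is the graphical identification of the clockwise quarter rotation of~\eqref{eq:UEBbiunitary} with the vectorisation map $R$: one must track how the two standard boundary wires unfold so that the region shaded by $[n^2]$ becomes an $n^2$-dimensional wire on one side of the vertex while the two $n$-dimensional wires end up tensored together on the other. I expect this to be the main obstacle; everything downstream of it is identical to the Hadamard case.
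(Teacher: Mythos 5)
Your proof is correct and follows essentially the same route as the paper's: unpack the vertex as a family $\{U_a\}_{a\in[n^2]}$ of endomorphisms of $\C^n$, read vertical unitarity as unitarity of each $U_a$, and read the remaining biunitarity condition as Knill's trace orthogonality. The one cosmetic difference is that the paper checks the two horizontal unitarity equations of~\eqref{eq:biunitaryhorizontallyunitary2} directly by drawing the loop diagrams and recognizing $\Tr(U_a^\dagger U_b)$ via~\eqref{eq:trace}, whereas you invoke \autoref{thm:rotationfactor} to reduce to the single statement that the clockwise quarter rotation (the vectorization map $R\colon\C^{n^2}\to\C^n\otimes\C^n$) is proportional to a unitary; this is the same content packaged differently, and the ``graphical identification'' you flag as the main obstacle is exactly the bending of the two standard boundary wires already spelled out in the discussion around~\eqref{eq:dimensioncounting}, so it is not a gap. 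Your closing observation that $n^2$ orthogonal vectors in $\End(\C^n)$ automatically span is a clean way to dispense with the second horizontal unitarity condition, which the paper mentions but does not belabour.
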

\begin{proof}

A vertex of type \eqref{eq:UEBbiunitary} represents a family of linear maps $U_a$ controlled by $a\in [n^2]$:
\begin{equation}
\nonumber
\begin{tz}[string]
\path [use as bounding box] (-0.25,0) rectangle +(2.5,2.);
\path[blueregion] (0.25,0) to [out=90, in=-135] (1,1) to [out= 135, in=-90] (0.25,2) to (0.25-\side,2) to (0.25-\side,0);
\draw[bnd] (0.25,0) to [out=90, in=-135] (1,1) to [out=135, in=-90] (0.25,2);
\draw (1.75,0) to [out=90, in=-45]  (1,1) to [out= 45, in=-90] (1.75,2);
\node [dimension,below right] at (0.3-\side,1.75) {$\smash{a{:}n^2}$};
\node [blob] at (1,1) {$U$};
\end{tz}
 \hspace{25pt} \leftrightsquigarrow \hspace{35pt} 
\begin{tz}[string]
\draw (1.75,0) to [out= up, in = down] (1,1) to [out= up, in = down] (1.75,2);
\node [blob] at (1,1) {$U_a$};
\end{tz}
\end{equation}
The first vertical unitarity equation corresponds to the following equality between controlled families:
\def\scl{0.75}
\[
\begin{tz}[string,scale=\scl]
\path[blueregion] (-0.95, 0) rectangle (0, 4);
\draw[bnd] (0,0) to +(0,4);
\draw (1.5,0) to +(0,4);
\node[dimension, below right] at (-0.9,3.65) {$\smash{a{:}n^2}$};
\end{tz}
\eqgap=\eqgap
\begin{tz}[scale=\scl]
\draw [blueregion] (0,0) to [out=up, in=-135] (0.75,1) to [out=135, in=down] (0,2) to [out=up, in=-135] (0.75,3) to [out=135, in=down] (0,4) to (-0.95,4) to (-0.95,0);
\draw [bnd] (0,0) to [out=up, in=-135] (0.75,1) to [out=135, in=down] (0,2) to [out=up, in=-135] (0.75,3) to [out=135, in=down] (0,4);
\draw [string] (1.5,0) to [out=up, in=-45] (0.75,1) to [out=45, in=down] (1.5,2) to [out=up, in=-45] (0.75,3) to [out=45, in=down] (1.5,4);
\node [blob] at (0.75,1) {$U$};
\node [blob] at (0.75,3) {$U^\dag$};
\node[dimension, below right] at (-0.9,3.65) {$\smash{a{:}n^2}$};
\end{tz}
\hspace{35pt} \leftrightsquigarrow \hspace{35pt}
\begin{tz}[string,scale=\scl]
\draw (1.5,0) to +(0,4);
\end{tz} 
\eqgap= \eqgap
\begin{tz}[scale=\scl]
\node [blob, minimum width=16pt] at (0.75,1) {$U_{a}$};
\node [blob, minimum width=16pt] at (0.75,3) {$U_a ^\dagger$};
\draw [string] (1.5,0) to [out=up, in=down] (0.75,1) to [out=up, in=down] (1.5,2) to [out=up, in=down] (0.75,3) to [out=up, in=down] (1.5,4);
\end{tz}\hspace{20pt} \forall a \in [n^2]
\]
Together with the other vertical composite\footnote{Once we have fixed the dimensional constraints as described at the beginning of~\autoref{sec:characterizing}, the two conditions on vertical composition (or horizontal composition, respectively) become equivalent. Strictly speaking, we therefore do not need to verify the `other vertical composite'.}, this implies that the linear maps $U_a$ are unitary for all $a\in [n^2]$. For horizontal unitarity, we consider the following equation:
\[
\lambda \, 
\begin{tz}[string,xscale=0.6,scale=\scl]
\path[blueregion] (-1.25,0) rectangle (0.25,4);
\draw[bnd] (-1.25,0) to +(0,4);
\draw[bnd] (0.25,0) to + (0,4); 
\node[dimension, below] at (-0.42, 3.65) {$\smash{a{:}n^2}$};
\node[dimension, above] at (-0.42, 0.35) {$\smash{b{:}n^2}$};
\end{tz}
\eqgap = \eqgap
\begin{tz}[string, xscale=0.6,scale=\scl]
\path[blueregion,bnd] (0.25,0) to [out=up, in=-135] (1,1) to [out=135, in=right] (-0.35, 1.7) to [out=left, in=up] (-1.25,1) to (-1.25,0);
\path[blueregion,bnd] (0.25,4) to [out=down, in=135] (1,3) to [out=-135, in=right] (-0.35, 2.3) to [out=left, in=down] (-1.25,3) to (-1.25,4);
\path[draw, string] (1.75,2) to [out=down, in=45] (1,1) to [out= -45, in= left] (2.3,0.3) to [out=right, in=down] (3.6,2)to [out=up, in=right] (2.3,3.7) to [out=left, in=45] (1,3) to [out= -45, in=up] (1.75,2);
\node [blob] at (1,1) {$U$};
\node[blob] at (1,3) {$U^\dagger$};
\node[dimension, below] at (-0.42, 3.65) {$\smash{a{:}n^2}$};
\node[dimension, above] at (-0.42, 0.35) {$\smash{b{:}n^2}$};
\end{tz}
\hspace{35pt} \leftrightsquigarrow \hspace{35pt}
\lambda\, \delta_{a,b}\eqgap = \eqgap\begin{tz}[xscale=0.6, scale=\scl]
\path[draw, string] (1.75,2) to [out=down, in=up] (1,1) to [out= down, in= left] (2.3,0.3) to [out=right, in=down] (3.6,2)to [out=up, in=right] (2.3,3.7) to [out=left, in=up] (1,3) to [out= down, in=up] (1.75,2);
\node [blob,minimum width=17pt] at (1,1) {$U_{b}$};
\node [blob,minimum width=17pt] at (1,3) {$U^\dagger_a$};
\end{tz}\hspace{20pt}\forall a,b\in [n^2]
\]
By equation \eqref{eq:trace}, this means that $\Tr(U_a^\dagger U_b) = \lambda\, \delta_{a,b}$ for all $a,b \in [n^2]$. Since all matrices $U_a$ are unitary, it follows that $\lambda = n$. Together with the other horizontal unitarity condition, this implies that the matrices $\frac{1}{\sqrt{n}}U_a$ form an orthonormal basis of $\End(H)$.\end{proof}


\paragraph{Quantum Latin squares.} Quantum Latin squares were introduced by Musto and the second author~\cite{Musto:2015} as generalizations of classical Latin squares, with applications to the construction of unitary error bases. Related constructions were also introduced independently by Banica and Nicoar\u{a}~\cite{Banica:2007}.
\begin{definition}[Musto \& V.~{\cite[Definition 1]{Musto:2015}}] \label{def:QLS}A \textit{quantum Latin square} (QLS) on an $n$\-dimensional Hilbert space $H$ is a square grid of vectors $
\left\{\ket{Q_{a,b}}\in H\,|\,a,b \in [n]\right\}
$
such that each row $\{\ket{Q_{a,b}}|\,b \in [n]\}$ and each column $\{ \ket{Q_{a,b}}|\,a \in [n]\}$ form an orthonormal basis of $H$; for $a,b,c \in [n]$:
\begin{align}
\braket{ Q_{a,b} }{ Q_{a,c}} &= \delta_{b,c}
&
\braket{Q_{a,c}}{Q_{b,c}} &= \delta_{a,b}
\end{align}
\end{definition}

\noindent
We denote the $i$th entry of the vector $\ket{Q_{a,b}}$ by \mbox{$Q_{a,b,i} = \braket{i}{Q_{a,b}} $}.
\begin{proposition}
Quantum Latin squares on an $n$\-dimensional Hilbert space correspond to biunitaries of the following type:
\begin{equation}
\label{eq:QLSbiunitary}
\begin{tz}[xscale=-1, string]
\path [use as bounding box] (0.25-\side,0) rectangle (1.75+\side,2);
\node [blob] at (1,1) {$Q$};
\path [blueregion,bnd] (0.25,0) to [out=up, in=-135] (1,1) to [out=-55, in=up] (1.65,0);
\path [blueregion] (1.75,2) to [out=down, in=45] (1,1) to [out=-35, in=up] (1.85,0) to (1.75+\side,0) to (1.75+\side,2);
\draw [bnd] (1.75,2) to [out=-90, in=45] (1,1) to [out=-35, in=up] (1.85,0);
\draw [string] (0.25,2) to [out=-90, in=135] (1,1);
\node [dimension,below left] at (1.8,1.75) {$\smash{n}$};
\node [dimension,above] at (1,0.05) {$n$};
\node [dimension,below right] at (0.2,1.75) {$\smash{n}$};
\end{tz}
\end{equation}
\end{proposition}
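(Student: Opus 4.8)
The plan is to follow exactly the template used above for \autoref{prop:hadamardbiunitary} and for the preceding proposition on unitary error bases: first unpack what a vertex of type~\eqref{eq:QLSbiunitary} is in elementary terms, and then transcribe each of the four unitarity equations into a familiar condition on the resulting data, tracking the scalar $\lambda$ as we go.

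First I would note that the only wire in the type~\eqref{eq:QLSbiunitary} which does not bound a shaded region carries $\C^n$, while the other wires are standard boundaries~\eqref{eq:boundary}; hence (following the conventions of \autoref{fig:2Hilb} and \autoref{fig:equivalentexpression}) such a vertex is precisely a family of linear maps $\C\to\C^n$, that is, a family of vectors $\ket{Q_{a,b}}\in H:=\C^n$ indexed by the two parameters $a,b\in[n]$ labelling the two shaded regions, with components $Q_{a,b,i}=\braket{i}{Q_{a,b}}$. That all three visible labels must equal the same $n$ is the dimensional constraint, checked exactly as in the discussion around~\eqref{eq:dimensioncounting} using \autoref{thm:rotationfactor}. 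Thus a vertex of this type carries exactly the underlying data of \autoref{def:QLS}, before any axioms are imposed.

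Next I would transcribe the vertical unitarity equations~\eqref{eq:biunitaryverticallyunitary2}. Stacking $U$ above $U^\dagger$ contracts the $\C^n$-wire and produces an inner product, with the adjacent shaded index, say $a$, carried through on both copies and the remaining shaded index appearing separately on the top and bottom legs; comparing with the identity on the right gives $\braket{Q_{a,b}}{Q_{a,c}}=\delta_{b,c}$ for all $a,b,c\in[n]$, while the other stacking order instead sums over that shaded index and yields the completeness relation $\sum_{b}\ket{Q_{a,b}}\bra{Q_{a,b}}=I_{H}$ for all $a$. For $n$ vectors in $\C^n$ these two statements are equivalent, and together they say precisely that each row $\{\ket{Q_{a,b}}\}_{b\in[n]}$ is an orthonormal basis of $H$. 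Running the analogous computation on the horizontal unitarity equations~\eqref{eq:biunitaryhorizontallyunitary2} — contracting instead over the other shaded index and reading off the trace-type identity via~\eqref{eq:trace} — produces, for the scalar $\lambda$, the relation $\braket{Q_{a,b}}{Q_{c,b}}=\lambda\,\delta_{a,c}$ together with its completeness counterpart; setting $a=c$ and using the normalisation already obtained forces $\lambda=1$, so each column $\{\ket{Q_{a,b}}\}_{a\in[n]}$ is an orthonormal basis of $H$. (As in~\eqref{eq:recoverscalarH}, the value $\lambda=1$ can equivalently be obtained directly from~\eqref{eq:recoverscalar}.)

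Assembling the two halves: a vertex of type~\eqref{eq:QLSbiunitary} is biunitary exactly when every row and every column of $(\ket{Q_{a,b}})_{a,b\in[n]}$ is an orthonormal basis of $\C^n$, which is the definition of an $n$-dimensional quantum Latin square; this gives the claimed correspondence. The only real work is the bookkeeping in the translation step — namely correctly identifying, among the two shaded regions, which index is contracted, which is summed, and which is carried through in each equation, so that the vertical equations land on the ``rows'' condition and the horizontal ones on the ``columns'' condition. I anticipate no genuinely hard step; indeed the point of the biunitary framework is that once the type~\eqref{eq:QLSbiunitary} has been written down, this verification is entirely mechanical, in contrast with a direct proof from \autoref{def:QLS}.
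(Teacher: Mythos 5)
Your proposal is correct and follows essentially the same route as the paper's proof: unpack the vertex as a family of vectors $\ket{Q_{a,b}}\in\C^n$, translate the two vertical unitarity equations into row-orthonormality (the paper leaves the completeness half implicit where you spell it out), translate the horizontal unitarity equations into column-orthonormality, and recover $\lambda=1$ from normalisation. The only slight imprecision is the aside that the horizontal step ``reads off the trace-type identity via~\eqref{eq:trace}'' — for the QLS type the horizontal composite produces a plain inner product $\braket{Q_{b,c}}{Q_{a,c}}=\lambda\,\delta_{a,b}$ with no literal trace appearing (that is the UEB case), but this does not affect your conclusion.
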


\begin{proof}
A vertex of type~\eqref{eq:QLSbiunitary} represents a family of vectors $\ket{Q_{a,b}}$ controlled by $a,b\in [n]$:
\begin{equation}
\begin{tz}[xscale=-1, string]
\clip (0.25-\side,0) rectangle (1.75+\side,2);
\node [blob] at (1,1) {$Q$};
\path [blueregion,bnd] (0.25,0) to [out=up, in=-135] (1,1) to [out=-55, in=up] (1.65,0);
\path [blueregion] (1.75,2) to [out=down, in=45] (1,1) to [out=-35, in=up] (1.85,0) to (1.75+\side,0) to (1.75+\side,2);
\draw [bnd] (1.75,2) to [out=-90, in=45] (1,1) to [out=-35, in=up] (1.85,0);
\draw [string] (0.25,2) to [out=-90, in=135] (1,1);
\node [dimension,above] at (1,0.05) {$b{:}n$};
\node [dimension,below right] at (1.75+\side,1.75) {$\smash{a{:}n}$};
\end{tz}
 \hspace{35pt} \leftrightsquigarrow \hspace{30pt} 
 \begin{tz}[xscale=-1, string]
 \clip (0.25-\side,0) rectangle (1.75+\side,2);
\draw [string] (0.25,2) to [out=-90, in=135] (1,1);
\node [blob] at (1,1) {$Q_{a,b}$};
\end{tz}
\end{equation}
The first vertical unitarity equation corresponds to the following equality between controlled families:
\def\scl{0.75}
\[
\begin{tz}[string,scale=\scl,xscale=-1]
\path[blueregion] (0.25,0) rectangle (1.65,4);
\path[blueregion] (1.85,0) rectangle (2.75,4);
\draw[bnd] (1.85,0) to +(0,4);
\draw[bnd] (1.65,0) to +(0,4);
\draw[bnd] (0.25,0) to +(0,4);
\node[dimension, above] at (1,0.05) {$b{:}n$};
\node[dimension, below] at (1,3.75) {$\smash{c{:}n}$};
\node[dimension, below right] at (2.7,3.75) {$\smash{a{:}n}$};
\end{tz}
\eqgap = \eqgap 
\begin{tz}[string,scale=\scl,xscale=-1]
\node [blob] at (1,1) {$Q$};
\node [blob] at (1,3) {$Q^\dagger$};
\path [blueregion,bnd] (0.25,0) to [out=up, in=-135] (1,1) to [out=-55, in=up] (1.65,0);
\path [blueregion] (1.85,4) to [out=down, in=35] (1,3) to [out=-45, in=up] (1.75,2) to [out=down, in=45] (1,1) to [out=-35, in=up] (1.85,0) to (1.85,0) to (2.75,0) to (2.75,4);
\draw [bnd] (1.85,4) to [out=down, in=35] (1,3) to [out=-45, in=up] (1.75,2) to [out=down, in=45] (1,1) to [out=-35, in=up] (1.85,0);
\path[blueregion,bnd] (0.25,4) to [out=down, in=135] (1,3) to [out=55, in=down] (1.65,4);
\draw [string] (1,3) to [out=-135, in=up] (0.25,2) to [out=-90, in=135] (1,1);
\node[dimension, above] at (1,0.05) {$b{:}n$};
\node[dimension, below] at (1,3.75) {$\smash{c{:}n}$};
\node[dimension, below right] at (2.7,3.75) {$\smash{a{:}n}$};
\end{tz}
\hspace{35pt} \leftrightsquigarrow \hspace{35pt}
\delta_{b,c} \eqgap = \eqgap
\begin{tz}[string,scale=\scl,xscale=-1]
\draw [string] (1,3) to [out=-135, in=up] (0.25,2) to [out=-90, in=135] (1,1);
\node [blob] at (1,1) {$Q_{a,b}$};
\node [blob] at (1,3) {$Q^\dagger_{a,c}$};
\end{tz}\hspace{20pt} \forall a,b,c\in [n]
\]
This means that $\braket{Q_{a,c}}{Q_{a,b}} =\delta_{b,c}$ for all $a,b,c\in [n]$. Together with the other vertical composite this is equivalent to the fact that the rows $\left\{ \ket{Q_{a,b}}~|~b\in [n]\right\}$ form orthonormal bases. For horizontal unitarity, we consider the following equation:

\[\hspace{-10pt}
\lambda \, \begin{tz}[string,xscale=0.7,scale=\scl]
\path[blueregion] (-1.25,0) rectangle (0.05,4);
\path[blueregion]( 0.45,0) rectangle (3,4);
\draw[bnd] (-1.25,0) to +(0,4);
\draw[bnd] (0.05,0) to +(0,4);
\draw[bnd] (0.45,0) to +(0,4);
\node[dimension, above] at (-0.65,0.05) {$a{:}n$};
\node[dimension, below] at (-0.65,3.75) {$\smash{b{:}n}$};
\node[dimension, below left] at (3,3.75) {$\smash{c{:}n}$};
\end{tz}
\eqgap = \eqgap
\begin{tz}[string, xscale=0.7,scale=\scl]
\path[blueregion,bnd] (0.05,0) to [out=up, in=-145] (1,1) to [out=135, in=right] (-0.35, 1.7) to [out=left, in=up] (-1.25,1) to (-1.25,0);
\path[blueregion,bnd] (0.05,4) to [out=down, in=145] (1,3) to [out=-135, in=right] (-0.35, 2.3) to [out=left, in=down] (-1.25,3) to (-1.25,4);
\path[blueregion] (0.45, 0) to [out=up, in=-125] (1,1)  to [out= -45, in= left] (2.3,0.3) to [out=right, in=down] (3.6,2)to [out=up, in=right] (2.3,3.7) to [out=left, in=45] (1,3) to [out=125, in=down] (0.45, 4) to (5,4) to (5,0);
\path[bnd]  (0.45, 0) to [out=up, in=-125] (1,1)  to [out= -45, in= left] (2.3,0.3) to [out=right, in=down] (3.6,2)to [out=up, in=right] (2.3,3.7) to [out=left, in=45] (1,3) to [out=125, in=down] (0.45, 4);
\draw[string] (1,1) to [out=45, in=down] (1.75,2) to [out=up, in=-45] (1,3);
\node [blob,minimum width = 16pt] at (1,1) {$Q$};
\node[blob,minimum width=16pt] at (1,3) {$Q^\dagger$};
\node[dimension, above] at (-0.65,0.05) {$a{:}n$};
\node[dimension, below] at (-0.65,3.75) {$\smash{b{:}n}$};
\node[dimension, below left] at (5,3.75) {$\smash{c{:}n}$};
\end{tz}
\hspace{25pt} \leftrightsquigarrow \hspace{25pt}
\lambda \, \delta_{a,b} \eqgap = \eqgap
\begin{tz}[string, xscale=0.7,scale=\scl]
\draw[string] (1,1) to [out=45, in=down] (1.75,2) to [out=up, in=-45] (1,3);
\node [blob,minimum width = 16pt] at (1,1) {$Q_{a,c}$};
\node[blob,minimum width=16pt] at (1,3) {$Q^\dagger_{b,c}$};
\end{tz}\hspace{15pt} \forall a,b,c\in [n]
\]
This means that $\braket{Q_{b,c}}{Q_{a,c}} = \lambda\, \delta_{a,b}$ for all $a,b,c\in [n]$. Since all vectors $\ket{Q_{a,b}}$ are normalized, it follows that $\lambda = 1$. Together with the other horizontal unitarity condition this is equivalent to the fact that the columns $\left\{ \ket{Q_{a,b}}~|~a\in [n] \right\}$ are orthonormal bases. \end{proof}

\paragraph{Controlled families.} In quantum information, we often want to describe lists of structures, parameterized by a given index. A standard name for such a list is a controlled family. 
\begin{definition}
For a given quantum structure $X$, an \textit{$n$\-controlled family} is an ordered list of $n$ instances of $X$.
\end{definition}

In index notation, we reserve superscript for controlling indices.
For example, a controlled family of Hadamard matrices would be written as $H^c_{a,b}$, where $c$ iterates through the controlled family and $a$ and $b$ are the actual indices of the Hadamard matrix $H^c$.

\begin{proposition} An $n$\-controlled family of biunitaries of type
\begin{equation}
\label{eq:halfplaneplain}
\begin{tz}[string]
\path [greenregion, draw] (0.25,0) to [out=90, in=-135] (1,1) to [out=-45, in=90] (1.75,0);
\path [yellowregion, draw] (0.25,2) to [out=-90, in=135] (1,1) to [out=45, in=-90] (1.75,2);
\path [redregion] (0.25,0) to [out=90, in=-135] (1,1) to [out=135, in=-90] (0.25,2) to (0.25-\side,2) to (0.25-\side,0);
\path [blueregion] (1.75,0) to [out=90, in=-45] (1,1) to [out=45, in=-90] (1.75,2) to (1.75+\side,2) to (1.75+\side,0);
\node [blob] at (1,1) {$A$};
\end{tz} 
\end{equation}
corresponds to a biunitary of the same type with an additional half-plane or full-plane sheet of dimension $n$ attached:
\begin{calign}
\label{eq:controlledfull}
\begin{tz}[string, scale=1]
\path [use as bounding box] (0.25-\side,-0.1) rectangle (1.75+\side,2);
\path [greenregion, draw] (0.25,0) to [out=90, in=-135] (1,1) to [out=-45, in=90] (1.75,0);
\path [yellowregion, draw] (0.25,2) to [out=-90, in=135] (1,1) to [out=45, in=-90] (1.75,2);
\path [redregion] (0.25,0) to [out=90, in=-135] (1,1) to [out=135, in=-90] (0.25,2) to (0.25-\side,2) to (0.25-\side,0);
\path [blueregion] (1.75,0) to [out=90, in=-45] (1,1) to [out=45, in=-90] (1.75,2) to (1.75+\side,2) to (1.75+\side,0);
\path [yellowregion] (0.25-\xdelta-\side,2-\ydelta) to (0.25-\xdelta,2-\ydelta) to [out=-90, in=155] (1,1) to [out=-65, in=90]  (1.75-\xdelta,-\ydelta) to (0.25-\xdelta-\side,-\ydelta);
\draw[bnd] (0.25-\xdelta,2-\ydelta) to [out=-90, in=155] (1,1) to [out=-65, in=90]  (1.75-\xdelta,-\ydelta);
\node [blob] at (1,1) {$B$};
\node [dimension,left] at (0.25-\xdelta-\side,1-\ydelta) {$n$};
\end{tz} 
&
\begin{tz}[string]
\path [use as bounding box] (0.25-\side,-0.1) rectangle (1.75+\side,2);
\path [greenregion, draw] (0.25,0) to [out=90, in=-135] (1,1) to [out=-45, in=90] (1.75,0);
\path [yellowregion, draw] (0.25,2) to [out=-90, in=135] (1,1) to [out=45, in=-90] (1.75,2);
\path [redregion] (0.25,0) to [out=90, in=-135] (1,1) to [out=135, in=-90] (0.25,2) to (0.25 -\side,2) to (0.25-\side,0);
\path[blueregion] (1.75,0) to [out=90, in=-45] (1,1) to [out=45, in=-90] (1.75,2) to (1.75+\side,2) to (1.75+\side,0);
\path[yellowregion,fill opacity= 0.6] (0.25-\xdelta-\side,-\ydelta) rectangle (1.75-\xdelta+\side,2-\ydelta);
\node [blob] at (1,1) {$B$};
\node [dimension, left] at (0.25-\xdelta-\side,1-\ydelta) {$n$};
\end{tz} 
\end{calign}
\end{proposition}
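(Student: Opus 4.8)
The plan is to argue entirely within the elementary description of the calculus from \autoref{sec:foundations}. First I would unpack the data of a vertex $B$ of the extended type in \eqref{eq:controlledfull}. In both the half-plane and the full-plane case the new sheet is adjacent to the vertex and is bounded only by standard boundaries \eqref{eq:boundary} (in the half-plane case) or by no wires at all (in the full-plane case); hence attaching it contributes exactly one new free index $c\in[n]$ at the vertex while leaving the dimensions of all other wires unchanged. Consequently a vertex $B$ of the extended type is precisely the same data as a list $(A^c)_{c\in[n]}$ of vertices of type \eqref{eq:halfplaneplain}, via $B_c = A^c$; this already establishes the claimed correspondence on the level of vertices, and it remains to match up the two biunitarity conditions.

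To do this I would use \autoref{thm:rotationfactor}: $B$ is biunitary iff $B$ is unitary and its clockwise quarter rotation is proportional to a unitary. The geometric observation driving the argument is that in each composite diagram appearing in the vertical unitarity equations \eqref{eq:biunitaryverticallyunitary2} and in the quarter rotation, the added sheet is never enclosed --- in the half-plane case it always reaches the boundary of the diagram, and in the full-plane case it is the ambient sheet --- so the index $c$ is never summed over. Hence every diagrammatic equation for $B$ holds iff the corresponding equation holds for $A^c$ for all $c$ simultaneously. In particular $B$ is unitary iff every $A^c$ is unitary; and the clockwise quarter rotation of $B$ is literally the list of clockwise quarter rotations of the $A^c$, which as a single 2-morphism is proportional to a unitary iff each rotated $A^c$ is proportional to a unitary \emph{with a common proportionality constant}.

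The one point requiring care is this common constant. By \autoref{thm:rotationfactor} the constant is a square root of $\lambda$, and by the computation \eqref{eq:recoverscalar} the scalar $\lambda$ of a biunitary depends only on its type --- it is the dimension of one of the surrounding regions. Since all the $A^c$ share the type \eqref{eq:halfplaneplain}, they share the same $\lambda$, so the common-constant condition is automatic. Combining everything, $B$ is biunitary iff every $A^c$ is biunitary, i.e.\ iff $(A^c)_{c\in[n]}$ is an $n$-controlled family of biunitaries of type \eqref{eq:halfplaneplain}. I expect the main obstacle to be precisely the bookkeeping in the second paragraph: one must carefully check, for each composite occurring in \eqref{eq:biunitaryverticallyunitary2} (and, if one prefers to verify biunitarity directly rather than via \autoref{thm:rotationfactor}, also in \eqref{eq:biunitaryhorizontallyunitary2}), that the added sheet really does stay open, so that the equations decouple over the controlling index $c$; once this is granted the rest is routine.
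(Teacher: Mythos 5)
Your proof is correct and takes essentially the same approach as the paper's: treat a vertex of the extended type as a $c$-indexed family of vertices of the base type, and observe that all four biunitarity equations decouple over $c$ because the added sheet is never closed off in any of the composite diagrams. Your explicit remark that $\lambda$ is fixed by the type alone (via the computation \eqref{eq:recoverscalar}) and hence is automatically common to all the $A^c$ is a point the paper leaves implicit but which is genuinely needed for the converse direction; apart from that, the arguments coincide, the only cosmetic difference being that the paper unpacks the horizontal unitarity equations directly rather than routing through \autoref{thm:rotationfactor}.
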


\begin{proof}
A half-plane biunitary $B$ corresponds to a family of vertices of type \eqref{eq:halfplaneplain} controlled by an index $i\in [n]$:
\[
\begin{tz}[string, scale=0.9]
\path [use as bounding box] (0.25-\side,-0.1) rectangle (1.75+\side,2);
\path [greenregion, draw] (0.25,0) to [out=90, in=-135] (1,1) to [out=-45, in=90] (1.75,0);
\path [yellowregion, draw] (0.25,2) to [out=-90, in=135] (1,1) to [out=45, in=-90] (1.75,2);
\path [redregion] (0.25,0) to [out=90, in=-135] (1,1) to [out=135, in=-90] (0.25,2) to (0.25-\side,2) to (0.25-\side,0);
\path [blueregion] (1.75,0) to [out=90, in=-45] (1,1) to [out=45, in=-90] (1.75,2) to (1.75+\side,2) to (1.75+\side,0);
\path [yellowregion] (0.25-\xdelta-\side,2-\ydelta) to (0.25-\xdelta,2-\ydelta) to [out=-90, in=155] (1,1) to [out=-65, in=90]  (1.75-\xdelta,-\ydelta) to (0.25-\xdelta-\side,-\ydelta);
\draw[bnd] (0.25-\xdelta,2-\ydelta) to [out=-90, in=155] (1,1) to [out=-65, in=90]  (1.75-\xdelta,-\ydelta);
\node [blob] at (1,1) {$B$};
\node [dimension,left] at (0.25-\xdelta-\side,1-\ydelta) {$i{:}n$};
\end{tz} \hspace{35pt} \leftrightsquigarrow \hspace{35pt}
\begin{tz}[string,scale=0.9]
\path [greenregion, draw] (0.25,0) to [out=90, in=-135] (1,1) to [out=-45, in=90] (1.75,0);
\path [yellowregion, draw] (0.25,2) to [out=-90, in=135] (1,1) to [out=45, in=-90] (1.75,2);
\path [redregion] (0.25,0) to [out=90, in=-135] (1,1) to [out=135, in=-90] (0.25,2) to (0.25-\side,2) to (0.25-\side,0);
\path [blueregion] (1.75,0) to [out=90, in=-45] (1,1) to [out=45, in=-90] (1.75,2) to (1.75+\side,2) to (1.75+\side,0);
\node [blob] at (1,1) {$B_i$};
\end{tz}
\]
The first unitarity equation amounts to the following equation of controlled families:
\[\hspace{-5pt}
\begin{tz}[string,scale=0.8]
\path[redregion] (0.25-\side, 0) rectangle (0.25,4) ;
\path[greenregion] (0.25,0) rectangle (1.75, 4);
\path[blueregion] (1.75,0) rectangle (1.75+\side, 4);
\draw (0.25,0) to +(0,4);
\draw (1.75,0) to +(0,4);
\path[yellowregion] (0.25-\xdelta-\side, -\ydelta) rectangle (1.75-\xdelta,4-\ydelta);
\draw[bnd] (1.75-\xdelta, -\ydelta) to +(0,4);
\node [dimension,below left] at (0.25-\xdelta-\side,3.9-\ydelta) {$i{:}n$};
\end{tz}
\, =\hspace{-5pt}
\begin{tz}[string,scale=0.8]
\path [redregion] (0.25,0) to [out=90, in=-135] (1,1) to [out=135, in=-90] (0.25,2) to [out=up, in=-135] (1,3) to [out=135, in=down] (0.25,4) to (0.25-\side,4) to (0.25-\side,0);
\path [blueregion] (1.75,0) to [out=90, in=-45] (1,1) to [out=45, in=-90] (1.75,2) to [out=up, in=-45] (1,3) to [out=45, in=down] (1.75,4) to (1.75+\side,4) to (1.75+\side,0);
\path [greenregion,draw] (0.25,0) to [out=90, in=-135] (1,1) to [out=-45, in=90] (1.75,0);
\draw [greenregion,draw]  (1.75,4) to [out=down, in=45] (1,3) to [out=135, in=down] (0.25,4);
\path [yellowregion,draw] (1,3) to [out=-135, in=up] (0.25,2) to [out=-90, in=135] (1,1) to [out=45, in=-90] (1.75,2) to [out=up, in=-45] (1,3);
\path [yellowregion] (0.25-\xdelta-\side,4-\ydelta) to (1.55,4-\ydelta) to [out=down, in=65] (1,3) to [out=-145, in=up] (0.05,2) to [out=-90, in=145] (1,1) to [out=-55, in=90]  (1.55,-\ydelta) to (0.25-\xdelta-\side,-\ydelta);
\draw [bnd] (1.55,4-\ydelta) to [out=down, in=65] (1,3) to [out=-145, in=up] (0.05,2) to [out=-90, in=145] (1,1) to [out=-55, in=90]  (1.55,-\ydelta);
\node [blob] at (1,1) {$B$};
\node [blob] at (1,3) {$B^\dag$};
\node [dimension,below left] at (0.25-\xdelta-\side,3.9-\ydelta) {$i{:}n$};
\end{tz}
\hspace{10pt} \leftrightsquigarrow \hspace{10pt}
\begin{tz}[string,scale=0.8]
\path[redregion] (0.25-\side, 0) rectangle (0.25,4) ;
\path[greenregion] (0.25,0) rectangle (1.75, 4);
\path[blueregion] (1.75,0) rectangle (1.75+\side, 4);
\draw (0.25,0) to +(0,4);
\draw (1.75,0) to +(0,4);
\end{tz}
\,=\,
\begin{tz}[string,scale=0.8]
\path [redregion] (0.25,0) to [out=90, in=-135] (1,1) to [out=135, in=-90] (0.25,2) to [out=up, in=-135] (1,3) to [out=135, in=down] (0.25,4) to (0.25-\side,4) to (0.25-\side,0);
\path [blueregion] (1.75,0) to [out=90, in=-45] (1,1) to [out=45, in=-90] (1.75,2) to [out=up, in=-45] (1,3) to [out=45, in=down] (1.75,4) to (1.75+\side,4) to (1.75+\side,0);
\path [greenregion,draw] (0.25,0) to [out=90, in=-135] (1,1) to [out=-45, in=90] (1.75,0);
\draw [greenregion,draw]  (1.75,4) to [out=down, in=45] (1,3) to [out=135, in=down] (0.25,4);
\path [yellowregion,draw] (1,3) to [out=-135, in=up] (0.25,2) to [out=-90, in=135] (1,1) to [out=45, in=-90] (1.75,2) to [out=up, in=-45] (1,3);
\node [blob,minimum width=17pt] at (1,1) {$B_i$};
\node [blob,minimum width=17pt] at (1,3) {$B_i^\dag$};
\end{tz}\hspace{15pt} \forall i \in [n]
\]
Together with the second vertical unitarity equation, this implies that the vertices $B_i$ are unitary for each $i\in [n]$. For horizontal unitarity, we consider the following equation: 

\def\sidew{0.75}
\[\hspace{-12pt}
\lambda \hspace{-10pt}
\begin{tz}[string,xscale=0.5,scale=0.8]
\path[greenregion] (1.75-2.5*\sidew,-2) rectangle (1.75,2);
\path[blueregion] (1.75,-2) rectangle (3.25,2);
\path[yellowregion] (3.25,-2) rectangle (3.25+2*\sidew, 2);
\draw (1.75,-2) to +(0,4);
\draw (3.25,-2) to +(0,4);
\path[yellowregion] (1.75-2*\xdelta-2.5*\sidew,-2-\ydelta) rectangle (1.75-2*\xdelta, 2-\ydelta);
\draw[bnd] (1.75-2*\xdelta,-2-\ydelta) to +(0,4);
\node[dimension, below left] at (1.75-2*\xdelta-2.5*\sidew, 1.9-\ydelta) {$i{:}n$};
\end{tz}
\,=\hspace{-10pt} 
\begin{tz}[string, xscale=0.5,,scale=0.8]
\path [yellowregion] (3.25,-2) to (3.25,-1) to [out=up, in=right] (2.3,-0.3) to [out=left, in=45] (1,-1) to [out=135, in=down] (0.25,0) to [out=up, in=-135] (1,1) to [out=-45, in=left] (2.3,0.3) to [out=right, in=down] (3.25,1) to (3.25,2) to (3.25+2*\sidew, 2) to (3.25+2*\sidew, -2);
\path[greenregion] (1.75,2) to [out=down, in=45] (1,1) to [out=135, in=right] (-0.3, 1.7) to [out=left, in=up] (-1.6,0) to [out=down, in=left] (-0.3,-1.7) to [out=right, in=-135] (1,-1) to [out= -45, in=up] (1.75,-2) to (-1.6-1.5*\sidew,-2) to (-1.6-1.5*\sidew,2);
\path[redregion,draw, string] (0.25,0) to [out=up, in=-135] (1,1) to [out=135, in=right] (-0.3, 1.7) to [out=left, in=up] (-1.6,0)  to [out=down, in=left] (-0.3, -1.7) to [out=right, in=-135] (1,-1) to [out=135, in=down] (0.25,0);
\path[blueregion,draw,string] (1.75,-2) to [out=up, in=-45] (1,-1) to [out=45, in=left] (2.3,-0.3) to [out= right, in=up] (3.25,-1) to (3.25,-2);
\path[blueregion,draw, string] (1.75,2) to [out=down, in=45] (1,1) to [out= -45, in= left] (2.3,0.3) to [out=right, in=down] (3.25,1) to (3.25,2);
\path[yellowregion] (1.35, -2-\ydelta) to [out=up, in=-65] (1,-1)  to [out=155, in=down] (-0.25, 0) to [out=up, in=-155] (1,1) to [out=65, in=down] (1.35,2-\ydelta) to (-1.6-2*\xdelta-1.5*\sidew, 2-\ydelta) to (-1.6-2*\xdelta-1.5*\sidew, -2-\ydelta);
\draw[bnd]  (1.35, -2-\ydelta) to [out=up, in=-65] (1,-1)  to [out=155, in=down] (-0.25, 0) to [out=up, in=-155] (1,1) to [out=65, in=down] (1.35,2-\ydelta);;
\node [blob] at (1,1) {$B^\dagger$};
\node[blob] at (1,-1) {$B$};
\node[dimension, below left] at (-1.6-2*\xdelta-1.5*\sidew, 1.9-\ydelta) {$i{:}n$};
\end{tz} 
\hspace{7.5pt} \leftrightsquigarrow \hspace{7.5pt}
\lambda \, 
\begin{tz}[string,xscale=0.5,scale=0.8]
\path[greenregion] (1.75-2*\sidew,-2) rectangle (1.75,2);
\path[blueregion] (1.75,-2) rectangle (3.25,2);
\path[yellowregion] (3.25,-2) rectangle (3.25+2*\sidew, 2);
\draw (1.75,-2) to +(0,4);
\draw (3.25,-2) to +(0,4);
\end{tz}
\,=\,
\begin{tz}[string, xscale=0.5,,scale=0.8]
\path [yellowregion] (3.25,-2) to (3.25,-1) to [out=up, in=right] (2.3,-0.3) to [out=left, in=45] (1,-1) to [out=135, in=down] (0.25,0) to [out=up, in=-135] (1,1) to [out=-45, in=left] (2.3,0.3) to [out=right, in=down] (3.25,1) to (3.25,2) to (3.25+2*\sidew, 2) to (3.25+2*\sidew, -2);
\path[greenregion] (1.75,2) to [out=down, in=45] (1,1) to [out=135, in=right] (-0.3, 1.7) to [out=left, in=up] (-1.6,0) to [out=down, in=left] (-0.3,-1.7) to [out=right, in=-135] (1,-1) to [out= -45, in=up] (1.75,-2) to (-1.6-1.5*\sidew,-2) to (-1.6-1.5*\sidew,2);
\path[redregion,draw, string] (0.25,0) to [out=up, in=-135] (1,1) to [out=135, in=right] (-0.3, 1.7) to [out=left, in=up] (-1.6,0)  to [out=down, in=left] (-0.3, -1.7) to [out=right, in=-135] (1,-1) to [out=135, in=down] (0.25,0);
\path[blueregion,draw,string] (1.75,-2) to [out=up, in=-45] (1,-1) to [out=45, in=left] (2.3,-0.3) to [out= right, in=up] (3.25,-1) to (3.25,-2);
\path[blueregion,draw, string] (1.75,2) to [out=down, in=45] (1,1) to [out= -45, in= left] (2.3,0.3) to [out=right, in=down] (3.25,1) to (3.25,2);
\node [blob,minimum width=17pt] at (1,1) {$B_i^\dagger$};
\node[blob,minimum width=17pt] at (1,-1) {$B_i$};
\end{tz} \hspace{10pt} \forall i \in [n]
\]
This means that the vertices $B_i$ satisfy the first horizontal unitarity equation for each $i\in [n]$. In a similar way, the second horizontal unitarity equation for $B$ corresponds to the second horizontal unitarity equation for the vertices $B_i$.\\
It follows that the half-plane control type corresponds to an indexed family of the underlying biunitary type. The proof for the full-plane control type is similar.\end{proof}

\noindent
By \autoref{lemma:rotate}, we could have put the half-plane controlling sheet in one of 4 different orientations. Furthermore, it makes no difference if the controlling sheet goes in front or behind. We therefore have 8 different half-plane controls and 2 different full plane controls, which we illustrate here for the case of Hadamard biunitaries:

\def \yb{0.2} 
\[
\begin{array}{cccc}
\begin{tz}[string]
\path [blueregion] (0.45,\yb) to [out=90, in=-125] (1,1) to [out=55, in=-90] (1.55,2+\yb) to (0.45-\side,2+\yb) to (0.45-\side,\yb);
\draw[bnd] (0.45,\yb) to [out=90, in=-125] (1,1) to [out=55, in=-90] (1.55,2+\yb);

\path [yellowregion] (0.25,0) to [out=up, in=-135] (1,1) to [out=135, in=down] (0.25,2) to (0.25-\side,2) to (0.25-\side,0);
\draw[bnd](0.25,0) to [out=up, in=-135] (1,1) to [out=135, in=down] (0.25,2);

\path [yellowregion] (1.75,0) to [out=90, in=-45] (1,1) to [out=45, in=-90] (1.75,2) to (1.75+\side,2) to (1.75+\side,0);
\draw[bnd]  (1.75,0) to [out=90, in=-45] (1,1) to [out=45, in=-90] (1.75,2) ;
\node [blob] at (1,1) {$H$};
\end{tz}
&
\begin{tz}[string]
\path [blueregion] (0.45,2+\yb) to [out=down, in=125] (1,1) to [out=-35, in=up] (1.95,\yb) to (1.95+\side,\yb) to (1.95+\side,2+\yb);
\draw[bnd] (0.45,2+\yb) to [out=down, in=125] (1,1) to [out=-35, in=up] (1.95,\yb);

\path [yellowregion] (0.25,0) to [out=up, in=-135] (1,1) to [out=135, in=down] (0.25,2) to (0.25-\side,2) to (0.25-\side,0);
\draw[bnd](0.25,0) to [out=up, in=-135] (1,1) to [out=135, in=down] (0.25,2);

\path [yellowregion] (1.75,0) to [out=90, in=-45] (1,1) to [out=45, in=-90] (1.75,2) to (1.75+\side,2) to (1.75+\side,0);
\draw[bnd]  (1.75,0) to [out=90, in=-45] (1,1) to [out=45, in=-90] (1.75,2) ;
\node [blob] at (1,1) {$H$};
\end{tz}
&
\begin{tz}[string]
\path [blueregion] (0.45,\yb) to [out=90, in=-125] (1,1) to [out=35, in=-90] (1.95,2+\yb) to (1.95+\side,2+\yb) to (1.95+\side,\yb);
\draw[bnd] (0.45,\yb) to [out=90, in=-125] (1,1) to [out=35, in=-90] (1.95,2+\yb) ;

\path [yellowregion] (0.25,0) to [out=up, in=-135] (1,1) to [out=135, in=down] (0.25,2) to (0.25-\side,2) to (0.25-\side,0);
\draw[bnd](0.25,0) to [out=up, in=-135] (1,1) to [out=135, in=down] (0.25,2);

\path [yellowregion] (1.75,0) to [out=90, in=-45] (1,1) to [out=45, in=-90] (1.75,2) to (1.75+\side,2) to (1.75+\side,0);
\draw[bnd]  (1.75,0) to [out=90, in=-45] (1,1) to [out=45, in=-90] (1.75,2) ;
\node [blob] at (1,1) {$H$};
\end{tz}
&
\begin{tz}[string]
\path [blueregion] (0.45,2+\yb) to [out=down, in=125] (1,1) to [out=-55, in=up] (1.55,\yb) to (0.45-\side,\yb) to (0.45-\side,2+\yb);
\draw[bnd] (0.45,2+\yb) to [out=down, in=125] (1,1) to [out=-55, in=up] (1.55,\yb) ;

\path [yellowregion] (0.25,0) to [out=up, in=-135] (1,1) to [out=135, in=down] (0.25,2) to (0.25-\side,2) to (0.25-\side,0);
\draw[bnd](0.25,0) to [out=up, in=-135] (1,1) to [out=135, in=down] (0.25,2);

\path [yellowregion] (1.75,0) to [out=90, in=-45] (1,1) to [out=45, in=-90] (1.75,2) to (1.75+\side,2) to (1.75+\side,0);
\draw[bnd]  (1.75,0) to [out=90, in=-45] (1,1) to [out=45, in=-90] (1.75,2) ;
\node [blob] at (1,1) {$H$};
\end{tz}
\\
\begin{tz}[string]

\path [blueregion] (0.25,0) to [out=up, in=-135] (1,1) to [out=135, in=down] (0.25,2) to (0.25-\side,2) to (0.25-\side,0);
\draw[bnd](0.25,0) to [out=up, in=-135] (1,1) to [out=135, in=down] (0.25,2);

\path [blueregion] (1.75,0) to [out=90, in=-45] (1,1) to [out=45, in=-90] (1.75,2) to (1.75+\side,2) to (1.75+\side,0);
\draw[bnd]  (1.75,0) to [out=90, in=-45] (1,1) to [out=45, in=-90] (1.75,2) ;

\path [yellowregion] (0.05,-\yb) to [out=90, in=-145] (1,1) to [out=55, in=-90] (1.55,2-\yb) to (0.05-\side,2-\yb) to (0.05-\side,-\yb);
\draw[bnd] (0.05,-\yb) to [out=90, in=-145] (1,1) to [out=55, in=-90] (1.55,2-\yb);

\node [blob] at (1,1) {$H$};
\end{tz}
&
\begin{tz}[string]
\path [blueregion] (0.25,0) to [out=up, in=-135] (1,1) to [out=135, in=down] (0.25,2) to (0.25-\side,2) to (0.25-\side,0);
\draw[bnd](0.25,0) to [out=up, in=-135] (1,1) to [out=135, in=down] (0.25,2);

\path [blueregion] (1.75,0) to [out=90, in=-45] (1,1) to [out=45, in=-90] (1.75,2) to (1.75+\side,2) to (1.75+\side,0);
\draw[bnd]  (1.75,0) to [out=90, in=-45] (1,1) to [out=45, in=-90] (1.75,2) ;

\path [yellowregion] (0.45,2-\yb) to [out=down, in=125] (1,1) to [out=-55, in=up] (1.65,-\yb) to (1.65+\side,-\yb) to (1.65+\side,2-\yb);
\draw[bnd] (0.45,2-\yb) to [out=down, in=125] (1,1) to [out=-55, in=up] (1.65,-\yb);

\node [blob] at (1,1) {$H$};
\end{tz}
&
\begin{tz}[string]

\path [blueregion] (0.25,0) to [out=up, in=-135] (1,1) to [out=135, in=down] (0.25,2) to (0.25-\side,2) to (0.25-\side,0);
\draw[bnd](0.25,0) to [out=up, in=-135] (1,1) to [out=135, in=down] (0.25,2);

\path [blueregion] (1.75,0) to [out=90, in=-45] (1,1) to [out=45, in=-90] (1.75,2) to (1.75+\side,2) to (1.75+\side,0);
\draw[bnd]  (1.75,0) to [out=90, in=-45] (1,1) to [out=45, in=-90] (1.75,2) ;

\path [yellowregion] (0.45,-\yb) to [out=90, in=-125] (1,1) to [out=55, in=-90] (1.55,2-\yb) to (1.55+\side,2-\yb) to (1.55+\side,-\yb);
\draw[bnd](0.45,-\yb) to [out=90, in=-125] (1,1) to [out=55, in=-90] (1.55,2-\yb);

\node [blob] at (1,1) {$H$};
\end{tz}
&
\begin{tz}[string]

\path [blueregion] (0.25,0) to [out=up, in=-135] (1,1) to [out=135, in=down] (0.25,2) to (0.25-\side,2) to (0.25-\side,0);
\draw[bnd](0.25,0) to [out=up, in=-135] (1,1) to [out=135, in=down] (0.25,2);

\path [blueregion] (1.75,0) to [out=90, in=-45] (1,1) to [out=45, in=-90] (1.75,2) to (1.75+\side,2) to (1.75+\side,0);
\draw[bnd]  (1.75,0) to [out=90, in=-45] (1,1) to [out=45, in=-90] (1.75,2) ;

\path [yellowregion] (0.05,2-\yb) to [out=down, in=145] (1,1) to [out=-55, in=up] (1.55,-\yb) to (0.05-\side,-\yb) to (0.05-\side,2-\yb);
\draw[bnd](0.05,2-\yb) to [out=down, in=145] (1,1) to [out=-55, in=up] (1.55,-\yb);
\node [blob] at (1,1) {$H$};
\end{tz}
\\
\multicolumn{2}{c}{
\begin{tz}[string]
\path [blueregion] (0.45-\side,\yb) rectangle +(2.2+\side,2);
\path [yellowregion] (0.25,0) to [out=90, in=-135] (1,1) to [out=135, in=-90] (0.25,2) to (0.25-\side,2) to (0.25-\side,0);
\draw[bnd](0.25,0) to [out=90, in=-135] (1,1) to [out=135, in=-90] (0.25,2);
\path [yellowregion] (1.75,0) to [out=90, in=-45] (1,1) to [out=45, in=-90] (1.75,2) to (1.75+\side,2) to (1.75+\side,0);
\draw[bnd](1.75,0) to [out=90, in=-45] (1,1) to [out=45, in=-90] (1.75,2);
\node [blob] at (1,1) {$H$};
\end{tz}}
&
\multicolumn{2}{c}{
\begin{tz}[string]
\path [blueregion] (0.25,0) to [out=90, in=-135] (1,1) to [out=135, in=-90] (0.25,2) to (0.25-\side,2) to (0.25-\side,0);
\draw[bnd] (0.25,0) to [out=90, in=-135] (1,1) to [out=135, in=-90] (0.25,2);
\path [blueregion] (1.75,0) to [out=90, in=-45] (1,1) to [out=45, in=-90] (1.75,2) to (1.75+\side,2) to (1.75+\side,0);
\draw[bnd] (1.75,0) to [out=90, in=-45] (1,1) to [out=45, in=-90] (1.75,2);
\node [blob] at (1,1) {$H$};
\path [yellowregion] (0.05-\side,-\yb) rectangle +(2.2+\side,2);
\end{tz}}
\end{array}
\]
In our pseudo-3d graphical notation, it can be hard to see if a rear sheet is actually connected to a vertex. In our diagrams, we will use the convention that all sheets drawn beneath a vertex are connected to it.

\paragraph{Interchangers.}
The vertex representing the crossing of wires at different depths is called an \textit{interchanger}:
\def\xdeltaw{0.2}
\def \ydeltaw{0.2}
\begin{calign}\label{eq:interchanger} \begin{tz}[string]
\path[blueregion] (0.25, 0 ) to [out= up, in=down] (1.75,2-\ydeltaw) to (1.75,2) to  (0.25-\side, 2) to (0.25-\side,0);
\path[redregion]  (0.25, 0 ) to [out= up, in=down] (1.75,2-\ydeltaw) to (1.75,2) to   (1.75+\xdeltaw+\side, 2) to (1.75+\xdeltaw+\side,0);
\draw   (0.25, 0 ) to [out= up, in=down] (1.75,2-\ydeltaw) to (1.75,2);
\path[greenregion] (1.75,-\ydeltaw) to (1.75,0) to [out= up, in=down] (0.25,2-\ydeltaw) to (1.75+\side,2-\ydeltaw) to (1.75+\side, -\ydeltaw);
\path[yellowregion] (1.75,-\ydeltaw) to (1.75,0) to [out= up, in=down] (0.25,2-\ydeltaw) to (0.25-\xdeltaw-\side,2-\ydeltaw) to (0.25-\xdeltaw-\side,-\ydeltaw);
\draw  (1.75,-\ydeltaw) to (1.75,0) to [out= up, in=down] (0.25,2-\ydeltaw);
\end{tz}
\end{calign}
This is given canonically for all index values as the swap map $H \otimes J \to J \otimes H$.

We now show that interchangers are biunitary, with scalar $\lambda=1$.
\begin{proposition} \label{prop:interchanger}The interchanger \eqref{eq:interchanger} is biunitary.
\end{proposition}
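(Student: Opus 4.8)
The plan is to invoke \autoref{thm:rotationfactor}: it suffices to check that the interchanger is unitary and that one of its quarter rotations is proportional to a unitary, after which the ``Furthermore'' clause of that proposition will pin down $\lambda$. The whole argument reduces to two elementary facts about the swap map — that it is involutive and self-adjoint — together with the snake equations \eqref{eq:vanillasnake}; no genuine linear algebra is needed.

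First I would establish vertical unitarity. For every choice of index values the interchanger \eqref{eq:interchanger} is the swap map $\sigma\colon H\otimes J\to J\otimes H$, $x\otimes y\mapsto y\otimes x$. Since $\sigma^\dagger=\sigma^{-1}$ and $\sigma^{-1}$ is the swap $J\otimes H\to H\otimes J$, the two vertical composites appearing in \eqref{eq:biunitaryverticallyunitary2} are the families $\sigma^\dagger\sigma=\mathrm{id}_{H\otimes J}$ and $\sigma\sigma^\dagger=\mathrm{id}_{J\otimes H}$; graphically, this is just the statement that swapping a pair of wires twice returns the original configuration. Hence the interchanger is unitary.

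Next I would identify the clockwise quarter rotation of the interchanger. Bending two of its four legs around the cups and caps \eqref{eq:vanillacup} and simplifying with the snake equations \eqref{eq:vanillasnake} produces a diagram of the same underlying shape: two strands crossing at different depths, that is, again an interchanger — now between the rotated $1$-morphisms, whose duals are once more Hilbert spaces of the form $\C^n$. In particular, for every choice of indices the quarter rotation is again a swap map, hence genuinely unitary rather than merely proportional to a unitary. By \autoref{thm:rotationfactor} this shows the interchanger is biunitary, and since in case $2$ the proportionality factor is a square root of $\lambda$ and here equals $1$, we conclude $\lambda=1$. (Alternatively, $\lambda$ may be read off directly from \eqref{eq:recoverscalar}: the closed bubble built from the interchanger collapses via \eqref{eq:vanillasnake} with no surviving scalar, again giving $\lambda=1$.)

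The main obstacle I anticipate is purely bookkeeping rather than mathematical: one must keep track of the four shaded regions of \eqref{eq:interchanger} and, crucially, of the front/back ordering of the two strands, carefully enough to confirm that the rotated picture really is the interchanger diagram — up to relabelling of regions and of the dualised wires — and not some inequivalent crossing. Once the regions and depths are matched up, each of the composites in \eqref{eq:biunitaryhorizontallyunitary2} reduces to \eqref{eq:vanillasnake} together with involutivity of $\sigma$, with no extra scalar appearing, so if one prefers to bypass \autoref{thm:rotationfactor} entirely the same two ingredients verify the horizontal unitarity equations directly with $\lambda=1$.
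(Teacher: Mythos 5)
Your proposal is correct. The paper proves the statement by directly exhibiting the vertical and horizontal unitarity equations \eqref{eq:biunitaryverticallyunitary2} and \eqref{eq:biunitaryhorizontallyunitary2} for the interchanger as planar isotopies of crossing diagrams, whereas you route the horizontal check through \autoref{thm:rotationfactor}: establish ordinary unitarity of the swap, observe that its clockwise quarter rotation is again a crossing at different depths (hence itself a swap, hence genuinely unitary), and read off $\lambda=1$. The two arguments rest on the same geometric fact --- the crossing is preserved, up to relabelling by duals, under a quarter turn --- and the workload is comparable; the paper's route is slightly more self-contained since it needs no rotation bookkeeping, while yours makes the reason for $\lambda=1$ a bit more conceptual. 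The one point worth being precise about, which you flag yourself, is that the rotated diagram is an interchanger of a \emph{different type} (its regions and wire labels are permuted and dualised relative to \eqref{eq:interchanger}), so one must confirm the depth ordering and region adjacencies before concluding it is a swap; a short index computation (the swap tensor $\delta^{b'}_b\delta^{a'}_a$ is invariant under the cyclic index shift induced by the rotation) settles this. Your closing observation that the horizontal equations can equally be checked directly from \eqref{eq:vanillasnake} and naturality of the swap is exactly what the paper's pictures do.
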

\begin{proof}
Interchangers are unitary, as witnessed by the following equations:
\begin{calign}
\nonumber
\begin{tz}[string,scale=0.8]
\path[redregion] (0.25,0) to [out= up, in=down] (1.75,2) to [out=up, in=down] (0.25,4) to (0.25,4+\ydelta) to (1.75+\xdelta+\side, 4+\ydelta) to (1.75+\xdelta+\side,0);
\path[blueregion] (0.25,0) to [out= up, in=down] (1.75,2) to [out=up, in=down] (0.25,4) to (0.25,4+\ydelta) to (0.25-\side,4+\ydelta) to (0.25-\side,0);
\draw   (0.25,0) to [out= up, in=down] (1.75,2) to [out=up, in=down] (0.25,4) to (0.25,4+\ydelta);
\path[yellowregion] (1.75,-\ydelta) to (1.75,0) to [out=up, in=down] (0.25,2) to [out= up, in=down] (1.75,4) to  (0.25-\xdelta-\side, 4) to (0.25-\xdelta-\side,-\ydelta);
\path[greenregion] (1.75,-\ydelta) to (1.75,0) to [out=up, in=down] (0.25,2) to [out= up, in=down] (1.75,4) to (1.75+\side, 4) to (1.75+\side, -\ydelta);
\draw  (1.75,-\ydelta) to (1.75,0) to [out=up, in=down] (0.25,2) to [out= up, in=down] (1.75,4);
\end{tz}
=
 \begin{tz}[string,scale=0.8]
 \path[blueregion] (0.25-\side, 0) rectangle (0.25, 4+\ydelta);
 \path[redregion] (0.25,0) rectangle (1.75+\xdelta+\side, 4+\ydelta);
 \draw (0.25,0) to (0.25,4+\ydelta);
 \path[greenregion] (1.75,-\ydelta) rectangle (1.75+\side, 4);
 \path[yellowregion] (0.25-\xdelta-\side, -\ydelta) rectangle (1.75,4);
 \draw (1.75,-\ydelta) to (1.75,4);
\end{tz}
&
\begin{tz}[string,scale=0.8]
\path[redregion] (1.75,0) to [out= up, in=down] (0.25,2) to [out=up, in=down] (1.75,4) to (1.75,4+\ydelta) to (1.75+\xdelta+\side, 4+\ydelta) to (1.75+\xdelta+\side,0);
\path[blueregion] (1.75,0) to [out= up, in=down] (0.25,2) to [out=up, in=down] (1.75,4) to (1.75,4+\ydelta) to (0.25-\side,4+\ydelta) to (0.25-\side,0);
\draw  (1.75,0) to [out= up, in=down] (0.25,2) to [out=up, in=down] (1.75,4) to (1.75,4+\ydelta);
\path[yellowregion] (0.25,-\ydelta) to (0.25,0) to [out=up, in=down] (1.75,2) to [out= up, in=down] (0.25,4) to  (0.25-\xdelta-\side, 4) to (0.25-\xdelta-\side,-\ydelta);
\path[greenregion] (0.25,-\ydelta) to (0.25,0) to [out=up, in=down] (1.75,2) to [out= up, in=down] (0.25,4) to (1.75+\side, 4) to (1.75+\side, -\ydelta);
\draw  (0.25,-\ydelta) to (0.25,0) to [out=up, in=down] (1.75,2) to [out= up, in=down] (0.25,4);
\end{tz}
=
 \begin{tz}[string,scale=0.8]
 \path[blueregion] (0.25-\side, 0) rectangle (1.75, 4+\ydelta);
 \path[redregion] (1.75,0) rectangle (1.75+\xdelta+\side, 4+\ydelta);
 \draw (1.75,0) to (1.75,4+\ydelta);
 \path[greenregion] (0.25,-\ydelta) rectangle (1.75+\side, 4);
 \path[yellowregion] (0.25-\xdelta-\side, -\ydelta) rectangle (0.25,4);
 \draw (0.25,-\ydelta) to (0.25,4);
\end{tz}
\end{calign}
They are also horizontally unitary, and thus biunitary, as witnessed by the following:
\begin{calign}
\nonumber
\begin{tz}[string]
\path[redregion] (0.25,2+\ydeltaw) to (0.25,2) to  [out= down, in= left] (2,0.5)  to [out = right, in = down] (3.75,2) to (3.75,2+\ydeltaw) to (3.75+\xdeltaw+\side,2+\ydeltaw) to (3.75+\xdeltaw+\side,0) to (0.25-\side, 0) to (0.25-\side, 2+\ydeltaw);
\path[blueregion,draw]  (0.25,2+\ydeltaw) to (0.25,2) to  [out= down, in= left] (2,0.5)  to [out = right, in = down] (3.75,2) to (3.75,2+\ydeltaw) ;
\path[greenregion] (0.25,-\ydeltaw) to (0.25,0) to [out = up, in = left] (2, 1.5) to [out = right, in = up] (3.75,0) to (3.75, -\ydeltaw) to (3.75+\side, -\ydeltaw) to (3.75+\side, 2) to (0.25-\xdeltaw-\side,2) to (0.25-\xdeltaw-\side, -\ydeltaw);
\path[yellowregion,draw]  (0.25,-\ydeltaw) to (0.25,0) to [out = up, in = left] (2, 1.5) to [out = right, in = up] (3.75,0) to (3.75, -\ydeltaw) ;
\end{tz}
\eqgap = \eqgap
\begin{tz}[string]
\path[redregion] (0.25,2+\ydeltaw) to (0.25,2) to  [out= down, in= left] (2,1.25)  to [out = right, in = down] (3.75,2) to (3.75,2+\ydeltaw) to (3.75+\xdeltaw+\side,2+\ydeltaw) to (3.75+\xdeltaw+\side,0) to (0.25-\side, 0) to (0.25-\side, 2+\ydeltaw);
\path[blueregion,draw]  (0.25,2+\ydeltaw) to (0.25,2) to  [out= down, in= left] (2,1.25)  to [out = right, in = down] (3.75,2) to (3.75,2+\ydeltaw) ;
\path[greenregion] (0.25,-\ydeltaw) to (0.25,0) to [out = up, in = left] (2, 0.75) to [out = right, in = up] (3.75,0) to (3.75, -\ydeltaw) to (3.75+\side, -\ydeltaw) to (3.75+\side, 2) to (0.25-\xdeltaw-\side,2) to (0.25-\xdeltaw-\side, -\ydeltaw);
\path[yellowregion,draw]  (0.25,-\ydeltaw) to (0.25,0) to [out = up, in = left] (2, 0.75) to [out = right, in = up] (3.75,0) to (3.75, -\ydeltaw) ;
\end{tz}
\end{calign}
The other horizontal unitarity equation follows similarly.
\end{proof}

\section{Biunitary composition}
\label{sec:composition}

The results of this section are all corollaries of the following simple idea.
\begin{theorem}
\label{thm:maintheorem}
Arbitrary finite diagonal composites of biunitaries are again biunitary.
\end{theorem}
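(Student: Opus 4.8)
The plan is to reduce to the binary case and then invoke the rotation characterization of biunitarity from \autoref{thm:rotationfactor}. Any finite diagonal composite is built up from single biunitaries by repeatedly forming binary diagonal composites, so by induction it suffices to show the following: if $U$ and $U'$ are biunitaries whose boundary data is compatible, so that $U'$ may be glued to the upper-right corner of $U$ (identifying the north-east wire of $U$, together with its two adjacent regions, with the south-west wire of $U'$ and its two adjacent regions), then the resulting vertex $V$ is again biunitary. To see this, by \autoref{thm:rotationfactor} it is enough to check that $V$ is unitary and that its clockwise quarter rotation is proportional to a unitary.

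First I would record that the diagonal composite of two \emph{unitary} vertices is unitary, which in particular applies to $U,U'$ and so shows $V$ is unitary. Stacking $V^\dagger$ on top of $V$ and deforming, the copies of $U'$ and $U'^\dagger$ become vertically adjacent (since $U'$ sits at the top of $V$ and $U'^\dagger$ at the bottom of $V^\dagger$), so the appropriate vertical unitarity equation \eqref{eq:unitarity} for $U'$ cancels them; this leaves $U$ directly beneath $U^\dagger$, and a second application of \eqref{eq:unitarity}, now for $U$, cancels those, leaving the identity. The composite $V\circ V^\dagger$ is handled in the same way with the roles of $U,U'$ and their daggers interchanged. Hence $V$ is unitary, using only that $U$ and $U'$ are.

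Next I would argue that the clockwise quarter rotation intertwines with diagonal composition up to a scalar. Rotating the defining diagram of $V$ by a quarter turn rotates each of $U$ and $U'$ by a quarter turn and keeps the internal wire diagonal, and the cups and caps used to define the rotation can be slid past the vertices by the snake equations \eqref{eq:vanillasnake}--\eqref{eq:colouredcupscaps}, at worst closing one wire into a loop which contributes a positive dimension factor by \eqref{eq:circlen}. Thus the clockwise quarter rotation of $V$ is proportional to a diagonal composite of the clockwise quarter rotations of $U$ and $U'$. By \autoref{thm:rotationfactor} each of these rotations is proportional to a unitary, and by the previous paragraph a diagonal composite of scalar multiples of unitaries is a scalar multiple of a unitary; hence the clockwise quarter rotation of $V$ is proportional to a unitary, and \autoref{thm:rotationfactor} gives that $V$ is biunitary. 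Tracking the proportionality constants shows the scalar of $V$ is the product of the scalars of $U$ and $U'$.

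I expect the only genuine work to be geometric bookkeeping: making the compatibility of the shared wire and its regions precise enough that the two vertical unitarity equations of $U$ and $U'$ really do apply in sequence (and keeping track of which of the two equations in \eqref{eq:unitarity} is used at each stage), and verifying the ``rotation commutes with diagonal composition up to scalar'' step carefully rather than just by inspection of the pictures. An alternative, more pedestrian route — the one implicit in the figures of the introduction — is to bypass \autoref{thm:rotationfactor} and instead verify the two vertical and two horizontal unitarity equations \eqref{eq:biunitaryverticallyunitary2}--\eqref{eq:biunitaryhorizontallyunitary2} for $V$ directly, in each case unzipping the composite and applying the corresponding equation for $U$ and then for $U'$; this is the same argument in more elementary dress.
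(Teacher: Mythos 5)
Your binary-case argument is essentially the paper's (\autoref{thm:composition}): vertical unitarity of the composite is immediate since the diagonal composite of unitary vertices is unitary (by the cancellation argument you describe), and horizontal unitarity follows via the rotation characterization (\autoref{thm:rotationfactor}), by deforming the quarter rotation of the composite into a vertical stack of the quarter rotations of the two pieces, each of which is unitary up to a scalar. One minor remark: no wire closes into a loop in this deformation — the rotation of the binary composite slides cleanly into a vertical composite of the two rotated vertices via the coloured cup/cap equations, without invoking \eqref{eq:circlen} — so your ``at worst closing one wire into a loop'' hedge is unnecessary.

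However, there is a genuine gap in your reduction to the binary case. It is \emph{not} true that every finite diagonal composite of biunitaries is built up by repeated binary composition. The pinwheel configuration of five tiles — a planar tiling of a rectangle by five rectangles that admits no guillotine cut, i.e.\ no full-width horizontal or vertical line separating it into two sub-rectangles — is the standard counterexample, due to Dawson and Par\'e. Your induction therefore does not cover all finite diagonal composites. The paper acknowledges exactly this: after establishing the binary case in \autoref{thm:composition}, it notes that ``Except for the pinwheel composite, which can be handled separately, this shows that Theorem~\ref{thm:maintheorem} holds.'' To complete your argument you would need either a separate direct verification that a pinwheel composite of biunitaries is biunitary (by unzipping its vertical and horizontal unitarity equations by hand, in the style of your final paragraph), or a decomposition result showing the pinwheel case can be reduced to the others in your specific setting; as written, the proposal silently asserts a false reduction.
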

\noindent
Since we have established in \autoref{sec:biunitarity} that biunitaries of various types correspond to different quantum structures, \autoref{thm:maintheorem} suggests the possibility of building new quantum structures from existing ones by diagonal composition. In \autoref{sec:diagonalcomposition}, we demonstrate that binary diagonal composites of biunitaries are again biunitary. We then consider the problem of diagonally composing the biunitaries corresponding to Hadamard matrices, quantum Latin squares, unitary error bases and controlled families to produce other such structures, investigating binary composites in \autoref{sec:binaryquantum}, ternary composites in \autoref{sec:ternaryquantum}, and higher composites in \autoref{sec:higherquantum}. In \autoref{sec:infinity}, we argue that our methods gives rise to an infinite number of genuinely distinct constructions.

A \textit{planar tiling}~\cite{Dawson:1993} is a partition of a rectangle by a finite number of rectangles, and gives the correct structure to describe the possible forms of an arbitrary finite diagonal composite of biunitaries.\footnote{In particular, this implies that biunitaries can be organized as a double category~\cite{Dawson:1993}.} This notation closely resembles Ocneanu's original paragroup notation for biunitaries and their composition~\cite{Ocneanu:1989}. The following are examples of planar tilings, which we always draw in a diagonal fashion to better match the biunitary pictures:
\ignore{ \emph{planar tiling} is useful to describe arbitrary diagonal composites. 
\begin{equation}
\begin{tz}[string,scale=0.7,rotate=-45]
\draw(0,0) rectangle (2,2);
\node[rotate=-45] at (2.5,1){$\leadsto$};
\node[rotate=-135] at (1,-0.5) {$\leadsto$};
\draw (3,0) rectangle (5,2);
\draw (4,0) to +(0,2);
\draw (0,-3) rectangle (2,-1);
\draw (0,-2) to +(2,0);
\end{tz}
\end{equation}
}
\begin{calign}
\begin{tz}[string,rotate=-45]
\draw (0,0) rectangle (2,2);
\draw (1,0) to +(0,2);
\draw (1,1) to (2,1);
\end{tz}
&
\begin{tz}[string,rotate=-45,scale=2/3]
\draw (0,-1) rectangle (3,2);
\draw (1,0) to +(0,2);
\draw (1,1) to (2,1);
\draw (0,0) to +(2,0);
\draw (2,2) to +(0,-3);
\end{tz}
\end{calign}
 For the more complicated biunitary composites in \autoref{fig:ternary}, we give the corresponding planar tiling to make the structure clear.

\subsection{Diagonal composition}
\label{sec:diagonalcomposition}

It is straightforward to see that the diagonal composite of two biunitaries is again biunitary.
\begin{theorem}\label{thm:composition}
Let $U$, $V$ and $W$ be biunitaries of the following types:
\begin{calign}
\begin{tz}[string]
\path[redregion] (0.25,0) to [out=up, in=-135] (1,1)to [out= 135, in=down] (0.25,2) to (0.25-\side,2) to (0.25-\side,0);
\path[blueregion] (1.75,0) to [out=90, in=-45]  (1,1) to [out= 45, in=-90] (1.75,2) to (1.75+\side,2) to (1.75+\side,0);
\path[greenregion,draw] (0.25,0) to [out=90, in=-135] (1,1)to [out=-45, in=90] (1.75,0);
\path[yellowregion,draw] (0.25,2) to [out=-90, in=135]  (1,1) to   [out=45, in=-90] (1.75,2);
\node[blob] at (1,1) {$U$};
\end{tz}
&
\begin{tz}[string]
\path[cyanregion] (0.25,0) to [out=up, in=-135] (1,1)to [out= 135, in=down] (0.25,2) to (0.25-\side,2) to (0.25-\side,0);
\path[yellowregion] (1.75,0) to [out=90, in=-45]  (1,1) to [out= 45, in=-90] (1.75,2) to (1.75+\side,2) to (1.75+\side,0);
\path[redregion,draw] (0.25,0) to [out=90, in=-135] (1,1)to [out=-45, in=90] (1.75,0);
\path[orangeregion,draw] (0.25,2) to [out=-90, in=135]  (1,1) to   [out=45, in=-90] (1.75,2);
\node[blob] at (1,1) {$V$};
\end{tz}
&
\begin{tz}[string]
\path[yellowregion] (0.25,0) to [out=up, in=-135] (1,1)to [out= 135, in=down] (0.25,2) to (0.25-\side,2) to (0.25-\side,0);
\path[orangeregion] (1.75,0) to [out=90, in=-45]  (1,1) to [out= 45, in=-90] (1.75,2) to (1.75+\side,2) to (1.75+\side,0);
\path[blueregion,draw] (0.25,0) to [out=90, in=-135] (1,1)to [out=-45, in=90] (1.75,0);
\path[cyanregion,draw] (0.25,2) to [out=-90, in=135]  (1,1) to   [out=45, in=-90] (1.75,2);
\node[blob] at (1,1) {$W$};
\end{tz}
\end{calign}
Then the following diagonal composites are biunitary, with respect to the indicated partitions of the input and output wires:
\begin{calign}\label{eq:biunitaryComposite}
\begin{tz}[string]
\path[blueregion] (1.75,0) to [out=90, in=-45] (1,1) to [out=45, in=-90]  (1.25+\hrt,2.5) to (1.25+\hrt+\side,2.5) to (1.25+\hrt+\side,0);
\path[cyanregion] (0.25-\hrt,0) to [out=90, in=-135] (0.5,1.5) to [out=135, in=-90] (-0.25,2.5) to (-0.25-\side,2.5) to (-0.25-\side,0);
\path[redregion,draw] (0.25,0) to [out=90, in=-135] (1,1) to (0.5,1.5) to [out=-135, in=90] (0.25-\hrt,0);
\path[ orangeregion, draw] (-0.25,2.5) to [out=-90, in=135] (0.5,1.5) to [out=45, in=-90] (1.25,2.5);
\path[greenregion,draw] (0.25,0) to [out=90, in=-135] (1,1) to [out=-45, in=90] (1.75,0);
\path[yellowregion,draw] (1.25,2.5) to [out=-90, in=45] (0.5,1.5) to (1,1)to [out=45, in=-90] (1.25+\hrt,2.5);
\node[blob] at (1,1) {$U$};
\node[blob] at (0.5,1.5) {$V$};
\node [rotate=-90] at (1.6,2.75) {$\left\{ \vbox to 0.6cm {} \right.$};
\node [rotate=-90] at (-0.2,2.75) {$\left\{ \vbox to 0.4cm {} \right.$};
\node [rotate=90] at (-0.1,-0.25) {$\left\{ \vbox to 0.6cm {} \right.$};
\node [rotate=90] at (1.75,-0.25) {$\left\{ \vbox to 0.4cm {} \right.$};
\end{tz}
&
\begin{tz}[string]
\path[orangeregion] (1.75+\hrt,0) to [out=90, in=-45] (1.5,1.5) to [out=45, in=-90] (2.25,2.5) to (2.25+\side,2.5) to (2.25+\side,0);
\path[redregion] (0.25,0) to [out=90, in=-135] (1,1) to [out=135, in=-90] (0.75-\hrt, 2.5) to (0.75-\hrt-\side,2.5) to (0.75-\hrt-\side,0);
\path[greenregion,draw] (0.25,0) to [out=90, in=-135] (1,1) to [out=-45, in=90] (1.75,0);
\path[ blueregion,draw] (1.75,0) to [out=90, in=-45] (1,1) to (1.5,1.5) to [out=-45, in=90]  (1.75+\hrt,0);
\path[cyanregion, draw] (0.75,2.5) to [out=-90, in=135] (1.5,1.5) to [out=45, in=-90] (2.25,2.5);
\path[yellowregion,draw] (0.75,2.5) to [out=-90, in=135] (1.5,1.5) to (1,1)to [out=135, in=-90] (0.75-\hrt,2.5);
\node[blob] at (1,1) {$U$};
\node[blob] at (1.5,1.5) {$W$};
\node [rotate=90] at (2.1,-0.25) {$\left\{ \vbox to 0.6cm {} \right.$};
\node [rotate=90] at (0.3,-0.25) {$\left\{ \vbox to 0.4cm {} \right.$};
\node [rotate=-90] at (0.4,2.75) {$\left\{ \vbox to 0.6cm {} \right.$};
\node [rotate=-90] at (2.25,2.75) {$\left\{ \vbox to 0.4cm {} \right.$};
\end{tz} 
\end{calign}
\end{theorem}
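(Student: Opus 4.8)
The plan is to invoke the reformulation of biunitarity from \autoref{thm:rotationfactor}: a vertex is biunitary precisely when it is unitary and its clockwise quarter rotation is proportional to a unitary. I will verify these two conditions for the diagonal composites in~\eqref{eq:biunitaryComposite}. The two composites have the same ``staircase'' shape --- in each, the later vertex ($V$, resp.\ $W$) sits above the earlier vertex $U$ and is joined to it only along a single composite wire which is an output wire of $U$ and an input wire of the later vertex --- so the same argument handles both; I treat the first and write $C$ for it, equipped with the partition of its external wires induced by grouping those of the later vertex separately from those of $U$.

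\emph{Step 1: $C$ is unitary.} To check the equations~\eqref{eq:biunitaryverticallyunitary2} for $C$, stack $C$ beneath its adjoint $C^\dagger$, which is $C$ reflected about the horizontal axis. Because every output wire of $V$ runs to the top boundary of $C$ (the only link from $V$ to $U$ being along an input wire of $V$), the vertices $V$ and $V^\dagger$ are joined along all of $V$'s output wires, so the subdiagram they form is exactly the vertical composite $V^\dagger\circ V$, which vertical unitarity of $V$ collapses to a bundle of identity wires. This brings $U$ directly beneath $U^\dagger$, joined along all of $U$'s output wires, and vertical unitarity of $U$ collapses that pair as well; hence $C^\dagger\circ C=\mathrm{id}$. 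The composite $C\circ C^\dagger$ collapses symmetrically, using vertical unitarity of $U$ first and of $V$ second. This argument uses only that $U$ and $V$ are unitary, so in particular any binary diagonal composite of unitary vertices is unitary.

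\emph{Step 2: the clockwise quarter rotation of $C$ is proportional to a unitary.} Rotating the whole diagram by a quarter turn rotates each of $U$ and $V$ by a quarter turn and carries the two-cell planar tiling underlying $C$ to another two-cell planar tiling; thus the clockwise quarter rotation of $C$ is itself a binary diagonal composite, of the clockwise quarter rotations of $U$ and of $V$. Since $U$ and $V$ are biunitary, \autoref{thm:rotationfactor} gives unitaries $U',V'$ and scalars $\mu_U,\mu_V$ with $\mathrm{rot}(U)=\mu_U U'$ and $\mathrm{rot}(V)=\mu_V V'$; by multilinearity the rotation of $C$ equals $\mu_U\mu_V$ times the diagonal composite of the unitaries $U'$ and $V'$, and that composite is unitary by Step 1. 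So $C$ is unitary with clockwise quarter rotation proportional to a unitary, hence biunitary by \autoref{thm:rotationfactor}, with associated scalar $\lambda=\lambda_U\lambda_V=\mu_U^2\mu_V^2$.

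\emph{Main obstacle.} The only genuine content is Step 1, and there the delicate point is purely diagrammatic bookkeeping: one must read off the planar-tiling picture that, under the chosen partition, the wires and shaded regions of $U$, $V$, $V^\dagger$, $U^\dagger$ align exactly as claimed, so that the local uses of~\eqref{eq:biunitaryverticallyunitary2} are legitimate; likewise in Step 2 one must confirm that a quarter rotation sends a two-cell planar tiling to a two-cell planar tiling with a suitably rotated partition. Both facts are visually evident from the diagrams but warrant an explicit sentence of justification.
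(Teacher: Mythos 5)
Your overall strategy is the same as the paper's: verify that the composite $C$ is vertically unitary (Step 1), then verify its quarter rotation is a scalar multiple of a unitary (Step 2), and invoke Proposition~\ref{thm:rotationfactor}. Your Step~1 is correct, and is in fact spelled out in more detail than the paper's one-line observation that ``the composite $V{\ast}U$ is vertically unitary, since it is the vertical composite of two unitary vertices.''

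The issue is in Step~2, and in your assessment of where the work lives. You assert that the quarter rotation of $C$ is a binary \emph{diagonal} composite of the rotations of $U$ and of $V$, and that this is ``visually evident.'' Neither part is quite right. The paper shows, via an explicit chain of diagram equalities using the snake equations for coloured cups and caps~\eqref{eq:colouredcupscaps}, that the anticlockwise rotation of $V{\ast}U$ equals the \emph{vertical} composite of the rotations of $V$ and of $U$, not a diagonal one. The cups and caps that implement the rotation thread through one of the vertices, so after yanking, the staircase overlap becomes a full-width interface; establishing this is the real content of the proof, not a routine remark about rotating a tiling. Your conclusion happens to survive because vertical composites of unitaries are also unitary, so the rotation of $C$ is still a scalar times a unitary. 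But your closing claim that ``the only genuine content is Step 1'' has things inverted relative to the paper: there, Step~1 is trivial and the diagrammatic bookkeeping lives entirely in Step~2. You should either carry out the yanking argument explicitly, or at least cite equations~\eqref{eq:colouredcupscaps} as the lemma that makes Step~2 go, and correct ``diagonal'' to ``vertical.''
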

\begin{proof}
We will prove that 
\[\begin{tz}[string,scale=1]

\path[redregion,draw] (0.25,0) to [out=90, in=-135] (1,1) to (0.85,1.15) to [out=-135, in=90] (0.25-0.3*\hrt,0);

\path[ blueregion] (1.75,0) to [out=90, in=-45] (1,1) to [out=45, in=-90]  (1.6+0.3*\hrt,2.15) to (1.75+\side,2.15) to (1.75+\side,0);

\path[orangeregion, draw] (0.1,2.15) to [out=-90, in=135] (0.85,1.15) to [out=45, in=-90] (1.6,2.15);

\path[greenregion,draw] (0.25,0) to [out=90, in=-135] (1,1) to [out=-45, in=90] (1.75,0);

\path[yellowregion,draw] (1.6,2.15) to [out=-90, in=45] (0.85,1.15) to (1,1)to [out=45, in=-90] (1.6+0.3*\hrt,2.15);

\path[cyanregion] (0.25-0.3*\hrt,0) to [out=90, in=-135] (0.85,1.15) to [out=135, in=-90] (0.1,2.15) to (0.1-\side,2.15) to (0.1-\side,0);

\node[blob]at (0.925,1.075) {$V {\ast}  U$};
\end{tz}:=
\begin{tz}[string,scale=.86]
\path[blueregion] (1.75,0) to [out=90, in=-45] (1,1) to [out=45, in=-90]  (1.25+\hrt,2.5) to (1.25+\hrt+\side,2.5) to (1.25+\hrt+\side,0);
\path[cyanregion] (0.25-\hrt,0) to [out=90, in=-135] (0.5,1.5) to [out=135, in=-90] (-0.25,2.5) to (-0.25-\side,2.5) to (-0.25-\side,0);
\path[redregion,draw] (0.25,0) to [out=90, in=-135] (1,1) to (0.5,1.5) to [out=-135, in=90] (0.25-\hrt,0);
\path[ orangeregion, draw] (-0.25,2.5) to [out=-90, in=135] (0.5,1.5) to [out=45, in=-90] (1.25,2.5);
\path[greenregion,draw] (0.25,0) to [out=90, in=-135] (1,1) to [out=-45, in=90] (1.75,0);
\path[yellowregion,draw] (1.25,2.5) to [out=-90, in=45] (0.5,1.5) to (1,1)to [out=45, in=-90] (1.25+\hrt,2.5);
\node[blob] at (1,1) {$U$};
\node[blob] at (0.5,1.5) {$V$};
\end{tz}\] is biunitary; the proof for the other composite is completely analogous. The composite $V{\ast\,}U$ is vertically unitary, since it is the vertical composite of two unitary vertices. For horizontal unitarity, consider the anticlockwise rotation of $V {\ast\,} U$:
\def\scalel {0.9} 
\[ 
\def\sidew{1.79*\side}
\begin{tz}[string,xscale=0.5,scale=1.25*\scalel]
\coordinate (L) at (1-0.2*\hrt,1+0.1*\hrt);
\coordinate (R) at (1+0.2*\hrt,1-0.1*\hrt);

\path[redregion,draw] (0.05,0) to [out=90, in=-135] (L) to (R) to [out=-135, in=90]  (0.45,0);

\path[yellowregion,draw] (1.55,2) to [out= -90, in=45] (L) to (R) to [out=45, in=-90]  (1.95,2);

\path[cyanregion,draw] (0.05,0) to [out=up, in=-135] (L) to [out=135, in=right] (-0.5,1.7) to [out=left, in=up] (-1.5,1) to (-1.5,0);

\path[blueregion,draw] (1.95,2) to [out=down, in=45] (R) to [out=-45, in=left] (2.5, 0.3)  to [out= right, in=down] (3.5,1) to (3.5,2);

\path[greenregion] (0.45,0) to [out=up, in=-135] (R) to [out=-45, in=left] (2.5,0.3) to [out= right, in=down] (3.5,1) to (3.5,2) to (3.5 + \sidew, 2) to (3.5+\sidew, 0);

\path[orangeregion] (1.55,2) to [out=down, in=45] (L) to [out= 135, in=right] (-0.5,1.7) to [out= left, in=up] (-1.5,1) to (-1.5,0) to (-1.5 -\sidew, 0) to (-1.5-\sidew, 2);

\node[blob] at (1,1) {$V {\ast\,} U$};
\end{tz}
\,=\,
\def\sidew{2*0.714*\side}
\begin{tz}[string,xscale=0.5,scale=\scalel]
\node[blob] (U) at (1,1){$U$};
\node[blob] (V) at (0,1.5) {$V$};
\path [greenregion] (0.25,0) to [out=up, in=-135] (1,1) to [out=-45, in=left] (2.75,0.3) to [out=right, in=down] (4,1) to (4,2.5) to (4+\sidew,2.5) to (4+\sidew,0);

\path[orangeregion] (0.75,2.5) to [out= down, in=45] (V.center) to [out= 135, in=right] (-1.75, 2.2) to [out=left, in=up] (-3, 1.5) to (-3, 0) to (-3-\sidew, 0) to (-3-\sidew, 2.5);

\path[blueregion] (0.75+2*\hrt,2.5) to [out=down, in=45] (1,1) to [out= -45, in= left] (2.75,0.3) to [out=right, in=down] (4,1) to (4,2.5);
\draw[string] (1,1) to [out= -45, in= left] (2.75,0.3) to [out=right, in=down] (4,1) to (4,2.5);

\path[cyanregion] (0.25-2*\hrt, 0) to [out=up, in=-135] (V.center) to [out= 135, in=right] (-1.75, 2.2) to [out=left, in=up] (-3, 1.5) to (-3, 0);
\draw[string] (V.center) to [out= 135, in=right] (-1.75, 2.2) to [out=left, in=up] (-3, 1.5) to (-3, 0);

\path[yellowregion,draw,string] (0.75,2.5) to [out=down, in=45] (V.center) to (U.center) to [out=45, in=down] (0.75+2*\hrt, 2.5);

\path[redregion,draw,string] (0.25,0) to [out= up, in=-135] (U.center) to (V.center)  to [out= -135, in=up] (0.25-2*\hrt, 0);
\end{tz}
\,\superequals{eq:colouredcupscaps}\,
\def\sidew{2*\side}
\begin{tz}[string, xscale=0.5,scale=0.714*\scalel]
\node[blob] (V) at (1,1) {$V$};
\node[blob] (U) at (6.5,2.5) {$U$};

\path [redregion] (0.25,0) to [out=up, in=-135] (1,1) to [out=-45, in=left] (2.3,0.3) to [out=right, in=left] (5.2,3.2) to [out=right, in=135] (U.center) to [out= -135, in=up]  (5.5,0);

\path[orangeregion] (-1.25,0) to (-1.25,1) to [out= up, in=left] (-0.3,1.7) to [out= right, in=135] (V.center) to [out= 45, in=down] (2,3.5) to (-1.25-\sidew, 3.5) to (-1.25-\sidew, 0);

\path[greenregion] (5.5,0) to [out= up , in=-135] (U.center) to [out=-45, in=left] (7.8,1.8) to [out=right, in=down] (8.75,2.5) to (8.75,3.5) to (8.75+\sidew,3.5) to (8.75+\sidew, 0);

\path[blueregion,draw,string] (7.25, 3.5) to [out=down, in=45] (U.center) to [out=-45, in=left] (7.8,1.8) to [out=right, in=down] (8.75,2.5) to (8.75,3.5);

\path[cyanregion,draw, string] (0.25,0) to [out=up, in=-135] (1,1) to [out=135, in=right] (-0.3, 1.7) to [out=left, in=up] (-1.25,1) to (-1.25,0);

\path[yellowregion] (2, 3.5) to [out=down, in=45] (1,1) to [out= -45, in= left] (2.3,0.3) to [out=right, in=left] (5.2,3.2) to [out=right, in=135] (U.center) to [out= 45, in=down] (7.25,3.5); 
\draw[string] (2, 3.5) to [out=down, in=45] (1,1) to [out= -45, in= left] (2.3,0.3) to [out=right, in=left] (5.2,3.2) to [out=right, in=135] (U.center) to [out=-135, in=up] (5.5,0);
\end{tz}
\]
This is unitary up to a scalar, since by \autoref{thm:rotationfactor} it is the vertical composite of two vertices which are unitary up to a scalar. By \autoref{thm:rotationfactor}, we conclude that $V{\ast\,}U$ is biunitary.
\end{proof}

\noindent
Except for the pinwheel composite\footnote{The \textit{pinwheel composite} is a way to compose five 2-morphisms in a double category, in a way which cannot be described in terms of repeated binary composites.}~\cite{Dawson:1993}, which can be handled separately, this shows that \autoref{thm:maintheorem} holds.

\subsection{Binary composites}
\label{sec:binaryquantum}

We give a number of quantum constructions listed in \autoref{fig:binarydetailed1} and \autoref{fig:binarydetailed2}, each involving the diagonal composite of two biunitaries. Correctness of all these constructions follows as corollaries from \autoref{thm:composition}, and the results of \autoref{sec:characterizing} as summarized in \autoref{fig:biunitarytypes}.

\paragraph{Quantum Latin squares.} We begin by presenting two quantum Latin square constructions. The following construction produces a quantum Latin square from two Hadamard matrices, generalizing \cite[Definition 2.3]{Banica:2007} and \cite[Definition~10]{Musto:2015}.
\begin{figure}
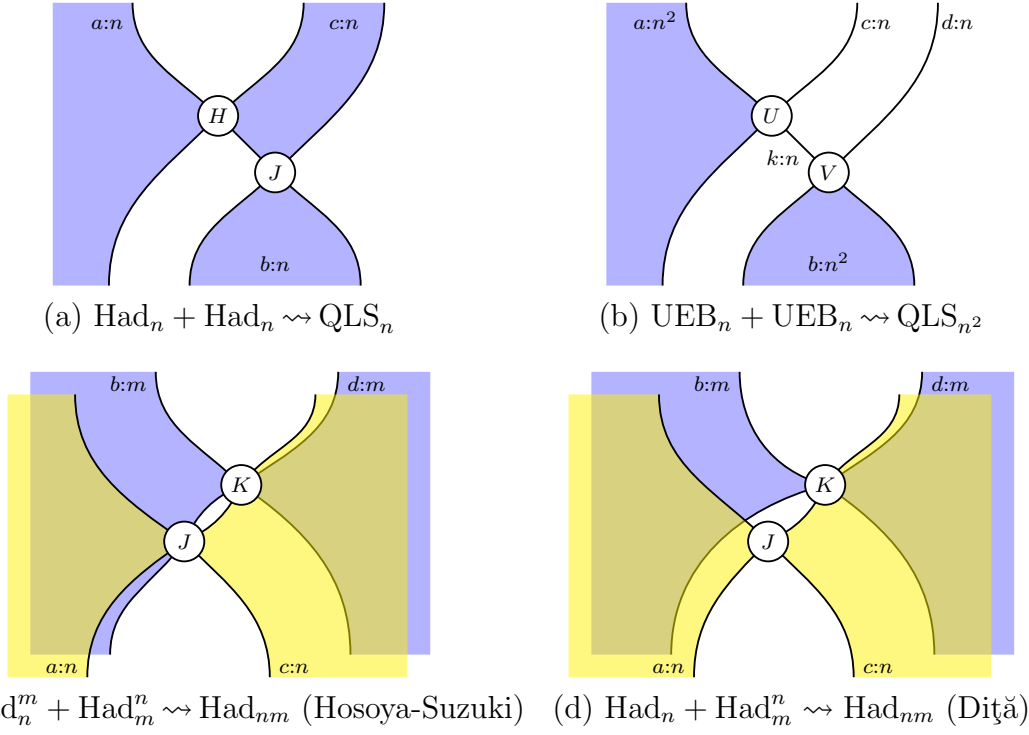

\begin{calign}
\nonumber
\begin{tz}[xscale=-1,string,scale=1.5]
\path[blueregion,bnd] (0.25,0) to [out=90, in=-135] (1,1) to [out=-45, in=90] (1.75,0);
\path[blueregion] (1.75+\hrt,0) to [out=90, in=-45] (1.5,1.5) to [out=45, in=-90] (2.25,2.5) to (2.25+\side,2.5) to (2.25+\side,0);
\draw[bnd] (1.75+\hrt,0) to [out=90, in=-45] (1.5,1.5) to [out=45, in=-90] (2.25,2.5);
\path[blueregion,bnd] (0.75,2.5) to [out=-90, in=135] (1.5,1.5) to (1,1) to [out=135, in=-90] (0.75-\hrt, 2.5);
\node[blob] at (1,1) {$J$};
\node[blob] at (1.5,1.5) {$H$};
\node[below, dimension] at (0.75 -0.5*\hrt,2.3) {$\smash{c{:}n}$};
\node[below left, dimension] at (2.3,2.3) {$\smash{a{:}n}$};
\node[above,dimension] at (1., 0.1) {$b{:}n$};
\end{tz}
& 
\begin{tz}[xscale=-1,string,scale=1.5]
\path[blueregion] (1.75+\hrt,0) to [out=90, in=-45] (1.5,1.5) to [out=45, in=-90] (2.25,2.5) to (2.25+\side,2.5) to (2.25+\side,0);
\draw[bnd] (1.75+\hrt,0) to [out=90, in=-45] (1.5,1.5) to [out=45, in=-90] (2.25,2.5);
\path[blueregion,bnd] (0.25,0) to [out=90, in=-135] (1,1) to [out=-45, in=90] (1.75,0);
\draw[string] (0.75,2.5) to [out=-90, in=135] (1.5,1.5) to (1,1) to [out=135, in=-90] (0.75-\hrt, 2.5);
\node[blob]at (1,1) {$V$};
\node[blob] at (1.5,1.5) {$U$};
\node[dimension, below right] at (0.75, 2.3) {$\smash{c{:}n}$};
\node[dimension, below right] at (0.75-\hrt, 2.3) {$\smash{d{:}n}$};
\node[below left,dimension] at (2.3, 2.3) {$\smash{a{:}n^2}$};
\node[dimension, above] at (1., 0.1) {$b{:}n^2$};
\node[below left,dimension] at (1.23,1.23) {$k{:}n$};
\end{tz}   
\\[-1pt]\nonumber 
\text{(a) }\HAD_n + \HAD_n {\,\leadsto\,} \QLS_n
&
\text{(b) } \UEB_n + \UEB_n {\,\leadsto\,}\QLS_{n^2}
\\[10pt]
\nonumber
\begin{tz}[string,scale=1.5]
\path[blueregion] (0.75,2.5) to [out=-90, in=135] (1.5,1.5) to [out=-155, in=65] (1,1)to [out=-125, in=90] (0.35,0) to (0.35-\side,0) to (0.35-\side,2.5);
\draw[bnd](0.75,2.5) to [out=-90, in=135] (1.5,1.5) to [out=-155, in=65] (1,1)to [out=-125, in=90] (0.35,0) ;
\path[yellowregion] (0.15,-\ydelta) to [out=90, in=-145] (1,1) to [out=145, in=-90] (0.75-\hrt, 2.5-\ydelta) to (0.15-\side,2.5-\ydelta) to (0.15-\side,-\ydelta);
\draw[bnd](0.15,-\ydelta) to [out=90, in=-145] (1,1) to [out=145, in=-90] (0.75-\hrt, 2.5-\ydelta);
\path[blueregion] (2.35,2.5) to [out=-90, in=35] (1.5,1.5) to [out=-35, in=90] (1.75 + \hrt,0) to (1.75+\hrt+\side,0) to (1.75+\hrt+\side,2.5);
\draw[bnd] (2.35,2.5) to [out=-90, in=35] (1.5,1.5) to [out=-35, in=90] (1.75 + \hrt,0);
\path[ yellowregion] (1.75,-\ydelta) to [out=90, in=-45] (1,1) to [out=25, in=-115] (1.5,1.5) to [out=55, in=-90] (2.15,2.5-\ydelta) to (1.55+\hrt+\side,2.5-\ydelta) to (1.55+\hrt+\side,-\ydelta);
\draw[bnd] (1.75,-\ydelta) to [out=90, in=-45] (1,1) to [out=25, in=-115] (1.5,1.5) to [out=55, in=-90] (2.15,2.5-\ydelta);
\node[blob] at (1,1) {$J$};
\node[blob] at (1.5,1.5) {$K$};
\node[below left,dimension] at (0.7,2.36) {$\smash{b{:}m}$};
\node[dimension,below right] at (2.4,2.36) {$\smash{d{:}m}$};
\node[dimension, above right] at (1.8,-0.19) {$\smash{c{:}n}$};
\node[dimension, above left] at (0.1,-0.19) {$\smash{a{:}n}$};
\end{tz}
&
\begin{tz}[string,scale=1.5]
\path[blueregion] (0.75,2.5) to [out=-90, in=165] (1.5,1.5) to [out=-165, in=90] (0.15,0) to (0.15-\side,0) to (0.15-\side,2.5);
\draw[bnd]   (0.75,2.5) to [out=-90, in=165] (1.5,1.5) to [out=-165, in=90] (0.15,0);
\path[yellowregion] (0.35,-\ydelta) to [out=90, in=-135] (1,1) to [out=135, in=-90] (0.75-\hrt, 2.5-\ydelta) to (-0.05-\side,2.5-\ydelta) to (-0.05-\side,-\ydelta);
\draw[bnd](0.35,-\ydelta) to [out=90, in=-135] (1,1) to [out=135, in=-90] (0.75-\hrt, 2.5-\ydelta);
\path[blueregion] (2.35,2.5) to [out=-90, in=35] (1.5,1.5) to [out=-35, in=90] (1.75 + \hrt,0) to (1.75+\hrt+\side,0) to (1.75+\hrt+\side,2.5);
\draw[bnd] (2.35,2.5) to [out=-90, in=35] (1.5,1.5) to [out=-35, in=90] (1.75 + \hrt,0);
\path[ yellowregion] (1.75,-\ydelta) to [out=90, in=-45] (1,1) to [out=25, in=-115] (1.5,1.5) to [out=55, in=-90] (2.15,2.5-\ydelta) to (1.55+\hrt+\side,2.5-\ydelta) to (1.55+\hrt+\side,-\ydelta); 
\draw[bnd](1.75,-\ydelta) to [out=90, in=-45] (1,1) to [out=25, in=-115] (1.5,1.5) to [out=55, in=-90] (2.15,2.5-\ydelta);
\node[blob] at (1,1) {$J$};
\node[blob] at (1.5,1.5) {$K$};
\node[below left,dimension] at (0.7,2.36) {$\smash{b{:}m}$};
\node[dimension,below right] at (2.4,2.36) {$\smash{d{:}m}$};
\node[dimension, above right] at (1.8,-0.19) {$\smash{c{:}n}$};
\node[dimension, above left] at (0.3,-0.19) {$\smash{a{:}n}$};
\end{tz}
\\[-1pt]\nonumber
\text{(c) }\HAD_n^m + \HAD_m^n {\,\leadsto\,} \HAD_{nm} \text{ (Hosoya-Suzuki)} 
&
\text{(d) }\HAD_n + \HAD^n_m \rightsquigarrow \HAD_{nm} \text{ (\Dita)}
\end{calign}
\vspace{-20pt}
\caption{Binary constructions of quantum Latin squares and Hadamard matrices.\label{fig:binarydetailed1}}
\end{figure}%
\begin{cor}[$\HAD_n+\HAD_n\leadsto\QLS_n$]\label{thm:H+H=Q} The construction of \autoref{fig:binarydetailed1}(a) produces an $n$\-dimensional quantum Latin square 
\begin{calign}
\label{eq:H+H=Q}
Q_{a,b,c} = \frac{1}{\sqrt{n}}\,H_{a,c}\,J_{c,b}
\end{calign}
from the following data, with $a,b,c \in [n]$: 
\begin{itemize}
\item $H_{a,c}$ and $J_{c,b} \in \HAD_n$, $n$\-dimensional Hadamard matrices.
\end{itemize}
\end{cor}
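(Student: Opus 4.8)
The plan is to recognize the diagram in \autoref{fig:binarydetailed1}(a) as a binary diagonal composite of biunitaries and apply \autoref{thm:composition}. By \autoref{prop:hadamardbiunitary}, the Hadamard matrices $H$ and $J$ are precisely biunitaries of the Hadamard type~\eqref{eq:typeHadamard}; the two vertices in \autoref{fig:binarydetailed1}(a) are thus biunitary, and the picture realizes a diagonal composite of exactly the shape handled by \autoref{thm:composition}. So the composite vertex is again biunitary, with no extra work.

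Next I would read off the type of this composite. Inspecting the shading, and bracketing the two wires emerging near the top right as a single composite output wire in the manner discussed around \autoref{thm:composition}, one sees that the composite has the quantum Latin square type of \autoref{fig:biunitarytypes}. The dimensional constraints from \autoref{sec:characterizing} then fix every label: two $n$-dimensional Hadamards yield a biunitary whose open boundary data is that of an $n$-dimensional quantum Latin square. Invoking the proposition characterizing quantum Latin squares in terms of biunitaries, the composite therefore corresponds to an $n$-dimensional quantum Latin square, up to the scalar normalization addressed below.

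It remains to extract the formula, using the dictionary of \autoref{fig:equivalentexpression}. The two Hadamard vertices are the families of scalars $H_{a,c}$ and $J_{c,b}$, glued along the region carrying the index $c$. The composite has no closed region, so no index is summed away: $c$ survives as the component index of the output vectors, and the composite vertex is literally the family of vectors with $c$th entry $H_{a,c}\,J_{c,b}$, controlled by $a,b\in[n]$. By~\eqref{eq:hadamard1} these vectors have squared norm $\sum_c |H_{a,c}|^2|J_{c,b}|^2 = n$, so the composite is proportional, not equal, to a biunitary of quantum Latin square type in the normalized sense of \autoref{def:QLS}; rescaling by $\tfrac{1}{\sqrt n}$ corrects this and produces~\eqref{eq:H+H=Q}. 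Equivalently, the overall scalar can be tracked through the composition along the lines of~\eqref{eq:recoverscalarH}.

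I do not expect a genuine obstacle: once \autoref{thm:composition} and the characterizations of \autoref{sec:characterizing} are available, correctness is immediate, and in particular no manipulation of the defining equations in \autoref{def:QLS} is required. The only points deserving care are the bracketing of the output wire into a single composite wire so that the composite matches the quantum Latin square type (rotations or reflections of that type would still be biunitary by \autoref{lemma:rotate}, and still correspond to a quantum Latin square up to a scalar, but one wants the standard orientation), and the bookkeeping of the factor $\tfrac{1}{\sqrt n}$. As an optional sanity check one can verify directly that the rows and columns of~\eqref{eq:H+H=Q} are orthonormal, which collapses immediately to conditions~\eqref{eq:hadamard3} and~\eqref{eq:hadamard2} for $J$ and $H$ respectively.
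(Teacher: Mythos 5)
Your proposal is correct and follows essentially the same route as the paper: recognize the diagram as a diagonal composite, invoke \autoref{thm:composition} together with the biunitary characterizations of Hadamard matrices (\autoref{prop:hadamardbiunitary}) and quantum Latin squares in \autoref{sec:characterizing}, read off the resulting type and formula, and track the scalar. Your bookkeeping of the $\tfrac{1}{\sqrt{n}}$ (direct normalization of the output vectors, citing \eqref{eq:hadamard1}) is a slight variant of the paper's, which instead attributes it to the vertex $J$ being a rotated Hadamard biunitary per \autoref{thm:rotationfactor}, but you explicitly note this equivalence and both give the same result.
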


\noindent
The factor $\frac{1}{\sqrt{n}}$ arises as described in \autoref{thm:rotationfactor}, since the biunitary $J$ is of rotated Hadamard type. Such a biunitary is a \textit{unitary} matrix; given an ordinary Hadamard matrix, we need to rescale it by a factor of $\frac{1}{\sqrt{n}}$ to obtain such a unitary.

As the first of many such corollaries in this paper, we show here how to make use of this data explicitly. Choosing $n=2$, let $H$ and $J$ be the following Hadamard matrices:
\begin{calign}
H = 
\frac 1 {\sqrt{2}}
\begin{pmatrix}
1 & 1 \\ 1 & -1
\end{pmatrix}
&J = 
\frac 1 {\sqrt{2}}
\begin{pmatrix}
1 & i \\ i & 1
\end{pmatrix}
\end{calign}
Then applying the formula~\eqref{eq:H+H=Q}, and recalling from \autoref{def:QLS} that a 2\-dimensional quantum Latin square $Q$ comprises a 2-by-2 grid of vectors $\ket{Q_{a,b}} \in \C^2$ with $Q_{a,b,c} = \braket{c}{Q_{a,b}}$, we obtain $Q$ explicitly as follows:
\begin{equation}
Q =
\begin{pmatrix}
{\displaystyle \frac 1 {\sqrt 2}}
\begin{pmatrix} 1\\i \end{pmatrix}
&
{\displaystyle \frac 1 {\sqrt 2}}
\begin{pmatrix} 1\\-i \end{pmatrix}
\\[15pt]
{\displaystyle \frac 1 {\sqrt 2}}
\begin{pmatrix} i\\1 \end{pmatrix}
&
{\displaystyle \frac 1 {\sqrt 2}}
\begin{pmatrix} i\\-1 \end{pmatrix}
\end{pmatrix}
\end{equation}
It can be checked that every row and column yields an orthonormal basis of $\C^2$, as required.

The next construction, which we believe to be new, produces a quantum Latin square from two unitary error bases.
\begin{cor}[$\UEB_n+\UEB_n\leadsto\QLS_{n^2}$] The construction of \autoref{fig:binarydetailed1}(b) produces an $n^2$\-dimensional quantum Latin square
\begin{equation}
Q_{a,b,cd} = \frac{1}{\sqrt{n}}\,\sum_{k \in [n]}\,U_{a,c,k}\, V_{b,k,d}
\end{equation}
from the following data, with $a,b\in [n^2]$ and $c,d\in [n]$:
\begin{itemize}
\item $U_{a,c,k}$ and $V_{b,k,d}\in \UEB_n$, $n$\-dimensional unitary error bases.
\end{itemize}
\end{cor}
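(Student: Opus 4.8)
The plan is to derive this statement as a corollary of \autoref{thm:composition} together with the biunitary characterizations of unitary error bases~\eqref{eq:UEBbiunitary} and quantum Latin squares~\eqref{eq:QLSbiunitary} from \autoref{sec:characterizing}, in direct analogy with \autoref{thm:H+H=Q}. First I would identify the two vertices appearing in \autoref{fig:binarydetailed1}(b): the vertex $U$ sits in the standard unitary-error-basis type~\eqref{eq:UEBbiunitary} (possibly reflected), hence is biunitary, while the vertex $V$ sits in a quarter-rotated copy of that type, and so by \autoref{lemma:rotate} is again proportional to a biunitary corresponding to an $n$-dimensional unitary error basis. With both vertices recognised as biunitary, \autoref{thm:composition} applies directly and the diagonal composite drawn in \autoref{fig:binarydetailed1}(b) is biunitary.

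Second I would verify that this composite carries exactly the quantum Latin square type~\eqref{eq:QLSbiunitary} with $n$ replaced by $n^2$. Here one uses the bracketing convention of \autoref{thm:composition}, reading the two thin output wires labelled $c{:}n$ and $d{:}n$ as a single composite wire of dimension $n^2$; the dimensional-constraint analysis of \autoref{sec:characterizing} then forces the two shaded controlling regions to range over $[n^2]$ and the non-boundary wire to have dimension $n^2$, which is precisely type~\eqref{eq:QLSbiunitary} at dimension $n^2$. By the biunitary characterization of quantum Latin squares, the composite is therefore an $n^2$-dimensional quantum Latin square.

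Finally I would read off the closed formula. Translating the composite diagram into tensor notation as in \autoref{sec:foundations}, the vector $\ket{Q_{a,b}}$ is the composite of the linear maps underlying $U_a$ and $V_b$ along the internal wire carrying the summation index $k\in[n]$, giving $Q_{a,b,cd} \propto \sum_{k\in[n]} U_{a,c,k}\,V_{b,k,d}$. The proportionality constant is fixed by the scalars recorded in \autoref{sec:characterizing}: the rotated vertex $V$ must be rescaled by $\tfrac{1}{\sqrt{n}}$ to become an honest biunitary (by \autoref{thm:rotationfactor}, a quarter rotation of a biunitary is proportional to a unitary with proportionality factor a square root of its scalar $\lambda$, and $\lambda = n$ for the UEB type), while the resulting quantum Latin square has scalar $\lambda = 1$; hence the overall factor is $\tfrac{1}{\sqrt{n}}$, yielding the displayed formula $Q_{a,b,cd} = \tfrac{1}{\sqrt{n}}\sum_{k\in[n]} U_{a,c,k}\,V_{b,k,d}$, the analogue of~\eqref{eq:H+H=Q}. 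The only genuinely non-automatic part is this scalar-and-index bookkeeping: confirming that the matrix elements appear as $U_{a,c,k}$ and $V_{b,k,d}$ (equivalently that $\ket{Q_{a,b}}$ reshapes to $\tfrac{1}{\sqrt{n}}\,U_a V_b$) and that no further normalization is hidden in the diagonal gluing. Everything structural is immediate from \autoref{thm:composition}.
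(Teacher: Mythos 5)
Your proposal matches the paper's route exactly: the paper treats all of the constructions in \autoref{fig:binarydetailed1} as immediate corollaries of \autoref{thm:composition} together with the biunitary type characterizations summarized in \autoref{fig:biunitarytypes}, and attributes the $\tfrac{1}{\sqrt n}$ factor to $V$ being of rotated UEB type via \autoref{thm:rotationfactor} (cf.\ the remark following \autoref{thm:H+H=Q}), which is precisely your argument. You just spell out the type-matching (two shaded regions of dimension $n^2$ plus a composite $n^2$ wire from the two $n$-wires) and the scalar bookkeeping in more detail than the paper's one-line remark.
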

\noindent
As with \autoref{thm:H+H=Q}, the factor $\smash{\frac{1}{\sqrt{n}}}$ arises since the biunitary $V$ is of rotated UEB type.

Note that we concatenate indices corresponding to tensor products of Hilbert spaces or products of indexing sets; for example, for a QLS on a Hilbert space $V\otimes W$, the coefficient of the basis vector $\ket{i,j}=\ket{i} \otimes \ket{j}$ in the $(a,b)$th position of the quantum Latin square will be written as $Q_{a,b,ij}$. Similarly, if the indexing set of a UEB is the product of two sets $[n]\times [m]$ we denote its $(a,b)$th element by $U_{ab}$ with coefficients $U_{ab,i,j}$.

\paragraph{Hadamard matrices.} The following construction produces a single Hadamard matrix from two controlled families.
\begin{cor}[Hosoya-Suzuki \cite{Hosoya:2003}, $\HAD_n^m + \HAD_m^n \leadsto \HAD_{nm}$] The construction of \autoref{fig:binarydetailed1}(c) produces an $nm$\-dimensional Hadamard matrix 
\begin{equation} \label{eq:Hosoya}H_{ab,cd} = J_{a,c}^b \, K_{b,d}^c
\end{equation}
from the following data, with $a,c\in[n]$ and $b,d\in[m]$:
\begin{itemize}
\item $J^b_{a,c}\in \HAD^m_n$, an $m$\-controlled family of $n$\-dimensional Hadamard matrices;
\item $K_{b,d}^c\in \HAD^n_m$, an $n$\-controlled family of $m$\-dimensional Hadamard matrices.
\end{itemize}
\end{cor}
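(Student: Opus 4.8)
The plan is to recognize this construction as a direct instance of \autoref{thm:composition}, just as the preceding corollaries were. First I would invoke the biunitary characterization of controlled families established above (see~\eqref{eq:halfplaneplain}--\eqref{eq:controlledfull}): the $m$\-controlled family $J^b_{a,c}$ of $n$\-dimensional Hadamard matrices is exactly a biunitary of Hadamard type~\eqref{eq:typeHadamard} equipped with an extra full\-plane controlling sheet of dimension $m$, and symmetrically $K^c_{b,d}$ is a Hadamard\-type biunitary equipped with a full\-plane controlling sheet of dimension $n$. The shading pattern of the diagram in \autoref{fig:binarydetailed1}(c) then displays $J$ and $K$ glued along a diagonal edge precisely in the pattern of~\eqref{eq:biunitaryComposite}, with each of the two controlling sheets playing the role of an outer shaded region of the composite. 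Hence, by \autoref{thm:composition}, the composite is biunitary.

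Second I would pin down the type of this composite biunitary. All four wires surrounding it are standard boundaries, each bounding a single shaded region, while the composite input wire (the bracketing of the two lower boundary wires) and the composite output wire (the bracketing of the two upper ones) each carry the index set $[n]\times[m]$, of cardinality $nm$. So, up to the rotations and deformations permitted by \autoref{lemma:rotate}, the composite is a biunitary of $nm$\-dimensional Hadamard type, and by \autoref{prop:hadamardbiunitary} it therefore corresponds to an $nm$\-dimensional Hadamard matrix. As a consistency check, \autoref{thm:rotationfactor} gives that the scalar of the composite is the product $n\cdot m = nm$ of the scalars of $J$ and $K$, which is exactly the scalar required of an $nm$\-dimensional Hadamard biunitary.

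Finally I would read the formula~\eqref{eq:Hosoya} straight off the diagram: the composite is the (suitably rotated) diagonal composite of $J$ and $K$, so the associated family of scalars is the product $H_{ab,cd} = J^b_{a,c}\,K^c_{b,d}$, where $b$ and $c$ are shared but left open and nothing is summed over. No rescaling appears, because $J$ and $K$ enter in their natural Hadamard orientation, which is in bijection with Hadamard matrices with no scalar correction; indeed one verifies~\eqref{eq:hadamard1} for $H$ at once, since $H_{ab,cd}\overline H_{ab,cd} = (J^b_{a,c}\overline J^b_{a,c})(K^c_{b,d}\overline K^c_{b,d}) = 1$ by~\eqref{eq:hadamard1} applied to $J$ and to $K$, whereas conditions~\eqref{eq:hadamard2} and~\eqref{eq:hadamard3} are simply the horizontal unitarity of the composite biunitary and so are automatic. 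The only genuinely delicate point is the bookkeeping of the first two paragraphs — matching the shadings and the two controlling sheets so that \autoref{fig:binarydetailed1}(c) really is a diagonal composite in the sense of~\eqref{eq:biunitaryComposite}, and bracketing the boundary wires correctly so that the output type is Hadamard of dimension exactly $nm$; once the types are identified, the scalar and the formula follow with no further work.
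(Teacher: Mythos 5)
Your proposal is correct and proceeds essentially as the paper intends: recognize Figure \ref{fig:binarydetailed1}(c) as a diagonal composite of biunitaries and apply \autoref{thm:composition}, use the controlled-family and Hadamard characterizations to fix the types, read the index formula off the diagram, and cross-check the scalar via \autoref{thm:rotationfactor}. The paper offers no separate proof of this corollary precisely because it is meant to follow from this chain of identifications, and you have spelled that chain out correctly, including the observation that $b$ and $c$ are shared open indices (so no internal region is traced over) and that no $\tfrac{1}{\sqrt{\cdot}}$ rescaling appears because neither vertex occurs in a rotated Hadamard orientation.

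One small terminological slip: the controlling sheets in Figure \ref{fig:binarydetailed1}(c) are \emph{half-plane} controls, not full-plane ones. Each yellow sheet has two boundary wire-ends and passes behind its vertex, which is the hallmark of the half-plane control in the figure illustrating the eight half-plane and two full-plane variants; a full-plane control would be a background rectangle with no boundary wire-ends. This does not invalidate your argument, since a controlled family corresponds to a biunitary with either kind of attached sheet, but it is the half-plane version whose shading pattern makes the outer regions bracket to dimension $nm$ exactly as you describe. It would also be more precise to say that the left and right bracketed \emph{regions} of the composite each have index set $[n]\times[m]$, rather than that the wires ``carry'' that set, since the surrounding wires are standard boundaries bounding those shaded regions; but this is clearly what you mean.
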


\noindent
This construction was introduced in 2003 by Hosoya and Suzuki \cite{Hosoya:2003} under the name \textit{generalized tensor product}. Originally, they defined their tensor product as a block matrix $\left(J^1,\ldots,J^m\right) \otimes \left(K^1,\ldots, K^n\right)$ with $(i,j)$th block given by 
\begin{equation}\mathrm{diag}\left( J^1_{i,j}, \ldots, J^m_{i,j}\right) K^j.\end{equation}
This coincides with \eqref{eq:Hosoya}.

A better known special case of this construction, due to \Dita \cite{Dita:2004}, is a central tool in the study and classification of Hadamard matrices; we give it explicitly in \autoref{fig:binarydetailed1}(d).
\Dita's construction uses an $n$\-dimensional Hadamard matrix $J$ and an $n$\-controlled family of $m$\-dimensional Hadamard matrices $K^1,\ldots, K^n$ to obtain the Hadamard matrix $J\otimes (K^1,\ldots,K^n)$. The difference is that $J$ is a single Hadamard matrix in \Dita's construction, rather than a controlled family of Hadamard matrices.

\paragraph{Unitary error bases.} We now turn our attention to unitary error bases. By a manual combinatorial check,\footnote{Such a check is possible since there are only finitely many ways these structures can be composed.} it can be verified that the constructions in \autoref{fig:binarydetailed2} are the only possible binary constructions of UEBs using only Hadamard matrices, UEBs or QLSs and controlled families thereof.

\begin{figure}[h]
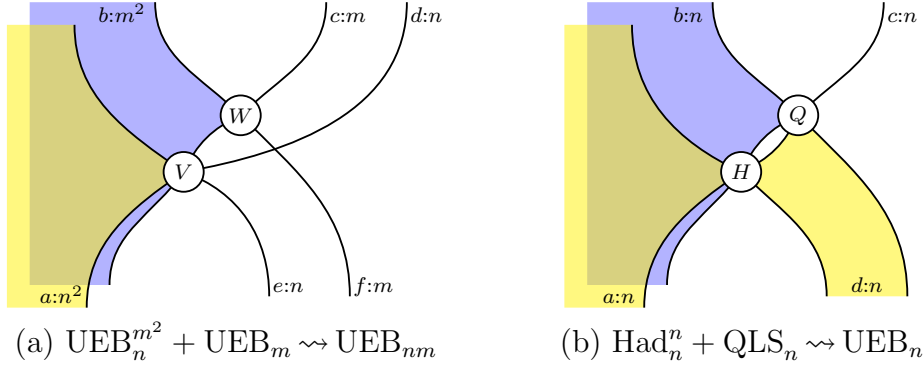

\begin{calign}
\nonumber
\begin{tz}[string,scale=1.5]
\path[blueregion] (0.75,2.5) to [out=-90, in=135] (1.5,1.5) to [out=-155, in=65] (1,1)to [out=-125, in=90] (0.35,0) to (0.35-\side,0) to (0.35-\side,2.5);
\draw[bnd] (0.75,2.5) to [out=-90, in=135] (1.5,1.5) to [out=-155, in=65] (1,1)to [out=-125, in=90] (0.35,0) ;
\path[yellowregion] (0.15,-\ydelta) to [out=90, in=-145] (1,1) to [out=145, in=-90] (0.75-\hrt, 2.5-\ydelta) to (0.15-\side,2.5-\ydelta) to (0.15-\side,-\ydelta);
\draw[bnd] (0.15,-\ydelta) to [out=90, in=-145] (1,1) to [out=145, in=-90] (0.75-\hrt, 2.5-\ydelta);
\draw (1.75,-0.5*\ydelta) to [out=90, in=-20] (1,1) to [out=10, in=-90] (2.25 +\hrt,2.5);
\draw  (1.75+\hrt,-0.5*\ydelta) to [out=90, in=-45] (1.5,1.5) to [out=45, in=-90] (2.25,2.5);
\node[blob] at (1,1) {$V$};
\node[blob] at (1.5,1.5) {$W$};
\node[dimension, below left] at (0.7, 2.36) {$\smash{b{:}m^2}$};
\node[dimension, below right] at (2.25, 2.36) {$\smash{c{:}m}$};
\node[dimension, below right] at (2.25 +\hrt,2.36) {$\smash{d{:}n}$};
\node[dimension,above left] at (0.15,-0.19 ) {$\smash{a{:}n^2}$};
\node[dimension, above right] at (1.75,-0.09) {$\smash{e{:}n} $};
\node[dimension, above right] at (1.75+\hrt,-0.09) {$\smash{f{:}m}$};
\end{tz}
&
\begin{tz}[string,scale=1.5]
\path[blueregion] (0.75,2.5) to [out=-90, in=135] (1.5,1.5) to [out=-155, in=65] (1,1)to [out=-125, in=90] (0.35,0) to (0.35-\side,0) to (0.35-\side,2.5);
\draw[bnd]  (0.75,2.5) to [out=-90, in=135] (1.5,1.5) to [out=-155, in=65] (1,1)to [out=-125, in=90] (0.35,0);
\path[yellowregion] (0.15,-\ydelta) to [out=90, in=-145] (1,1) to [out=155, in=-90] (0.75-\hrt, 2.5-\ydelta) to (0.15-\side,2.5-\ydelta) to (0.15-\side,-\ydelta);
\draw[bnd](0.15,-\ydelta) to [out=90, in=-145] (1,1) to [out=155, in=-90] (0.75-\hrt, 2.5-\ydelta);
\path[ yellowregion,bnd] (1.75,-0.5*\ydelta) to [out=90, in=-45] (1,1) to [out=25, in=-115] (1.5,1.5) to [out=-45, in=90]  (1.75+\hrt,-0.5*\ydelta);
\draw (1.5,1.5) to [out=45, in=-90] (2.25,2.5);
\node[blob] at (1,1) {$H$};
\node[blob] at (1.5,1.5) {$Q$};
\node[dimension, below left] at (0.7,  2.36) {$\smash{b{:}n}$};
\node[dimension, below right] at (2.25, 2.36) {$\smash{c{:}n}$};
\node[dimension,above]at (1.75+0.5*\hrt,-0.09) {$\smash{d{:}n}$};
\node[dimension,above left] at (0.1,-0.19) {$\smash{a{:}n}$};
\end{tz}
\\[-2pt]\nonumber
\text{(a) }\UEB_n^{m^2} + \UEB_m {\,\leadsto\,} \UEB_{nm} 
&
\text{(b) }\HAD_n^n + \QLS_n {\,\leadsto\,} \UEB_n  
\end{calign}
\vspace{-20pt}
\caption{Binary biunitary constructions of unitary error bases.\label{fig:binarydetailed2}}
\end{figure}

The following construction can be seen as the UEB analog of \Dita's construction given in~\autoref{fig:binarydetailed1}(d).

\begin{cor}[$\UEB_n^{m^2} + \UEB_m \leadsto \UEB_{nm}$] 
The construction of \autoref{fig:binarydetailed2}(a) produces an $nm$\-dimensional unitary error basis 
\begin{equation}
U_{ab,cd,ef} = V_{a,d,e}^b\, W^{}_{b,c,f}
\end{equation}
from the following data, with $a\in [n^2]$, $b\in [m^2]$, $c,f\in [m]$ and $d,e\in [n]$:
\begin{itemize}
\item $V_{a,d,e}^b\in \UEB^{m^2}_n$, an $m^2$\-controlled family of $n$\-dimensional unitary error bases;
\item $W_{b,c,f}\in \UEB_m$, an $m$\-dimensional unitary error basis.
\end{itemize}
\end{cor}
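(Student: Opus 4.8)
The plan is to obtain this statement as a direct corollary of \autoref{thm:composition}, together with the biunitary characterizations collected in \autoref{fig:biunitarytypes}. First I would identify the two building blocks appearing in \autoref{fig:binarydetailed2}(a). The vertex $V$ is a biunitary of unitary-error-basis type carrying an additional half-plane control sheet, so by the controlled-family characterization \eqref{eq:controlledfull} it encodes exactly an $m^2$-indexed family $V^b$ of $n$-dimensional UEB biunitaries. The vertex $W$ is an (appropriately rotated) biunitary of unitary-error-basis type \eqref{eq:UEBbiunitary} in dimension $m$. One then checks that the region colours along the two shared wires match, so that the picture really is a diagonal composite of $V$ and $W$ in the sense of \autoref{thm:composition}, read with the bracketing of input wires into the composite index $(a,b)$ and of output wires into the composite indices $(c,d)$ and $(e,f)$ that is indicated in the figure.

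Given this, \autoref{thm:composition} immediately yields that the composite is a biunitary. The substance of the argument is then to identify its type: by tracking the shading pattern and the dimension labels through the composition, using the dimension-counting rules illustrated earlier and \autoref{thm:rotationfactor}/\autoref{lemma:rotate} to rotate into standard position if necessary, the composite is a biunitary of the unitary-error-basis type \eqref{eq:UEBbiunitary} in dimension $nm$: the single shaded boundary region carries the combined control index $(a,b)$ ranging over $[n^2]\times[m^2]=[(nm)^2]$, and the two unshaded wires each carry a composite index of size $n\cdot m = nm$. Invoking the biunitary characterization of UEBs from \autoref{fig:biunitarytypes} then shows that the composite corresponds to an $nm$-dimensional unitary error basis.

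Finally I would translate the composite diagram into index notation using the dictionary $\leftrightsquigarrow$, exactly as in the characterization proofs of \autoref{sec:characterizing}. Since the two vertices of the composite are juxtaposed along a standard boundary, i.e.\ a one-dimensional wire, the resulting scalar at each index value is simply the product of the scalars attached to $V$ and to $W$, which gives the matrix element $U_{ab,cd,ef}=V^b_{a,d,e}\,W_{b,c,f}$. I would also record that, in contrast with the quantum-Latin-square constructions of \autoref{fig:binarydetailed1}(a,b), no rescaling factor appears here: both building blocks are of genuine UEB type, which directly encodes a family of unitaries rather than Hadamard-type data, so the induced scalar factor is $1$; this is confirmed by the same normalization bookkeeping used in the UEB characterization.

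The main obstacle I expect is precisely the bookkeeping in the second step: correctly reading off the shading pattern and, crucially, the dimensions of every wire and region of the composite, so as to confirm that it is \emph{exactly} the UEB type in dimension $nm$ rather than some rotation or reflection that would demand a scalar correction, and getting the index placements ($a\in[n^2]$, $b\in[m^2]$, $c,f\in[m]$, $d,e\in[n]$) right when passing to the formula $U_{ab,cd,ef}=V^b_{a,d,e}W_{b,c,f}$. Everything else is a routine appeal to \autoref{thm:composition} and the characterizations of \autoref{sec:characterizing}.
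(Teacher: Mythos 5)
There is a genuine gap in your proposal: you treat \autoref{fig:binarydetailed2}(a) as a plain binary diagonal composite of the two UEB biunitaries $V$ and $W$, but this is not what the figure is. Examine the wires: the $e$-wire enters $V$ from the lower right and exits towards the top-right position $d$, while the $f$-wire enters $W$ from below and exits to the top position $c$; since $c$ lies to the left of $d$ in the target but $V$ lies to the left of $W$, the two outgoing wires must \emph{cross}. So the diagram is actually a composite of three biunitaries --- $V$, $W$, and an interchanger --- and one must invoke \autoref{prop:interchanger} to know the interchanger is biunitary, together with \autoref{thm:maintheorem} (or iterated applications of \autoref{thm:composition}) rather than a single application of the binary composition theorem. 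The paper flags this explicitly right after the Corollary: ``In \autoref{fig:binarydetailed2}(a), we have used biunitarity of the interchanger as established in \autoref{prop:interchanger}.'' Your argument that ``the picture really is a diagonal composite of $V$ and $W$ in the sense of \autoref{thm:composition}'' is therefore not quite correct as stated, and without the interchanger step the biunitarity of the composite does not follow from what you cite.

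A secondary, more minor point: your explanation of why no scalar factor appears is not sound, even though the conclusion is right. You claim that no rescaling occurs because ``both building blocks are of genuine UEB type, which directly encodes a family of unitaries rather than Hadamard-type data.'' But the construction $\UEB_n + \UEB_n \leadsto \QLS_{n^2}$ of \autoref{fig:binarydetailed1}(b) also uses only UEB-type building blocks, yet it \emph{does} carry a factor $\tfrac{1}{\sqrt n}$, because the second UEB there appears in a \emph{rotated} orientation (cf.\ the remark after \autoref{thm:H+H=Q}). The actual reason no factor appears in \autoref{fig:binarydetailed2}(a) is that both $V$ and $W$ are in standard, unrotated UEB orientation and the interchanger has $\lambda = 1$ (by \autoref{prop:interchanger}); it is the absence of rotations, not the species of the building blocks, that controls the scalar. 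Your dimension-counting for the composite region and wires, and the resulting index formula $U_{ab,cd,ef}=V^b_{a,d,e}\,W_{b,c,f}$, are otherwise correct.
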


\noindent
In \autoref{fig:binarydetailed2}(a), we have used biunitarity of the interchanger as established in \autoref{prop:interchanger}.

It is also possible to compose biunitaries of different types to obtain unitary error bases, as shown by the following biunitary characterization of an existing construction, the \textit{quantum shift-and-multiply} method, which simultaneously generalizes the shift-and-multiply method~\cite{Werner:2001} and the Hadamard method~\cite[Definition 33]{Musto:2015}.
\begin{cor}[Musto \& V.~\cite{Musto:2015}, $\HAD_n^n+\QLS_n \leadsto \UEB_n$]\label{cor:QSM} The construction of \autoref{fig:binarydetailed2}(b) produces an $n$\-dimensional unitary error basis
\begin{equation} U_{ab,c,d} = H^b_{a,d}\, Q_{b,d,c}
\end{equation}
 from the following data, with $a,b,c,d\in [n]$:
\begin{itemize}
\item $H^b_{a,d}\in \HAD^n_n$, an $n$\-controlled family of $n$\-dimensional Hadamard matrices;
\item $Q_{b,d,c}\in \QLS_n$, an $n$\-dimensional quantum Latin square.
\end{itemize}
\end{cor}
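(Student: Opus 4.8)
The plan is to derive this as an immediate corollary of \autoref{thm:composition}, exactly as advertised at the start of \autoref{sec:binaryquantum}. First I would identify the two input biunitaries appearing in \autoref{fig:binarydetailed2}(b). The vertex $H$ is a biunitary of half-plane controlled Hadamard type $\HAD_n^n$ — an $n$-controlled family of $n$-dimensional Hadamard matrices — by the controlled-families proposition applied to \autoref{prop:hadamardbiunitary}. The vertex $Q$ is a biunitary of quantum Latin square type $\QLS_n$ by the QLS proposition, possibly after a quarter rotation or reflection, which is harmless by \autoref{lemma:rotate}. Since the picture in the figure exhibits these two biunitaries in a diagonal composite in the sense of \autoref{thm:composition}, the composite vertex is again biunitary.

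Second, I would verify that the type (shading pattern) of the resulting composite biunitary, after grouping its boundary wires and regions as indicated by the bracketing in the figure, is precisely the UEB type \eqref{eq:UEBbiunitary}: one shaded region of dimension $n^2$, an unshaded output wire of dimension $n$, and an unshaded output region of dimension $n$, with all remaining constraints forced by biunitarity and the dimension-counting argument of the \textbf{Dimensional constraints} paragraph. This is where the only real subtlety lies: one must check that the yellow region internal to the composite, carrying an index over $[n]$, is closed and hence summed over; that the blue half-plane controlling sheet of $H$ (index over $[n]$) merges with the standard-boundary wires of $Q$ to form a single composite input region of size $n\cdot n = n^2$; and that the remaining open boundaries assemble into an $n$-dimensional output wire and an $n$-dimensional output region. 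Once this type-matching is confirmed, the UEB proposition (V., \cite[Theorem 4.2]{Vicary:2012hq}) gives that the composite corresponds to an $n$-dimensional unitary error basis.

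Finally, I would read off the explicit formula by translating the composite vertex into tensor notation as in \eqref{eq:translation}: the composite is the product of the two constituent families of scalars, indexed by all adjoining regions, with closed regions summed over. Because the internal wire connecting $H$ to $Q$ is a standard boundary — it bounds only one shaded region, hence is $1$-dimensional — there is in fact no surviving sum; matching indices across the two vertices yields directly $U_{ab,c,d} = H^b_{a,d}\, Q_{b,d,c}$, where $b$ is the controlling index shared between the Hadamard family and the quantum Latin square, $a,d$ are the Hadamard indices, and $d,c$ are the QLS indices with $d$ identified across the two factors. This recovers \eqref{eq:qsm} and \cite[Definition 18]{Musto:2015}, together with the fact that the rotated-$Q$ normalization contributes no extra scalar since $\QLS$ biunitaries have $\lambda = 1$.

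I expect the main obstacle to be purely a matter of bookkeeping: carefully tracking which regions are open versus closed, which wires are standard boundaries, and confirming that the composite shading pattern coincides with \eqref{eq:UEBbiunitary} up to the rotations and reflections permitted by \autoref{lemma:rotate}. None of the algebraic content of the traditional proof \cite[Theorem 20]{Musto:2015} — the explicit verification that the $U_{ab}$ are unitary and trace-orthogonal — needs to be reproduced, as it is entirely subsumed by \autoref{thm:composition} and the characterizations of \autoref{sec:characterizing}.
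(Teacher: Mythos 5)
Your overall strategy---invoking \autoref{thm:composition}, matching the resulting shading pattern to the UEB type \eqref{eq:UEBbiunitary}, and reading off the index expression---matches the paper's (implicit) argument; the corollary is simply stated, with the remark at the start of \autoref{sec:binaryquantum} that correctness follows from \autoref{thm:composition} together with \autoref{sec:characterizing}.

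However, your account of the bookkeeping contains a self-contradiction that should be repaired. In your second paragraph you assert that "the yellow region internal to the composite, carrying an index over $[n]$, is closed and hence summed over," but in \autoref{fig:binarydetailed2}(b) that wedge-shaped region between $H$ and $Q$ reaches the lower boundary of the diagram and is \emph{open}, carrying the index $d$: this is precisely why $d$ appears as a free index in $U_{ab,c,d}$ rather than being traced out. You then correctly conclude in your third paragraph that there is no sum, but the justification you offer---that the internal wire is a standard boundary and hence carries $\mathbb{C}$---does not imply this: a $\mathbb{C}$-labelled boundary wire is entirely compatible with a sum over an adjacent \emph{closed} shaded region, which is exactly how sums arise in \eqref{eq:translation} and in, e.g., \autoref{fig:ternary}(c). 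The actual reason no sum appears here is simply that none of the three shaded regions $a$, $b$, $d$ in the composite is closed; all three reach the outer boundary. Relatedly, the $n^2$-dimensional shaded region of the target UEB type arises by bracketing the two open regions $a$ and $b$ together in the sense of \autoref{thm:composition}, not by "merging the controlling sheet with the standard-boundary wires of $Q$": the blue region labelled $b$ is simultaneously $H$'s controlling sheet and one of $Q$'s two shaded regions, and $a$ is a separate Hadamard region. None of these slips change your final conclusion, which is correct, but as written the reasoning in paragraph two is wrong and that in paragraph three is a non sequitur.
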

\subsection{Ternary constructions}
\label{sec:ternaryquantum}

We can easily obtain higher arity constructions by iterating some of the binary constructions of \autoref{fig:binarydetailed1} and \autoref{fig:binarydetailed2}. For example, combining the constructions of \autoref{fig:binarydetailed1}(a) and \autoref{fig:binarydetailed2}(b) yields the following unitary error basis construction:
\begin{equation}\label{eq:HadamardMethod}
\begin{tz}[string,scale=1.3]
\path[blueregion,bnd] (2.5,3.25) to [out=-90, in=45] (1.75,2.25) to (2.25,1.75)  to [out=45, in=-90] (2.5 +\hrt,3.25);
\path[blueregion] (1.,3.25) to [out=-90, in=135] (1.75,2.25) to [out=-135, in=45] (1,1) to [out=-125, in=90] (0.35,0) to  (1.2-2*\hrt-\side,0) to (1.2-2*\hrt-\side,3.25);
\draw[bnd]  (1.,3.25) to [out=-90, in=135] (1.75,2.25) to [out=-135, in=45] (1,1) to [out=-125, in=90] (0.35,0);
\path[yellowregion,bnd] (1.75,-0.5*\ydelta) to [out=90, in=-45] (1,1) to [out=25, in=-135] (2.25,1.75)  to [out=-45, in=90] (1.75+2*\hrt  ,-0.5*\ydelta);
\path[ yellowregion] (0.15,-\ydelta) to [out=90, in=-155] (1,1) to [out=155, in=-90] (1-2*\hrt,3.25-\ydelta) to (1-2*\hrt-\side, 3.25-\ydelta) to (1-2*\hrt-\side, -\ydelta);
\draw[bnd]  (0.15,-\ydelta) to [out=90, in=-155] (1,1) to [out=155, in=-90] (1-2*\hrt,3.25-\ydelta);
\node[blob] at (1,1) {$H$};
\node[blob] at (1.75,2.25) {$F$};
\node[blob] at (2.25,1.75) {$G$};
\node[dimension, below left] at (0.95, 3.05) {$\smash{b{:}n}$};
\node[dimension, below] at (2.5+0.5*\hrt, 3.05) {$\smash{c{:}n}$};
\node[dimension, above left] at (0.1,-0.19) {$\smash{a{:}n}$};
\node[dimension,above] at (1.75+\hrt, -0.09) {$\smash{d{:}n}$};
\end{tz}
\end{equation}
In index notation this corresponds to the expression
\begin{equation}
U_{ab,c,d} = \frac{1}{\sqrt{n}} \, H^b_{a,d}\, F_{b,c}\, G_{c,d}
\end{equation}
built from the following data, with $a,b,c,d\in [n]$:
\begin{itemize}
\item $H^b_{a,d}\in \HAD^n_n$, an $n$\-controlled family of $n$\-dimensional Hadamard matrices;
\item $F_{b,c}$ and $G_{c,d} \in \HAD_n$, $n$\-dimensional Hadamard matrices.
\end{itemize}
This generalizes the Hadamard method~\cite[Definition 33]{Musto:2015}. By definition, this construction factors through the quantum shift-and-multiply method of \autoref{fig:binarydetailed2}(b).

More interestingly, there are ternary constructions that do not arise by iterating binary constructions involving our basic quantum structures. In this subsection, we list all ternary biunitary constructions of unitary error bases from Hadamard matrices, unitary error bases, quantum Latin squares and controlled families thereof, which do not factor through the constructions of \autoref{fig:binarydetailed2}. We summarize them in \autoref{fig:ternary}. By performing an exhaustive manual check, up to equivalence as defined in \autoref{sec:equivalence}, we assert that this list is complete, although we do not prove completeness in a formal way. To our knowledge, all constructions in this section are new. As before, all these results are corollaries of \autoref{thm:composition}, and the results of \autoref{sec:characterizing} as summarized in \autoref{fig:biunitarytypes}. To improve readability, we indicate the form of the compositions by corresponding tiling diagrams.

\def\vp{\vphantom{p|}}
\begin{figure}[t!]
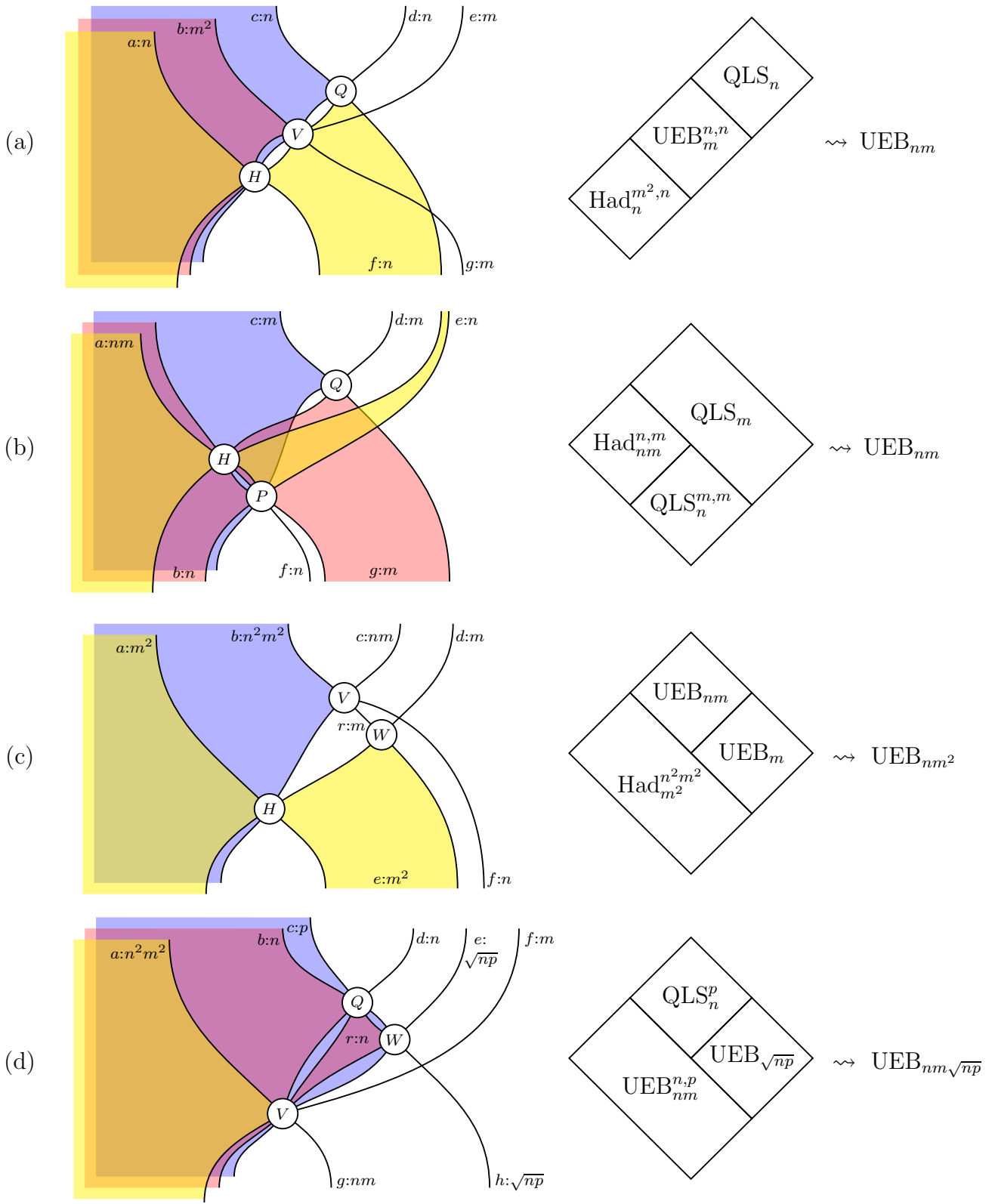

\[\hspace{-5cm}
\begin{array}{ccl}
\text{(a)}
&
\def \sidew {1.3} 
\def \xd{0.15}
\def \yd{0.15}
\begin{tz}[string,scale=1.5]
\clip (0.05-\sidew, -2*\yd) rectangle (2.93+2*\hrt, 3.2);

\path[blueregion] (1.25, 3) to [out=down, in=135] (2,2) to [out=-155, in=65] (1.5,1.5) to [out=-155, in=65] (1,1) to [out=-125, in=up] (0.25+\xd,0) to (0.25+\xd-\sidew,0) to (0.25+\xd-\sidew, 3);
\draw[bnd]  (1.25, 3) to [out=down, in=135] (2,2) to [out=-155, in=65] (1.5,1.5) to [out=-155, in=65] (1,1) to [out=-125, in=up] (0.25+\xd,0);

\path[redregion] (1.25 - \hrt, 3-\yd) to [out=down, in=135] (1.5,1.5) to [out=-190, in=100]  (1,1) to [out=-135, in=up] (0.25, -\yd) to (0.25-\sidew, -\yd) to (0.25-\sidew, 3-\yd);
\draw[bnd]  (1.25 - \hrt, 3-\yd) to [out=down, in=135] (1.5,1.5) to [out=-190, in=100]  (1,1) to [out=-135, in=up] (0.25, -\yd);

\path [yellowregion] (1.25 -2*\hrt, 3-2*\yd) to [out=down, in=135] (1,1) to [out=-145, in=up] (0.25-\xd, -2*\yd) to (0.25-\xd-\sidew, -2*\yd) to (0.25-\xd-\sidew, 3-2*\yd);
\draw[bnd]  (1.25 -2*\hrt, 3-2*\yd) to [out=down, in=135] (1,1) to [out=-145, in=up] (0.25-\xd, -2*\yd);

\path[yellowregion,bnd] (1.75 + 2*\hrt, -\yd) to [out=90, in=-45] (2,2) to [out=-115, in=25] (1.5,1.5) to [out=-115, in=25] (1,1) to [out=-25, in=90] (1.75, -\yd);

\draw (2,2) to [out=45, in=-90] (2.75,3);
\draw (2 + 2*\hrt,-\yd) to [out=90, in=-45] (1.5,1.5) to [out=15, in=-90] (2+2*\hrt,3); 

\node[blob] at (1,1) {$H$};
\node[blob] at (1.5,1.5) {$V$};
\node[blob] at (2,2) {$Q$};

\node [above,dimension] at (1.75 +\hrt, -\yd+0.05) {$\smash{f{:}n}$};
\node [above right,dimension] at (2+2*\hrt,-\yd+0.05) {$\smash{g{:}m}$};
\node [below left,dimension] at (1.25-2*\hrt, 3-2*\yd-0.15) {$\smash{a{:}n}$};
\node [below left,dimension] at (1.25-\hrt, 3-\yd-0.15) {$\smash{b{:}m^2}$};
\node[below left,dimension] at (1.25,3-0.15) {$\smash{c{:}n}$};
\node [below  right,dimension] at (2.75, 2.85) {$\smash{d{:}n}$};
\node [below right,dimension] at (2.75+\hrt, 2.85) {$\smash{e{:}m}$};

\end{tz}
&
\begin{tz}[string,rotate = 45,scale = 1.5]
\draw (0,0) rectangle (1,1);
\draw (1,0) rectangle (2,1);
\draw (2,0) rectangle (3,1);
\node at (0.5,0.5) {$\HAD_{n}^{m^2,n}$};
\node at (1.5,0.5){$\UEB_m^{n,n}$};
\node at (2.5,0.5) {$\QLS_n$};
\end{tz} \rightsquigarrow\,\UEB_{nm}
\\
\text{(b)}
&
\def \sidew {1.65}   
\def \xd{0.15}
\def \yd{0.15}
\begin{tz}[string,scale=1.3]
\clip (-0.05-\sidew, -2*\yd) rectangle (3.37+2*\hrt, 3.7);

\path[blueregion] (1.25,3.5) to [out=down, in=135] (2,2.5) to [out=-175, in=55] (1,1) to [out=-125, in=up] (0.25+\xd, 0) to (0.25+\xd-\sidew,0) to (0.25+\xd-\sidew, 3.5);
\draw[bnd] (1.25,3.5) to [out=down, in=135] (2,2.5) to [out=-175, in=55] (1,1) to [out=-125, in=up] (0.25+\xd, 0);

\path [redregion] (1.35 -2.5*\hrt,3.5-\yd) to [out=down, in=125] (0.5,1.5) to [out=-65, in=165] (1,1) to [out=-145, in=up] (0.25, -\yd) to (0.25-\sidew, -\yd) to (0.25-\sidew, 3.5-\yd);
\draw[bnd] (1.35 -2.5*\hrt,3.5-\yd) to [out=down, in=125] (0.5,1.5) to [out=-65, in=165] (1,1) to [out=-145, in=up] (0.25, -\yd);

\path[yellowregion] (1.15-2.5*\hrt, 3.5-2*\yd) to [out=down, in=145] (0.5,1.5) to [out=-145, in=up] (0.25-\hrt,-2*\yd) to (0.25-\xd-\sidew,-2*\yd) to (0.25-\xd-\sidew, 3.5-2*\yd);
\draw[bnd]  (1.15-2.5*\hrt, 3.5-2*\yd) to [out=down, in=145] (0.5,1.5) to [out=-145, in=up] (0.25-\hrt,-2*\yd);

\path[redregion,bnd] (1.85, -\yd) to [out=90, in=-35] (1,1) to [out=135, in=-45] (0.5,1.5) to [out=55, in=-125] (2,2.5) to [out=-45, in=90] (1.75 + 2.5*\hrt,-\yd);

\path[yellowregion, bnd] (2.7+\hrt, 3.5) to [out=down, in=30] (0.5,1.5) to [out=-20, in=110] (1,1) to [out=35, in=-90] (2.8+\hrt,3.5);

\draw (1,1) to [out=-55, in=90] (1.65, -\yd);
\draw (2,2.5) to [out=45, in=-90] (2.75,3.5);

\node[blob] at (1,1) {$P$};
\node[blob] at (0.5,1.5) {$H$};
\node[blob] at (2,2.5) {$Q$};

\node[below left, dimension] at (1.1-2.5*\hrt,3.35-2*\yd) {$\smash{a{:}nm}$};
\node[above left, dimension] at (0.15,-\yd+0.01) {$\smash{b{:}n}$};
\node[above left, dimension] at (1.6, -\yd+0.05) {$\smash{f{:}n}$};
\node[above, dimension] at (1.75+1.25*\hrt, -\yd+0.05) {$\smash{g{:}m}$};
\node[below left,dimension] at (1.25, 3.35) {$\smash{c{:}m}$};
\node[below right, dimension] at (2.75, 3.35) {$\smash{d{:}m}$};
\node[below right,dimension] at (2.85+\hrt, 3.35) {$\smash{e{:}n}$};

\end{tz}
 &
\begin{tz}[string,rotate=45,scale=1.5]

\draw(0,0) rectangle (1,1);
\draw (0,1) rectangle (1,2);
\draw (1,0) rectangle (2,2);
\node at (0.5,0.5) {$\QLS_n^{m,m}$};
\node at (0.5,1.5) {$\HAD_{nm}^{n,m}$};
\node at (1.5, 1){$\QLS_m$};
\end{tz} \, \rightsquigarrow\,\UEB_{nm}
\\
\text{(c)}
&
\def \sidew {1.7}   
\def \xd{0.15}
\def \yd{0.15}
\begin{tz}[string,scale=1.3]
\clip (0.15-\sidew,-2*\yd) rectangle (2.3+3*\hrt, 3.5+2*\yd);

\path[blueregion] (1.25,3.5) to [out=down, in= 135] (2,2.5) to [out=-135, in=55] (1,1) to [out= -125, in=up] (0.35,0) to (0.35-\sidew, 0) to (0.35-\sidew, 3.5);
\draw[bnd] (1.25,3.5) to [out=down, in= 135] (2,2.5) to [out=-135, in=55] (1,1) to [out= -125, in=up] (0.35,0) ;

\path[yellowregion] (1.25-2.5*\hrt, 3.5-\yd) to [out= down , in=135] (1,1) to [out= -145, in=up] (0.15,-\yd) to (0.35-\xd-\sidew, -\yd) to (0.35-\xd-\sidew, 3.5-\yd);
\draw[bnd] (1.25-2.5*\hrt, 3.5-\yd) to [out= down , in=135] (1,1) to [out= -145, in=up] (0.15,-\yd);

\path[yellowregion,bnd] (1.75,-0.5*\yd) to [out=up, in=-45] (1,1) to [out= 35, in=-135] (2.5,2) to [out= -45, in= up] (1.75+2.5*\hrt, -0.5*\yd);

\draw (2.75,3.5) to [out= down, in=45] (2,2.5) to (2.5,2) to [out= 45, in=down] (2.75+\hrt, 3.5);
\draw (2,2.5) to [out= -10, in=up] (1.75+3*\hrt, -0.5*\yd);

\node[blob] at (1,1) {$H$};
\node[blob] at (2.5,2) {$W$};
\node[blob] at (2,2.5) {$V$};

\node[above,dimension] at (1.75+1.25*\hrt,-0.05) {$\smash{e{:}m^2}$};
\node[above right,dimension] at (1.75+3*\hrt,-0.05) {$\smash{f{:}n}$};
\node[below left, dimension] at (1.25-2.5*\hrt,3.15) {$\smash{a{:}m^2}$};
\node[below left,dimension] at (1.25,3.3) {$\smash{b{:}n^2m^2}$};
\node[below left, dimension] at (2.7,3.3) {$\smash{c{:}nm}$};
\node[below right, dimension] at (2.75+\hrt,3.3) {$\smash{d{:}m}$};

\node[below left, dimension] at (2.325,2.1) {$\smash{r{:}m}$};

\end{tz}
& 
\begin{tz}[string, rotate=45,scale=1.5]
\draw (0,0) rectangle  (1,2);
\draw (1,0) rectangle  (2,1);
\draw (1,1) rectangle (2,2);
\node at (0.5,1) {$\HAD^{n^2m^2}_{m^2}$};
\node at (1.5, 0.5) {$\UEB_{m}$};
\node at (1.5,1.5){$\UEB_{nm}$};
\end{tz}~\rightsquigarrow ~\UEB_{nm^2}
\\
\text{(d)}
&
\tikzset{helper/.style={}}
\tikzset{helper/.style={opacity=0}}
\def \sidew {1.8}   
\def\angledelta{15} 
\begin{tz}[string,scale=1.3]
\path [use as bounding box] (-0.05-\sidew, -0.2) rectangle (2.95+2*\hrt, 3.7);
\node[blob] (V) at (1,1) {$V$};
\node[blob] (W) at (2.5,2) {$W$};
\node[blob] (Q) at (2,2.5) {$Q$};

\path [blueregion] (1.37, 3.65) node [helper] (1) {1} to [out=down, in=125] (Q.center) to [out=-45+\angledelta, in=135-\angledelta] (W.center) to [out=-135+\angledelta, in=45-\angledelta] (V.center) to [out=-135+\angledelta, in=up] (0.35,0.15) node [helper] (2) {2} to (0.3-\sidew,0 |- 2.center) to (0.3-\sidew,0 |- 1.center);
\draw [bnd] (1.center) to [out=down, in=125] (Q.center) to [out=-45+\angledelta, in=135-\angledelta] (W.center) to [out=-135+\angledelta, in=45-\angledelta] (V.center) to [out=-135+\angledelta, in=90] (2.center) ;

\path [redregion] (1,3.5) node [helper] (3) {3} to [out=down, in=145] (Q.center) to [out=-135, in=45+2*\angledelta] (V.center) to [out=-135, in=90] (0.15,-0) node [helper] (4) {4} to (0.15-\sidew,0 |- 4.center) to (0.15-\sidew,0 |- 3.center);
\draw [bnd] (3.center) to [out=down, in=145] (Q.center) to [out=-135, in=45+2*\angledelta] (V.center) to [out=-135, in=90] (4.center);

\path[yellowregion] (1.25-2.5*\hrt, 3.35) node [helper] (6) {6} to [out=down, in=135] (V.center) to [out=-135-\angledelta, in=up] (-0.05, -0.15) node [helper] (5) {5} to (-0.0-\sidew,0 |- 5.center) to (-0.0-\sidew,0 |- 6.center);
\draw [bnd] (6.center) to [out= down, in=135] (V.center) to [out=-135-\angledelta, in=up] (-0.05, -0.2) ;

\path [redregion, bnd] (V.center) to [out=45+\angledelta, in=-115] (Q.center) to [out=-45-\angledelta, in=135+\angledelta] (W.center) to [out=-135-\angledelta, in=45] (1,1) ;

\draw [string] (Q.center) to [out=45, in=down] (2.75,3.5 |- 3.center) node [helper] (7) {7};
\draw [string] (2.75+\hrt,3.5 |- 3.center) node [helper] (8) {8} to [out=down, in=45] (2.5,2) to [out=-45, in=up] (1.65+3*\hrt,0 |- 4.center) node [helper] (10) {10};
\draw [string] (1.65,0) node [helper] (11) {11} to [out=up, in=-35] (V.center) to [out=45-2*\angledelta, in=-90] (2.75+2*\hrt,3.5) node [helper] (9) {9};

\node[below left, dimension] at ([yshift=-1pt]6.center) {${a{:}n^2m^2}$};
\node[above right, dimension] at ([xshift=1pt]11.center) {$\smash{g{:}nm}$};
\node[above right, dimension] at ([xshift=1pt]10.center) {$\smash{h{:}\sqrt{np}}$};

\node[below left, dimension] at ([yshift=-1pt]3.center) {${b{:}n}$};
\node[below left, dimension] at (1.center) {${c{:}p}\vp$};
\node[below right, dimension] at (7.center) {${d{:}n}\vp$};
\node[below right, dimension, align=center] at ([xshift=-2pt] 8.center) {$\begin{array}{@{}c@{}}e{:}\vp\\\sqrt{np}\end{array}$};
\node[below right,dimension] at ([xshift=1pt]9.center) {$f{:}m\vp$};

\node[dimension] at (2.0,1.95) {$\smash{r{:}n}$};

\end{tz}
&
\begin{tz}[string, rotate=45,scale=1.5]
\draw (0,0) rectangle  (1,2);
\draw (1,0) rectangle  (2,1);
\draw (1,1) rectangle (2,2);
\node at (0.5,1) {$\UEB^{n,p}_{nm}$};
\node at (1.5, 0.5) {$\UEB_{\sqrt{np}}$};
\node at (1.5,1.5){$\QLS^p_{n}$};
\end{tz}~\rightsquigarrow ~\UEB_{nm\sqrt{np}}
\end{array}
\hspace{-5cm}
\]
\vspace{-10pt}
\caption{\label{fig:ternary}An overview of all ternary unitary error basis constructions.}
\end{figure}

The constructions of \autoref{fig:ternary}(a) and \autoref{fig:ternary}(b) can be seen as slight alterations of constructions that factor through the constructions of \autoref{fig:binarydetailed2}, while the other constructions in \autoref{fig:ternary} do not seem to have binary analogues.

\begin{cor}[$\HAD_{n}^{m^2,n} + \UEB_{m}^{n,n} + \QLS_n \leadsto\UEB_{nm}$] The construction of \autoref{fig:ternary}(a) produces an $nm$\-dimensional UEB
\begin{calign}
U_{abc,de,fg} = H^{b,c} _{a,f} \, V^{c,f} _{b,e,g} \, Q _{c,f,d} ^{\vphantom{c,d}}
\end{calign}
from the following data, with $a,c,d,f \in [n]$, $b \in [m^2]$ and $e,g \in [m]$:
\begin{itemize}
\item $H^{b,c}_{a,f}\in \HAD^{m^2,n}_n$, an $(m^2,n)$\-controlled family of $n$\-dimensional Hadamard matrices;
\item $V^{c,f}_{b,e,g}\in \UEB^{n,n}_m$, an $(n,n)$\-controlled family of $m$\-dimensional unitary error bases; 
\item $Q _{c,f,d}\in \QLS_n$, an $n$\-dimensional quantum Latin square.
\end{itemize}
\end{cor}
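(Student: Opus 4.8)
The plan is to obtain this as a corollary of \autoref{thm:composition} together with the biunitary characterizations collected in \autoref{fig:biunitarytypes}, exactly as for the binary constructions of \autoref{sec:binaryquantum}. First I would identify the three vertices in the picture of \autoref{fig:ternary}(a). Using the characterizations of \autoref{sec:characterizing} and the controlled-family correspondence \eqref{eq:controlledfull}, the vertex $H$ together with its two attached shaded sheets is a biunitary of type $\HAD^{m^2,n}_n$, the vertex $V$ with its two attached sheets is a biunitary of type $\UEB^{n,n}_m$, and $Q$ is a biunitary of type $\QLS_n$. Reading off the accompanying planar tiling, I would check that the diagram genuinely realizes the diagonal composite of these three biunitaries --- the shadings along the two internal junctions agreeing as required --- so that \autoref{thm:maintheorem} (equivalently, iterating \autoref{thm:composition}) applies and the composite is again biunitary.

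Second, I would determine the type of the resulting biunitary. Grouping the boundary wires according to the bracketings indicated in the figure --- $(a,b,c)$ forming one composite leg, and $(d,e)$ and $(f,g)$ the other two --- the outer shading pattern is that of the UEB type \eqref{eq:UEBbiunitary}, and the dimension count is consistent: the first leg has dimension $n\cdot m^2\cdot n=(nm)^2$ while the other two each have dimension $n\cdot m=nm$, which are precisely the constraints for a $\UEB_{nm}$ biunitary. By the biunitary characterization of unitary error bases, the composite therefore corresponds to an $nm$\-dimensional unitary error basis.

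Third, I would extract the closed form of $U$ by translating the composite surface diagram into the tensor notation of \autoref{sec:foundations}: one tensor factor for each vertex, shared wire labels identified, and internal regions summed. This particular tiling has no closed regions, so no summation survives, and one reads off $U_{abc,de,fg}=H^{b,c}_{a,f}\,V^{c,f}_{b,e,g}\,Q_{c,f,d}$. For the normalization, the scalars $\lambda$ of the three constituents (namely $n$, $m$ and $1$) multiply along diagonal composition --- as in the proof of \autoref{thm:composition} via \autoref{thm:rotationfactor} --- giving $nm$, which is exactly the scalar of a $\UEB_{nm}$ biunitary; hence no overall rescaling is needed, which is why --- in contrast with \autoref{thm:H+H=Q} --- the formula carries no $\frac{1}{\sqrt{n}}$ prefactor.

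The step I expect to be most delicate is the first one: carefully matching the shaded regions drawn in \autoref{fig:ternary}(a) against the types of \autoref{fig:biunitarytypes} (getting the controlling sheets, their half-plane versus full-plane nature, and their orientations right), and confirming that the two internal junctions of the tiling are admissible diagonal compositions. Once this bookkeeping is in place, biunitarity of the composite, its UEB type, and the explicit expression for $U$ all follow mechanically, with no real computation.
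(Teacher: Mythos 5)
Your proposal is correct and takes essentially the same route as the paper: all the ternary constructions in \autoref{sec:ternaryquantum}, including this one, are presented as immediate corollaries of \autoref{thm:composition} together with the biunitary characterizations of \autoref{fig:biunitarytypes}, with the tensor formula read directly off the composite diagram. Your dimension count, the observation that no closed region survives (so no internal summation), and the $\lambda$-bookkeeping explaining the absence of a normalization prefactor --- which amounts to checking that none of $H$, $V$, $Q$ sits in a rotated slot, per the paper's remark after \autoref{thm:H+H=Q} and \autoref{thm:rotationfactor} --- are all consistent with the paper's treatment.
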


\noindent
If the UEB $V$ were not controlled, the construction would be the tensor product of $V$ with the quantum shift-and-multiply UEB obtained as in \autoref{fig:binarydetailed2}(b).

The following construction is also related to one of the binary constructions.
\begin{cor}[$\HAD_{nm}^{n,m} + \QLS_n^{m,m} + \QLS_m \leadsto \UEB_{nm}$]
The construction of \autoref{fig:ternary}(b) produces an $nm$\-dimensional UEB
\begin{equation}
U_{abc,de,fg}=H_{a,eg}^{b,c}\, P_{e,b,f}^{c,g}\, Q_{c,g,d}^{\vphantom{b}}
\end{equation}
from the following data, with $a\in [nm]$ $b,e,f\in [n]$ and $c,d,g\in [m]$:
\begin{itemize}
\item $H^{b,c}_{a,eg} \in \HAD_{nm}^{n,m}$, an $(n,m)$\-controlled family of $nm$\-dimensional Hadamard matrices;
\item $P_{e,b,f}^{c,g} \in \QLS_n^{m,m}$, an $(m,m)$\-controlled family of $n$\-dimensional quantum Latin squares; 
\item $Q _{c,g,d} \in \QLS_m$, an $m$\-dimensional quantum Latin square.
\end{itemize}
\end{cor}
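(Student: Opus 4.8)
The statement is a corollary of \autoref{thm:composition} (iterated, as allowed by \autoref{thm:maintheorem}) together with the biunitary characterizations collected in \autoref{sec:characterizing} and summarized in \autoref{fig:biunitarytypes}, so the proof is structural and short. The plan has three steps.

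First, I would translate each of the three input structures into a biunitary of the type appearing at the corresponding vertex of \autoref{fig:ternary}(b). By \autoref{prop:hadamardbiunitary} together with the proposition on controlled families, an $(n,m)$-controlled family $H^{b,c}_{a,eg}\in\HAD^{n,m}_{nm}$ is precisely a biunitary of Hadamard type carrying two extra controlling sheets, of dimensions $n$ and $m$, attached as half-planes in the orientations drawn; by the quantum-Latin-square characterization together with the same controlled-families proposition, $P^{c,g}_{e,b,f}\in\QLS^{m,m}_n$ is a biunitary of quantum-Latin-square type with two extra half-plane controlling sheets of dimension $m$; and $Q_{c,g,d}\in\QLS_m$ is a plain biunitary of quantum-Latin-square type. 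The shadings of these three biunitary types are exactly those shown at the vertices $H$, $P$, $Q$ of \autoref{fig:ternary}(b).

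Second, I would observe that the diagram glues these three biunitaries along their diagonals in the pattern recorded by the accompanying planar tiling: $\QLS^{m,m}_n$ sits directly above $\HAD^{n,m}_{nm}$ in the left column, and $\QLS_m$ forms the right column spanning both rows, so the composite is built from two successive binary diagonal composites (no pinwheel occurs). One then checks that all shaded regions and all wire dimensions are consistent along the shared boundaries: the dimensional constraints of biunitarity (the counting illustrated around \eqref{eq:dimensioncounting}) force the internal wires labelled $r{:}n$ and $g{:}m$, and propagating these labels around the tiling gives an input wire of dimension $(nm)^2$ and two output wires of dimension $nm$ — that is, after bracketing the composite output wires as indicated, exactly the UEB type \eqref{eq:UEBbiunitary} of dimension $nm$. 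Applying \autoref{thm:composition} twice shows the composite is biunitary, and the biunitary characterization of unitary error bases then identifies it as an $nm$-dimensional UEB.

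Third, I would read off the composite vertex in index notation, exactly as in \eqref{eq:translation}: each vertex contributes its entry ($H^{b,c}_{a,eg}$, $P^{c,g}_{e,b,f}$, $Q_{c,g,d}$), the controlling indices $b,c,g$ remain open and shared, and no region is closed, yielding $U_{abc,de,fg}=H^{b,c}_{a,eg}\,P^{c,g}_{e,b,f}\,Q_{c,g,d}$ with the stated index ranges. The only step requiring genuine care is the dimension-and-shading bookkeeping of the second step — confirming that every region and wire label around the non-rectangular tiling is consistent, so that the composite has UEB type and not some other type — together with noting that no rescaling factor appears here (unlike the $\tfrac1{\sqrt n}$ of \autoref{thm:H+H=Q}), because each of the three input biunitaries occurs in its native, unrotated type.
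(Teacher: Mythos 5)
Your proposal is correct and takes essentially the same route as the paper, which states (before the ternary corollaries) that these all follow from \autoref{thm:composition} together with the biunitary characterizations of \autoref{sec:characterizing} as summarized in \autoref{fig:biunitarytypes} — exactly the three steps you lay out. One small inconsistency to fix in your second step: \autoref{fig:ternary}(b) has no internal wire labelled $r$, and $g{:}m$ is an open input wire rather than an internal one; as you correctly observe in your third step, this composite has no closed wire or region, which is precisely why the index formula carries no summation (unlike \autoref{fig:ternary}(c) and (d), where an $r$ wire is closed).
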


\noindent
In fact, taking the partial transpose of the resulting UEB (that is, bending the  $d$ wire down and the $g$ wire up) leads to the quantum shift-and-multiply UEB generated from the Hadamard matrices $H$ and a quantum Latin square obtained from the controlled tensor product (the QLS analogue of \Dita's construction) of $P$ and $Q$. This relationship is surprising since taking the partial transpose does not in general preserve biunitarity.     

The following is geometrically the simplest of our ternary constructions. 
It involves a closed wire, so the index expression includes a sum.
\begin{cor}[$ \HAD_{m^2}^{n^2m^2} +  \UEB_{nm}+\UEB_m \leadsto\UEB_{nm^2}$] The construction of \autoref{fig:ternary}(c) produces an $nm^2$\-dimensional UEB 
 \begin{equation}
 U_{ab,cd,ef} = \sum_{r\in [m]} \, H_{a,e}^b \, V_{b,c,rf}^{\vphantom{b}} \, W_{e,r,d}^{\vphantom{b}}
 \end{equation}
from the following data, with $a,e\in [m^2]$, $b\in [n^2m^2]$, $c\in [nm]$, $d\in [m]$, and $f\in [n]$:
\begin{itemize}
\item $H^{b}_{a,e}\in \HAD^{n^2m^2}_{m^2}$, an $n^2m^2$\-controlled family of $m^2$\-dimensional Hadamard matrices;
\item $V_{b,c,rf}\in \UEB_{nm}$, an $nm$\-dimensional unitary error basis; 
\item $W _{e,r,d}\in \UEB_m$, an $m$\-dimensional unitary error basis.
\end{itemize}

   \end{cor}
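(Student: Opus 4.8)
The proof is, like the earlier corollaries in \autoref{sec:binaryquantum}, an essentially immediate application of \autoref{thm:composition} together with the characterizations of \autoref{sec:characterizing}; the work is all in unwinding the picture. The plan is as follows. First I would read off the types of the three vertices appearing in the composite of \autoref{fig:ternary}(c) from their shadings, using the dictionary of \autoref{fig:biunitarytypes}: the vertex $W$ carries the basic unitary-error-basis type \eqref{eq:UEBbiunitary}, so it is an element of $\UEB_m$; the vertex $V$ likewise has UEB type, hence lies in $\UEB_{nm}$; and the vertex $H$ has the full-plane controlled Hadamard type, so by the controlled-family proposition combined with \autoref{prop:hadamardbiunitary} it is an $n^2m^2$-controlled family of $m^2$-dimensional Hadamard matrices, i.e.\ an element of $\HAD^{n^2m^2}_{m^2}$. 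If a constituent appears rotated or reflected relative to its reference type, \autoref{lemma:rotate} shows it is still proportional to a biunitary of that type; in the orientation drawn here no rescaling factor is introduced, which is why the displayed formula carries no scalar.

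Next, observe that the diagram is an honest finite diagonal composite of these three biunitaries along the planar tiling drawn alongside it, and that this tiling is assembled from iterated binary composites (it is not the pinwheel composite). Hence two applications of \autoref{thm:composition} show the composite is again biunitary. I would then do the dimension bookkeeping for the partition of the outer wires into the two composite boundary strands marked by the brackets: the controlling region is the product $[m^2]\times[n^2m^2]$, of size $(nm^2)^2$, while the two composite Hilbert-space boundaries are $[nm]\times[m]$ and $[m^2]\times[n]$, each of dimension $nm^2$. Thus the composite has precisely the UEB type \eqref{eq:UEBbiunitary} for dimension $nm^2$, and by the UEB characterization it defines an element of $\UEB_{nm^2}$.

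Finally, to extract the explicit formula I would translate the surface diagram into its associated family of tensor diagrams exactly as in the proofs of \autoref{thm:H+H=Q} and \autoref{cor:QSM}: each vertex contributes the matrix entry of the corresponding structure, with indices dictated by its adjacent regions and wires; open regions become the free indices $a,\dots,f$; and the single closed internal wire, namely the strand $r{:}m$ joining $V$ and $W$, is summed over. This yields $U_{ab,cd,ef} = \sum_{r\in[m]} H^b_{a,e}\,V_{b,c,rf}\,W_{e,r,d}$, as claimed. The one step that genuinely needs care is the middle one: correctly matching the bracketed grouping and ordering of the outer wires against the UEB type, and confirming that the chosen orientations of $H$, $V$ and $W$ force no proportionality constant; everything else is dictated by \autoref{thm:composition} and \autoref{thm:maintheorem}.
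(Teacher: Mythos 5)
Your proposal is correct and takes essentially the same approach as the paper: the paper declares this ternary construction to be an immediate corollary of \autoref{thm:composition} (plus the characterizations of \autoref{sec:characterizing} summarized in \autoref{fig:biunitarytypes}), and your write-up simply fills in that argument in the expected way---type identification from shading, iterated binary diagonal composition along the given (non-pinwheel) tiling, dimension bookkeeping against the UEB type, and the tensor translation giving the index formula with the sum over the closed wire $r{:}m$. Your attention to orientation and the absence of a rescaling factor is exactly the right thing to check and is handled correctly.
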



Our final ternary construction is the first to involve a sum over a closed region, which again gives rise to a summation.

\begin{cor}[$\UEB_{nm}^{n,p} +\QLS_n^p+ \UEB_{\sqrt{np}}   \leadsto \UEB_{nm\sqrt{np}}$]  For $n,m,p\in \mathbb{N}$ such that $\sqrt{np}\in \mathbb{N}$, the construction of \autoref{fig:ternary}(d) produces an $nm\sqrt{np}$\-dimensional UEB
\begin{equation}
U_{abc,def,gh} := \sum_{r\in [n]} V_{a,rf,g}^{b,c}\, Q_{b,r,d}^c  \, W_{rc,e,h}^{\vphantom{b}} 
\end{equation}
from the following data, with $a\in [n^2m^2]$, $b,d\in [n], c\in [p]$, $e,h\in [\sqrt{np}]$, $f\in [m]$, and $g\in [nm]$:
\begin{itemize}
\item $V_{a,rf,g}^{b,c}\in \UEB_{nm}^{n,p}$, an $(n,p)$\-controlled family of $nm$\-dimensional unitary error bases;
\item $Q_{b,r,d}^c\in \QLS^p_n$, an $p$\-controlled family of $n$\-dimensional quantum Latin squares;
\item $W_{rc,e,h}\in \UEB_{\sqrt{np}}$, an $\sqrt{np}$\-dimensional unitary error basis.

\end{itemize}
 \end{cor}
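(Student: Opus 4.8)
The plan is to obtain this construction as a direct instance of \autoref{thm:maintheorem}, exactly as for the earlier corollaries in this section. First I would present each of the three input structures as a biunitary. By the characterization of controlled families, an $(n,p)$-controlled family $V$ of $nm$-dimensional unitary error bases is a biunitary of the UEB type \eqref{eq:UEBbiunitary} carrying two additional controlling sheets of dimensions $n$ and $p$; by the corresponding statement for quantum Latin squares, the $p$-controlled family $Q$ of $n$-dimensional quantum Latin squares is a biunitary of the QLS type \eqref{eq:QLSbiunitary} with one additional sheet of dimension $p$; and $W\in\UEB_{\sqrt{np}}$ is a biunitary of type \eqref{eq:UEBbiunitary}. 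The shadings and wire dimensions of these three biunitaries are precisely those appearing at the vertices $V$, $W$ and $Q$ in \autoref{fig:ternary}(d), and the tiling diagram shown next to it exhibits the picture as a genuine finite diagonal composite of them.

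It then follows from \autoref{thm:maintheorem} that this composite is again biunitary, with respect to the partition of boundary wires into inputs and outputs indicated by the braces in the figure. Since the composite consists of only three biunitaries it is certainly not the pinwheel, so it is built from iterated binary diagonal composites and \autoref{thm:composition} already suffices. It remains only to identify the type of the resulting biunitary. Reading off the diagram, a single region is shaded and all remaining regions are unshaded, so---once the several input wires and the several output wires are grouped into composite wires by a bracketing as in \autoref{thm:composition}---it is of the UEB type \eqref{eq:UEBbiunitary}. Writing $N := nm\sqrt{np}$, dimension-counting as justified by \autoref{thm:rotationfactor} forces the boundary labels to be $(N^2, N, N)$, where $N^2 = n^3m^2p$ is the product of the controlling dimensions $n^2m^2$, $n$ and $p$; hence the composite presents an $N$-dimensional unitary error basis. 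The hypothesis $\sqrt{np}\in\mathbb{N}$ is what makes these dimension assignments consistent along the wires incident to $W$.

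Finally I would read the formula off the diagram. Each vertex contributes its tensor components $V_{a,rf,g}^{b,c}$, $Q_{b,r,d}^{c}$ and $W_{rc,e,h}$; the open regions and open wires supply the free indices $a,b,c,d,e,f,g,h$ with the stated ranges; and the unique closed wire, labelled $r{:}n$, is summed over, yielding
\[
U_{abc,def,gh} = \sum_{r\in[n]} V_{a,rf,g}^{b,c}\, Q_{b,r,d}^{c}\, W_{rc,e,h},
\]
as claimed. I expect the only genuine work to be bookkeeping: verifying that the dimension constraint at each internal wire of the tiling is satisfied---in particular around $W$, where the composite label $\sqrt{np}$ must reconcile with the UEB constraints---and checking that the indicated bracketing of the input and output wires really does match the plain UEB type rather than a rotation of it. Correctness of the construction itself is then immediate from \autoref{thm:composition} together with the characterizations summarized in \autoref{fig:biunitarytypes}, and requires no further calculation.
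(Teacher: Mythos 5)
Your proposal is correct and matches the paper's own treatment, which simply asserts that the ternary constructions of \autoref{fig:ternary} are corollaries of \autoref{thm:composition} together with the biunitary characterizations of \autoref{sec:characterizing}; you flesh this out with the appropriate dimension bookkeeping and a correct reading of the index formula, including the role of $\sqrt{np}\in\mathbb{N}$. One minor terminological slip: the part of the diagram labelled $r{:}n$ whose index is summed over is a closed \emph{region} (shaded triangle bounded by the wires between $V$, $Q$ and $W$), not a closed wire; the summation convention for closed regions from \eqref{eq:translation} is what produces the $\sum_{r\in[n]}$, and similarly the input index of the resulting UEB is carried by several adjacent shaded regions whose product indexing set has size $N^2$, rather than a single shaded region.
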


\noindent
A particularly simple case of this final construction is the following, which plays a role in our argument in \autoref{sec:infinity} that our methods give rise to infinitely many distinct constructions:
\def\sidew {1.6} 
\def\dist{3*\hrt} 
\begin{equation}
\label{eq:particularlysimple}
\begin{tz}[string,scale=1.3]
\path[blueregion] (0.25,2) to [out=down, in=135] (1,1) to [out=-145, in=60] (-0.25,-0.75) to [out=-125, in=up] (-0.9, -1.75)  to (-0.9-\sidew, -1.75) to (-0.9-\sidew, 2);
\draw[bnd] (0.25,2) to [out=down, in=135] (1,1) to [out=-145, in=60] (-0.25,-0.75) to [out=-125, in=up] (-0.9, -1.75) ;
\path[yellowregion] (0.25-\dist, 2-\ydelta) to [out=down, in=135] (-0.25,-0.75) to [out=-145, in=up] (-1.1, -1.75-\ydelta) to (-1.1-\sidew,-1.75-\ydelta) to (-1.1-\sidew,2-\ydelta);
\draw[bnd] (0.25-\dist, 2-\ydelta) to [out=down, in=135] (-0.25,-0.75) to [out=-145, in=up] (-1.1, -1.75-\ydelta);
\path[blueregion,bnd] (-0.25,-0.75) to [out=45, in=-125] (1,1) to (1.5,0.5) to [out=-135, in=30] (-0.25,-0.75);

\draw (-0.25,-0.75) to [out=-45, in=up] (0.5,-1.75-0.5*\ydelta);
\draw (1.5,0.5) to [out=-45, in=up] (0.5+3*\hrt, -1.75-0.5*\ydelta);
\draw (1,1) to [out=45, in=down] (1.75,2);
\draw (1.5,0.5) to [out=45, in=down] (1.75+\hrt, 2);

\node[blob] at (1,1) {$Q$};
\node[blob] at (1.5,0.5) {$W$};
\node[blob] at (-0.25,-0.75) {$V$};

\node[above right,dimension] at (0.55,-1.8) {$\smash{e{:}n^2}$};
\node[above right,dimension] at (0.55+3*\hrt,-1.8) {$\smash{f{:}n}$};
\node[below left,dimension] at (0.25-\dist,1.6) {$\smash{a{:}n^4}$};
\node[below left,dimension] at (0.25,1.8) {$\smash{b{:}n^2}$};
\node[below right, dimension] at (1.75,1.8) {$\smash{c{:}n^2}$};
\node[below right, dimension] at (1.75+\hrt ,1.8) {$\smash{d{:}n}$};
\node[left, dimension] at (1.25,0.4) {$\smash{r{:}n^2}$};
\end{tz}\end{equation}
This produces an $n^3$\-dimensional UEB 
\begin{equation} U_{ab,cd,ef} := \sum_{r\in [n^2]} V_{a,r,e}^b \, Q_{b,r,c}^{\vphantom{b}} \, W_{r,d,f}^{\vphantom{b}} 
\end{equation}
from the following data, with $a\in [n^4];b,c,e\in [n^2]$ and $d,f\in [n]$:
\begin{itemize}
\item $V_{a,r,e}^{b}\in\UEB^{n^2}_{n^2}$, an $n^2$\-controlled family of $n^2$\-dimensional unitary error bases;
\item $Q_{b,r,c}\in \QLS_{n^2}$, an $n^2$\-dimensional quantum Latin squares;
\item $W_{r,d,f}\in\UEB_n$, an $n$\-dimensional unitary error basis.

\end{itemize}

\subsection{Higher constructions}
\label{sec:higherquantum}

Interesting biunitary composites exist for higher arity, and are easy to discover by \textit{ad hoc} experimentation. We illustrate two examples which seem particularly elegant, and which, to our knowledge, are new constructions. Once again, these results are all corollaries of \autoref{thm:maintheorem}, and the results of \autoref{sec:characterizing} as summarized in \autoref{fig:biunitarytypes}.

\begin{figure}
\makebox[\textwidth][c]{\begin{minipage}{1.25\textwidth}
\begin{calign}
\nonumber
\def\sidew{0.8} 
\def \xd{0.1}
\def \yd{0.1}
\begin{tz}[string,scale=4.13/3.1*1.3]
\path[blueregion] (0.25,3) to [out=-90, in=135] (1,2) to [out=-155, in=65] (0.5,1.5) to [out=-135, in=90] (0.25+0.5*\xd-\hrt, 0) to (0.25-\hrt-\sidew,0) to (0.25-\hrt-\sidew,3);
\draw[bnd]  (0.25,3) to [out=-90, in=135] (1,2) to [out=-155, in=65] (0.5,1.5) to [out=-135, in=90] (0.25+0.5*\xd-\hrt, 0) ;
\path[redregion] (0.25,-\yd) to [out=90, in=-135] (1,1) to [out=155, in=-65] (0.5,1.5) to [out=135, in=-90] (0.25+0.5*\xd-\hrt,3-\yd) to (0.25-\xd-\hrt-\sidew,3-\yd) to (0.25-\xd-\hrt-\sidew,-\yd);
\draw[bnd] (0.25,-\yd) to [out=90, in=-135] (1,1) to [out=155, in=-65] (0.5,1.5) to [out=135, in=-90] (0.25+0.5*\xd-\hrt,3-\yd) ;
\path[yellowregion] (0.25-0.5*\xd-\hrt, 3-2*\yd) to [out=-90, in=155] (0.5,1.5) to [out=-155, in=90] (0.25-0.5*\xd-\hrt, -2*\yd) to (0.25-2*\xd-\hrt-\sidew,-2*\yd) to (0.25-2*\xd-\hrt-\sidew,3-2*\yd);
\draw[bnd](0.25-0.5*\xd-\hrt, 3-2*\yd) to [out=-90, in=155] (0.5,1.5) to [out=-155, in=90] (0.25-0.5*\xd-\hrt, -2*\yd);
\path[yellowregion,bnd] (0.5,1.5) to [out=25, in=-115] (1,2) to [out=-65, in=155] (1.5,1.5) to [out=-155, in=65] (1,1) to [out=115, in=-25] (0.5,1.5);
\draw (1,1) to [out=-45, in=90] (1.75,-\yd);
\draw (1.75 +0.5*2^0.5,-\yd) to [out=90, in=-45] (1.5,1.5) to [out=45, in=-90] (1.75+0.5 *2^0.5,3);
\draw (1,2) to [out=45, in=-90] (1.75,3);
\node[blob] at (1,1) {$Q$};
\node[blob]at (1.5,1.5) {$V$};
\node[blob] at (0.5,1.5) {$H$};
\node[blob] at (1,2) {$P$};
\node[below left,dimension] at (0.25,2.85) {$\smash{c{:}n^2}$};
\node[below right, dimension] at (1.75,2.85) {$\smash{d{:}n^2}$};
\node[below right, dimension] at (1.75+\hrt,2.85){$\smash{e{:}n}$};
\node[below left, dimension] at (0.15-\hrt,2.85-2*\yd){$\smash{a{:}n^2}$};
\node[above left, dimension] at (0.25,-0.05) {$\smash{b{:}n^2}$};
\node[above right, dimension] at (1.75,-0.05) {$\smash{f{:}n^2}$};
\node[above right, dimension] at (1.75+\hrt, -0.05) {$\smash{g{:}n}$};
\node[dimension] at (1.02,1.45) {$\smash{r{:}n^2}$};
\end{tz}
&
\def\sidew{0.8}  
\def \xd{0.13}
\def \yd{0.13}
\begin{tz}[string,scale = 1.3]
\path[yellowregion, bnd] (1.75 +\hrt, 4) to [out=-90, in=45] (1.5,2.5) to (2,2) to [out=45, in=-90] (1.75 + 2*\hrt,4);
\path[yellowregion,bnd] (1.75 +\hrt, -\yd) to [out=90, in=-45] (1.5,1.5) to (2,2) to [out=-45, in=90] (1.75 + 2*\hrt,-\yd);
\path[yellowregion,bnd] (1,2) to [out= -65, in=155] (1.5,1.5) to [out=-155, in=65] (1,1) to [out=115, in=-25] (0.5,1.5) to [out=25, in=-115] (1,2);
\path[yellowregion,bnd] (1,3) to [out=-65, in=155] (1.5,2.5) to [out=-155, in=65] (1,2) to [out=115, in=-25] (0.5,2.5) to [out=25, in=-115] (1,3);
\path[blueregion] (0.25,4) to [out=down, in=135] (1,3) to [out=-155, in=65] (0.5,2.5) to [out=-125, in=90] (0.35-2*\hrt,0) to (0.25-2*\hrt-\sidew,0) to (0.25-2*\hrt-\sidew,4);
\draw[bnd]   (0.25,4) to [out=down, in=135] (1,3) to [out=-155, in=65] (0.5,2.5) to [out=-125, in=90] (0.35-2*\hrt,0);
\path[redregion] (0.25,-\yd) to [out=90, in=-135] (1,1) to [out=155, in=-65] (0.5,1.5) to [out=125, in=-90] (0.35 -2*\hrt,4-\yd) to (0.25-\xd-2*\hrt-\sidew,4-\yd) to (0.25-\xd-2*\hrt-\sidew,-\yd);
\draw[bnd](0.25,-\yd) to [out=90, in=-135] (1,1) to [out=155, in=-65] (0.5,1.5) to [out=125, in=-90] (0.35 -2*\hrt,4-\yd);
\path[greenregion] (0.25-\hrt,-2*\yd) to [out=up, in=-135] (0.5,1.5) to [out=145, in=down] (0.15-2*\hrt,4-2*\yd) to (0.25-2*\xd-2*\hrt-\sidew, 4-2*\yd) to (0.25-2*\xd-2*\hrt-\sidew, -2*\yd);
\draw[bnd] (0.25-\hrt,-2*\yd) to [out=up, in=-135] (0.5,1.5) to [out=145, in=down] (0.15-2*\hrt,4-2*\yd);
\path[yellowregion] (0.25-\hrt,4-3*\yd) to [out=-90, in=135] (0.5,2.5) to [out=-145, in=90] (0.15-2*\hrt,-3*\yd) to (0.25-3*\xd-2*\hrt-\sidew, -3*\yd) to (0.25-3*\xd-2*\hrt-\sidew, 4-3*\yd) to (0.25-\hrt,4-3*\yd);
\draw[bnd] (0.25-\hrt,4-3*\yd) to [out=-90, in=135] (0.5,2.5) to [out=-145, in=90] (0.15-2*\hrt,-3*\yd) ;
\draw (1.75,4) to [out=-90, in=45] (1,3);
\draw (1.75,-\yd) to [out=90, in=-45] (1,1);
\node[blob]at (1,1) {$P$};
\node[blob] at (1.5,1.5) {$C$};
\node[blob] at (0.5,1.5) {$K$};
\node[blob] at (1,2) {$D$};
\node[blob]at (1,3) {$Q$};
\node[blob] at (0.5,2.5) {$H$};
\node[blob]at (2,2) {$A$};
\node[blob] at (1.5,2.5) {$B$};
\node[below left, dimension] at (0.25,3.8) {$\smash{d{:}n}$};
\node[below right, dimension] at (1.75,3.8) {$\smash{e{:}n}$};
\node[below, dimension] at (1.75+1.5*\hrt, 3.8) {$\smash{f{:}n}$};
\node[above left, dimension] at (0.25-\hrt, -2*\yd) {$\smash{b{:}n}$};
\node[above left, dimension] at (0.25,-0.1) {$\smash{c{:}n}$};
\node[above right, dimension]at (1.75,-0.1) {$\smash{g{:}n}$};
\node[above, dimension] at (1.75+1.5*\hrt, -0.1) {$\smash{h{:}n}$};
\%middle index
\node[below left, dimension] at (0.25-\hrt, 3.85-3*\yd) {$\smash{a{:}n}$};
\node[dimension] at ( 1.02 ,1.43) {$\smash{r{:}n}$};
\node[dimension] at (1.02, 2.43) {$\smash{s{:}n}$};
\end{tz}
\\[-2pt]
\nonumber
\text{(a)}\, \HAD^{n^2,n^2}_{n^2} + 2{\times}\QLS_{n^2} + \UEB_n {\,\leadsto\,} \UEB_{n^3}
&
\text{(b)} \,4{\times}\HAD_n + 2{\times}\HAD_n^n +2{\times}\QLS_n {\,\leadsto\,}\UEB_{n^2}
\end{calign}
\end{minipage}}
\caption{Some higher-arity constructions of unitary error bases.\label{fig:higher}}
\hspace{2.5cm}
\end{figure}%
\begin{cor}[$\HAD_{n^2}^{n^2,n^2}+2{\times} \QLS_{n^2}+\UEB_n\leadsto\UEB_{n^3}$]
\label{cor:4aryconstruction}
The construction in \autoref{fig:higher}(a) produces an $n^3$\-dimensional UEB 
\begin{equation}
\label{eq:8aexplicit}
U_{abc,de,fg} = \sum_{r\in [n^2]} \, H_{a,r}^{b,c} \,  P_{c,r,d}^{\vphantom{b}} \, Q_{r,b,f}^{\vphantom{b}} \,V_{r,e,g}^{\vphantom{b}} 
\end{equation}
from the following data, with $a,b,c,d,f\in [n^2]$ and $e,g\in[n]$:
\begin{itemize}
\item $H_{a,r}^{b,c}\in \HAD_{n^2}^{n^2,n^2}$, an $(n^2,n^2)$\-controlled family of $n^2$\-dimensional Hadamard matrices;
\item $P_{c,r,d}, Q_{r,b,f}\in \QLS_{n^2}$, $n^2$\-dimensional quantum Latin squares; 
\item $V_{r,e,g}\in\UEB_n$, an $n$\-dimensional unitary error bases.
\end{itemize}
\end{cor}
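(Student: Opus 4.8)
The plan is to obtain this as an immediate corollary of \autoref{thm:maintheorem}, exactly in the spirit advertised in the introduction: correctness will follow from the shading pattern alone, together with the biunitary characterizations of \autoref{sec:characterizing} (summarized in \autoref{fig:biunitarytypes}). First I would identify each of the four vertices of \autoref{fig:higher}(a) with a biunitary of known type, by reading off the regions and wires adjacent to it: the vertex $H$ is a Hadamard biunitary (\autoref{prop:hadamardbiunitary}) of dimension $n^2$ carrying two half-plane controlling sheets of dimension $n^2$, i.e.\ an element of $\HAD_{n^2}^{n^2,n^2}$ placed in a suitable orientation via \autoref{lemma:rotate}; the vertices $P$ and $Q$ are quantum Latin square biunitaries of type \eqref{eq:QLSbiunitary} of dimension $n^2$, again in rotated/reflected orientations allowed by \autoref{lemma:rotate}; and $V$ is a unitary error basis biunitary of type \eqref{eq:UEBbiunitary} of dimension $n$. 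In each case I would confirm, using the dimension-counting argument illustrated in \eqref{eq:dimensioncounting}, that the labels on the adjacent wires and regions are consistent with the claimed type; the closed region carrying $r$ is then forced to have dimension $n^2$, and every other label is determined.

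Second, I would observe that the diagram of \autoref{fig:higher}(a) is a finite diagonal composite of these four biunitaries along a planar tiling (a $2\times 2$-type tiling whose four cells are occupied by $H,P,Q,V$), so that \autoref{thm:maintheorem}—equivalently, three applications of \autoref{thm:composition}—immediately shows that the composite vertex is again biunitary, up to a scalar. Third, I would determine the type of this composite. Tracing the outer boundary, one side consists of standard boundaries of a single shaded sheet and carries the composite index $(a,b,c)\in[n^2]^3$ of total size $(n^3)^2$, while the other two composite wires carry $(d,e)$ and $(f,g)$, each ranging over a set of size $n^3$. Hence the composite has precisely the UEB type \eqref{eq:UEBbiunitary} with $n$ replaced by $n^3$, and the UEB characterization then identifies it with an $n^3$-dimensional unitary error basis.

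Finally, I would read off the explicit formula by translating the composite diagram into tensor notation via the dictionary of \autoref{sec:foundations}: each vertex contributes its coefficient ($H^{b,c}_{a,r}$, $P_{c,r,d}$, $Q_{r,b,f}$, $V_{r,e,g}$), each standard boundary contributes nothing, and the single closed region contributes a sum over $r\in[n^2]$, giving exactly \eqref{eq:8aexplicit}; any scalar arising from the rotated pieces is pinned down by \autoref{thm:rotationfactor} and is absorbed into the normalization of the data just as in \autoref{thm:H+H=Q} and \autoref{cor:QSM}. The main obstacle is bookkeeping rather than mathematics: one must parse the pseudo-3d picture of \autoref{fig:higher}(a) carefully—which sheets attach to which vertices and in what front/back order—both to certify that each piece genuinely has the claimed (possibly rotated, possibly controlled) biunitary type and that the outer boundary of the composite is the bare UEB type, and then track the wire labels precisely enough to land on the stated index expression. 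Once the diagram is parsed correctly, everything else is automatic.
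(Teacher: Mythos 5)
Your proposal matches the paper's (implicit) proof exactly: the paper states that all the corollaries of Section 3, including this one, follow from \autoref{thm:maintheorem} together with the type characterizations of \autoref{sec:characterizing}, which is precisely the route you take—identify each vertex's biunitary type, invoke diagonal composition, then read the composite's type and index expression off the diagram. One small inaccuracy: you say any scalar from rotated pieces is "absorbed into the normalization of the data just as in \autoref{thm:H+H=Q}," but in that corollary the factor $\tfrac{1}{\sqrt{n}}$ is \emph{not} absorbed—it appears explicitly in the formula; here there happens to be no such factor because the rotated pieces in \autoref{fig:higher}(a) carry scalar $\lambda=1$, which is worth verifying rather than assuming.
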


\noindent
In \autoref{sec:newUEB}, we will use this construction  to produce a new unitary error basis that cannot be obtained by the most general previously known methods.

We now turn to the 8-ary construction of \autoref{fig:higher}(b).
\begin{cor}[$4{\times}\HAD_{n} + 2{\times}\HAD^n_{n}+2{\times} \QLS_n\leadsto\UEB_{n^2}$]
The construction in \autoref{fig:higher}(b) produces an $n^2$\-dimensional UEB 
\begin{equation}
 U_{abcd,ef,gh} = \frac{1}{n} \,\sum_{r,s\in [n]} A_{f,h} \, B_{s,f} \, C_{r,h} \, D_{s,r} \, H_{a,s}^d \, K_{b,r}^c \, Q_{d,s,e} \, P_{r,c,g} 
\end{equation}
from the following data, with $a,b,c,d,e,f,g,h\in [n]$:
\begin{itemize}
\item $A_{f,h},B_{s,f},C_{r,h},D_{s,r}\in \HAD_n$, $n$\-dimensional Hadamard matrices;
\item $H^d_{a,s}, K^c _{b,r}\in \HAD^n_n$, $n$\-controlled families of $n$\-dimensional Hadamard matrices;
\item $Q_{d,s,e},P_{r,c,g}\in \QLS_n$, $n$\-dimensional quantum Latin squares.
\end{itemize}
\end{cor}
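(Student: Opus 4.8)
The plan is to observe that this corollary, like all the others in this section, is an immediate consequence of \autoref{thm:maintheorem} (diagonal composites of biunitaries are biunitary) together with the biunitary characterizations of Hadamard matrices, controlled families and quantum Latin squares from \autoref{sec:characterizing}, as summarized in \autoref{fig:biunitarytypes}. So the proof has two parts: first, check that the composite drawn in \autoref{fig:higher}(b) really is a \emph{diagonal} composite of biunitaries of the stated types (i.e. that the tiling structure is valid and the arities/dimensions match up, forcing every region to carry dimension $n$); second, read off the index formula by translating the diagram into tensor notation via the dictionary of \autoref{sec:foundations}, and identify the outermost shading pattern as the UEB type \eqref{eq:UEBbiunitary} of dimension $n^2$.

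First I would verify the type bookkeeping. There are eight vertices: four plain Hadamards $A,B,C,D$ (type \eqref{eq:typeHadamard}), two $n$-controlled families of Hadamards $H,K$ (half-plane controlled Hadamard type), and two quantum Latin squares $Q,P$ (type \eqref{eq:QLSbiunitary}). Using the dimensional-constraint reasoning of the ``Dimensional constraints'' paragraph — every wire and region label is pinned down by requiring each vertex and its rotations to be unitary — one checks that consistency of the tiling forces every region to have dimension $n$, so the closed regions $r{:}n$ and $s{:}n$ are summed over and the overall normalization is the product of the rotation scalars. Each plain Hadamard biunitary must be rescaled by $\tfrac{1}{\sqrt n}$ to become an honest unitary (as in \autoref{thm:H+H=Q}), but here only the \emph{rotated} Hadamards contribute such a factor; tallying the four square-root factors that appear from the Hadamard vertices sitting in rotated position yields the global $\tfrac{1}{n}$. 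By \autoref{thm:maintheorem} (equivalently iterated \autoref{thm:composition}), the composite is biunitary; by inspection its external shading is three regions meeting a fourth through composite wires in exactly the pattern of \eqref{eq:UEBbiunitary} with base region dimension $n^2$, so it is an $n^2$-dimensional UEB.

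Second I would extract the formula. Translating \autoref{fig:higher}(b) vertex-by-vertex using \eqref{eq:translation}: the vertex $A$ contributes $A_{f,h}$, $B$ contributes $B_{s,f}$, $C$ contributes $C_{r,h}$, $D$ contributes $D_{s,r}$, the controlled Hadamard $H$ contributes $H^d_{a,s}$, the controlled Hadamard $K$ contributes $K^c_{b,r}$, and the two quantum Latin squares contribute $Q_{d,s,e}$ and $P_{r,c,g}$; the two closed regions give $\sum_{r,s\in[n]}$, and the normalization is $\tfrac1n$. Reading the open regions as the control/input/output indices of the UEB then gives exactly
\[
U_{abcd,ef,gh}=\frac1n\sum_{r,s\in[n]}A_{f,h}\,B_{s,f}\,C_{r,h}\,D_{s,r}\,H^d_{a,s}\,K^c_{b,r}\,Q_{d,s,e}\,P_{r,c,g},
\]
as claimed.

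The only real obstacle is the diagrammatic bookkeeping: one must be careful to read off which of the eight vertices sits in which rotated orientation (and hence whether it picks up a $\tfrac{1}{\sqrt n}$ factor or not), to confirm that the planar tiling underlying \autoref{fig:higher}(b) is genuinely a composite of binary diagonal composites (or appeal to \autoref{thm:maintheorem} directly for the non-binary case), and to correctly match each composite external wire to a product of index sets. None of this requires any calculation beyond unwinding the graphical conventions; there is no algebraic identity to verify, since all the needed properties are subsumed in biunitarity. I would therefore present the proof as: ``This is a corollary of \autoref{thm:maintheorem} and the characterizations in \autoref{fig:biunitarytypes}; the explicit formula is obtained by translating \autoref{fig:higher}(b) into index notation as in \eqref{eq:translation}, with the factor $\tfrac1n$ arising from the rotated Hadamard biunitaries as in \autoref{thm:rotationfactor}.''
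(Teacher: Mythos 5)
Your overall approach is exactly the paper's: this corollary is an immediate consequence of \autoref{thm:maintheorem} and the biunitary characterizations summarized in \autoref{fig:biunitarytypes}, with the index expression read off from \autoref{fig:higher}(b) via the dictionary of \autoref{sec:foundations}. The vertex-by-vertex translation, the identification of the outer shading as UEB type \eqref{eq:UEBbiunitary} of dimension $n^2$, and the summation over the closed regions $r,s$ are all correct and match the paper.

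However, your account of the normalization factor is wrong as written. You claim that ``tallying the four square-root factors that appear from the Hadamard vertices sitting in rotated position yields the global $\tfrac1n$.'' Four factors of $\tfrac{1}{\sqrt n}$ would give $\tfrac{1}{n^2}$, not $\tfrac1n$, so your sentence is arithmetically inconsistent with its own conclusion. In fact only \emph{two} of the eight vertices sit in rotated Hadamard orientation: the paper's remark after the corollary says explicitly that ``the factor $\tfrac1n$ arises from using two rotated Hadamard matrices $A$ and $D$,'' giving $\bigl(\tfrac{1}{\sqrt n}\bigr)^2=\tfrac1n$. The vertices $B$ and $C$ are in standard (unrotated) Hadamard position and contribute no scalar, and the controlled Hadamards $H,K$ and the quantum Latin squares $Q,P$ likewise contribute none. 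You should correct the count from four to two; otherwise the bookkeeping of \autoref{thm:rotationfactor} does not close.
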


\noindent
The factor $\frac{1}{n}$ arises from using two rotated Hadamard matrices $A$ and $D$; see the discussion after \autoref{thm:H+H=Q}.

\subsection{An infinity of constructions}
\label{sec:infinity}
\begin{figure}[t]
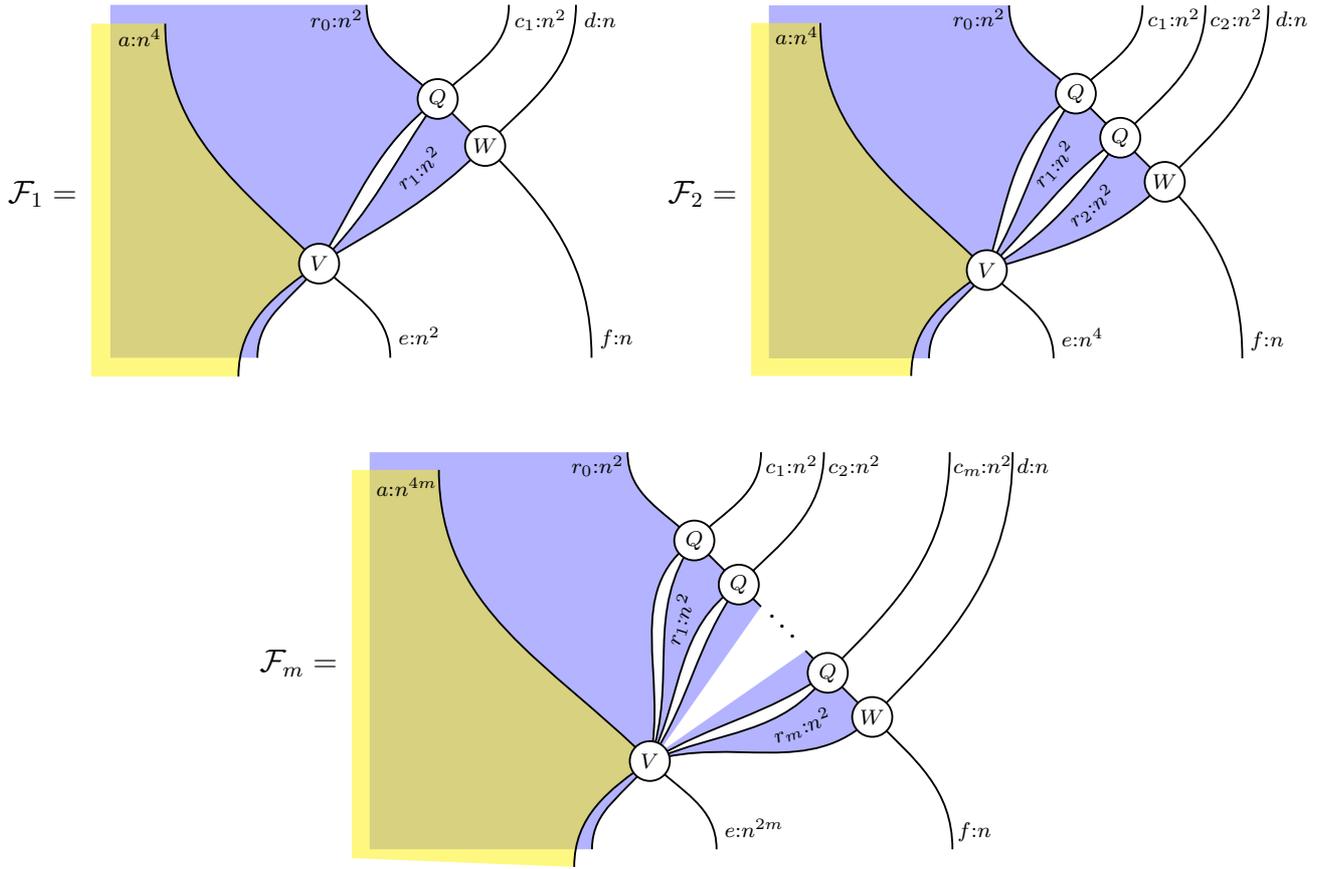

\def\angledelta{7}
\begin{calign}
\nonumber
\def\sidew {1.55} 
\def\dist{3*\hrt} 
\def \xd{0.2}
\def\yd{0.2}
\hspace{-30pt}
\mathcal{F}_1=
\begin{tz}[string,scale=4/3.75*1.17]
\path[blueregion] (0.25,2) to [out=down, in=135] (1,1) to [out=-135-\angledelta, in=45+2*\angledelta] (-0.25,-0.75) to [out=-125, in=up] (-0.9, -1.75)  to (-0.9-\sidew, -1.75) to (-0.9-\sidew, 2);
\draw[bnd]        (0.25,2) to [out=down, in=135] (1,1) to [out=-135-\angledelta, in=45+2*\angledelta] (-0.25,-0.75) to [out=-125, in=up] (-0.9, -1.75) ;
\path[yellowregion] (0.25-\dist, 2-\yd) to [out=down, in=135] (-0.25,-0.75) to [out=-145, in=up] (-0.9-\xd, -1.75-\yd) to (-0.9-\xd-\sidew,-1.75-\yd) to (-0.9-\xd-\sidew,2-\yd);
\draw[bnd](0.25-\dist, 2-\yd) to [out=down, in=135] (-0.25,-0.75) to [out=-145, in=up] (-0.9-\xd, -1.75-\yd);
\path[blueregion,bnd] (-0.25,-0.75) to [out=45, in=-125] (1,1) to (1.5,0.5) to [out=-135, in=45-2*\angledelta] (-0.25,-0.75);
\draw (-0.25,-0.75) to [out=-45, in=up] (0.5,-1.75);
\draw (1.5,0.5) to [out=-45, in=up] (0.5+3*\hrt, -1.75);
\draw (1,1) to [out=45, in=down] (1.75,2);
\draw (1.5,0.5) to [out=45, in=down] (1.75+\hrt, 2);
\node[blob] at (1,1) {$Q$};
\node[blob] at (1.5,0.5) {$W$};
\node[blob] at (-0.25,-0.75) {$V$};
\node[above right,dimension] at (0.55,-1.65) {$\smash{e{:}n^2}$};
\node[above right,dimension] at (0.55+3*\hrt,-1.65) {$\smash{f{:}n}$};
\node[below left,dimension] at (0.25-\dist,1.6) {$\smash{a{:}n^4}$};
\node[below left,dimension] at (0.25,1.8) {$\smash{r_0{:}n^2}$};
\node[below right, dimension] at (1.77,1.8) {$\smash{c_1{:}n^2}$};
\node[below right, dimension] at (1.8+\hrt ,1.8) {$\smash{d{:}n}$};
\node[left, dimension, rotate=40] at (1.1,0.4) {$\smash{r_1{:}n^2}$};
\end{tz}
&
\mathcal{F}_2=
\def\sidew {1.8} 
\def\dist{3*\hrt} 
\def \xd{0.2}
\def \yd{0.2}
\begin{tz}[string,scale=1.17]
\path[blueregion] (-0.25, 2.5) to [out=down, in=135] (0.5,1.5) to [out=-145, in=45+4*\angledelta] (-0.5,-0.5) to [out=-125, in=up] (-1.15,-1.5) to (-1.15-\sidew, -1.5) to (-1.15-\sidew,2.5);
\draw[bnd] (-0.25, 2.5) to [out=down, in=135] (0.5,1.5) to [out=-145, in=45+4*\angledelta] (-0.5,-0.5) to [out=-125, in=up] (-1.15,-1.5);
\path[blueregion,bnd] (0.5,1.5) to (1,1) to [out=-145, in=45] (-0.5,-0.5) to [out=45+2*\angledelta, in=-125] (0.5,1.5);
\path[blueregion,bnd] (1,1) to (1.5,0.5) to [out=-135, in=45-4*\angledelta] (-0.5,-0.5) to [out=45-2*\angledelta, in=-125] (1,1);
\path[yellowregion] (-0.25-\dist, 2.5-\yd) to [out=down, in=135] (-0.5,-0.5) to [out=-145, in=up] (-1.35,-1.5-\yd) to (-1.15-\xd-\sidew,-1.5-\yd) to (-1.15-\xd-\sidew,2.5-\yd);
\draw[bnd] (-0.25-\dist, 2.5-\yd) to [out=down, in=135] (-0.5,-0.5) to [out=-145, in=up] (-1.35,-1.5-\yd);
\draw (-0.5,-0.5) to [out=-45, in=up] (0.25,-1.5);
\draw (1.5,0.5) to [out=-45, in=up] (0.25+3*\hrt, -1.5);
\draw (1.5,0.5) to [out= 45, in=down] (1.25 + 2*\hrt, 2.5);
\draw (1,1) to [out=45, in=down] (1.25 + \hrt, 2.5);
\draw (0.5,1.5) to [out=45, in=down] (1.25,2.5);
\node[blob] at (0.5,1.5) {$Q$};
\node[blob] at (-0.5,-0.5) {$V$};
\node[blob] at (1,1) {$Q$};
\node[blob] at (1.5,0.5) {$W$};
\node[above right,dimension] at (0.3,-1.4) {$\smash{e{:}n^4}$};
\node[above right,dimension] at (0.3+3*\hrt,-1.4) {$\smash{f{:}n}$};
\node[below left,dimension] at (-2.35,2.1) {$\smash{a{:}n^4}$};
\node[below left,dimension] at (-0.25,2.3) {$\smash{r_0{:}n^2}$};
\node[below right, dimension] at (1.27,2.3) {$\smash{c_1{:}n^2}$};
\node[below right, dimension] at (1.97,2.3) {$\smash{c_2{:}n^2}$};
\node[below right, dimension] at (2+\hrt ,2.3) {$\smash{d{:}n}$};
\node[left, dimension, rotate=37] at (1.0,0.36) {$\smash{r_2{:}n^2}$};
\node[left, dimension, rotate=48] at (0.55,0.89) {$\smash{r_1{:}n^2}$};
\end{tz}
\end{calign}
\begin{calign}
\nonumber
\mathcal{F}_m= 
\def\sidew {2.5} 
\def\dist{3*\hrt} 
\def \xd{0.2}
\def \yd{0.2}
\begin{tz}[string,scale=1.17]
\path[blueregion] (-0.75, 3) to [out=down, in=135] (0,2) to [out=-145, in=50+4*\angledelta] (-0.5,-0.5) to [out= -125, in=up] (-1.15,-1.5) to (-1.15-\sidew, -1.5) to (-1.15-\sidew,3);
\draw[bnd] (-0.75, 3) to [out=down, in=135] (0,2) to [out=-145, in=50+4*\angledelta] (-0.5,-0.5) to [out= -125, in=up] (-1.15,-1.5);
\path[blueregion,bnd] (0,2) to (0.5,1.5) to [out=-145, in=50+2*\angledelta] (-0.5,-0.5) to [out=50+3*\angledelta, in=-125] (0,2);
\path[blueregion,bnd] (0.75, 1.25) to (0.5,1.5)  to [out=-125, in=50+\angledelta] (-0.5,-0.5) to (-0.47,-0.5);
\path[blueregion,bnd] (1.25, 0.75) to (1.5,0.5) to [out=-145, in=45-2*\angledelta] (-0.5,-0.5) to (-0.53,-0.5);
\path[blueregion,bnd] (1.5,0.5) to (2,0) to [out=-135, in=45-4*\angledelta] (-0.5,-0.5) to [out=45-3*\angledelta, in= -125] (1.5,0.5);
\path[yellowregion] (-0.75-\dist, 3-\yd) to [out=down, in=135]  (-0.5,-0.5) to [out=-145, in=up] (-1.35, -1.5-\yd) to (-1.15-\xd-\sidew, -1.6) to (-1.15-\xd-\sidew, 3-\yd);
\draw[bnd]  (-0.75-\dist, 3-\yd) to [out=down, in=135]  (-0.5,-0.5) to [out=-145, in=up] (-1.35, -1.5-\yd);
\draw (-0.5,-0.5) to [out=-45, in=up] (0.25,-1.5);
\draw (2,0) to [out=-45, in=up] (0.25+3.75*\hrt, -1.5);
\draw (2,0) to [out= 45, in=down] (0.75+4*\hrt,3);
\draw (1.5,0.5) to [out= 45, in=down] (0.75+ 3*\hrt, 3);
\draw (0.5,1.5) to [out=45, in=down] (0.75+\hrt, 3);
\draw (0,2) to [out=45, in=down] (0.75,3);
\node[blob] at (0,2) {$Q$};
\node[blob] at (0.5,1.5) {$Q$};
\node[blob] at (-0.5,-0.5) {$V$};
\node[rotate=-45] at (1,1) {$\cdots$};
\node[blob] at (1.5,0.5) {$Q$};
\node[blob] at (2,0) {$W$};
\node[above right,dimension] at (0.3,-1.4) {$\smash{e{:}n^{2m}}$};
\node[above right,dimension] at (0.8+3*\hrt,-1.4) {$\smash{f{:}n}$};
\node[below left,dimension] at (-2.85,2.55) {$\smash{a{:}n^{4m}}$};
\node[below left,dimension] at (-0.75,2.8) {$\smash{r_0{:}n^2}$};
\node[below right, dimension] at (0.76,2.8) {$\smash{c_1{:}n^2}$};
\node[below right, dimension] at (0.76+\hrt,2.8) {$\smash{c_2{:}n^2}$};
\node[below right, dimension] at (0.75+3*\hrt,2.8) {$\smash{c_m\hspace{-0.5pt} {:}n^2}$};
\node[below right, dimension] at (0.76+4*\hrt,2.8) {$\smash{d{:}n}$};
\node[left, dimension, rotate=20] at (1.6,-0.05) {$\smash{r_m{:}n^2}$};
\node[left, dimension,  rotate=72] at (0.03,1.4) {$\smash{r_1{:}n^2}$};
\end{tz}\end{calign}
\vspace{-10pt}
\caption{An infinite sequence of unitary error basis constructions.\label{fig:infinitude}}
\end{figure}

We have seen several examples of unitary error basis constructions which do not factor through compositions of lower arity only involving Hadamard matrices, unitary error bases, quantum Latin squares and controlled families thereof. We now argue that such constructions can be found for all arities, and hence that our methods lead to infinitely many conceptually distinct constructions.

Consider the sequence of unitary error basis constructions presented in \autoref{fig:infinitude}. The construction $\mathcal F_m$ produces an $n^{2m+1}$\-dimensional UEB 
\begin{equation} U_{ar_0,c_1\cdots c_{m}d,ef} := \sum_{r_1\in [n^2]} \cdots \sum_{r_m\in [n^2]} V_{a,r_1 \cdots r_{m},e}^{r_0} \, 
\left( \prod _{i \in [m]} Q_{r_{i-1},r_{i},c_i}^{\vphantom{b}}  \right)
 \, W_{r_m,d,f}^{\vphantom{b}} 
\end{equation}
from the following data, with $a\in [n^{4m}]$, $e \in [n^{2m}]$, $r_i, c_i \in [n^2]$ and $d,f\in [n]$:
\begin{itemize}
\item $V_{a,r_1, \cdots , r_m,e}^{r_0}\in\UEB^{n^2}_{n^{2m}}$, an $n^2$\-controlled family of $n^{2m}$\-dimensional unitary error bases;
\item $Q_{r_{i-1},r_i,c_i}\in \QLS_{n^2}$, an $n^2$\-dimensional quantum Latin square;
\item $W_{r_m,d,f}\in\UEB_n$, an $n$\-dimensional unitary error basis.
\end{itemize}
By inspection, for each $m>1$, the construction $\mathcal{F}_m$ does not factor through any simpler construction between Hadamards, unitary error bases, quantum Latin squares or controlled families thereof.

\section{Equivalence of biunitaries}
\label{sec:equivalence}

In the planar algebra literature, two biunitaries $V$ and $U$ are said to be \textit{gauge equivalent} if there are unitaries $A,B,C$ and $D$ such that the following holds~\cite[Definition 2.11.10]{Jones:1999}:
 \begin{equation}\label{eq:gaugeEquivalent}
\begin{tz}[string,scale=0.75*1.5]
\path[redregion] (0.25,0) to [out=90, in=-135] (0.5,0.5) to (1,1) to (0.5,1.5) to [out= 135, in=-90] (0.25,2) to (0.25-0.66*\side,2) to (0.25-0.66*\side,0);
\path[blueregion] (1.75,0) to [out=90, in=-45] (1.5,0.5) to  (1,1) to (1.5,1.5) to [out= 45, in=-90] (1.75,2) to (1.75+0.66*\side,2) to (1.75+0.66*\side,0);
\path[greenregion,draw] (0.25,0) to [out=90, in=-135] (0.5,0.5)to (1,1) to (1.5,0.5) to [out=-45, in=90] (1.75,0);
\path[yellowregion,draw] (0.25,2) to [out=-90, in=135] (0.5,1.5) to (1,1) to  (1.5,1.5) to [out=45, in=-90] (1.75,2);
\node[blob] at (1,1) {$V$};
\end{tz}~
=~
\begin{tz}[string,scale=0.75]
\path[redregion] (-0.25,-0.5) to [out=90, in=-135] (0.5,0.5) to (1,1) to (0.5,1.5) to [out= 135, in=-90]  (-0.25, 2.5) to (-0.25-\side,2.5) to (-0.25-\side,-0.5);
\path[blueregion] (2.25,-0.5) to [out=90, in=-45] (1.5, 0.5) to  (1,1) to (1.5,1.5) to [out= 45, in=-90] (2.25,2.5) to (2.25+\side,2.5) to (2.25+\side,-0.5);
\path[greenregion,draw] (-0.25,-0.5) to [out=90, in=-135] (0.5,0.5)to (1,1) to (1.5,0.5) to [out=-45, in=90] (2.25,-0.5);
\path[yellowregion,draw] (-0.25,2.5) to [out=-90, in=135] (0.5,1.5) to (1,1) to  (1.5,1.5) to [out=45, in=-90] (2.25,2.5);
\node[blob] at (1,1) {$U$};
\node[blob] at (0.25,1.75) {$A$};
\node[blob] at (1.75,1.75) {$B$};
\node[blob] at (0.25,0.25){$C$};
\node[blob] at (1.75,0.25) {$D$};
\end{tz}
\end{equation} 
However, this notion of equivalence does not coincide with the usual notions  of equivalence of Hadamard matrices, unitary error bases and quantum Latin squares~\cite{Tadej:2006,Klappenecker:2003,Musto:2015}. For example, $n$\-dimensional Hadamard matrices $H$ and $H'$ are said to be equivalent if there are scalars $\lambda_a, \mu_b\in U(1) $ and permutations $\sigma,\tau \in S_n$ such that $H'_{a,b} = \lambda_a\lambda_b H_{\sigma(a), \tau(b)}$ \cite[Definition 2.2]{Tadej:2006}. Applying condition \eqref{eq:gaugeEquivalent} to Hadamard matrices \eqref{eq:typeHadamard} accounts for the scalars $\lambda_a$ and $\mu_b$ but not for the permutations $\sigma$ and $\tau$.

To remedy this, we suggest that two biunitaries should be considered equivalent if each can be obtained from the other by composition with biunitaries. (Note that~\eqref{eq:gaugeEquivalent} arises from biunitary composition of $U$ with the unitaries $A,B,C,D$.) In \autoref{sec:equivalencefoundations}, we make this precise. In \autoref{sec:equivalenceapplications}, we verify that this gives the correct notion of equivalence for quantum structures, and investigate the consequences for some of the construction rules explored in \autoref{sec:composition}.

\subsection{Mathematical foundation}
\label{sec:equivalencefoundations}
\begin{definition} We say that a biunitary $F$ is \textit{minor reversible} if there exists a biunitary $G$ and unitaries $A,B,C,D$ such that the following hold:
\tikzset{zig/.style={}}
 \begin{calign}
\begin{tz}[string]
\path[greenregion] (2.25,2.5) to [out=-90, in=45] (1.5,1.5) 
decorate[zig]{ to [out=-45, in=90] (1.75+\hrt,0)}
to (1.75+\hrt+\side,0) to (1.75+\hrt+\side,2.5);
\path[redregion] (0.25,0) to [out=90, in=-135] (1,1) 
decorate[zig]{to [out=135, in=-90] (0.75-\hrt, 2.5)}
 to (0.75-\hrt-\side,2.5) to (0.75-\hrt-\side,0);
\path[greenregion] 
decorate[zig]{(1.75,0) to [out=90, in=-45] (1,1)}
to [out=-135, in=90] (0.25,0);
\path[redregion] decorate[zig]{(0.75,2.5) to [out=-90, in=135] (1.5,1.5)} to [out=45, in=-90] (2.25,2.5);
\path[ blueregion,draw] 
decorate[zig]{(1.75,0) to [out=90, in=-45] (1,1)}
 to (1.5,1.5) 
 decorate[zig]{ to [out=-45, in=90]  (\hrt+1.75,0)};
\path[yellowregion,draw] 
decorate[zig]{(0.75,2.5) to [out=-90, in=135] (1.5,1.5)}
 to (1,1)
 decorate[zig]{to [out=135, in=-90] (0.75-\hrt,2.5)};
 \draw (0.25,0) to [out=90, in=-135] (1,1) to (1.5,1.5) to [out=45, in=-90] (2.25,2.5);
\node[blob] at (1,1) {$F$};
\node[blob] at (1.5,1.5) {$G$};
\node[scale=0.8] at (0.75-\hrt, 2.8) {$\Sigma\vphantom{\Sigma'}$};
\node[scale=0.8] at (0.75, 2.8) {$\Sigma'$};
\node[scale=0.8] at (1.75, -0.3) {$\Delta\vphantom{\Delta'}$};
\node[scale=0.8] at (1.75+\hrt, -0.3) {$\Delta'$};
\end{tz} 
\eqgap =\eqgap
\begin{tz}[string]
\path[redregion] (0.25,0) to [out=90, in=-135] (1,1) to (1.5,1.5) to [out=45, in=-90] (2.25,2.5) to (0.75-\hrt-\side,2.5) to (0.75-\hrt-\side,0);
\path[greenregion] (0.25,0) to [out=90, in=-135] (1,1) to  (1.5,1.5) to [out=45, in=-90] (2.25,2.5) to (1.75+\hrt+\side,2.5) to (1.75+\hrt+\side,0);
\draw(0.25,0) to [out=90, in=-135] (1,1) to(1.5,1.5) to [out=45, in=-90]  (2.25,2.5);
\path[fill=white,zig, decorate] (0.75-\hrt,2.5) to (0.75-\hrt,1.8) to (0.75,1.8) to (0.75,2.5);
\path[yellowregion,zig,decorate,draw] (0.75-\hrt,2.5) to (0.75-\hrt,1.8) to (0.75,1.8) to (0.75,2.5);
\path[fill=white,draw] (0.65- \hrt ,1.6) rectangle (0.85,2);
\node[scale=0.8] at (0.75-0.5*\hrt,1.8) {$A$};
\path[fill=white,zig, decorate] (1.75,0) to (1.75,0.7) to (1.75+\hrt,0.7) to (1.75+\hrt,0);
\path[blueregion,zig,decorate,draw]  (1.75,0) to (1.75,0.7) to (1.75+\hrt,0.7) to (1.75+\hrt,0.);
\path[fill=white,draw] (1.65 ,0.5) rectangle (1.85+\hrt,0.9);
\node[scale=0.8] at (1.75+0.5*\hrt,0.7) {$B$};
\node[scale=0.8] at (0.75-\hrt, 2.8) {$\Sigma\vphantom{\Sigma'}$};
\node[scale=0.8] at (0.75, 2.8) {$\Sigma'$};
\node[scale=0.8] at (1.75, -0.3) {$\Delta\vphantom{\Delta'}$};
\node[scale=0.8] at (1.75+\hrt, -0.3) {$\Delta'$};
\end{tz}
\\
\begin{tz}[string]
\path[blueregion] (2.25,2.5) to [out=-90, in=45] (1.5,1.5) 
decorate[zig]{ to [out=-45, in=90] (1.75+\hrt,0)}
to (1.75+\hrt+\side,0) to (1.75+\hrt+\side,2.5);
\path[yellowregion] (0.25,0) to [out=90, in=-135] (1,1) 
decorate[zig]{to [out=135, in=-90] (0.75-1/2^0.5, 2.5)}
 to  (0.75-\hrt-\side,2.5) to (0.75-\hrt-\side,0);
\path[blueregion] 
decorate[zig]{(1.75,0) to [out=90, in=-45] (1,1)}
to [out=-135, in=90] (0.25,0);
\path[yellowregion] decorate[zig]{(0.75,2.5) to [out=-90, in=135] (1.5,1.5)} to [out=45, in=-90] (2.25,2.5);
\path[ greenregion,draw] 
decorate[zig]{(1.75,0) to [out=90, in=-45] (1,1)}
 to (1.5,1.5) 
 decorate[zig]{ to [out=-45, in=90]  (\hrt+1.75,0)};
\path[redregion,draw] 
decorate[zig]{(0.75,2.5) to [out=-90, in=135] (1.5,1.5)}
 to (1,1)
 decorate[zig]{to [out=135, in=-90] (0.75-\hrt,2.5)};
 \draw (0.25,0) to [out=90, in=-135] (1,1) to (1.5,1.5) to [out=45, in=-90] (2.25,2.5);
\node[blob] at (1,1) {$G$};
\node[blob] at (1.5,1.5) {$F$};
\node[scale=0.8] at (0.75-\hrt, 2.8) {$\Sigma'$};
\node[scale=0.8] at (0.75, 2.8) {$\Sigma\vphantom{\Sigma'}$};
\node[scale=0.8] at (1.75, -0.3) {$\Delta'$};
\node[scale=0.8] at (1.75+\hrt, -0.3) {$\Delta\vphantom{\Delta'}$};
\end{tz} 
\eqgap = \eqgap
\begin{tz}[string]
\path[yellowregion] (0.25,0) to [out=90, in=-135] (1,1) to (1.5,1.5) to [out=45, in=-90] (2.25,2.5) to (0.75-\hrt-\side,2.5) to (0.75-\hrt-\side,0);
\path[blueregion] (0.25,0) to [out=90, in=-135] (1,1) to  (1.5,1.5) to [out=45, in=-90] (2.25,2.5) to (1.75+\hrt+\side,2.5) to (1.75+\hrt+\side,0);
\draw(0.25,0) to [out=90, in=-135] (1,1) to(1.5,1.5) to [out=45, in=-90]  (2.25,2.5);
\path[fill=white,zig, decorate] (0.75-\hrt,2.5) to (0.75-\hrt,1.8) to (0.75,1.8) to (0.75,2.5);
\path[redregion,zig,decorate,draw] (0.75-\hrt,2.5) to (0.75-\hrt,1.8) to (0.75,1.8) to (0.75,2.5);
\path[fill=white,draw] (0.65- \hrt ,1.6) rectangle (0.85,2);
\node[scale=0.8] at (0.75-0.5*\hrt,1.8) {$C$};
\path[fill=white,zig, decorate] (1.75,0) to (1.75,0.7) to (1.75+\hrt,0.7) to (1.75+\hrt,0);
\path[greenregion,zig,decorate,draw]  (1.75,0) to (1.75,0.7) to (1.75+\hrt,0.7) to (1.75+\hrt,0.);
\path[fill=white,draw] (1.65 ,0.5) rectangle (1.85+\hrt,0.9);
\node[scale=0.8] at (1.75+0.5*\hrt,0.7) {$D$};
\node[scale=0.8] at (0.75-\hrt, 2.8) {$\Sigma'$};
\node[scale=0.8] at (0.75, 2.8) {$\Sigma\vphantom{\Sigma'}$};
\node[scale=0.8] at (1.75, -0.3) {$\Delta'$};
\node[scale=0.8] at (1.75+\hrt, -0.3) {$\Delta\vphantom{\Delta'}$};
\end{tz}
\end{calign}
\end{definition}

\noindent
(Note that our usage of ``unitary'' is that of \autoref{def:unitary2morphism}.) That is, a biunitary is minor reversible if it is invertible with respect to biunitary composition along the minor diagonal direction $\begin{tz}[string,scale=0.25]
\draw (0,0 ) to (0.75,1);
\end{tz}$. Similarly, we say that a biunitary is \textit{major reversible} if it is invertible with respect to biunitary composition along the major diagonal direction $\begin{tz}[string,scale=0.25,xscale=-1]
\draw (0,0 ) to (0.75,1);
\end{tz}$.

For example, a unitary $U$ can be seen both as a minor reversible biunitary and a major reversible biunitary, respectively, depending on the chosen partition of the input and output wires:
\begin{calign}\begin{tz}[string]
\path[redregion] (0.25,0) to [out=90, in=-135] (1,1) to [out=45, in=-90] (1.75,2) to (0.25-\side,2) to (0.25-\side,0);
\path[greenregion] (0.25,0) to [out=90, in=-135] (1,1) to [out=45, in=-90] (1.75,2) to (1.75+\side,2) to (1.75+\side,0);
\draw(0.25,0) to [out=90, in=-135] (1,1) to [out=45, in=-90] (1.75,2);
\node[blob] at (1,1) {$U$};
\end{tz}
&
\begin{tz}[string,xscale=-1]
\path[redregion] (0.25,0) to [out=90, in=-135] (1,1) to [out=45, in=-90] (1.75,2) to (0.25-\side,2) to (0.25-\side,0);
\path[greenregion] (0.25,0) to [out=90, in=-135] (1,1) to [out=45, in=-90] (1.75,2) to (1.75+\side,2) to (1.75+\side,0);
\draw(0.25,0) to [out=90, in=-135] (1,1) to [out=45, in=-90] (1.75,2);
\node[blob] at (1,1) {$U$};
\end{tz}
\end{calign}
If $F$ is minor reversible, then since $A$ and $C$ are invertible, it follows that the non-negative integer-valued matrix $\sigma:= (\dim(\Sigma_{a,b}) )_{a,b}$ is invertible with non-negative integer-valued inverse $\sigma' = (\dim(\Sigma'_{b,a}))_{b,a}$. In this case $\Sigma$ defines a bijection on the label sets of the two adjacent regions, and we say that $\Sigma$ is an \textit{equivalence}, drawing it as follows:

\begin{equation}
\begin{tz}[string,scale=0.8]
\path[redregion] decorate[zig]{(1,0) to(1,2)} to (0,2) to (0,0);
\path[yellowregion] decorate[zig] {(1,0) to(1,2)} to (2,2) to (2,0);
\draw[zig,decorate] (1,0) to (1,2);
\node[scale=0.8] at (1,2.3) {$\Sigma$};
\node[dimension, below right] at (0,1.75) {$\smash{a{:}n}$};
\node[dimension, below left] at (2,1.75) {$\smash{b{:}n}$};
\end{tz} 
\hspace{30pt}\leftrightsquigarrow \hspace{30pt} \delta_{\sigma(a),b}
\end{equation}

It follows that a minor-reversible biunitary with no shaded region is simply a unitary
\begin{equation} \begin{tz}[string]
\draw (0.25,0) to [out= up, in =-135] (1,1) to [out= 45,in = down] (1.75,2);
\node[blob] at (1,1) {$U$};
\end{tz}
\end{equation}

\noindent
with $\Sigma$ and $\Delta$ being the identity bijections between 1\-element sets. A minor-reversible biunitary of the form 
\begin{equation}\begin{tz}[string]
\path[blueregion] (0.25,0) to[out = up, in =-135] (1,1) to [out= 45, in = down] (1.75,2) to (0.25-\side,2) to (0.25-\side,0);
\path[bnd] (0.25,0) to[out = up, in =-135] (1,1) to [out= 45, in = down] (1.75,2);
\path[draw,zig,decorate] (1,1) to [out= 135, in = down] (0.25,2);

\node[blob] at (1,1) {$\lambda$};
\node[scale=0.8] at (0.25, 2.3) {$\Sigma$};
\end{tz}\end{equation}
corresponds to a controlled family of scalars $\{\lambda_a\in U(1)\,|\,a\in \mathcal{A}\}$ with $\Sigma$ acting as a permutation on the index set.

We make the following observation.
\begin{proposition}
\label{lem:bijectionbiunitary}
If $\Sigma$ is an equivalence, then the following vertex is biunitary:
\begin{equation}
\begin{tz}[string]
\path[redregion] 
decorate[zig] {(0.25,0) to [out=90, in=-90] (1,1) to [out=90, in=-90] (0.25,2)}
to (0.25-\side,2) to (0.25-\side,0);
\path[yellowregion]
decorate[zig]{(0.25,0) to [out=90, in=-90] (1,1) to [out=90, in=-90] (0.25,2)}
to (1.75+\side,2) to (1.75+\side,0);
\draw[zig,decorate] (0.25,0) to [out=90, in=-90] (1,1) to [out=90, in=-90] (0.25,2);
\node[scale=0.8] at (0.25,-0.3) {$\Sigma$};
\node[scale=0.8] at (0.25,2.3) {$\Sigma$};
\end{tz}
\end{equation}
\end{proposition}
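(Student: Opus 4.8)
The plan is to observe that the displayed vertex is nothing but the identity $2$-morphism on the $1$-morphism $\Sigma$, suitably bent, and then to invoke the material of \autoref{sec:biunitarity}. In the elementary description, fix labels $a$ and $b$ for the red and yellow regions; the wire $\Sigma$ then carries the Hilbert space $\Sigma_{a,b}$, equal to $\mathbb{C}$ when $b=\sigma(a)$ and to $0$ otherwise, and the vertex is the family of linear maps whose $(a,b)$-component is $\mathrm{id}_{\Sigma_{a,b}}$. Conceptually: $\Sigma$ is obtained from the blank vertical wire by relabelling its two adjacent regions along the bijection $\sigma$, and the vertex is the corresponding relabelling of the identity $2$-morphism, so that the defining equations of biunitarity for the blank wire transport verbatim.

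First I would check unitarity in the sense of \eqref{eq:unitarity}: the vertical composite of the vertex with its dagger has $(a,b)$-component $\mathrm{id}_{\Sigma_{a,b}}^{\dagger}\,\mathrm{id}_{\Sigma_{a,b}}=\mathrm{id}_{\Sigma_{a,b}}$, which is exactly the identity $2$-morphism on the $\Sigma$ wire, and the composite in the opposite order is identical. In particular the vertex is a unitary vertex all of whose input and output wires lie to one side, so it is already biunitary by the trivial-partition observation \eqref{eq:unitarybiunitary}. To pin down the scalar, I would then compute the clockwise quarter rotation: bending the two $\Sigma$ wires around by means of \eqref{eq:cupsandcaps} and \eqref{eq:vanillasnake} yields a vertex of the same shape whose underlying wire is the transpose of the permutation matrix $\Sigma$, namely the equivalence associated to $\sigma^{-1}$. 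By the first step this rotation is again unitary, with no accompanying scalar, so \autoref{thm:rotationfactor} gives that the vertex is biunitary with $\lambda=1$.

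There is essentially no obstacle here; the one point that wants a moment's care is the rotation step, where one must verify that bending a $\Sigma$ wire around produces the wire for $\sigma^{-1}$ and not some conjugate of it. This is immediate once one unpacks $\Sigma$ as a permutation matrix of copies of $\mathbb{C}$ and applies the snake equations \eqref{eq:vanillasnake}: on each fibre one is only ever manipulating an identity map on a space of dimension $0$ or $1$, so there is no room for anything to go wrong.
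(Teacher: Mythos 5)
Your appeal to \eqref{eq:unitarybiunitary} does not apply here, and this is where the argument breaks down. That observation concerns a unitary vertex whose single wire passes \emph{diagonally} through the vertex, from bottom-left to top-right or its mirror image; for such a vertex the vertical and horizontal partitions of the boundary coincide, so horizontal unitarity is literally a restatement of vertical unitarity. The $\Sigma$-vertex in the proposition is of a different shape: the red region sits to the left, the yellow region fills the remaining three sides, and \emph{both} nontrivial wires emanate to the left of the implicit vertex at the apex of the bow. Here the horizontal partition is $\{\Sigma,\Sigma\}\mid\{\ \}$, which is genuinely different from the vertical one $\{\Sigma\}\mid\{\Sigma\}$, and horizontal unitarity is a substantive extra condition. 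It is exactly where the bijectivity of $\sigma$ enters: the paper's proof writes out the two horizontal composites explicitly and finds $1=\sum_{r\in[n]}\delta_{\sigma(r),i}$ (surjectivity) and $\delta_{i,\sigma(k)}\delta_{k,r}\delta_{\sigma(k),j}=\delta_{i,j}\delta_{\sigma(k),j}\delta_{\sigma(r),j}$ (injectivity). Your ``transport verbatim'' intuition is true precisely because one relabels by a \emph{bijection}, and these are the identities that encode it; they are not free, and your closing remark that ``there is essentially no obstacle'' glosses over the entire content of the proposition.

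The rotation route you then offer, via \autoref{thm:rotationfactor}, is a legitimate alternative to the paper's direct check, but your description of the rotation is also not right. The clockwise quarter rotation bends the top-left and bottom-right wires; for the $\Sigma$-vertex only one of these is nontrivial, and the result is a cup/cap-type vertex (a co-evaluation for $\Sigma \dashv \Sigma^*$), not another bowed wire ``of the same shape whose underlying wire is $\sigma^{-1}$''. Showing that this cup/cap is proportional to a unitary is again exactly the requirement that $\Sigma^*\circ\Sigma\cong\mathrm{id}$ and $\Sigma\circ\Sigma^*\cong\mathrm{id}$ hold unitarily, which is the same constraint the paper verifies directly. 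So either route requires doing the horizontal computation; you cannot inherit it from vertical unitarity, and you cannot inherit it from the first step's unitarity of the unrotated vertex.
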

\begin{proof}

Vertical unitarity is immediate, and horizontal unitarity follows from these calculations:
\begin{align*}
\hspace{-30pt}
\begin{tz}[string,scale=0.9]
\path[yellowregion] (0.25-\side, 0) rectangle (1.75+\side, 2);
\node[dimension, below right] at (0.25-\side,1.75) {$\smash{i{:}n}$};
\end{tz}
\,&=\,
\begin{tz}[string,scale=0.9]
\path[yellowregion] (0.25-\side, 0) rectangle (1.75+\side, 2);
\path[fill=white,zig,decorate] (1,1) circle (0.75);
\path[redregion,draw,zig,decorate] (1,1) circle (0.75);
\node[dimension, below right] at (0.25-\side,1.75) {$\smash{i{:}n}$};
\node[dimension] at (1,1) {$r{:}n$};
\node[scale=0.8,right] at (1.75,1) {$\Sigma$};
\end{tz}\hspace{10pt} &&\leftrightsquigarrow&\hspace{10pt}
&1\,=\,\sum_{r\in[n]}\delta_{\sigma(r),i}\hspace{10pt} \forall i\in[n]
\\
\begin{tz}[string,scale=0.9]
\clip (0.25-\side, -0.5) rectangle (1.75+\side, 2.5);
\path[yellowregion] (0.25-\side, 0) rectangle (1.75+\side, 2);
\path[fill=white]decorate[zig]{ (0.25,0)  to +(0,2)} to (1.75,2) 
decorate[zig]{ to (1.75,0)};
\path[redregion]decorate[zig]{ (0.25,0)  to +(0,2)} to (1.75,2) 
decorate[zig]{ to (1.75,0)};
\path[draw,zig,decorate] (0.25,0) to +(0,2);
\path[draw,zig,decorate] (1.75,2) to +(0,-2);
\node[dimension, below right] at (0.25-\side,1.75) {$\smash{i{:}n}$};
\node[dimension, below left] at (1.75+\side,1.75) {$\smash{j{:}n}$};
\node[scale=0.8]at  (1.75,-0.3) {$\Sigma^{\vphantom{-1}}$};
\node[scale=0.8]at  (0.25,-0.3) {$\Sigma^{-1}$};
\node[scale=0.8] at (0.25,2.3) {$\Sigma^{-1}$};
\node[scale=0.8] at (1.75,2.3) {$\Sigma^{\vphantom{-1}}$};
\node[dimension, below] at (1,1.75) {$\smash{k{:}n}$};
\node[dimension, below] at (1,0.15) {$\smash{r{:}n}$};
\end{tz}
\,&=\,
\begin{tz}[string,scale=0.9]
\clip (0.25-\side, -0.5) rectangle (1.75+\side, 2.5);
\path[yellowregion] (0.25-\side, 0) rectangle (1.75+\side, 2);
\path[fill=white,zig,decorate] (0.25,0) to [out= up, in =left] (1,0.75) to [out= right, in = up] (1.75,0);
\path[redregion,draw,zig,decorate] (0.25,0) to [out= up, in =left] (1,0.75) to [out= right, in = up] (1.75,0);
\path[fill=white,zig,decorate] (0.25,2) to [out= down, in =left] (1,1.25) to [out= right, in = down] (1.75,2);
\path[redregion,draw,zig,decorate] (0.25,2) to [out= down, in =left] (1,1.25) to [out= right, in = down] (1.75,2);
\node[dimension, below right] at (0.25-\side,1.75) {$\smash{i{:}n}$};
\node[dimension, below left] at (1.75+\side,1.75) {$\smash{j{:}n}$};
\node[dimension, below] at (1,1.75) {$\smash{k{:}n}$};
\node[dimension, below] at (1,0.15) {$\smash{r{:}n}$};
\node[scale=0.8] at (1.75,-0.3) {$\Sigma^{\vphantom{-1}}$};
\node[scale=0.8] at (1.75,2.3) {$\Sigma^{\vphantom{-1}}$};
\node[scale=0.8] at (0.25,-0.3) {$\Sigma^{-1}$};
\node[scale=0.8] at (0.25,2.3) {$\Sigma^{-1}$};
\end{tz}\hspace{10pt} &&\leftrightsquigarrow&\hspace{10pt}
&\delta_{i,\sigma(k)} \delta_{k,r} \delta_{\sigma(k),j}\,
\\[-1.1cm]
&&&&&\hspace{0cm}=\,\delta_{i,j} \delta_{\sigma(k),j}\delta_{\sigma(r),j}\hspace{10pt} \forall i,j,k,r\in[n]
\end{align*}
\end{proof}

We are now ready to state the definition of equivalence of biunitaries, in which the unitaries in \eqref{eq:gaugeEquivalent} are replaced by minor- and major-reversible biunitaries.

\begin{definition} Two biunitaries $U,V$ are \textit{equivalent} if there exist minor-reversible biunitaries $B,C$ and major-reversible biunitaries $A,D$
\begin{calign}
\hspace{-30pt}
\begin{tz}[string]
\path[yellowregion] decorate[zig]{(0.25,2) to [out=-90, in=135] (0.5,1.5) to (1,1)} to (0.5,0.5) to [out= -135, in=90] (0.25,0) to (0.25-\side,0) to (0.25-\side,2);
\path[blueregion] (1.75,2) to [out=-90, in=45] (1.5,1.5) to  (1,1) 
decorate[zig] { to (1.5,0.5) to [out= -45, in=90] (1.75,0)}
 to  (1.75+\side,0) to (1.75+\side,2);
\path[blueregion,draw] 
(0.25,0) to [out=90, in=-135] (0.5,0.5)to (1,1)
decorate[zig]{ to (1.5,0.5) to [out=-45, in=90] (1.75,0)};
\path[yellowregion,draw] decorate[zig]{(0.25,2) to [out=-90, in=135] (0.5,1.5) to (1,1)}
 to  (1.5,1.5) to [out=45, in=-90] (1.75,2);
\node[blob] at (1,1) {$B$};
\node[scale=0.8] at (1.75,-0.3) {$\Delta^{-1}$};
\node[scale=0.8] at (0.25,2.3) {$\Sigma ^{-1}$};
\end{tz}
&
\begin{tz}[string]
\path[redregion] decorate[zig]{(0.25,2) to [out=-90, in=135] (0.5,1.5) to (1,1)} to (0.5,0.5) to [out= -135, in=90] (0.25,0) to (0.25-\side,0) to (0.25-\side,2);
\path[greenregion] (1.75,2) to [out=-90, in=45] (1.5,1.5) to  (1,1) 
decorate[zig] { to (1.5,0.5) to [out= -45, in=90] (1.75,0)}
 to (1.75+\side,0) to (1.75+\side,2);
\path[greenregion,draw] 
(0.25,0) to [out=90, in=-135] (0.5,0.5)to (1,1)
decorate[zig]{ to (1.5,0.5) to [out=-45, in=90] (1.75,0)};
\path[redregion,draw] decorate[zig]{(0.25,2) to [out=-90, in=135] (0.5,1.5) to (1,1)}
 to  (1.5,1.5) to [out=45, in=-90] (1.75,2);
\node[blob] at (1,1) {$C$};
\node[scale=0.8] at (1.75,-0.3) {$\Lambda $};
\node[scale=0.8] at (0.25,2.3) {$\Theta$};
\end{tz}
&
\begin{tz}[string,xscale=-1]
\path[yellowregion] decorate[zig]{(0.25,2) to [out=-90, in=135] (0.5,1.5) to (1,1)} to (0.5,0.5) to [out= -135, in=90] (0.25,0) to (0.25-\side,0) to (0.25-\side,2);
\path[redregion] (1.75,2) to [out=-90, in=45] (1.5,1.5) to  (1,1) 
decorate[zig] { to (1.5,0.5) to [out= -45, in=90] (1.75,0)}
 to  (1.75+\side,0) to (1.75+\side,2);
\path[redregion,draw] 
(0.25,0) to [out=90, in=-135] (0.5,0.5)to (1,1)
decorate[zig]{ to (1.5,0.5) to [out=-45, in=90] (1.75,0)};
\path[yellowregion,draw] decorate[zig]{(0.25,2) to [out=-90, in=135] (0.5,1.5) to (1,1)}
 to  (1.5,1.5) to [out=45, in=-90] (1.75,2);
\node[blob] at (1,1) {$A$};
\node[scale=0.8] at (1.75,-0.3) {$\Theta$};
\node[scale=0.8] at (0.25,2.3) {$\Sigma$};
\end{tz}
&
\begin{tz}[string,xscale=-1]
\path[blueregion] decorate[zig]{(0.25,2) to [out=-90, in=135] (0.5,1.5) to (1,1)} to (0.5,0.5) to [out= -135, in=90] (0.25,0) to  (0.25-\side,0) to (0.25-\side,2);
\path[greenregion] (1.75,2) to [out=-90, in=45] (1.5,1.5) to  (1,1) 
decorate[zig] { to (1.5,0.5) to [out= -45, in=90] (1.75,0)}
 to (1.75+\side,0) to (1.75+\side,2);
\path[greenregion,draw] 
(0.25,0) to [out=90, in=-135] (0.5,0.5)to (1,1)
decorate[zig]{ to (1.5,0.5) to [out=-45, in=90] (1.75,0)};
\path[blueregion,draw] decorate[zig]{(0.25,2) to [out=-90, in=135] (0.5,1.5) to (1,1)}
 to  (1.5,1.5) to [out=45, in=-90] (1.75,2);
\node[blob] at (1,1) {$D$};
\node[scale=0.8] at (1.75,-0.3) {$\Lambda^{-1}$};
\node[scale=0.8] at (0.25,2.3) {$\Delta^{-1}$};
\end{tz}\end{calign}
such that the following equation holds: 
\begin{equation}
\label{eq:biunitaryequivalence}
\begin{tz}[string,scale=2.5/2*1.5]
\path[redregion] (0.25,0) to [out=90, in=-135] (0.5,0.5) to (1,1) to (0.5,1.5) to [out= 135, in=-90] (0.25,2) to (0.25-0.8*\side,2) to (0.25-0.8*\side,0);
\path[blueregion] (1.75,0) to [out=90, in=-45] (1.5,0.5) to  (1,1) to (1.5,1.5) to [out= 45, in=-90] (1.75,2) to (1.75+0.8*\side,2) to (1.75+0.8*\side,0);
\path[greenregion,draw] (0.25,0) to [out=90, in=-135] (0.5,0.5)to (1,1) to (1.5,0.5) to [out=-45, in=90] (1.75,0);
\path[yellowregion,draw] (0.25,2) to [out=-90, in=135] (0.5,1.5) to (1,1) to  (1.5,1.5) to [out=45, in=-90] (1.75,2);
\node[blob] at (1,1) {$V$};
\end{tz}
\eqgap = \eqgap
 \begin{tz}[string,scale=1.5]
\path[redregion] (0.5,0.25) to [out=90, in=-135] (1,1) to (1.5,1.5) to (1,2) to [out=135, in=-90] (0.5,2.75) to (0.5-\side,2.75) to (0.5-\side,0.25);
\path[blueregion] (2.5,0.25) to [out=90, in=-45] (2,1) to (1.5,1.5) to (2,2) to [out=45, in=-90] (2.5,2.75) to (2.5+\side,2.75) to (2.5+\side,0.25);
\path[greenregion,draw] (0.5,0.25) to [out=90, in=-135] (1,1) to (1.5,1.5) to (2,1) to [out=-45, in=90] (2.5,0.25);
\path[yellowregion,draw] (0.5,2.75) to [out=-90, in=135] (1,2) to (1.5,1.5) to (2,2) to [out=45, in=-90] (2.5,2.75);

\draw[zig,decorate] (1,1) to [out=-45, in=180] (1.5,0.65) to [out=0, in=-135] (2,1) to [out=45, in=-90] (2.35,1.5) to [out=90, in=-45] (2,2) to [out=135, in=0] (1.5,2.35) to [out=180, in=45] (1,2) to [out=-135, in=90] (0.65,1.5) to [out=-90, in=135] (1,1);
\node[blob] at (1.5,1.5) {$U$};
\node[blob] at (1,1) {$C$};
\node[blob] at (1,2) {$A$};
\node[blob] at (2,1) {$D$};
\node[blob] at (2,2){$B$};
\node[scale=0.8] at (1.5, 2.5) {$\Sigma$};
\node[scale=0.8] at (0.5,1.5) {$\Theta$};
\node[scale=0.8] at (2.5,1.5) {$\Delta$};
\node[scale=0.8] at (1.5,0.5) {$\Lambda$};
\end{tz}
\end{equation}
It is easy to check that this defines an equivalence relation on the set of biunitaries. Note that the right-hand side of \eqref{eq:biunitaryequivalence} is a composite of 9 biunitaries, thanks to \autoref{lem:bijectionbiunitary}.
\end{definition}

\subsection{Equivalence for quantum structures}
\label{sec:equivalenceapplications}

This leads to the following notions of equivalence of Hadamard matrices, unitary error bases and quantum Latin squares, agreeing with the respective notions proposed in the literature  \cite{Tadej:2006,Klappenecker:2003,Musto:2015}.

\paragraph{Hadamard matrices.} Two Hadamard matrices $H$ and $W$ are equivalent if the following equation holds: 
\begin{equation}\begin{tz}[string,scale=2.5/2*1.25]
\path[blueregion] (0.25,0) to [out=90, in=-135] (0.5,0.5) to (1,1) to (0.5,1.5) to [out= 135, in=-90] (0.25,2) to (0.25-0.8*\side,2) to (0.25-0.8*\side,0);
\path[blueregion] (1.75,0) to [out=90, in=-45] (1.5,0.5) to  (1,1) to (1.5,1.5) to [out= 45, in=-90] (1.75,2) to (1.75+0.8*\side,2) to (1.75+0.8*\side,0);
\path[bnd] (0.25,0) to [out=90, in=-135] (0.5,0.5)to (1,1) to (1.5,0.5) to [out=-45, in=90] (1.75,0);
\path[bnd] (0.25,2) to [out=-90, in=135] (0.5,1.5) to (1,1) to  (1.5,1.5) to [out=45, in=-90] (1.75,2);
\node[blob] at (1,1) {$W$};
\end{tz}\eqgap=\eqgap
 \begin{tz}[string,scale=1.25]
\path[blueregion] (0.5,0.25) to [out=90, in=-135] (1,1) to (1.5,1.5) to (1,2) to [out=135, in=-90] (0.5,2.75) to (0.5-\side,2.75) to (0.5-\side,0.25);
\path[blueregion] (2.5,0.25) to [out=90, in=-45] (2,1) to (1.5,1.5) to (2,2) to [out=45, in=-90] (2.5,2.75) to (2.5+\side,2.75) to (2.5+\side,0.25);
\path[bnd] (0.5,0.25) to [out=90, in=-135] (1,1) to (1.5,1.5) to (2,1) to [out=-45, in=90] (2.5,0.25);
\path[bnd] (0.5,2.75) to [out=-90, in=135] (1,2) to (1.5,1.5) to (2,2) to [out=45, in=-90] (2.5,2.75);

\draw[zig,decorate]  (2,1) to [out=45, in=-90] (2.35,1.5) to [out=90, in=-45] (2,2) ;
\draw[zig,decorate] (1,2) to [out=-135, in=90] (0.65,1.5) to [out=-90, in=135] (1,1);
\node[blob] at (1.5,1.5) {$H$};
\node[blob] at (1,1) {$\mu$};
\node[blob] at (1,2) {$\lambda$};
\node[blob] at (2,1) {$\beta$};
\node[blob] at (2,2){$\alpha$};
\node[scale=0.8] at (0.45,1.5){$\sigma$};
\node[scale=0.8] at (2.55,1.5) {$\tau$};
\end{tz}
\end{equation}
Thus, $H$ and $W$ are equivalent if there are scalars $\lambda_a,\mu_a,\alpha_b$ and $\beta_b$ and permutations $\sigma,\tau \in S_n$ such that 
\begin{equation}W_{a,b} = \lambda_a\mu_a H_{\sigma(a),\tau(b)}\alpha_b\beta_b.
\end{equation}
Redefining $c_a:= \lambda_a\mu_a$ and $d_b:= \alpha_b\beta_b$, this becomes equivalent to the usual notion of equivalence of Hadamard matrices $H$ and $W$: there are scalars $c_a, d_b \in U(1)$ and permutations $\sigma,\tau \in S_n$ such that 
\begin{equation}W_{a,b} = c_a d_b H_{\sigma(a),\tau(b)}.\end{equation}

\paragraph{Unitary error bases.} Two unitary error bases
\begin{calign}
\mathcal{U} = \left\{ U_i~|~i\in [n^2] \right\}
&
\mathcal{V} = \left\{ V_i~|~i\in[ n^2] \right\}
\end{calign}
are equivalent if the following holds:
\begin{equation}\label{eq:equivUEB}\begin{tz}[string,scale=2.5/2*1.25]
\path[blueregion] (0.25,0) to [out=90, in=-135] (0.5,0.5) to (1,1) to (0.5,1.5) to [out= 135, in=-90] (0.25,2) to (0.25-0.8*\side,2) to (0.25-0.8*\side,0);
\path[bnd] (0.25,0) to [out= up, in = -135] (0.5,0.5) to (1,1) to (0.5,1.5) to [out= 135, in = down] (0.25,2);
\path[draw] (1.75,0) to [out= up, in = -45] (1.5,0.5) to (1,1) to (1.5,1.5) to [out= 45, in = down] (1.75,2);
\node[blob] at (1,1) {$V$};
\end{tz}\eqgap=\eqgap
 \begin{tz}[string,scale=1.25]
\path[blueregion] (0.5,0.25) to [out=90, in=-135] (1,1) to (1.5,1.5) to (1,2) to [out=135, in=-90] (0.5,2.75) to (0.5-\side,2.75) to (0.5-\side,0.25);
\path[bnd] (0.5,0.25) to [out=90, in=-135] (1,1) to (1.5,1.5) to (1,2) to [out=135, in=-90] (0.5,2.75) ;

\path[draw] (2.5,0.25) to [out = up, in = -45] (2,1) to (1.5,1.5) to (2,2) to[out=45, in = down] (2.5,2.75);

\draw[zig,decorate] (1,2) to [out=-135, in=90] (0.65,1.5) to [out=-90, in=135] (1,1);
\node[blob] at (1.5,1.5) {$U$};
\node[blob] at (1,1) {$\mu$};
\node[blob] at (1,2) {$\lambda$};
\node[blob] at (2,1) {$B$};
\node[blob] at (2,2){$A$};
\node[scale=0.8] at (0.45,1.5){$\sigma$};
\end{tz},
\end{equation} That is, they are equivalent if there are unitary matrices $A,B$, scalars $c_i \in U(1)$ and a permutation $\sigma \in S_{n^2}$ such that
\begin{equation}V_i = c_i A U_{\sigma(i)} B.\end{equation}

\paragraph{Quantum Latin squares.} Two quantum Latin squares $Q$ and $P$ are equivalent if the following holds:
\begin{equation}
 \begin{tz}[string,scale=1.5]
\path[blueregion] (0.575,0.25) to [out=90, in=-145]  (1.5,1.5) to(1,2) to [out=135, in=-90] (0.5,2.75) to (0.5-0.8*\side,2.75) to (0.5-0.8*\side,0.25);
\draw[bnd](0.575,0.25) to [out=90, in=-145]  (1.5,1.5) to(1,2) to [out=135, in=-90] (0.5,2.75);
\path[blueregion,bnd] (0.675,0.25) to [out=90, in=-125]  (1.5,1.5) to [out=-55, in=90] (2.5,0.25);
\path[draw]  (1.5,1.5) to (2,2) to [out=45, in=-90] (2.5,2.75);

\node[blob] at (1.5,1.5) {$P$};
\end{tz}\eqgap=\eqgap
 \begin{tz}[string,scale=1.5]
\path[blueregion] (0.575,0.25) to [out=90, in=-145]  (1.5,1.5) to(1,2) to [out=135, in=-90] (0.5,2.75) to (0.5-0.8*\side,2.75) to (0.5-0.8*\side,0.25);
\draw[bnd](0.575,0.25) to [out=90, in=-145]  (1.5,1.5) to(1,2) to [out=135, in=-90] (0.5,2.75);

\path[blueregion,bnd] (0.675,0.25) to [out=90, in=-125]  (1.5,1.5) to [out=-55, in=90] (2.5,0.25);
\path[draw] (1.5,1.5) to (2,2) to [out=45, in=-90] (2.5,2.75);

\draw[zig,decorate] (1,1) to [out=-45, in=180] (1.5,0.65) to [out=0, in=-135] (2,1);
\draw[zig,decorate] (1,2) to [out=-135, in=90] (0.65,1.5) to [out=-90, in=135] (1,1);
\node[blob] at (1.5,1.5) {$Q$};
\node[blob] at (1,1) {$\lambda$};
\node[blob] at (1,2) {$\alpha$};
\node[blob] at (2,1) {$\beta$};
\node[blob] at (2,2){$U$};
\node[scale=0.8] at (0.45,1.5){$\sigma$};
\node[scale=0.8] at (1.5,0.45){$\tau$};
\end{tz}
\end{equation}
That is, they are equivalent if there is a unitary matrix $U$, scalars $c_{a,b}\in U(1)$ and permutations $\sigma,\tau \in S_n$ such that the following holds:
\begin{equation}\ket{P_{a,b}} = c_{a,b}U\ket{Q_{\sigma(a),\tau(b)}}\end{equation}
\\

Equivalence of controlled families of quantum structures can be defined in a similar way.\\
It is instructive to consider how the notion of equivalence interacts with composition of biunitaries. Consider two pairs of equivalent biunitaries of the following type:
\begin{calign}
\begin{tz}[string,scale=2.5/2*0.85]
\path[redregion] (0.25,0) to [out=90, in=-135] (0.5,0.5) to (1,1) to (0.5,1.5) to [out= 135, in=-90] (0.25,2) to (0.25-0.8*\side,2) to (0.25-0.8*\side,0);
\path[blueregion] (1.75,0) to [out=90, in=-45] (1.5,0.5) to  (1,1) to (1.5,1.5) to [out= 45, in=-90] (1.75,2) to (1.75+0.8*\side,2) to (1.75+0.8*\side,0);
\path[greenregion,draw] (0.25,0) to [out=90, in=-135] (0.5,0.5)to (1,1) to (1.5,0.5) to [out=-45, in=90] (1.75,0);
\path[yellowregion,draw] (0.25,2) to [out=-90, in=135] (0.5,1.5) to (1,1) to  (1.5,1.5) to [out=45, in=-90] (1.75,2);
\node[blob] at (1,1) {$U$};
\end{tz}
\,=\,
 \begin{tz}[string,scale=0.85]
\path[redregion] (0.5,0.25) to [out=90, in=-135] (1,1) to (1.5,1.5) to (1,2) to [out=135, in=-90] (0.5,2.75) to (0.5-\side,2.75) to (0.5-\side,0.25);
\path[blueregion] (2.5,0.25) to [out=90, in=-45] (2,1) to (1.5,1.5) to (2,2) to [out=45, in=-90] (2.5,2.75) to (2.5+\side,2.75) to (2.5+\side,0.25);
\path[greenregion,draw] (0.5,0.25) to [out=90, in=-135] (1,1) to (1.5,1.5) to (2,1) to [out=-45, in=90] (2.5,0.25);
\path[yellowregion,draw] (0.5,2.75) to [out=-90, in=135] (1,2) to (1.5,1.5) to (2,2) to [out=45, in=-90] (2.5,2.75);
\draw[zig,decorate] (1,1) to [out=-45, in=180] (1.5,0.65) to [out=0, in=-135] (2,1) to [out=45, in=-90] (2.35,1.5) to [out=90, in=-45] (2,2) to [out=135, in=0] (1.5,2.35) to [out=180, in=45] (1,2) to [out=-135, in=90] (0.65,1.5) to [out=-90, in=135] (1,1);
\node[blob] at (1.5,1.5) {$U'$};
\node[blob] at (1,1) {$C$};
\node[blob] at (1,2) {$A$};
\node[blob] at (2,1) {$D$};
\node[blob] at (2,2){$B$};
\end{tz}
&
\begin{tz}[string,scale=2.5/2*0.85]
\path[yellowregion] (0.25,0) to [out=90, in=-135] (0.5,0.5) to (1,1) to (0.5,1.5) to [out= 135, in=-90] (0.25,2) to (0.25-0.8*\side,2) to (0.25-0.8*\side,0);
\path[orangeregion] (1.75,0) to [out=90, in=-45] (1.5,0.5) to  (1,1) to (1.5,1.5) to [out= 45, in=-90] (1.75,2) to  (1.75+0.8*\side,2) to (1.75+0.8*\side,0);
\path[blueregion,draw] (0.25,0) to [out=90, in=-135] (0.5,0.5)to (1,1) to (1.5,0.5) to [out=-45, in=90] (1.75,0);
\path[cyanregion,draw] (0.25,2) to [out=-90, in=135] (0.5,1.5) to (1,1) to  (1.5,1.5) to [out=45, in=-90] (1.75,2);
\node[blob] at (1,1) {$V$};
\end{tz}\,=\,
 \begin{tz}[string,scale=0.85]
\path[yellowregion] (0.5,0.25) to [out=90, in=-135] (1,1) to (1.5,1.5) to (1,2) to [out=135, in=-90] (0.5,2.75) to  (0.5-\side,2.75) to (0.5-\side,0.25);
\path[orangeregion] (2.5,0.25) to [out=90, in=-45] (2,1) to (1.5,1.5) to (2,2) to [out=45, in=-90] (2.5,2.75) to (2.5+\side,2.75) to (2.5+\side,0.25);
\path[blueregion,draw] (0.5,0.25) to [out=90, in=-135] (1,1) to (1.5,1.5) to (2,1) to [out=-45, in=90] (2.5,0.25);
\path[cyanregion,draw] (0.5,2.75) to [out=-90, in=135] (1,2) to (1.5,1.5) to (2,2) to [out=45, in=-90] (2.5,2.75);
\draw[zig,decorate] (1,1) to [out=-45, in=180] (1.5,0.65) to [out=0, in=-135] (2,1) to [out=45, in=-90] (2.35,1.5) to [out=90, in=-45] (2,2) to [out=135, in=0] (1.5,2.35) to [out=180, in=45] (1,2) to [out=-135, in=90] (0.65,1.5) to [out=-90, in=135] (1,1);
\node[blob] at (1.5,1.5) {$V'$};
\node[blob] at (1,1) {$\widetilde{B}$};
\node[blob] at (1,2) {$\widetilde{A}$};
\node[blob] at (2,1) {$\widetilde{D}$};
\node[blob] at (2,2){$\widetilde{C}$};
\end{tz}
\end{calign}
Then in general, when $B$ and $\widetilde B$ are not inverses with respect to composition along the minor diagonal, the following composites are not equivalent:
\begin{calign}
\begin{tz}[string]
\path[orangeregion] (1.75+\hrt,0) to [out=90, in=-45] (1.5,1.5) to [out=45, in=-90] (2.25,2.5) to (1.75+\hrt+\side,2.5) to (1.75+\hrt+\side,0);
\path[redregion] (0.25,0) to [out=90, in=-135] (1,1) to [out=135, in=-90] (0.75-\hrt, 2.5) to (0.75-\hrt-\side,2.5) to (0.75-\hrt-\side,0);
\path[greenregion,draw] (0.25,0) to [out=90, in=-135] (1,1) to [out=-45, in=90] (1.75,0);
\path[ blueregion,draw] (1.75,0) to [out=90, in=-45] (1,1) to (1.5,1.5) to [out=-45, in=90]  (1.75+\hrt,0);
\path[cyanregion, draw] (0.75,2.5) to [out=-90, in=135] (1.5,1.5) to [out=45, in=-90] (2.25,2.5);
\path[yellowregion,draw] (0.75,2.5) to [out=-90, in=135] (1.5,1.5) to (1,1)to [out=135, in=-90] (0.75-\hrt,2.5);
\node[blob] at (1,1) {$U$};
\node[blob] at (1.5,1.5) {$V$};
\end{tz}
&
\begin{tz}[string]
\path[orangeregion] (1.75+\hrt,0) to [out=90, in=-45] (1.5,1.5) to [out=45, in=-90] (2.25,2.5) to (1.75+\hrt+\side,2.5) to (1.75+\hrt+\side,0);
\path[redregion] (0.25,0) to [out=90, in=-135] (1,1) to [out=135, in=-90] (0.75-\hrt, 2.5) to (0.75-\hrt-\side,2.5) to (0.75-\hrt-\side,0);
\path[greenregion,draw] (0.25,0) to [out=90, in=-135] (1,1) to [out=-45, in=90] (1.75,0);
\path[ blueregion,draw] (1.75,0) to [out=90, in=-45] (1,1) to (1.5,1.5) to [out=-45, in=90]  (1.75+\hrt,0);
\path[cyanregion, draw] (0.75,2.5) to [out=-90, in=135] (1.5,1.5) to [out=45, in=-90] (2.25,2.5);
\path[yellowregion,draw] (0.75,2.5) to [out=-90, in=135] (1.5,1.5) to (1,1)to [out=135, in=-90] (0.75-\hrt,2.5);
\node[blob] at (1,1) {$U'$};
\node[blob] at (1.5,1.5) {$V'$};
\end{tz} 
\end{calign}
This behaviour was recognized by Werner~\cite{Werner:2001}, who observed that it is possible to construct inequivalent shift-and-multiply unitary error bases even when all Hadamard matrices and Latin squares come from the same equivalence classes.

It is also exploited in \Dita's construction~\cite{Dita:2004}. Consider the following equivalence transformation on the family of Hadamard matrices $K$ in \autoref{fig:binarydetailed1}(d):
\begin{equation}\begin{tz}[string,scale=3/2.5*1.2]
\path[blueregion] (0.75,2.5) to [out=-90, in=165] (1.5,1.5) to [out=-165, in=90] (0.15,0) to (0.15-\side,0) to (0.15-\side,2.5);
\draw[bnd]   (0.75,2.5) to [out=-90, in=165] (1.5,1.5) to [out=-165, in=90] (0.15,0);
\path[yellowregion] (0.35,-\ydelta) to [out=90, in=-135] (1,1) to [out=135, in=-90] (0.75-\hrt, 2.5-\ydelta) to (-0.05-\side,2.5-\ydelta) to (-0.05-\side,-\ydelta);
\draw[bnd](0.35,-\ydelta) to [out=90, in=-135] (1,1) to [out=135, in=-90] (0.75-\hrt, 2.5-\ydelta);
\path[blueregion] (2.35,2.5) to [out=-90, in=35] (1.5,1.5) to [out=-35, in=90] (1.75 + \hrt,0) to (1.75+\hrt+\side,0) to (1.75+\hrt+\side,2.5);
\draw[bnd] (2.35,2.5) to [out=-90, in=35] (1.5,1.5) to [out=-35, in=90] (1.75 + \hrt,0);
\path[ yellowregion] (1.75,-\ydelta) to [out=90, in=-45] (1,1) to [out=25, in=-115] (1.5,1.5) to [out=55, in=-90] (2.15,2.5-\ydelta) to (1.55+\hrt+\side,2.5-\ydelta) to (1.55+\hrt+\side,-\ydelta); 
\draw[bnd](1.75,-\ydelta) to [out=90, in=-45] (1,1) to [out=25, in=-115] (1.5,1.5) to [out=55, in=-90] (2.15,2.5-\ydelta);
\node[blob] at (1,1) {$J$};
\node[blob] at (1.5,1.5) {$K$};
\end{tz}
\hspace{0.5cm}\rightsquigarrow \hspace{0.5cm}
\def \xd {0.2}
\begin{tz}[string,scale=1.2]
\node[blob] (D) at (1.55,1.55) {$D$};
\path[blueregion] (1.25,3) to [out=-90, in=135] (2,2) to [out=-155, in=65] (D.center) to [out=-165, in=up]  (0.15,0) to (0.15-\side,0) to (0.15-\side,3);
\draw[bnd]   (1.25,3) to [out=-90, in=135] (2,2) to [out=-155, in=65] (D.center) to [out=-165, in=up]  (0.15,0);
\path[yellowregion] (0.35,-\xd) to [out=90, in=-135] (1,1) to [out=135, in=-90] (1.25-2*\hrt, 3-\xd) to (0.15-\xd-\side,3-\xd) to (0.15-\xd-\side,-\xd);
\draw[bnd] (0.35,-\xd) to [out=90, in=-135] (1,1) to [out=135, in=-90] (1.25-2*\hrt, 3-\xd) ;
\path[blueregion] (2.85,3) to [out=-90, in=35] (2,2) to [out=-35, in=90] (1.75 + 2*\hrt,0) to (1.75+2*\hrt+\side,0) to (1.75+2*\hrt+\side,3);
\draw[bnd] (2.85,3) to [out=-90, in=35] (2,2) to [out=-35, in=90] (1.75 + 2*\hrt,0);
\path[ yellowregion] (1.75,-\xd) to [out=90, in=-45] (1,1) to [out=25, in=-115] (D.center) to [out=25, in=- 115] (2,2) to [out=55, in=-90] (2.65,3-\xd) to (1.75-\xd+2*\hrt+\side,3-\xd) to (1.75-\xd+2*\hrt+\side,-\xd); 
\draw[bnd] (1.75,-\xd) to [out=90, in=-45] (1,1) to [out=25, in=-115] (D.center) to [out=25, in=- 115] (2,2) to [out=55, in=-90] (2.65,3-\xd);
\node[blob] at (1,1) {$J$};
\node[blob] at (2.,2.) {$K$};
\end{tz}\end{equation}
This allows us to introduce a controlled family of free scalars $D$ in the resulting Hadamard matrix, a technique used to construct continuously-parameterized families of Hadamard matrices in~\cite[Section 4]{Dita:2004}. This construction is one of the reasons why continuous families of Hadamard matrices are comparatively better understood in composite dimensions. In fact, it was conjectured by Popa~\cite{Popa:1983} that there are only finitely many inequivalent Hadamard matrices (and in particular no continuous families) in prime dimensions. This conjecture was disproven by Petrescu~\cite{Petrescu:1997} in 1997, who constructed several continuous families of Hadamard matrices in certain prime dimensions.

\section{A new unitary error basis}
\label{sec:newUEB}

Construction techniques for unitary error bases have been widely studied~\cite{Knill:1996, Klappenecker:2003, Musto:2015}. The methods proposed in the literature fall into the following two classes:
\begin{itemize}
\item Quantum shift-and-multiply (\textbf{QSM})~\cite{Musto:2015}. Requires a quantum Latin square and a family of Hadamard matrices. Generalizes the earlier shift-and-multiply (\textbf{SM)}~\cite{Werner:2001} and Hadamard (\textbf{HAD}) methods.
\item Algebraic (\textbf{ALG})~\cite{Knill:1996_2}. Requires a finite group equipped with a projective representation satisfying certain requirements.
\end{itemize}
As shown in \autoref{cor:QSM}, the quantum shift-and-multiply method is a special case of our biunitary composition method (\textbf{BC}). We thus arrive at the following Venn diagram summarising all known constructions of unitary error bases, extending a Venn diagram in~\cite{Musto:2015}:

\begin{equation}
\def\s{1.2}
\begin{aligned}
\begin{tikzpicture}[thick, scale=\s, font=\scriptsize, xscale=1, yscale=1]

\node [ellipse, draw, minimum width=\s*2cm, minimum height=\s*2cm] at (0,0) {};
\node [right] at (-1,0) {\textbf{SM}};

\node [ellipse, draw, minimum width=\s*2cm, minimum height=\s*2cm] at (1,0) {};
\node [left] at (2,0) {\textbf{HAD}};

\node [ellipse, draw, minimum width=\s*2cm, minimum height=\s*2.5cm] at (0.5,-1.25) {};
\node [above] at (0.5,-2.5) {\textbf{ALG}};

\node [ellipse, draw, minimum width=\s*4.5cm, minimum height=\s*3cm] at (0.75,0) {};
\node [below] at (1,1.5) {\textbf{QSM}};
\node at (2.45,0) {$\mathcal M$};

\node [ellipse, draw, minimum width=\s*7.5cm, minimum height=\s*5.3cm] at (1,-0.20) {};
\node [below] at (1,2.45) {\textbf{UEB}};

\node [ellipse, draw, minimum width=\s*6cm, minimum height=\s*4cm] at (1,0) {};
\node [below] at (1,2) {\textbf{BC}};
\node at (3.5,0) {$\mathcal U$};

\end{tikzpicture}
\end{aligned}
\end{equation}
In \cite{Musto:2015}, a unitary error basis $\mathcal M$ was constructed which lies in \textbf{QSM}, but outside \textbf{SM}, \textbf{HAD} and \textbf{ALG}. In this section, we construct a unitary error basis $\mathcal U$ which lies in \textbf{BC}, but outside \textbf{QSM} and \textbf{ALG}. It follows that our biunitary composition techniques are able to produce genuinely new quantum structures.

In \autoref{sec:constructingu}, we give the construction of $\mathcal U$. In \autoref{sec:notnice} we show that it is not equivalent to a UEB arising from the algebraic construction, and in \autoref{sec:notqsm} we show it is not equivalent to one arising from the quantum shift-and-multiply construction. An accompanying \textit{Mathematica} notebook is available at \href{https://arxiv.org/abs/1609.07775}{arXiv:1609.07775}.

\subsection{Constructing $\mathcal U$}
\label{sec:constructingu}

We employ the construction of \autoref{fig:higher}(a)  and \autoref{cor:4aryconstruction} for $n=2$, with the following definitions for the  (constant family consisting of the) Hadamard matrix $H$, the quantum Latin squares $P$ and $Q$, and the unitary error basis $\mathcal{V}$:
\begin{align}
H&:= \left(
\begin{array}{cccc}
 1 & 1 & 1 & 1 \\
 1 & i & \minus 1 & \minus i \\
 1 & \minus 1 & 1 & \minus 1 \\
 1 & \minus i & \minus 1 & i \\
\end{array}
\right)
\\[7pt]
P &:= \grid{\ket{1} & \ket{2} & \ket{3} & \ket{4}
\\\hline
\frac{1}{\sqrt{2}}(\ket{2}-\ket{3})
& \frac{1}{\sqrt{5}}(i\ket{1}+2\ket{4})
& \frac{1}{\sqrt{5}}(2\ket{1}+i\ket{4})
& \frac{1}{\sqrt{2}}(\ket{2}+\ket{3})
\\\hline
\frac{1}{\sqrt{2}}(\ket{2}+\ket{3})
& \frac{1}{\sqrt{5}}(2\ket{1}+i\ket{4})
& \frac{1}{\sqrt{5}}(i\ket{1}+2\ket{4})
& \frac{1}{\sqrt{2}}(\ket{2}-\ket{3})
\\\hline
\ket{4} & \ket{3} & \ket{2} & \ket{1}}~
\\[7pt]
Q &:= \grid{\ket{1} & \ket{4} & \ket{2} & \ket{3}
\\\hline
\ket{4}&\ket{1}& \ket{3}&\ket{2}
\\\hline
\ket{3}&\ket{2}&\ket{1}&\ket{4}
\\\hline
\ket{2} & \ket{3} & \ket{4} & \ket{1}}
\\[7pt]
\mathcal{V}&:= \left\{ \left(\begin{array}{cc}
 1 & 0 \\
 0 & 1 \end{array}\right),\hspace{0.25cm}\left(
\begin{array}{cc}
 1 & 0 \\
 0 & \minus 1 \\
\end{array}
\right),\hspace{0.25cm}\left(
\begin{array}{cc}
 0 & 1 \\
 1 & 0 \\
\end{array}
\right),\hspace{0.25cm}
\left(
\begin{array}{cc}
 0 & 1 \\
 \minus 1 & 0 \\
\end{array}
\right)\right\}
\end{align}

The resulting unitary error basis $\widetilde {\mathcal U}$ is calculated according to formula \eqref{eq:8aexplicit}. We then define an equivalent UEB $\mathcal U$ as follows:
\begin{equation}
\label{eq:theUEBU}
\mathcal{U} = \left\{U_{abc}:= \widetilde{U}_{111}^\dagger \widetilde{U}_{abc}\,|\, a,b,c \in [4] \right
\}
\end{equation}
We choose $\mathcal U$ in this way to ensure that $U_{111} = \mathbbm{1}$. The full UEB is presented in \autoref{appendix:UEB}, and the commutativity structure of its elements is visualized in \autoref{fig:commutativity}.

\begin{figure}[h]
\newcounter{indexcount}
\tikzset{vertex/.style={draw, circle, fill=gray, inner sep=1pt, minimum width=5pt, font=\scriptsize, node on layer=front,  line width=0.7pt}}
\def \vdelta{0.35} 
\def \rt {3.464} 
\newcommand{\tri}[4]{     
\begin{scope}[rotate around={#2:(#1)}]
\draw[string] (#1.center) to +(2,0.5) to +(2,-0.5) to (#1.center);
\node[vertex] (V1) at ($(#1)+(2,0.5)$){};
\node[vertex] (V2) at ($(#1)+(2,-0.5)$){};
\node[dimension] at ($(V1)+(\vdelta,\vdelta)$){#3};
\node[dimension] at ($(V2)+(\vdelta,-\vdelta)$){#4};
\end{scope}
}
\begin{calign}\nonumber \begin{tz}[string,scale=0.75]
\node[vertex] (L) at (0,0){};
\node[dimension] at ($(L)+(\vdelta,\vdelta)$){$121$};
\node[vertex] (M) at (4,0){};
\node[dimension] at ($(M)+(1.5*\vdelta,\vdelta)$){$124$};
\node[vertex] (R) at (8,0){};
\node[dimension] at ($(R)+(-\vdelta,\vdelta)$){$324$};
\node[vertex] (BL) at (2,-\rt){};
\node[dimension] at ($(BL)+(-1.75*\vdelta,0)$){$114$};
\node[vertex] (BR) at (6,-\rt){};
\node[dimension] at ($(BR)+(1.75*\vdelta,0)$){$311$};
\draw[string] (L.center) to (M.center) to (R.center) to (BR.center) to (M.center) to (BL.center) to (L.center);
\tri{R}{45}{$214$}{$221$}
\tri{R}{-45}{$414$}{$421$}
\tri{M}{90}{$314$}{$321$}
\tri{L}{135}{$131$}{$141$}
\tri{L}{-135}{$144$}{$134$}
\tri{BR}{-135}{$424$}{$224$}
\tri{BR}{-45}{$411$}{$211$}
\foreach \y in {0,...,10}
     \foreach \x in {0,...,3}
        { \node[vertex] at (12.5+\x, 2-0.75*\y){};
        \node (\arabic{indexcount}) at (12.5+\x+\vdelta,2-0.75*\y +\vdelta){};
         \stepcounter{indexcount}}     
\setcounter{indexcount}{0}      
\foreach \z in {112,113,122,123,132,133,142,143,212,213,222,223,231,232,233,234,241,242,243,244,312,313,322,323,331,332,333,334,341,342,343,344,412,413,422,423,431,432,433,434,441,442,443,444}      
      {\node[dimension] at (\arabic{indexcount}){$\z$};
       \stepcounter{indexcount}} 
\end{tz}\end{calign}
\vspace{-15pt}
\caption{The graph with vertices given by elements of $\mathcal U$, and edges between commuting elements. The element $U_{111}= \mathbbm{1}$ is omitted.\label{fig:commutativity}}
\end{figure}

\subsection{Nice error bases}
\label{sec:notnice}

In this subsection we define nice error bases, and show that $\mathcal U$ is not equivalent to a nice error basis.

\begin{definition}[Knill~\cite{Knill:1996_2}]A \textit{nice error basis} is a unitary error basis $\mathcal{U} = \left\{ U_i~|~ i\in I\right\}$ with $\mathbbm{1}\in \mathcal{U}$ that is (up to phases) closed under multiplication. In other words, for each $a,b\in I$, there exists a scalar $\omega(a,b)\in U(1)$ and an index $a*b\in I$, such that
\begin{equation}U_aU_b = \omega(a,b) U_{a*b}.\end{equation}
\end{definition}

\noindent
Nice unitary error bases correspond to certain projective representations of finite groups.

The following is a strong property of nice error bases.
\begin{proposition}[Musto \& V.~{\cite[Proposition 43]{Musto:2015}}] \label{thm:notnice}Let $\mathcal{V}$ be a unitary error basis containing the identity matrix, such that $\mathcal{V}$ is equivalent to a nice error basis. Then up to multiplication by a phase, $\mathcal{V}$ is closed under taking adjoints.
\end{proposition}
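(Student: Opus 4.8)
The statement to prove is \autoref{thm:notnice}: if $\mathcal V$ contains the identity and is equivalent to a nice error basis, then $\mathcal V$ is closed under adjoints up to phase. The plan is to reduce to the case of an actual nice error basis, and then show that niceness plus containment of the identity already forces closure under adjoints; afterwards, track what happens under the equivalence relation.

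First I would unwind the equivalence relation for UEBs established in \autoref{sec:equivalenceapplications}: two UEBs $\mathcal V = \{V_i\}$ and $\mathcal W = \{W_i\}$ are equivalent exactly when there are unitaries $A,B$, phases $c_i\in U(1)$, and a permutation $\sigma$ with $V_i = c_i\, A\, W_{\sigma(i)}\, B$. So suppose $\mathcal V$ is equivalent in this sense to a nice error basis $\mathcal N = \{N_j\}$ with $\mathbbm 1\in\mathcal N$ and cocycle $\omega$, i.e. $N_aN_b = \omega(a,b)N_{a*b}$. Since both $\mathcal V$ and $\mathcal N$ contain the identity, there are indices with $V_{i_0} = \mathbbm 1$ and $N_{j_0} = \mathbbm 1$; write $V_i = c_i A N_{\sigma(i)} B$. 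Evaluating at $i = i_0$ gives $\mathbbm 1 = c_{i_0} A N_{\sigma(i_0)} B$, so $B = c_{i_0}^{-1} N_{\sigma(i_0)}^{-1} A^{-1}$, which lets me rewrite $V_i = c_i' \, A\, N_{\sigma(i)} N_{\sigma(i_0)}^{-1}\, A^{-1}$ for suitable phases $c_i'$. The key point is that $\{N_j N_{\sigma(i_0)}^{-1}\}$, as $j$ ranges over the index set, is again a nice error basis containing $\mathbbm 1$: closure under multiplication up to phase is preserved by right-translation by a fixed group element (in the projective-representation picture), and conjugation by the fixed unitary $A$ also preserves niceness and the presence of the identity. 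Hence without loss of generality $\mathcal V = \{c_i' M_i\}$ where $\{M_i\}$ is a nice error basis with $M_{i_0} = \mathbbm 1$, and it suffices to show any such $\mathcal V$ is closed under adjoints up to phase.

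Now the core step: for a nice error basis $\{M_i\}$ with $\mathbbm 1$ among its elements, I claim each $M_i^\dagger$ equals a phase times some $M_k$. Because $\mathbb 1 = M_{i_0}$ lies in the basis and the basis is closed under multiplication up to phase, the index set carries a group structure $*$ with identity $i_0$; choosing $k = i^{-1}$ (the $*$-inverse of $i$), we get $M_i M_{i^{-1}} = \omega(i, i^{-1}) M_{i_0} = \omega(i,i^{-1})\mathbbm 1$, so $M_{i^{-1}} = \omega(i,i^{-1})^{-1} M_i^{-1} = \omega(i,i^{-1})^{-1} M_i^\dagger$, the last equality because each $M_i$ is unitary. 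Thus $M_i^\dagger = \omega(i,i^{-1}) M_{i^{-1}}$, which is a phase times a basis element. Carrying the scalars $c_i'$ through, $V_i^\dagger = \overline{c_i'}\, M_i^\dagger = \overline{c_i'}\,\omega(i,i^{-1})\, M_{i^{-1}} = \overline{c_i'}\,\omega(i,i^{-1})\, (c_{i^{-1}}')^{-1} V_{i^{-1}}$, so $V_i^\dagger$ is a phase times an element of $\mathcal V$, which is exactly closure under adjoints up to phase.

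The main obstacle is the bookkeeping in the reduction step: one must verify carefully that right-translating a nice error basis by a fixed element, and conjugating by a fixed unitary, both yield a nice error basis still containing the identity, and that the group structure on indices is genuinely a group (associativity of $*$ and existence of inverses follow from associativity of matrix multiplication together with the fact that $\{M_i\}$ is a \emph{basis}, hence the index $a*b$ is uniquely determined). This is where the hypotheses $\mathbbm 1\in\mathcal V$ and the \emph{basis} (not merely spanning-set) property are used essentially; everything else is routine. I would state the group-structure observation as a small lemma to keep the argument clean, then present the three displayed identities above.
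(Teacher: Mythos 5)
Your proof is correct. Note that the paper itself gives no proof for this proposition—it is quoted from Musto and Vicary, cited as [Musto:2015, Proposition~43]—so there is nothing in this paper to compare against; I can only assess your argument on its own terms, and it holds up.

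Two small remarks. First, a sign typo in the core step: from $M_iM_{i^{-1}}=\omega(i,i^{-1})\mathbbm 1$ one gets $M_{i^{-1}}=\omega(i,i^{-1})\,M_i^{-1}$ and hence $M_i^\dagger=M_i^{-1}=\omega(i,i^{-1})^{-1}M_{i^{-1}}$, with the exponents the opposite way from what you wrote; this does not affect the conclusion, since all that matters is that the scalar lies in $U(1)$. Second, your ``reduction step'' is doing slightly more work than is strictly needed. It suffices to observe that a nice error basis $\mathcal N$ with cocycle $\omega$ is automatically closed under adjoints up to phase (your group-inverse argument), and that both right-translation $N_j\mapsto N_jN_g^{-1}$ and conjugation $M\mapsto AMA^{-1}$ visibly preserve the property of being closed under adjoints up to phase and preserve $\mathbbm 1$; you need not re-verify that the translated/conjugated system is again nice, only that it inherits this particular closure property, since after the elimination of $B$ via $V_{i_0}=\mathbbm 1$ you have $V_i = c_i'\,A\,N_{\sigma(i)}N_{\sigma(i_0)}^{-1}\,A^{-1}$ and the phases $c_i'$ commute with everything. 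But as written your route via ``the translate/conjugate is again a nice basis'' is also correct, provided one does the bookkeeping you acknowledge (associativity of $*$ from linear independence of the $N_j$, existence of inverses from finiteness plus injectivity of right-translation), so the proof stands.
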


\noindent
It follows that our new unitary error basis $\mathcal U$ is not equivalent to a nice error basis.

\begin{theorem} The unitary error basis $\mathcal{U}$ of \autoref{appendix:UEB} is not equivalent to a nice unitary error basis.
\end{theorem}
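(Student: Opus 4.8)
The plan is to invoke \autoref{thm:notnice} in its contrapositive form: it suffices to exhibit a single element of $\mathcal{U}$ whose adjoint does not equal any element of $\mathcal{U}$ up to a phase. Since $\mathcal{U}$ was constructed so that $U_{111} = \mathbbm{1} \in \mathcal{U}$, the hypothesis of \autoref{thm:notnice} is met, and its conclusion would force $\mathcal{U}$ to be closed under adjoints up to phase if $\mathcal{U}$ were equivalent to a nice error basis.

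First I would identify a concrete element $U_{abc}$ from the explicit list in \autoref{appendix:UEB}, chosen so that its entries are not symmetric enough for $U_{abc}^\dagger$ to be a phase multiple of any basis element --- the presence of the factors $\tfrac{1}{\sqrt{5}}(i\ket{1}+2\ket{4})$ and $\tfrac{1}{\sqrt{5}}(2\ket{1}+i\ket{4})$ in the quantum Latin square $P$ means many elements of $\widetilde{\mathcal{U}}$, and hence of $\mathcal{U}$, have entries with moduli like $\tfrac{1}{\sqrt{5}}$ and $\tfrac{2}{\sqrt{5}}$ arranged asymmetrically. Then I would argue that no element of $\mathcal{U}$ has the multiset of entry-moduli, in the transposed-conjugated positions, matching those of $U_{abc}^\dagger$: a phase multiple $cV$ of a matrix $V$ has the same moduli in the same positions as $V$, and $U_{abc}^\dagger$ has moduli $|(U_{abc})_{ji}|$ in position $(i,j)$. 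Comparing the pattern of moduli across all $64$ elements (a finite check, naturally delegated to the accompanying \textit{Mathematica} notebook) shows the required element cannot be matched. This contradicts closure under adjoints up to a phase.

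Concretely the argument runs: suppose for contradiction that $\mathcal{U}$ is equivalent to a nice error basis. By \autoref{thm:notnice}, since $\mathbbm{1} \in \mathcal{U}$, the set $\mathcal{U}$ is closed under taking adjoints up to multiplication by a phase. Thus for every $a,b,c \in [4]$ there are $a',b',c' \in [4]$ and a scalar $\mu \in U(1)$ with $U_{abc}^\dagger = \mu\, U_{a'b'c'}$. Exhibiting one $(a,b,c)$ for which the pattern of entry-moduli of $U_{abc}^\dagger$ does not coincide with that of any $U_{a'b'c'}$ gives the contradiction, completing the proof.

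The main obstacle is purely the bookkeeping: one must verify over all $64$ basis elements that the chosen witness's adjoint is genuinely not a phase multiple of any of them. This is not conceptually hard --- it is a comparison of finite data and is exactly the sort of thing the \textit{Mathematica} notebook referenced in \autoref{sec:newUEB} is there to certify --- but it is the only nontrivial content beyond quoting \autoref{thm:notnice}. A mild subtlety worth flagging is that one should compare the full positional pattern of moduli (and ideally also cross-check a phase-invariant such as $\mathrm{Tr}(U_{abc}^\dagger U_{a'b'c'})$ having modulus $n$ only when they match up to phase), rather than just the unordered multiset of moduli, to rule out coincidental rearrangements; with the asymmetric $\tfrac{1}{\sqrt5}$-entries coming from $P$ this is easily arranged.
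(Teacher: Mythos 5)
Your proposal is correct and matches the paper's argument exactly: the paper also invokes \autoref{thm:notnice} and exhibits a witness element (specifically $U_{112}$) whose adjoint is not proportional to any element of $\mathcal{U}$, with the finite verification left to the accompanying notebook. The only difference is that you leave the witness unspecified; picking $U_{112}$ and checking it, as the paper does, completes the argument.
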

\begin{proof} We note that $U_{112}^\dagger$ is not proportional to any matrix in $\mathcal{U}$, and hence by \autoref{thm:notnice} the result follows.
\end{proof}

\subsection{Quantum shift-and-multiply bases}
\label{sec:notqsm}

Quantum shift-and-multiply UEBs were defined in \cite{Musto:2015}, and the construction exactly matches the biunitary composite \autoref{fig:binarydetailed2}(b).  In this subsection we demonstrate that all quantum shift-and-multiply UEBs have a particular commutativity property, and therefore demonstrate that our new UEB $\mathcal U$ does not arise in this way.

The following proposition gives a strong constraint on the structure of quantum shift-and-multiply bases.

\begin{proposition}\label{thm:notQSM} Let $\mathcal{V}$ be an $m$\-dimensional UEB which contains the identity matrix, such that $\mathcal{V}$ is equivalent to a quantum shift-and-multiply UEB. Then $\mathcal{V}$ contains $m$ pairwise-commuting matrices.
\end{proposition}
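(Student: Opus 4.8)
The plan is to unpack what it means for $\mathcal V$ to be equivalent to a quantum shift-and-multiply UEB, using the biunitary description of \autoref{fig:binarydetailed2}(b) together with the definition of equivalence from \autoref{sec:equivalenceapplications}, and then to exhibit the $m$ pairwise-commuting matrices explicitly inside $\mathcal V$. Recall that a quantum shift-and-multiply UEB on an $n$-dimensional space has the form $U_{ab,c,d} = H^b_{a,d}\,Q_{b,d,c}$ for an $n$-controlled family of Hadamard matrices $H^b$ and a quantum Latin square $Q$ (so here $m=n$, and the index set of the UEB is $[n]\times[n]$ with $a$ the ``shift'' index and $b$ the ``multiply'' index). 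The first step is to fix a particular value of the multiply index, say $b=b_0$ chosen so that $Q_{b_0}$ is (up to a permutation and phases) the standard basis — more precisely, since each row $\{\ket{Q_{b_0,d}}\mid d\}$ is an orthonormal basis, and each column $\{\ket{Q_{b,d}}\mid b\}$ is an orthonormal basis, we can always post-compose with a unitary and relabel so that some row of $Q$ is the computational basis. Equivalently, I would first argue that within the equivalence class we may assume $Q$ has a standard-basis row.

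Once $Q_{b_0,d} = \ket{d}$ (up to phases), the $n$ UEB elements $\{U_{a b_0}\mid a\in[n]\}$ become, in matrix-entry form, $U_{ab_0,c,d} = H^{b_0}_{a,d}\,\delta_{c,d}$ up to phases — that is, each $U_{ab_0}$ is a \emph{diagonal} matrix, with diagonal entries given by the $a$th row of the Hadamard matrix $H^{b_0}$. Diagonal matrices commute, so $\{U_{ab_0}\mid a\in[n]\}$ is a family of $n=m$ pairwise-commuting matrices inside this particular UEB. The second step is to transport this conclusion across the equivalence: if $\mathcal V$ is equivalent to such a UEB $\mathcal U'$, then by \eqref{eq:equivUEB} there are unitaries $A,B$, phases $c_i$ and a permutation $\sigma$ with $V_i = c_i\,A\,U'_{\sigma(i)}\,B$. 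Two UEB elements $V_i, V_j$ commute iff $A U'_{\sigma(i)} B$ and $A U'_{\sigma(j)} B$ do; and if moreover $\mathbbm 1 \in \mathcal V$, say $V_{i_0} = \mathbbm 1$, then $A U'_{\sigma(i_0)} B = c_{i_0}^{-1}\mathbbm 1$, so $B = c_{i_0}^{-1} (A U'_{\sigma(i_0)})^{-1}$, which lets us rewrite $V_i = c_i c_{i_0}^{-1}\, A U'_{\sigma(i)} U'^{-1}_{\sigma(i_0)} A^{-1}$. Thus conjugation by $A$ carries the commuting diagonal family $\{U'_{ab_0} U'^{-1}_{\sigma(i_0)}\}$ (still commuting, since $U'_{\sigma(i_0)}$ is itself diagonal if $b_0$ is chosen to be the multiply index of $\sigma(i_0)$, or more robustly one just notes commutativity is preserved under the map $X\mapsto c\,AXA^{-1}$) to a commuting family of $m$ elements of $\mathcal V$.

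The one point requiring care — and the main obstacle — is the interplay between the normalization $\mathbbm 1\in\mathcal V$ and the freedom to choose $Q$ with a standard-basis row: I want the distinguished index $i_0$ with $V_{i_0}=\mathbbm 1$ to have multiply-coordinate equal to my chosen $b_0$, so that $U'_{\sigma(i_0)}$ lies in the same diagonal family and the product $U'_{ab_0}U'^{-1}_{\sigma(i_0)}$ stays diagonal. One cleanly handles this by observing that equivalence of UEBs allows left multiplication by a fixed unitary, so one may pre-conjugate $\mathcal U'$ by $U'^{-1}_{\sigma(i_0)}$ at the outset, replacing $\mathcal U'$ by the equivalent UEB $\{U'^{-1}_{\sigma(i_0)} U'_k\}$; I would check (this is the routine calculation I'd defer) that this operation sends a quantum shift-and-multiply UEB to another quantum shift-and-multiply UEB, or at least that it preserves the existence of a diagonal $n$-element sub-family — indeed, left-multiplying every element of a shift-and-multiply family by the inverse of one diagonal element keeps that whole diagonal block diagonal. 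With that in hand, the $m$ images of the diagonal block under the remaining equivalence (a conjugation composed with phases) are the required $m$ pairwise-commuting matrices in $\mathcal V$, completing the proof. Finally, to apply this to $\mathcal U$ of \autoref{appendix:UEB} one inspects \autoref{fig:commutativity} and checks that no $m=n$ (here $n=4$, so $m$ should be... note the UEB is $8$-dimensional, so $m=8$) pairwise-commuting matrices exist; but that verification belongs to the theorem following this proposition, not to the proof of the proposition itself.
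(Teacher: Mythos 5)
Your argument is essentially the paper's, and it lands on the same commuting family: the $m$ elements of $\mathcal V$ coming from $\{U'_{a b_*}\}_a$ where $b_*$ is the ``multiply'' index of $\sigma(i_0)$, made visible as conjugates of diagonal matrices once $\mathbbm{1}\in\mathcal V$ is used to pin down the left unitary $A$ in terms of the right unitary $B$. Your pivot at the end --- replace $\mathcal U'$ by $U'^{-1}_{\sigma(i_0)}\mathcal U'$, observe this is still QSM with row $b_*$ of the new QLS equal to the diagonal matrix $(D_{a_*}^{b_*})^{-1}$, then note the equivalence reduces to conjugation-plus-phases --- is a correct argument, and that pivot \emph{supersedes} your first step. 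Two remarks. First, the opening ``normalize $Q$ to have a standard-basis row'' step is not only unnecessary but risky if phrased in terms of QLS equivalence: replacing $Q$ by an arbitrary equivalent QLS does \emph{not} generally keep the QSM output in the same UEB equivalence class (the end of \autoref{sec:equivalenceapplications} discusses precisely this phenomenon, deliberately). Post-composing $Q$ with a fixed unitary is fine (it is a left-multiplication of the UEB), but that is a weaker operation than QLS equivalence, and in any case the $U'^{-1}_{\sigma(i_0)}$ trick gives you the diagonal row with the right index automatically. Second, the aside ``the inverse of one diagonal element'' is not accurate: $U'_{\sigma(i_0)}=Q_{b_*}D_{a_*}^{b_*}$ is a generic QSM element, not diagonal a priori; what you actually need (and what does hold) is that left-multiplying a QSM UEB by the inverse of \emph{any} of its elements yields another QSM UEB, because $\tilde Q_b:=(D_{a_*}^{b_*})^{-1}Q_{b_*}^{-1}Q_b$ is again a QLS family. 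The paper's version bypasses all this bookkeeping: it writes $\mathcal V=\{c_{ab}\,A\,Q_aD_a^bB\}$ outright, solves $A=B^\dagger D Q_{a_0}^\dagger$ from the $\mathbbm 1$ constraint, and reads off that the $a=a_0$ slice is $\{c_{a_0b}\,B^\dagger(D D_{a_0}^b)B\}$ --- conjugates of diagonals --- without needing to normalize $\mathcal U'$ or verify closure of QSM under left-multiplication.
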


\begin{proof} Quantum shift-and-multiply UEBs are of the form $V_{ab}= Q^{}_a D_a^b$ for unitary matrices $Q_a$ and unitary diagonal matrices $D_a^b$. Using the definition of equivalence of unitary error bases from \autoref{sec:equivalenceapplications}, it follows that $\mathcal{V}$ is of the following form:
\begin{align*}
\mathcal{V} &= \left\{ c_{ab} A Q_aD_a^b B \,|\, a,b\in[m]\right\}
\intertext{Since $\mathbbm{1}\in \mathcal{V}$, there are indices $a_0, b_0$ such that $ c_{a_0b_0} A Q_{a_0} D_{a_0}^{b_0} B= \mathbbm{1}$. Defining the diagonal matrix $D:= \left( c_{a_0b_0} D_{a_0}^{b_0}\right)^\dagger$, this means that}
A &= B^\dagger D Q_{a_0}^\dagger
\intertext{and hence that}
\mathcal{V} &= \left \{ c_{ab} B^\dagger D Q_{a_0}^\dagger Q_a D_a^b B\,|\,a,b\in[m]\right\}.
\end{align*}
All matrices with $a=a_0$ pairwise commute, and there are $m$ of these.
\end{proof}

\noindent
The desired result follows.

\begin{theorem} The unitary error basis $\mathcal{U}$ of \autoref{appendix:UEB} is not equivalent to a quantum shift-and-multiply basis.
\end{theorem}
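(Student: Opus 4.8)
The plan is to apply \autoref{thm:notQSM}: it suffices to show that the unitary error basis $\mathcal{U}$ of \autoref{appendix:UEB}, which contains the identity matrix $U_{111} = \mathbbm{1}$ by construction~\eqref{eq:theUEBU}, does not contain $8$ pairwise-commuting matrices. Equivalently, in the graph of \autoref{fig:commutativity}---whose vertices are the elements of $\mathcal{U}$ and whose edges join commuting pairs---there is no clique of size $8$.

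First I would observe that $U_{111} = \mathbbm{1}$ commutes with everything, so it lies in every maximal clique; hence the question reduces to whether the remaining $63$ elements contain a clique of size $7$. Next I would exploit the structure visible in \autoref{fig:commutativity}: the $63$ non-identity elements split into a collection of $42$ isolated vertices (the right-hand block, which are mutually non-commuting and so contribute at most one vertex to any clique) together with a connected component on the remaining $21$ vertices drawn on the left. So any clique among the non-identity elements has all but at most one of its vertices inside this $21$-vertex graph, and I would bound the clique number of that graph directly. The left-hand graph is a small explicit graph (a hexagonal arrangement of triangles), and a short case analysis---or simply a computer check, as recorded in the accompanying \textit{Mathematica} notebook---shows its largest clique has size $3$. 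Therefore the largest clique in the full commutativity graph has size at most $3 + 1 + 1 = 5 < 8$.

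The main obstacle is purely bookkeeping: one must verify the commutativity relations among the explicit $4 \times 4$ unitaries listed in \autoref{appendix:UEB}, i.e.\ confirm that \autoref{fig:commutativity} correctly records which pairs $U_{abc}$ commute, and then confirm the clique-number bound for the resulting finite graph. Both are finite checks with no conceptual content, and are delegated to the \textit{Mathematica} notebook; what matters is that the outcome---no $8$-clique, in fact no clique larger than $5$---contradicts the conclusion of \autoref{thm:notQSM}. Hence $\mathcal{U}$ cannot be equivalent to a quantum shift-and-multiply basis, completing the proof.

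\begin{proof}
By \autoref{thm:notQSM}, if $\mathcal{U}$ were equivalent to a quantum shift-and-multiply basis then, since $\mathcal{U}$ contains the identity matrix, it would contain $8$ pairwise-commuting matrices. We claim this is not the case. Consider the graph on the $64$ elements of $\mathcal{U}$ with an edge between two elements exactly when they commute; this is the graph depicted in \autoref{fig:commutativity}, with $U_{111} = \mathbbm{1}$ omitted there since it commutes with all other elements. The element $U_{111}$ thus lies in every maximal clique, so it suffices to show that the subgraph on the remaining $63$ elements contains no clique of size $7$. As recorded in the accompanying \textit{Mathematica} notebook and visible in \autoref{fig:commutativity}, these $63$ vertices consist of $42$ pairwise non-adjacent vertices together with a graph on the other $21$ vertices whose clique number is $3$. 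Any clique among the $63$ elements therefore has at most $3 + 1 = 4$ vertices, so the largest clique in the full commutativity graph has size at most $5$. Since $5 < 8$, the basis $\mathcal{U}$ does not contain $8$ pairwise-commuting matrices, and hence is not equivalent to a quantum shift-and-multiply basis.
\end{proof}
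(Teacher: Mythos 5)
Your proof takes essentially the same route as the paper: apply \autoref{thm:notQSM} and bound the clique number of the commutativity graph in \autoref{fig:commutativity}. One small correction: a clique cannot contain both an isolated vertex and a vertex of the connected component, so the maximum clique among the non-identity elements is $\max(3,1)=3$ rather than $3+1$, giving $4$ once $U_{111}=\mathbbm{1}$ is included --- precisely the paper's figure; your looser bound of $5$ still suffices since $5<8$.
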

\begin{proof} The commutativity graph of $\mathcal{U}$ is shown in \autoref{fig:commutativity}. It is clear by inspection that every pairwise-commuting subset contains at most $4$ elements (including the element $U_{111} = \mathbbm{1}$, which is omitted from the graph.) The result then follows from \autoref{thm:notQSM}.
\end{proof}

{\small
\bibliographystyle{plainurl}
\bibliography{references}
}

\appendix

\pagebreak
\section{The UEB $\mathcal{U}$}\label{appendix:UEB}
The following is the UEB $\mathcal{U}$ constructed in \autoref{sec:newUEB}.
%


\newcounter{uebcounter}
\setcounter{uebcounter}{1}
\newcounter{linelength}
\setcounter{linelength}{0}

\newcounter{ueba}
\newcounter{uebb}
\newcounter{uebc}
\newcommand\updatecounter{ 
\setcounter{ueba}{
\intcalcDiv{\value{uebcounter}-1}{16}+1
} 
\setcounter{uebb}{
\intcalcDiv{\value{uebcounter}-1 - 16*(\value{ueba}-1)}{4}+1
}
\setcounter{uebc}{
\value{uebcounter}-16*(\value{ueba}-1) - 4*(\value{uebb}-1)
}
}

\renewcommand{\arraystretch}{1.6}

\newcolumntype{C}{>{\centering\arraybackslash$}p{0.42cm}<{$}}
\newcommand\mat[1]{
\ifnum \value{linelength}=3 \setcounter{linelength}{0}\\[5pt] \fi
\ifnum \value{linelength}<3 \addtocounter{linelength}{1}\fi
\updatecounter
U_{\arabic{ueba}\arabic{uebb}\arabic{uebc}}\addtocounter{uebcounter}{1}
&=
\left(\raisebox{2.7pt}{\hspace{-3pt}\makebox[3.2cm][l]{{$\tiny



\end{document}